\newcommand{\Tr}{\operatorname{Tr}}
\newcommand{\<}{\langle}
\renewcommand{\>}{\rangle}
\newtheorem{thm}{Theorem}
\newtheorem{theorem}{Theorem}
\newtheorem{corollary}[theorem]{Corollary}
\newtheorem{lemma}[theorem]{Lemma}
\newtheorem{proposition}[theorem]{Proposition}
\renewcommand\onecolumngrid{%
  \do@columngrid{one}{\@ne}%
  \def\set@footnotewidth{\onecolumngrid}%
  \def\footnoterule{\kern-6pt\hrule width 1.5in\kern6pt}%
}
\newcommand{\bes} {\begin{subequations}}
\newcommand{\ees} {\end{subequations}}
\newcommand{\bea} {\begin{eqnarray}}
\newcommand{\eea} {\end{eqnarray}}
\newcommand{\be} {\begin{equation}}
\newcommand{\ee} {\end{equation}}
\def\>{\rangle}
\def\<{\langle}
\def\Tr{\textrm{Tr}}
\newcommand{\ketbra}[2]{|{#1}\>\!\<#2|}
\newcommand{\ignore}[1]{}
\begin{document}

\title{Optimal 
 Distillation of Coherent States with Phase-Insensitive Operations}

\author{Shiv Akshar Yadavalli}\affiliation{Department of Physics, Duke University, Durham, NC 27708, USA}
\author{Iman Marvian}
\affiliation{Department of Physics, Duke University, Durham, NC 27708, USA}
\affiliation{Department of Electrical and Computer Engineering, Duke University, Durham, NC 27708, USA}
\affiliation{Duke Quantum Center, Durham, NC 27708, USA}

\begin{abstract}
By combining multiple copies of noisy coherent states of light (or other bosonic systems), it is possible to obtain a single mode in a state with  lesser noise, a process known as distillation or purification of coherent states. We investigate the distillation of coherent states from coherent thermal states under general phase-insensitive operations, and find a distillation protocol that is optimal in the asymptotic regime, i.e., when the number of input copies is much greater than 1. Remarkably, we find that in this regime, the error -- as quantified by infidelity (one minus the fidelity) of the output state with the desired coherent state -- is proportional to the inverse of the purity of coherence of the input state, a quantity obtained from the Right-Logarithmic-Derivative (RLD) Fisher information metric, hence revealing an operational interpretation of this quantity. The heart of this protocol is a phase-insensitive channel that optimally converts an input coherent thermal state with high amplitude, into an output with significantly lower amplitude and temperature. Under this channel, the purity of coherence remains asymptotically conserved. While both the input and desired output are Gaussian states, we find that the optimal protocol cannot be a Gaussian channel. Among Gaussian phase-insensitive channels, the optimal distillation protocol is a simple linear optical scheme that can be implemented with beam splitters.
\end{abstract}
\maketitle

\section{Introduction}

Coherent states are central objects in the theory of quantum optics \cite{ECG_coherent1, GlauberOpticalCoherence, Glauber63, ECG_coherent2}.  \textit{Pure} coherent states minimize the Heisenberg uncertainty in continuous variable systems \cite{schrodinger1926stetige}. Furthermore, they model the `classical-like' behavior of electromagnetic fields produced by coherent sources such as lasers \cite{Mandel_Wolf_1995}. Such properties make pure coherent states useful in varied applications ranging from quantum communication and cryptography, to metrology and sensing \cite{khan2017quantum, Caves94,app3,app4,app5,app6,app7,app8,app9, Lutkenhaus2014, pirandola2018advances, TestingQuantumnessLami}. Thus, generating pure coherent states of 
 light or, more generally, bosonic modes, is a central task in quantum physics.

In practice, however, because of various fundamental or practical limitations, a quantum system will never be in a pure state. Furthermore, even if one can prepare a system in a state that is very close to a desired pure state, {it gradually becomes more mixed because of different instrumental imperfections and noise processes.} 

For example, the transmission of a pure coherent state through a communication channel adds noise to the state, which {can often be} modeled by a thermal attenuator channel \cite{Caves94, Holevo+2013, GaussianQI}. Such channels describe a broad range of physical processes involving energy loss to an environment at non-zero temperatures, such as the transmission of low-frequency electromagnetic signals through optical fiber or free space at non-zero temperatures \cite{rosati2018narrow, ThermalMicrowaveNetworks}. Then, transmitting a pure coherent state through a thermal attenuator channel will result in a
  \textit{coherent thermal state} \cite{bishop1987coherent, CTS1, GlauberMatrixElem, MarianMatrixElem}, also known as a displaced thermal state. 
    As the latter name suggests, coherent thermal states also arise when one displaces a bosonic mode initially in a thermal state. This is a standard method for preparing coherent states, and results in a perfect coherent state when the initial state is at zero temperature (i.e., the vacuum state) \cite{CTS1,CTS2,CTS3}.

Given such limitations, in this paper, we ask: {how well can we distill pure coherent states from multiple copies of coherent thermal states?} More precisely, we study \textit{coherence distillation} processes \cite{marvian2020coherence} that combine multiple modes of coherent thermal states at a higher temperature to generate a \textit{single mode} {that is close to} a coherent thermal state at a lower temperature. Such lowering of temperature essentially {purifies} the coherent thermal states, where the zero temperature limit would correspond to the pure coherent state (See Fig.\ref{fig:distill}). {As we further explain in Sec.~\ref{section: setup} and Sec.~\ref{Sec:discussion}, } this study can  be interpreted as single-copy resource distillation in the context of the resource theory of asymmetry \cite{gour2008resource, Marvian_thesis, marvian2013theory, marvian2014extending, marvianAsymPureStates, PianiAsym} and (unspeakable) coherence \cite{marvian2016quantify}, 
  where the set of free operations are phase-insensitive operations.

\begin{figure}
    \centering
\includegraphics[width=0.47\textwidth]{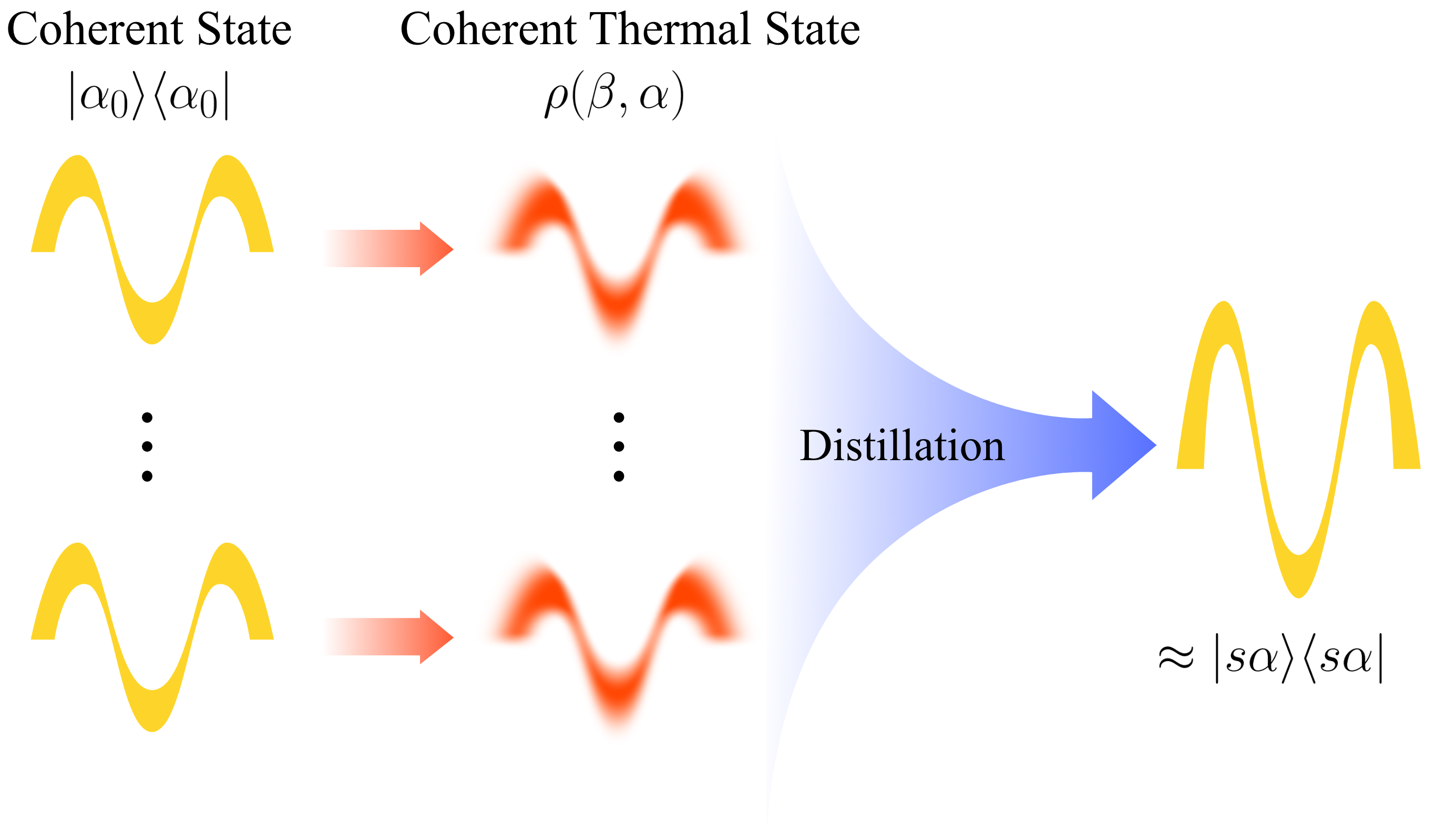}
\caption{
\textbf{Phase-insensitive distillation of coherent states -- }
{The goal is to convert $n\gg1$ copies of a \textit{coherent thermal state} $\rho(\beta,\alpha)=D(\alpha)\rho_{\text{th}}(\beta)D(\alpha)^\dag$,  where $D(\alpha)$ is the Weyl displacement operator, to a pure coherent state $|\alpha\rangle$, or more generally $|s\alpha\rangle$ for some constant $s\ll n$, using phase-insensitive operations.} The state $\rho(\beta,\alpha)$ describes, e.g., the output of a thermal attenuator channel given a pure coherent state $|\alpha_0\rangle$ as the input. The phase-insensitivity of the distillation operation ensures that the final output state is in-phase, i.e.,  synchronized with the input states  $\rho(\beta,\alpha)$. We then ask, what is the \textit{optimal distillation protocol} that achieves the highest fidelity with the desired state $|s \alpha\rangle$, and what is the value of this optimal fidelity?} 
    \label{fig:distill}
\end{figure}

\begin{figure}
  \centering
     \includegraphics[width=0.45\textwidth]{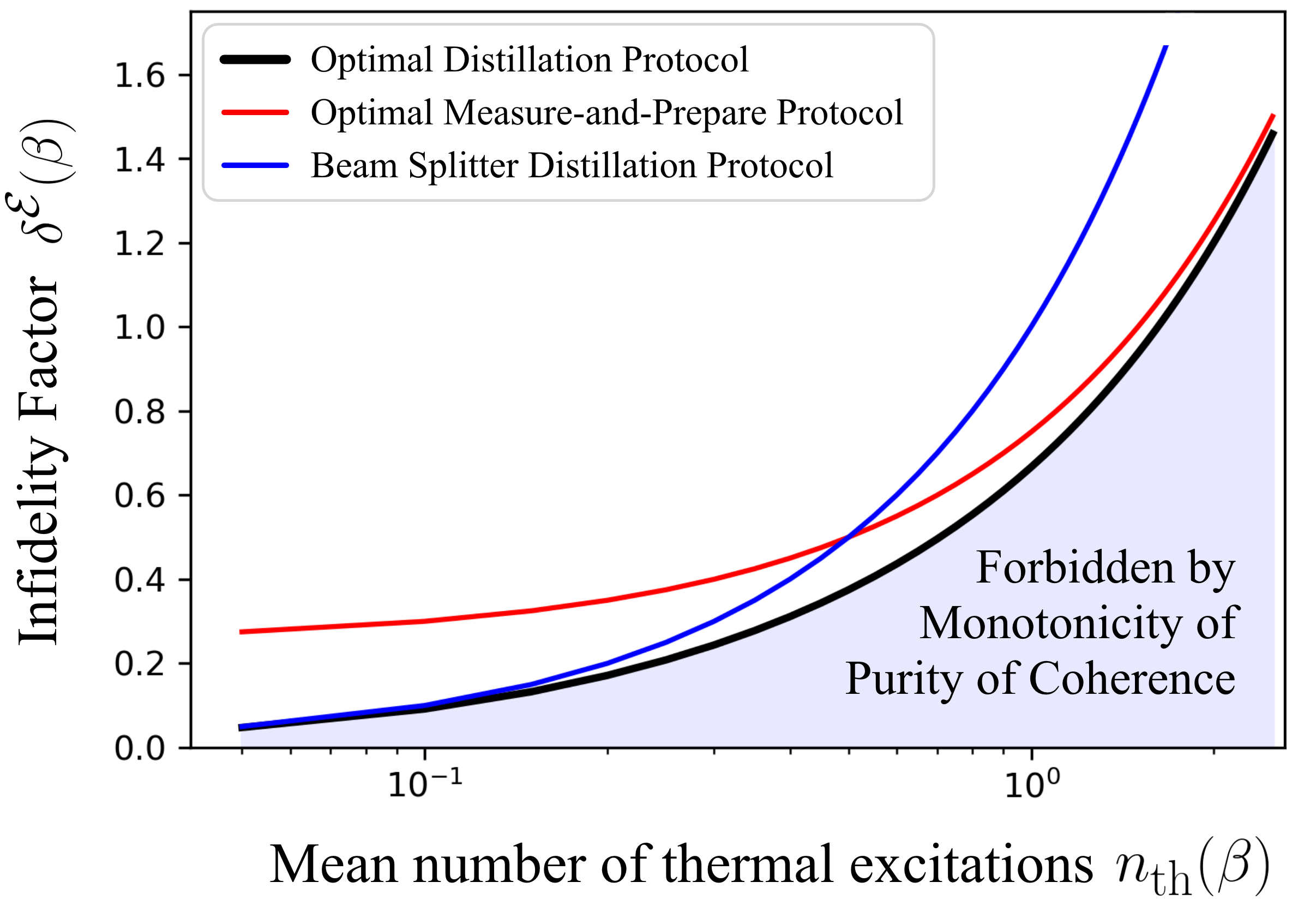}
\caption{\textbf{Performance of different distillation protocols -- }
This plot shows the \textit{infidelity factor} $\delta^{\mathcal{E}}(\beta, \alpha)$,  {defined in Eq.(\ref{eqn: intro delta}),  for different distillation protocols,} as a function of the mean number of thermal excitations $n_{\text{th}}(\beta)$ in the input coherent thermal states. $\delta^{\mathcal{E}}(\beta, \alpha)$ characterizes the asymptotic performance of a distillation protocol in the many-copy regime. 
Using the properties of the purity of coherence \cite{marvian2020coherence},  one can  show that $\delta^{\mathcal{E}}(\beta, \alpha)$ is lower bounded by  ${n_\text{th}(\beta)}/{2}+{n_\text{th}(\beta)}/{[2+4n_\text{th}(\beta)]}$. We prove that this can be achieved with a novel distillation protocol introduced in this paper. The red and blue curves correspond respectively to a simple protocol discussed in Fig.\ref{first real}, that is realizable with linear optical elements, and to a measure-and-prepare protocol. See below for further details. }
 \label{fig89}
\end{figure}

\subsection*{Summary of the Main Results}
In this paper, we investigate coherence distillation using  \textit{phase-insensitive} operations, i.e., operations that are covariant under phase shifts 
 (see Sec.~\ref{section: setup} for the formal definitions). Such operations preserve information about the phase {of the input}, and therefore appear naturally, for instance, in the context of electronic and optical amplifiers \cite{CavesLinearAmp, CavesAmplifier12, PhaseEstimation12}. {Equivalently, these are operations that respect the  time-translation symmetry generated by the intrinsic Hamiltonians of periodic systems with equal periods \cite{marvian2020coherence}. }

We study various protocols and find the ultimate limits of such phase-insensitive coherence distillation. {In the following, $\rho(\beta, \alpha)=D(\alpha)\rho_\text{th}(\beta)D^\dag(\alpha)$ denotes coherent thermal state at temperature $T=\beta^{-1}/k_\text{B} \ge 0$, where $k_\text{B}$ is the Boltzmann constant and $\alpha\in\mathbb{C}$ is the displacement.} Consider a sequence of phase-insensitive  channels $\mathcal{E}_n$ indexed by the number of coherent thermal states $\rho(\beta, \alpha)^{\otimes n}$ it processes. Then, {unless the protocol exhibits sub-optimal error scaling,} the fidelity of the output of these distillation channels with the pure coherent state $\ket{\alpha}$ takes the form 
\begin{align}
\braket{\alpha|\mathcal{E}_n(\rho(\beta, \alpha)^{\otimes n})|\alpha} = 1 - \frac{\delta^{\mathcal{E}}(\beta,\alpha)}{n} + {{o}}\Big(\frac{1}{n}\Big) \ ,
\end{align}
where  $\delta^{\mathcal{E}}\ge 0$, called the {\emph{infidelity factor}},  determines the {leading-order coefficient in $1/n$} of distillation error for the sequence of phase-insensitive distillation channels under consideration (or `protocol $\mathcal{E}$' for short)\footnote{Recall that $A_n={{o}}(\frac{1}{n})$ means that $\lim_{n\rightarrow\infty} n A_n=0$.}. Therefore, for any single-copy distillation protocol, we are interested in the quantity
\begin{align}\label{eqn: intro delta}
    \delta^\mathcal{E}(\beta, \alpha)= {\lim_{n\rightarrow \infty}} n \times \epsilon_n \ , 
\end{align}
where     $\epsilon_n= 1 - \braket{\alpha|\mathcal{E}_n(\rho(\beta, \alpha)^{\otimes n})|\alpha}$ is the infidelity. {Here, we have assumed that for the distillation protocol under consideration, the above limit exists (this is the case for all the protocols considered in this work). If not, we  characterize the asymptotic behavior of the protocol by considering both the $\liminf_{n\rightarrow\infty} n\times \epsilon_n$ and $\limsup_{n\rightarrow\infty} n\times \epsilon_n$.} 
Thus, the optimal protocol's performance is determined by 
\begin{align}\label{eqn: intro optimal delta}
    \delta^{\text{opt}}(\beta, \alpha) = \inf_{\mathcal{E}} \delta^{\mathcal{E}}(\beta, \alpha)=\inf_{\mathcal{E}} {\lim_{n\rightarrow \infty}} n \times \epsilon_n \ ,
\end{align}
where the minimization is over all phase-insensitive protocols. 

{It is worth noting that, in general, the   distillation protocol $\mathcal{E}_n$ 
 depends on the amplitude $|\alpha|$ of the desired coherent state $|\alpha\rangle$. That is, we assume the amplitude of the desired output is known, whereas its phase may or may not be known (it can be easily seen that under the restriction to phase-insensitive distillation protocols, the knowledge of this phase is not useful). Similarly, in principle,  the infidelity factor may also depend on the magnitude $|\alpha|$. Since the phase of $\alpha$ is irrelevant, we often assume $\alpha$ is real and positive.}\\

\noindent\textbf{Main result:} We construct the \textit{optimal} phase-insensitive coherence distillation protocol and show that the lowest achievable infidelity factor is determined by the ratio of the minimal and maximal Quantum Fisher Information metrics \cite{PETZ_199681, Bengtsson_Zyczkowski_2006, QFI_Metrics_2016} for the desired pure output and the actual input, namely  
\begin{align}\label{main}
 \delta^{\text{opt}}(\beta, \alpha) =\frac{F_H(|\alpha\rangle\langle\alpha|)}{4 \times P_H(\rho(\beta,\alpha))}=\frac{|\alpha|^2\omega^2}{ P_H(\rho(\beta,\alpha))}\ .
\end{align}
{Functions  $F_H$ and $P_H$ 
 are  called 
   the Quantum Fisher Information (QFI) and the purity of coherence, which are obtained from} the Symmetric-Logarithmic-Derivative (SLD) and the Right-Logarithmic-Derivative (RLD) Fisher information metrics, respectively \cite{PETZ_199681, Bengtsson_Zyczkowski_2006, QFI_Metrics_2016} (see Sec.~\ref{section: universal bounds proof} for definitions and further discussion on such metrics). In particular, $F_H(|\alpha\rangle\langle\alpha|)$ is the QFI of the pure coherent state $|\alpha\rangle$. {For pure states, QFI is four times the energy variance of the state,} i.e., 
\begin{align}
F_H(|\alpha\rangle\langle\alpha|)&=4[\langle\alpha|H^2|\alpha\rangle-\langle\alpha|H|\alpha\rangle^2] =4 \omega^2 |\alpha|^2\  ,
\end{align}
where $H= \omega a^\dag a$ is the Hamiltonian of a Harmonic oscillator with angular frequency $\omega$, and {the purity of coherence is }
\begin{align}
P_H(\rho)&=-\Tr([H,\rho]^2\rho^{-1})\\  &=
\omega^2 |\alpha|^2 \times \frac{2n_\text{th}(\beta)+1}{n_\text{th}(\beta) (n_\text{th}(\beta)+1)} \ ,
\end{align}
where $n_{\text{th}}(\beta)= ({e^{\beta\omega}}-1)^{-1}$ is the mean  number of thermal excitations in the input $\rho(\beta, \alpha)$.  Therefore, the optimal infidelity factor 
\begin{align}\label{opt bound}
    \delta^{\text{opt}}(\beta)=
\frac{n_{\text{th}}(\beta)}{2}+\frac{n_{\text{th}}(\beta)}{4n_{\text{th}}(\beta)+2}\ ,
\end{align}
is independent of  amplitude $\alpha$ (hence, we drop the dependence on $\alpha$). Fig.~\ref{fig89} compares 
 this function with the infidelity factor of two other distillation protocols discussed below.

{As we show in Sec.~\ref{section: universal bounds proof}, the fact that $\delta^{\text{opt}}(\beta)$ } is lower bounded by the right-hand side of Eq.(\ref{main}) follows from the general results of  \cite{marvian2020coherence}, which introduced the purity of coherence as a useful quantification of coherence in noisy systems, especially in the context of coherence distillation.\footnote{{The purity of coherence and the QFI are used in \cite{marvian2020coherence} to show that it is impossible to covariantly distill a pure state containing coherence from copies of a generic mixed state that contains coherence, at a non-zero rate with vanishing error. {Similarly, it is impossible to perfectly distill a pure state from a finite number of copies of a generic mixed state. But, as discussed in \cite{marvian2020coherence}, this no-go theorem does not forbid the covariant distillation of a sub-linear number of output copies or, for that matter, a \textit{single} copy of a pure state, in the limit of large number of inputs}. This is our focus in this paper.}} Hence our result that this bound is indeed achievable reveals an operational interpretation of the purity of coherence and RLD Fisher information in this context.  
This can be compared with the operational interpretation of SLD QFI in the context of coherence formation \cite{marvian2022operational}, where one converts pure coherent states to mixed states in the asymptotic independent and identically distributed (i.i.d.) regime. \\

\begin{figure}
    \centering
    \includegraphics[width=0.45\textwidth]{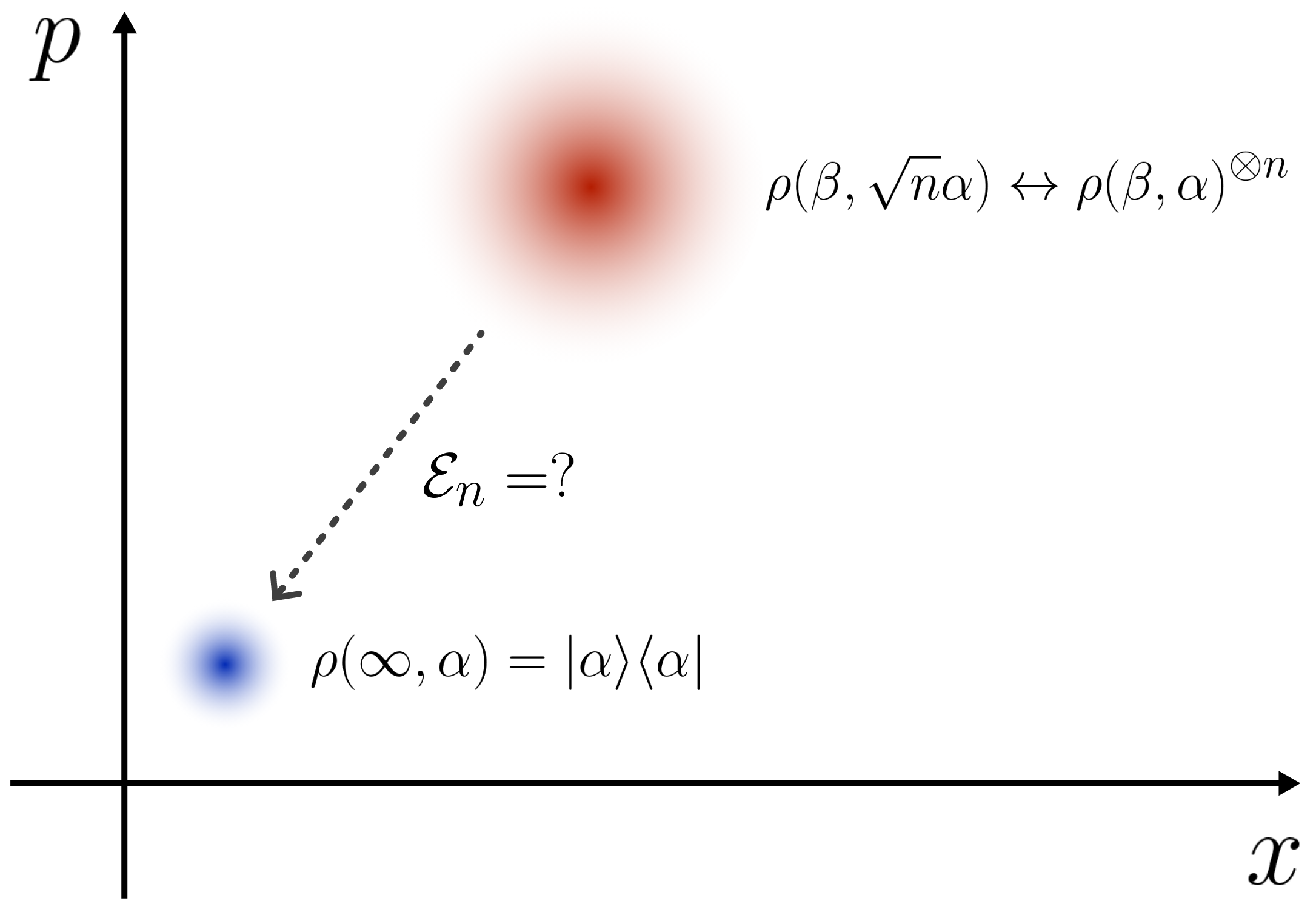}
\caption{\textbf{Coherence  distillation in phase space --}  $n$ copies of the coherent thermal state $\rho(\beta,\alpha)$ can be reversibly transformed to state $\rho(\beta,\sqrt{n}\alpha)$ via phase-insensitive channels. Hence, any coherence distillation protocol can be understood as a sequence of phase-insensitive channels $\mathcal{E}_n$ that transform the state $\rho(\beta, \sqrt{n}\alpha)$ to a state close to  $\rho(\infty,\alpha)= \ketbra{\alpha}{\alpha}$. 
Both the input and the desired output states are described by  Gaussian Wigner distributions,  
with  $x$ and $p$ variances $ \braket{\Delta x^2}_\text{in} = \braket{\Delta p^2}_\text{in} =n_\text{th}(\beta)+1/2$, and  $\braket{\Delta x^2}_\text{out} = \braket{\Delta p^2}_\text{out} =1/2$, respectively.   
The centers of the two Gaussian distributions have radii $r_\text{in}=\sqrt{n}|\alpha|$ and $r_\text{out}=|\alpha|$, respectively.  From a  classical perspective, one may expect that under the rescaling $x\rightarrow x/\sqrt{n}$ and $p\rightarrow p/\sqrt{n}$, which is a phase-insensitive map,  the input distribution can be transformed to the output, provided that $n$ is equal to or larger than the ratio of the variances, namely $2 n_\text{th}(\beta)+1$. That is,  using $n\ge 2 n_\text{th}(\beta)+1$ copies of the input $\rho(\beta,\alpha)$, one should be able to obtain an exact copy of the coherent state $|\alpha\rangle$. Indeed, this is exactly the same bound one obtains  by considering the ratio of QFI $F_H$ for the output and input states (see Sec.~\ref{section: universal bounds proof}). However, the above rescaling cannot be implemented as a physical process. Indeed, unless $\alpha=0$ or $\beta=\infty$, one needs an infinite number of copies of $\rho(\beta,\alpha)$ to obtain an exact copy of the pure coherent state $|\alpha\rangle$.
As shown in \cite{marvian2020coherence}, this can be established  using the properties of the purity of coherence $P_H$. Furthermore, as we show in this letter, in the regime $n \gg 1$, 
this quantity determines the minimum achievable error in the output state (see Eq.(\ref{opt bound})). It is also worth noting that the above rescaling can be realized with a beam splitter, with an order $\mathcal{O}(n^{-1})$ correction that comes from the vacuum noise in the other input mode (See Fig.~\ref{first real}).}
\label{fig:phase space}
\end{figure}

\noindent\textbf{Other {distillation protocols}: }Interestingly, 
even though the input and the desired output states are both Gaussian, we find that the optimal protocol that achieves the above performance is not a Gaussian process.  Indeed, we find that among  \textit{Gaussian} phase-insensitive protocols, the optimal protocol is a simple scheme that has been considered before in the context of purification \cite{BeamSplitterProtocol1, BeamSplitterProtocol2,  Chiribella}, which achieves the infidelity factor
\begin{align}
       {\delta^{\text{opt-Gauss}}}(\beta, \alpha) = n_{\text{th}}(\beta)\ .
    \end{align}
This protocol can be realized  {using only linear optical elements and a single ancilla mode at zero temperature (See Fig.~\ref{first real}).}

{Another important class of  distillation protocols studied in this paper are based on \textit{measure-and-prepare}  phase-insensitive operations. It turns out that such protocols also can not achieve} the optimal infidelity factor. {As we further discuss in Sec.~\ref{section: universal bounds proof}, the general results of \cite{marvian2020coherence} imply that the lowest achievable infidelity in phase-insensitive} measure-and-prepare channels is determined by the ratio of the {QFI of the desired output and input states, i.e.,} 
\begin{align}\label{opt mp intro}
 \delta^{\text{opt-MP}}(\beta, \alpha) &=\frac{F_H(|\alpha\rangle\langle\alpha|)}{4 \times  F_H(\rho(\beta,\alpha))}=
 \frac{n_{\text{th}}(\beta)}{2}+\frac{1}{4}\ .
\end{align} 
{The distillation protocol of \cite{marvian2020coherence} that achieves this optimal performance among measure-and-prepare phase-insensitive protocols is based on {a rather complicated estimation scheme that achieves the Cramer-Rao bound, via the maximum-likelihood estimator \cite{Caves94, Barndorff-Nielsen_2000}.}  However, as we explain in Appendix \ref{appendix: measure prepare}, when the input states are coherent thermal states, this measurement can be chosen to be one that uses  the single-mode 
\textit{canonical phase measurement} \cite{holevo2011probabilistic}, with POVM $\{M(\phi): \phi\in[0,2\pi)\}$ with matrix elements in the Fock basis given by  $M_{mn}(\phi)={e^{\mathrm{i}\phi (m-n)}d\phi}/{2\pi}$. }

{Note that in contrast to the optimal distillation protocol, the infidelity factor for measure-and-prepare distillation protocols does not vanish}, even in the limit of zero temperature input $n_\text{th}(\beta)=0$.  On the other hand, in the infinite temperature limit $n_\text{th}(\beta) \gg 1$, the performance of this protocol approaches the optimal protocol, i.e., it saturates the upper bound imposed by the purity of coherence. As we further explain in Sec.~\ref{section: universal bounds proof}, using the results of \cite{marvian2020coherence}, this is a consequence of the fact that in the high-temperature limit, the purity of coherence is approximately equal to QFI, i.e., $P_H(\rho)\approx F_H(\rho)$. {In Appendix \ref{appendix: measure prepare},} {we also study {a measure-and-prepare protocol} that employs Heterodyne measurements and show that it achieves the infidelity factor
\begin{align}
    \delta^{\text{Heterodyne}}(\beta, \alpha) = n_\text{th}(\beta)+1\ , 
\end{align}
{which is worse than the infidelity obtained via the canonical phase measurement in Eq.(\ref{opt mp intro}).}

{It is worth noting that, the infidelity factor of the optimal protocol, as well as all the other protocols discussed above, {are} independent of the amplitude $\alpha$, and {depends only on the} temperature $\beta$.}\\

\noindent{\textbf{Distillation combined with 
amplification/attenuation:}  The infidelity factor in Eq.(\ref{eqn: intro optimal delta}) characterizes the performance of a distillation protocol that converts $n$ copies of state $\rho(\beta,\alpha)$ to $|\alpha\rangle$, in the large $n$ regime. One can consider more general distillation protocols that convert the given copies of state $\rho(\beta,\alpha)$ to state $|s\alpha\rangle$ for a general complex number $s$. This corresponds to amplification when $|s|>1$, or attenuation when $|s|<1$, of the coherent state $|\alpha\rangle$.

As we further explain in Sec. \ref{optimal amplify/attenuate subsection}, any such protocol $\mathcal{A}_n$ can be thought of as a process that first combines the given $n$ copies of state $\rho(\beta,\alpha)$ to obtain $n'\approx n/|s|^2 $ copies of $\rho(\beta,\alpha')$ where $s=\alpha'/\alpha$, and then uses them to distill a copy of the coherent state $|\alpha'\rangle$, via a distillation protocol $\mathcal{E}_{n'}$, such that
$$\mathcal{A}_{n}((\rho(\beta,\alpha))^{\otimes n})=\mathcal{E}_{n'}((\rho(\beta,\alpha'))^{\otimes n'})\approx |\alpha'\rangle\langle \alpha'|\ .$$ 
{Then, in this case,  for the optimal protocol, the infidelity $\epsilon_n=1-\langle \alpha'|\mathcal{A}_n(\rho(\beta,\alpha)^{\otimes n })|\alpha'\rangle$ 
times $n$ converges to 
\begin{align}
\lim_{n\rightarrow \infty} n \times \epsilon_n = \frac{|\alpha'|^2}{|\alpha|^2}  \times \delta^{\mathcal{E}}(\beta,\alpha)=\frac{1}{4}\frac{F_H(|\alpha'\rangle\langle \alpha'|)}{P_H(\rho(\beta,\alpha))}  \ ,
\end{align}
where $\delta^{\mathcal{E}}$ is the infidelity factor of the distillation process $\mathcal{E}$. This is again equal to the ratio of QFI of the desired output to the purity of coherence 
of each input copy (up to a factor of 4).}

\section{The setup}\label{section: setup}

\subsection{  Coherent Thermal  States}

Consider $n$ bosonic modes, e.g., optical modes with identical frequency. Or, equivalently, $n$ Harmonic Oscillators with the total Hamiltonian 
\be\label{eqn: total hamiltonian}
{H}_n=\omega \sum_{i=1}^n {a}^\dag_i{a}_i=\omega \sum_{i=1}^n \frac{x_i^2+p_i^2-1}{2} \ ,
\ee
where ${a}_i: i=1,\cdots, n$ index the annihilation operators of the $n$ modes satisfying the standard commutation relations $[{a}_i, {a}^\dag_j]=\delta_{i,j}$, and $[{a}_i, {a}_j]=0$, where $x_i={(a_i+a_i^\dagger)}/{\sqrt{2}}$, $p_i={(a_i-a_i^\dagger)}/{\sqrt{2} \mathrm{i}}$ and $\omega$ is the frequency (we assume $\hbar=1$).  

Suppose initially these modes are  prepared in the joint uncorrelated state ${\rho}(\beta,\alpha)^{\otimes n}$, 
where the state of each individual input mode is a coherent thermal state, also known as a displaced thermal state,  
\be\label{form}
{\rho}(\beta,\alpha)={D}(\alpha) {\rho}_\text{th}(\beta) {D}^\dag(\alpha)\ ,
\ee
where $\alpha\in\mathbb{C}$ and $\beta>0$. Here,
\be
{\rho}_\text{th}(\beta)=\frac{\exp({-\beta \omega {a}^\dag {a}})}{\Tr(\exp({-\beta \omega {a}^\dag {a}}))}\ ,
\ee
is the thermal (Gibbs) state at temperature $T=(k_\text{B}\beta)^{-1}$, where $k_\text{B}$ is the Boltzman's constant, 
and 
${D}(\alpha)=\exp(\alpha {a}^\dag-\alpha^\ast {a})$ is the Weyl displacement operator.  
The only pure states in this family are obtained in the zero temperature limit, i.e., $\beta\rightarrow \infty$, and are the coherent states 
\be
|\alpha\rangle=D(\alpha)|0\rangle=e^{-\frac{|\alpha|^2}{2}} \sum_{n=0}^\infty \frac{\alpha^n}{\sqrt{n!}} |n\rangle  \ \ \ \ : \alpha\in\mathbb{C} \ ,
\ee
where $\{|n\rangle\}$ denotes the Fock basis, i.e., the normalized eigenvectors of $a^\dag a$ with the corresponding integer eigenvalues $n\ge 0$, which satisfy $a |n\rangle=\sqrt{n}|n-1\rangle$ (equivalently, $|\alpha\rangle$ is defined by $a|\alpha\rangle=\alpha|\alpha\rangle$). The ground state of the Harmonic Oscillator, namely state $|0\rangle$, is often called the vacuum state. For general values of $\beta > 0$,  coherent thermal states describe noisy coherent states, and {as discussed in the introduction, they} naturally appear in different contexts in quantum optics and communication.

The expected energy of each input mode of $\rho(\beta,\alpha)$ is
\begin{align}\label{energy}
 \Tr(\omega{a}^\dag {a}\ {\rho}(\beta,\alpha))=\omega n_\text{th}(\beta)+\omega |\alpha|^2 \ ,
\end{align}
where 
\be
n_\text{th}(\beta)=\Tr(a^\dag a \rho_\beta)=\frac{1}{e^{\beta\omega}-1}\ ,
\ee
is the mean number of thermal excitations {(sometimes denoted as $n_\beta$).} Therefore, the term $\omega n_\text{th}(\beta)$ is the contribution of thermal excitations in the total energy, whereas  $\omega |\alpha|^2$ is the energy associated to the  coherence.

The distance between the coherent thermal state ${\rho}(\beta,\alpha)$ and the corresponding pure coherent state $|\alpha\rangle\langle\alpha|=\rho(\infty, \alpha)$ can be quantified by their infidelity, i.e., one minus their Uhlmann fidelity 
\be\label{eqn: infid formula}
\epsilon=1-\langle\alpha|{\rho}(\beta,\alpha)|\alpha\rangle
=e^{-\beta \omega}=\frac{n_\text{th}(\beta)}{n_\text{th}(\beta)+1}\ .
\ee

Therefore, in the low temperature regime, where $\beta \omega \gg 1$, we have $\epsilon \approx n_\text{th}(\beta)$, which means that infidelity is approximately equal to the mean number of thermal excitations.

\begin{figure*}
\centering\includegraphics[width=0.75\textwidth]{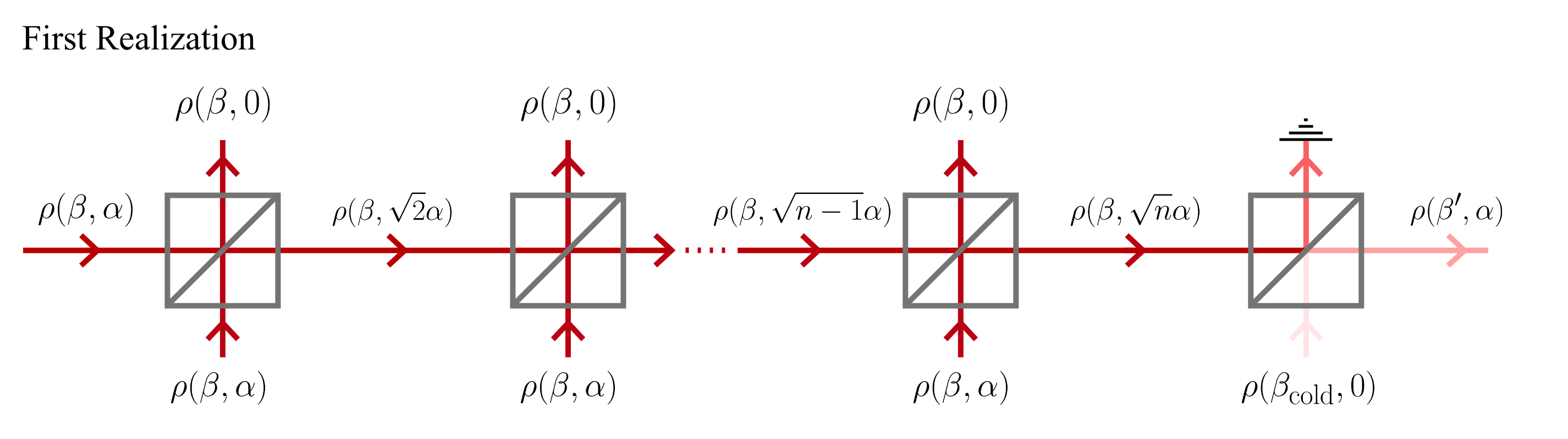}
\includegraphics[width=0.75\textwidth]{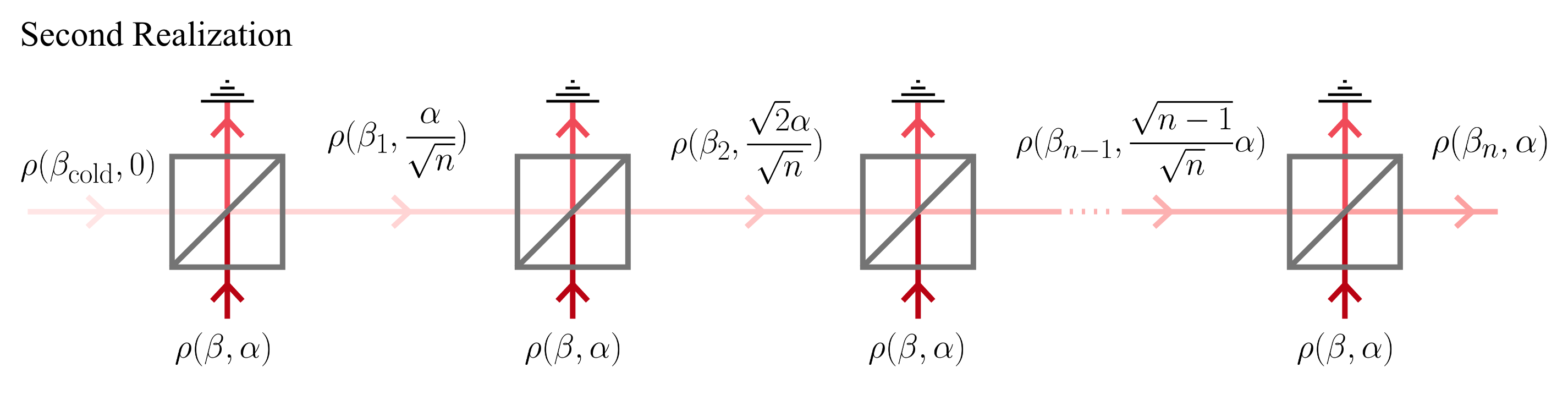}
\caption{\textbf{A suboptimal coherence distillation protocol using beam splitters --} {Two different realizations of a distillation protocol that} consumes $n$ copies of coherent thermal state $\rho(\beta, \alpha)$ to produce single mode $\rho(\beta_\text{out}, \alpha)$ in a lower temperature are presented (the change in temperature is indicated using \textit{darker} and  \textit{lighter} shades of red for hot and cold states, respectively). This protocol achieves the optimal performance among Gaussian distillation protocols (see Proposition \ref{prop: gaussian}). The transmission/reflection coefficients of each of the beam splitters are adjusted to attain the state transformations indicated in the figure as per Eq.(\ref{eq: bs alpha}). \textit{\textbf{First Realization}} -- In the first stage of this realization the coherence in $n$ input modes is \textit{concentrated} into a single mode $\rho(\beta,\alpha)^{\otimes n} \longrightarrow \rho(\beta,\sqrt{n}\alpha)$ 
    (we refer to this passive transformation as the \textit{concentration map}; see Sec.~\ref{subsec: concentration dilution reversibility} for further discussion). Then, to reduce the temperature we combine this state with an ancilla mode in a lower temperature $\beta_\text{cold}$ to realize $\rho(\beta,\sqrt{n}\alpha)\otimes \rho(\beta_\text{cold},0)\ \longrightarrow \ \rho(\beta',\alpha)\ ,$ where $n_\text{th}(\beta')$ is given by  Eq.(\ref{eq: bs n}). If the ancilla mode is {in the vacuum state}, (i.e., $n_\text{th}(\beta_{\text{cold}})=0$), then $n_\text{th}(\beta')=n_\text{th}(\beta)/n$. From Eq.(\ref{eqn: infid formula}), for $n\gg 1$, the infidelity with the desired state $|\alpha\rangle$ is $\epsilon_n={n_\text{th}(\beta)}/{n}+\mathcal{O}(n^{-2})$. \textit{\textbf{Second Realization}} -- In this realization, a cold thermal ancilla mode in the initial state $\rho_\text{th}(\beta_{\text{cold}})$ sequentially and weakly interacts with each copy of the state $\rho(\beta,\alpha)$ where $\beta_\text{cold}> \beta$ ({See Appendix \ref{appendix: gaussian prop} for further discussion}). {Each interaction  displaces the state of ancilla by $\approx \alpha/\sqrt{n}$.}  }
    \label{first real}
\end{figure*}

\subsection{Phase-insensitive operations}

{Starting from multiple copies of a coherent thermal state,} we are interested in distilling a single mode (or multi-mode) state that has higher fidelity with the desired pure coherent state $|\alpha\rangle$ using {phase-insensitive operations}. That is, we consider the most general time evolution described by a completely positive trace-preserving (CPTP) map $\mathcal{E}_n$ from density operators on $n$ bosonic input modes to density operators on a single mode (or more generally $m\ge 1$ output modes), satisfying the covariance condition  
\be\label{phase-insensitive condition}
\mathcal{E}_n(R_\phi^{\otimes n}(\cdot)R_{-\phi}^{\otimes n})=R^{\otimes m}_\phi\mathcal{E}_n(\cdot)R_{-\phi}^{\otimes m} \ \ \ \ \ : \phi\in [0, 2\pi)\ ,
\ee
where $R_\phi=\exp(i \phi a^\dag a)$ is the phase shift unitary acting on a single mode. Note that this condition can be equivalently stated as invariance under time translation symmetry, i.e.,
\be\label{cov3}
\mathcal{E}_n(e^{-\mathrm{i} t H_n}(\cdot)e^{\mathrm{i} t H_n})=e^{-\mathrm{i} t H_m}\mathcal{E}_n(\cdot)e^{\mathrm{i} t H_m}  \ \ \ \ \ : t\in [0, \frac{2\pi}{\omega})\ 
\ee
where $H_n$ is defined as in Eq.(\ref{eqn: total hamiltonian}). The fact that the operation $\mathcal{E}_n$ is phase-insensitive, in particular, implies that the transformation can be realized without knowing the phase of the displacement $\alpha$, which is crucial for applications in the context of metrology, reference frames, and communication \cite{PeresClocks, giovannetti2001quantum, Buzek99, GiovannettiClockSync, KwonClockTradeOff, BartlettRefFrames, GiovannettiMetrology, giovannetti2011advances, chiribella2013quantum, helstrom1969quantum}.

\subsection*{Resource theory of U(1) asymmetry and unspeakable coherence}
It is worth noting that phase-insensitive operations are indeed the set of free operations in the resource theory of asymmetry for U(1) symmetry \cite{gour2008resource, Marvian_thesis, marvian2013theory, marvian2014extending}. Therefore,  the problem studied in this paper can be interpreted as a question in the context of this resource theory. As argued in \cite{marvian2016quantify, marvian2014modes}, this resource theory provides an approach to quantifying and characterizing unspeakable coherence (see also \cite{streltsov2017colloquium}). 
 {From this point of view, the functions $F_H$ and $P_H$ are indeed examples of measures of asymmetry.}  We discuss more about this interpretation in Section \ref{Sec:discussion}.

\section{Example: A simple distillation protocol using beam splitters}\label{section: beam splitter}

As an illuminating example, {we start with a simple, though suboptimal,  protocol that has been previously studied in the context of state purification} \cite{BeamSplitterProtocol1, BeamSplitterProtocol2}. In Fig.~\ref{first real}, {we present two slightly different realizations of this protocol that use beam splitters and {a low-temperature single ancilla mode that is ideally in the vacuum state}. This protocol takes advantage of an essential property of coherence, namely constructive interference: as we review in Appendix~\ref{appendix: general transformations passive}, after combining two coherent thermal states $\rho(\beta_1,\alpha_1)$  and $\rho(\beta_2,\alpha_2)$  via a  beam splitter, the reduced state of each mode   is also a coherent thermal state as 
\be\label{eq: bs alpha}
\rho(\beta_1,\alpha_1)\otimes \rho(\beta_2,\alpha_2) \longrightarrow \rho(\beta',t\alpha_1+r\alpha_2)\ ,
\ee
{where $t$ and $r$ are respectively the transmissivity and reflectivity of the beam splitter satisfying $|t|^2+|r|^2=1$,} and $\beta'$ is determined by the following identity on the expected number of thermal exictations 
\be\label{eq: bs n}
n_\text{th}(\beta')=|t|^2 \times n_\text{th}(\beta_1) +|r|^2 \times  n_\text{th}(\beta_2)\ . 
\ee
{In other words, while thermal excitations behave stochastically under beam splitters, coherence in the inputs exhibits interference effects.}
Applying this rule, it can be easily shown  that  {by properly choosing the beam splitters in these schemes}, one can achieve the infidelity
$\epsilon_n={n_\text{th}(\beta)}/{n}\ +\ \mathcal{O}(n^{-2})$, which means the infidelity factor is
\be\label{BS delta}
\lim_{n\rightarrow \infty } n\times \epsilon_n= n_\text{th}(\beta)\ .
\ee
See Appendix \ref{appendix: gaussian prop} for further details.

It is worth noting that Ref. \cite{Chiribella} has shown that this protocol is
optimal among all Gaussian \textit{and} non-Gaussian protocols, {as quantified by the fidelity with the desired coherent state}, under the assumption that the amplitude $\alpha$ is not known, but $\alpha$ is sampled from a Gaussian distribution centered at $\alpha=0$. However, our results imply that this protocol is \emph{not} optimal if one knows the magnitude $|\alpha|$ (note that because all the protocols under consideration in our paper are phase-insensitive, the phase of $\alpha$ does not affect the fidelity).  Nevertheless, we find 
that even when the magnitude $|\alpha|$ is known, this protocol is  still optimal among \textit{Gaussian} phase-insensitive operations.  As we show in the Appendix~\ref{appendix: gaussian prop},

\begin{proposition}\label{prop: gaussian}
For the optimal  Gaussian phase-insensitive distillation protocol, the infidelity factor is $ {\delta^{\text{opt-Gauss}}}(\beta, \alpha)= n_\text{th}(\beta)$.   Furthermore, the optimal Gaussian protocol can be realized using the scheme in Fig.~\ref{first real}.
\end{proposition}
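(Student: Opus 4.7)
The plan is to parametrize the most general single-output Gaussian phase-insensitive channel via its Heisenberg-picture action and then reduce the optimization to an elementary problem over a finite set of complex coefficients. Since any Gaussian CPTP channel can be realized by a symplectic evolution on the input together with an ancilla in a Gaussian state, the output annihilation operator is a linear combination of input and ancilla operators, $a_{\text{out}} = \sum_{i=1}^n \mu_i a_i + \sum_j (\nu_j c_j + \lambda_j c_j^\dag)$, where $\{c_j\}$ are ancilla modes. The phase-covariance condition in Eq.~(\ref{phase-insensitive condition}), when translated to the Heisenberg picture, demands that $a_{\text{out}}$ transform as $e^{-i\phi}a_{\text{out}}$ under the input rotation $a_i\to e^{-i\phi}a_i$, and this forbids any $a_i^\dag$ term. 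One is then left with
\begin{equation}
a_{\text{out}} = \sum_{i=1}^n \mu_i \, a_i + b,
\end{equation}
where $b$ is linear in ancilla operators and $[b,b^\dag] = 1-\sum_i|\mu_i|^2$. A multi-mode output reduces to the single-mode case by tracing out all but the target mode, which is itself a Gaussian phase-insensitive operation and does not improve the fidelity with $|\alpha\rangle$.

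Next, I would evaluate the channel on $\rho(\beta,\alpha)^{\otimes n}$. Phase-insensitivity forces $\langle b\rangle=0$, and a direct moment computation gives $\langle a_{\text{out}}\rangle=(\sum_i\mu_i)\alpha$, $\langle a_{\text{out}}^2\rangle-\langle a_{\text{out}}\rangle^2=0$, and $\langle a_{\text{out}}^\dag a_{\text{out}}\rangle-|\langle a_{\text{out}}\rangle|^2 = n_{\text{th}}(\beta)\sum_i|\mu_i|^2 + \langle b^\dag b\rangle_{\text{anc}}$, so the output is itself a coherent thermal state $\rho(\beta',\alpha')$ with $\alpha' = (\sum_i\mu_i)\alpha$ and $n_{\text{th}}(\beta')=n_{\text{th}}(\beta)\sum_i|\mu_i|^2 + \langle b^\dag b\rangle_{\text{anc}}$. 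Using the Husimi $Q$-function formula for the fidelity,
\begin{equation}
\langle\alpha|\rho(\beta',\alpha')|\alpha\rangle = \frac{1}{n_{\text{th}}(\beta')+1}\exp\!\left(-\frac{|\alpha-\alpha'|^2}{n_{\text{th}}(\beta')+1}\right),
\end{equation}
and expanding to leading order as $n\to\infty$ (where the output must approach $|\alpha\rangle$), the infidelity factor becomes
\begin{equation}
n\,\epsilon_n \;=\; n\,n_{\text{th}}(\beta') + n\,|\alpha|^2\Bigl|1 - \textstyle\sum_i\mu_i\Bigr|^2 + o(1).
\end{equation}

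Finally, I would minimize this bound. Choosing the vacuum ancilla gives $\langle b^\dag b\rangle_{\text{anc}} = 0$ in the attenuator regime $\sum_i|\mu_i|^2\le 1$, while in the amplifier regime the CCR forces $\langle b^\dag b\rangle_{\text{anc}} \ge \sum_i|\mu_i|^2-1 > 0$, which is strictly sub-optimal. In the attenuator regime, Cauchy-Schwarz gives $\sum_i|\mu_i|^2\ge|\sum_i\mu_i|^2/n$; the joint minimization of the two nonnegative terms in $n\epsilon_n$ then forces $\sum_i\mu_i=1$ and $\sum_i|\mu_i|^2=1/n$, saturated by the symmetric choice $\mu_i=1/n$, which yields $\lim_n n\,\epsilon_n \ge n_{\text{th}}(\beta)$. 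Achievability follows immediately from the scheme of Fig.~\ref{first real}: the concentration stage reversibly maps $\rho(\beta,\alpha)^{\otimes n}\to\rho(\beta,\sqrt n\,\alpha)$ via a passive linear network, and subsequent mixing with the vacuum at a beam splitter of transmissivity $t=1/\sqrt n$ implements exactly $\mu_i=1/n$ with $b=0$, producing $\rho(\beta',\alpha)$ with $n_{\text{th}}(\beta')=n_{\text{th}}(\beta)/n$ and saturating the bound. The main obstacle I anticipate is the careful treatment of the amplifier regime: one must use $[b,b^\dag] = 1-\sum_i|\mu_i|^2 < 0$ to show the induced ancilla noise is bounded strictly away from zero, which is the familiar no-free-lunch statement for phase-insensitive amplification adapted to the distillation setting.
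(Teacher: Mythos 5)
Your proof is correct, and it reaches the same two essential facts as the paper's argument: a phase-insensitive Gaussian channel maps $\rho(\beta,\alpha)^{\otimes n}$ to another coherent thermal state $\rho(\beta',\alpha')$, and the complete-positivity (uncertainty) constraint on the added noise forces $n_{\text{th}}(\beta')\ge n_{\text{th}}(\beta)/n$ whenever the bias vanishes. The route differs in the reduction step. The paper first invokes the reversibility of the concentration map to collapse the problem to a \emph{single-mode} phase-insensitive Gaussian channel acting on $\rho(\beta,\sqrt{n}\alpha)$, and then uses the standard covariance-matrix parametrization $\sigma\mapsto x^2\sigma+yI$ with $y\ge|x^2-1|$; the optimization is then over the two scalars $(x,y)$. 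You instead keep all $n$ input modes, write the Heisenberg-picture dilation $a_{\text{out}}=\sum_i\mu_i a_i+b$ with $[b,b^\dag]=1-\sum_i|\mu_i|^2$, and recover the optimality of the symmetric (concentration) choice $\mu_i=1/n$ from Cauchy--Schwarz. This is a self-contained alternative that does not presuppose the concentration/dilution machinery of Sec.~\ref{subsec: concentration dilution reversibility}, at the cost of re-deriving it; the paper's version buys a cleaner two-parameter minimization, yours makes the Caves-type no-free-lunch origin of the constraint $y\ge|x^2-1|$ more transparent. Two small points worth tightening: (i) phase covariance is a condition on the channel, not on a particular dilation, so the claim that the $a_i^\dag$ terms are forbidden should be phrased at the level of the channel data $(X,Y,d)$ (where it forces $X$ to act as complex scalars $\mu_i$, $Y\propto I$, $d=0$), or you should note that a covariant dilation can always be chosen; (ii) in the final minimization you should allow $\sum_i\mu_i=1+o(1)$ rather than exactly $1$ (as the paper does via its bias $\delta_n$), since only this is forced by $\epsilon_n\to0$ --- the Cauchy--Schwarz bound then gives $\sum_i|\mu_i|^2\ge(1+o(1))/n$ and the conclusion $\liminf_n n\epsilon_n\ge n_{\text{th}}(\beta)$ survives unchanged.
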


{As we further discuss in Appendix~\ref{appendix: phase insenstive gaussian channels}, it can be easily seen that} for the input state $\rho(\beta,\alpha)^{\otimes n}$, the output of any  phase-insensitive Gaussian channel is also a coherent thermal state in the form $\rho(\beta' , A \sqrt{n} \alpha )$,  
where $A\in \mathbb{C}$,  and the inverse temperature $\beta'$ satisfies
\be\label{noi}
n_\text{th}(\beta')\ge |A|^2\times  n_\text{th}(\beta)+\max\{0, |A|^2-1\} \ . 
\ee
Furthermore, for any $\beta'\ge 0$ satisfying this constraint, the transformation $\rho(\beta,\alpha)^{\otimes n}\rightarrow \rho(\beta',A\sqrt{n}\alpha)$ is realizable by a phase-insensitive Gaussian channel. The second term in the right-hand side of Eq.(\ref{noi}), which is positive for $A>1$, corresponds to the so-called quantum noise 
in amplifiers \cite{CavesLinearAmp}.  For $A=1/\sqrt{n}$, one obtains $n_\text{th}(\beta')\ge n_\text{th}(\beta)/n$, {which is saturated by the protocol described in Fig.~\ref{first real}.}

\section{Optimal Distillation Protocol }\label{section: optimal protocol}

{Next, we introduce a distillation protocol that achieves the optimal bound mentioned in Eq.(\ref{opt bound}). In particular, under this protocol, 
the purity of coherence is conserved in the limit $n\rightarrow \infty$. To construct this protocol we first focus on the so-called \textit{strong-input weak-output} regime, i.e., where the input coherent thermal state $\rho(\beta, \alpha')$ satisfies $|\alpha'|\gg 1$, and the desired output coherent state $|\alpha\rangle$ satisfies  $|\alpha|\ll 1$. This protocol is discussed in Sec.~\ref{subsection: weak output regime}. Then, we extend this to input states with arbitrary $\alpha$ via a technique presented in Fig.~\ref{fig: divide distill} that we call ``divide and distill".

\subsection{Divide and Distill Strategy}\label{subsection: divide distill}
In the first realization of the sub-optimal distillation protocol in Fig.~\ref{first real}, we see that  $r$ copies of a coherent thermal state can be combined to obtain a coherent thermal state as
\begin{align}
  \forall \alpha\in\mathbb{C},\  \forall \beta\ge 0:\  \  \ \ \  \mathcal{C}_{r}(\rho(\beta, \alpha)^{\otimes r}) = \rho(\beta, \sqrt{r}\times \alpha)\ .
\end{align}
We sometimes refer to $\mathcal{C}_{r}$ as the concentration map. As we discuss in Sec.~\ref{subsec: concentration dilution reversibility}, when all the input modes of the concentration map $\mathcal{C}_{r}$ have the same temperature, and only in this case, this process is reversible.  {That is,  
there exists a phase-insensitive map}, which can be called the `dilution' map such that
\begin{align}
     \forall \alpha\in\mathbb{C}:\  \  \ \ \    \mathcal{D}^{(\beta)}_{r}(\rho(\beta, \alpha))= \rho(\beta, \frac{\alpha}{\sqrt{r}})^{\otimes r}\ , 
\end{align}
which means that the composed map $\mathcal{D}^{(\beta)}_{r}\circ \mathcal{C}_{r}$ leaves 
state $\rho(\beta,\alpha)^{\otimes r}$ unchanged. We can realize a dilution map by reversing the arrow of time in the first realization in Fig.~\ref{first real} ({note that in this scheme each discarded output mode is in the thermal state $\rho(\beta,0)$ and is uncorrelated with the rest of the output modes}). It is crucial to note that in contrast to the concentration map, the dilution map $\mathcal{D}_r^{(\beta)}$ depends on the temperature. {We note that the concentration map $\mathcal{C}_{r}$  has been used before in \cite{BeamSplitterProtocol1, BeamSplitterProtocol2, Chiribella}.}

\begin{figure}
    \centering
    \includegraphics[width=0.45\textwidth]{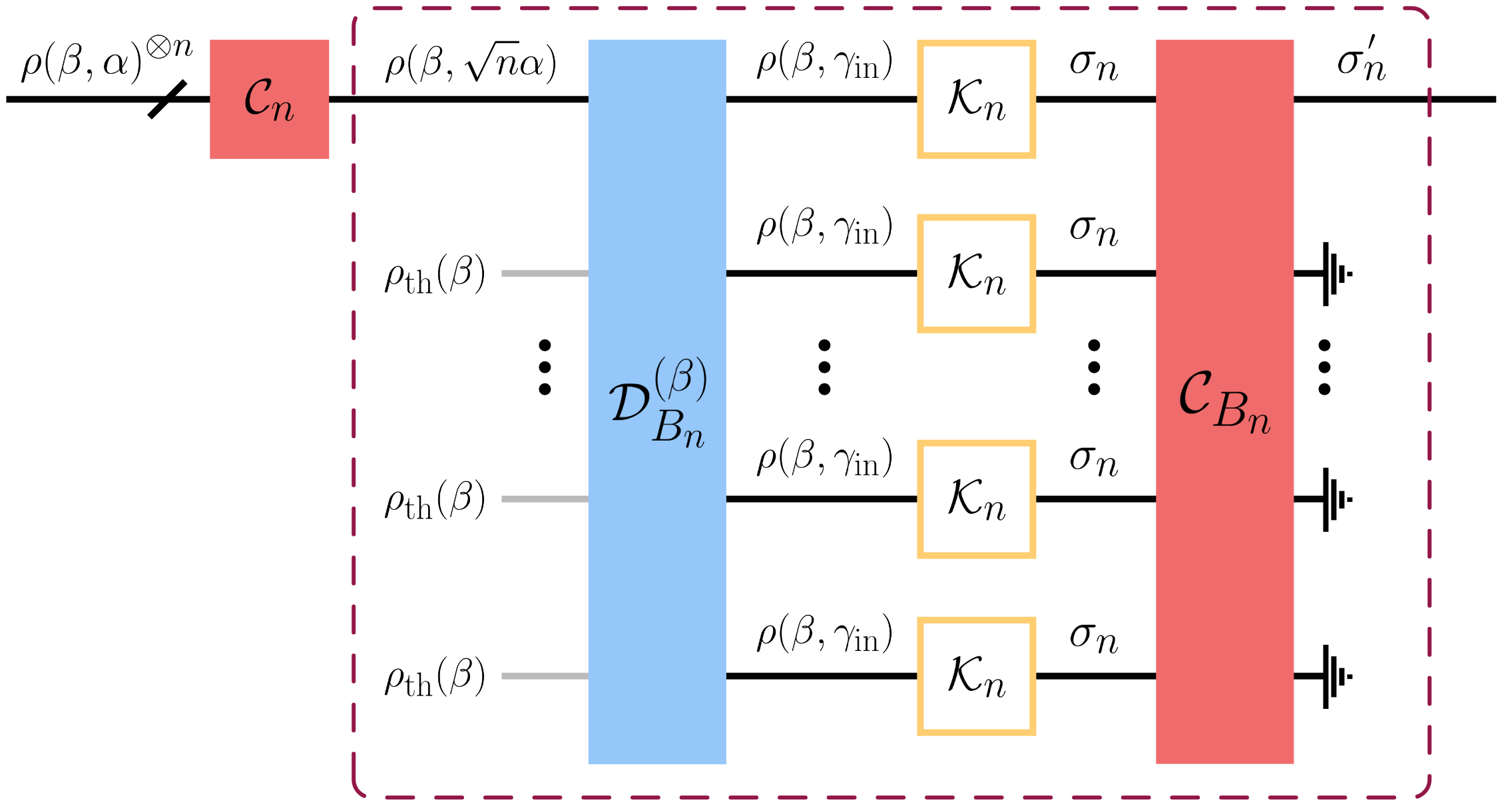}
   \caption{\textbf{Optimal Distillation using the Divide and Distill Strategy -- } {The circuit above illustrates the optimal protocol, along with the `divide and distill' strategy highlighted within the dashed box:} Suppose the phase-insensitive channel $\mathcal{K}_n$ distills an approximate pure state $\sigma_n \approx \ketbra{\gamma_\text{out}}{\gamma_\text{out}}$ from $\rho(\beta, \gamma_\text{in})$ in the regime where $|\gamma_\text{out}|\ll1$ and   $|\gamma_\text{in}|\gg1$ (discussed in Sec.~\ref{subsection: weak output regime}). We can extend this optimal behavior to general inputs, i.e., $\rho(\beta, \sqrt{n} \alpha)$ for any $\alpha$, by first diluting the input into an appropriate number of copies ($B_n$) such that each copy is in the strong-input weak-output regime {for $n \gg 1$}, then applying $\mathcal{K}_n$ in parallel to each of the $B_n$ copies, and finally concentrating all these outputs $\sigma_n$ into a single mode $\sigma'_n$. In the limit of large $n$, this procedure achieves the highest possible fidelity $\braket{\alpha|\sigma'_n|\alpha}$ (discussed in Sec.~\ref{subsection: optimal general}). It is worth noting that the first step of concentrating $\rho(\beta, \alpha)^{\otimes n} \leftrightarrow \rho(\beta, \sqrt{n}\alpha)$ is a reversible transformation (discussed in Sec.~\ref{subsec: concentration dilution reversibility}).}
    \label{fig: divide distill}
\end{figure}

Using these channels, we can achieve the following transformation,

\begin{lemma}\label{lemma: divide and distill}
Let  $\mathcal{E}$  
be a single mode (possibly non-Gaussian) phase-insensitive channel
that maps the coherent thermal state $\rho(\beta,\alpha)$ to the output state $\sigma:=\mathcal{E}(\rho(\beta,\alpha))$. Then, for any integer $m$, the phase-insensitive channel 
 \begin{align}\label{divide distill scheme}
    \mathcal{E}_m'\equiv \mathcal{C}_m \circ \mathcal{E}^{\otimes m} \circ \mathcal{D}^{(\beta)}_m\ 
\end{align}
maps  the input state $\rho(\beta,\alpha')$ to the output 
$\sigma':=\mathcal{E}'_m(\rho(\beta,\alpha'))$  {with $\alpha'=\sqrt{m}\alpha$},  whose infidelity  with the coherent state $|\alpha'\rangle$   is upper bounded by
\begin{align}\label{eqn: infid divide distill}
     1-\braket{\alpha'|\sigma'|\alpha'}\nonumber &\leq 
     \Tr(\sigma' a^\dagger a) - |\Tr(\sigma' a)|^2 + |\Tr(\sigma' a)-\alpha'|^2\nonumber\\
     &=\Tr(\sigma a^\dagger a) - |\Tr(\sigma a)|^2 + m|\Tr(\sigma a)-\alpha|^2\ .
\end{align}
 \end{lemma}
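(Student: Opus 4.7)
The plan is to split the proof into two logically independent pieces: (i) establish a universal upper bound on $1-\braket{\alpha'|\sigma'|\alpha'}$ in terms of the first and second moments of the output $\sigma'$, which is the first inequality in Eq.(\ref{eqn: infid divide distill}); and (ii) evaluate these moments for the particular channel $\sigma'=\mathcal{E}'_m(\rho(\beta,\alpha'))$ in terms of the corresponding moments of $\sigma$, which yields the second (equality) line.

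For step (i), the starting point is the elementary Fock-space operator inequality $\ident-\ketbra{0}{0}\leq a^\dagger a$. Conjugating both sides with $D(\alpha')$ and using the standard identity $D(\alpha')\,a\,D^\dagger(\alpha')=a-\alpha'$ gives
\begin{equation*}
\ident-\ketbra{\alpha'}{\alpha'}\ \leq\ (a-\alpha')^\dagger(a-\alpha')\, .
\end{equation*}
Taking the expectation in $\sigma'$ and expanding the right-hand side in the first and second moments of $a$, then applying the algebraic identity $-|z|^2+|z-\alpha'|^2=-2\,\text{Re}[(\alpha')^\ast z]+|\alpha'|^2$ with $z:=\Tr(\sigma' a)$, yields precisely the bound on the first line of Eq.(\ref{eqn: infid divide distill}).

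For step (ii), I trace the state through the three layers of $\mathcal{E}'_m$ one at a time. The dilution map $\mathcal{D}_m^{(\beta)}$ sends the input $\rho(\beta,\sqrt{m}\alpha)$ to the product $\rho(\beta,\alpha)^{\otimes m}$; then $\mathcal{E}^{\otimes m}$ produces $\sigma^{\otimes m}$; finally, the concentration map $\mathcal{C}_m$ is a passive beam-splitter unitary whose retained output mode has Heisenberg-picture annihilation operator $b=m^{-1/2}\sum_{i=1}^{m}a_i$, uniquely fixed (up to an irrelevant overall phase) by the previously stated property $\mathcal{C}_m(\rho(\beta,\alpha)^{\otimes m})=\rho(\beta,\sqrt{m}\alpha)$ together with preservation of canonical commutation relations. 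Since $\sigma^{\otimes m}$ is uncorrelated, $\braket{a_i^\dagger a_j}_{\sigma^{\otimes m}}$ factorizes to $|\Tr(\sigma a)|^2$ for $i\neq j$, giving
\begin{align*}
\Tr(\sigma' a)&=\sqrt{m}\,\Tr(\sigma a)\, , \\
\Tr(\sigma' a^\dagger a)&=\Tr(\sigma a^\dagger a)+(m-1)\,|\Tr(\sigma a)|^2\, .
\end{align*}
Substituting these into the first-line bound and simplifying using $-m|z|^2+|\sqrt{m}z-\sqrt{m}\alpha|^2=-|z|^2+m|z-\alpha|^2$ produces exactly the claimed equality. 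The main work is the bookkeeping for $\mathcal{C}_m$: identifying the correct Heisenberg-picture action on the retained output mode and then cleanly evaluating the second-moment sum using the product structure of $\sigma^{\otimes m}$. Beyond this, the argument relies only on elementary operator algebra, and both the dilution and concentration maps together with their key properties are supplied by the discussion preceding the lemma.
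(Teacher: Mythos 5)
Your proposal is correct and follows essentially the same route as the paper: the first inequality is the paper's Markov-inequality step (your operator inequality $\ident-\ketbra{0}{0}\leq a^\dagger a$, conjugated by $D(\alpha')$, is exactly that bound in operator form), and the equality line is obtained, as in the paper, by propagating first and second moments through the passive concentration map with $b=m^{-1/2}\sum_i a_i$ acting on the uncorrelated state $\sigma^{\otimes m}$. Your moment formulas and the final simplification match the paper's computation.
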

The circuit within the dashed box in Fig.~\ref{fig: divide distill} illustrates the transformation in Eq.(\ref{divide distill scheme}) {in the context of the distillation protocol discussed in the next section (in this example   $\mathcal{E}=\mathcal{K}_n$ and $m=B_n$).} Note that 
 in the first line 
 of Eq.(\ref{eqn: infid divide distill}), 
 the right-hand side 
 can be interpreted as the sum of the variance $\Tr(\sigma' a^\dagger a) - |\Tr(\sigma' a)|^2$,  and the squared `bias'  $|\Tr(\sigma' a)-\alpha'|^2$ in the phase space.  

{In summary, even though the states $\sigma$ and  $\sigma'$ are not necessarily Gaussian, this simple lemma, {which is a  consequence of Markov's inequality,} allows us to bound the infidelity of the output state $\sigma'$ with the desired coherent state $|\sqrt{m}\alpha\rangle$, based on the first and second moments of state $\sigma$. In particular, if the first moments of state $\sigma$ and the coherent state $|\alpha\rangle$ are equal, then by combining $m$ copies of $\sigma$ via the concentration map we obtain a single-mode state $\sigma'$ whose first moment is identical to the coherent state 
$|\sqrt{m}\alpha\rangle$. Then, the infidelity of these states will be bounded by the single-copy variance $\Tr(\sigma a^\dagger a) - |\Tr(\sigma a)|^2$.}

\subsection{Optimal protocol 
in the strong-input weak-{output} regime}\label{subsection: weak output regime}

Next, we  introduce a novel  phase-insensitive distillation channel $\mathcal{K}$ and show that in the proper input-output regime it is optimal, and, in particular, it preserves the purity of coherence. This protocol serves as the essential `building block' for the general optimal protocol that will be discussed in Sec.~\ref{subsection: optimal general}. 

We consider the regime in which the desired output is a weak  coherent state $|\gamma_\text{out}\rangle$ with  $|\gamma_\text{out}|\ll 1$, and the input is   $\rho(\beta,\gamma_\text{in})$
with $|\gamma_\text{in}|\gg 1$. More precisely, {we consider the regime} 
\begin{align}\label{asym conditions}
|\gamma_\text{in}|
 \gg 1 \ \  ,\ \text{and}\ \   |\gamma_\text{out}| \times |\gamma_\text{in}|\ll 1\ .
\end{align}
For instance, one can choose 
$|\gamma_\text{in}|= A |\gamma_\text{out}|^{-\delta}$, where $0<\delta<1$ and $A>0$, and take the limit $|\gamma_\text{out}|\rightarrow 0$. For convenience, without loss of generality, in the following, we assume $\gamma_\text{in}$ and $\gamma_\text{out}$ are real and positive (otherwise, we can always bring them to this form by applying a proper phase shift, which is a phase-insensitive unitary). 

In the weak-output regime, the desired output coherent state in the Fock basis can be approximated as 
\be\nonumber
|\gamma_{\text{out}}\rangle= |0\rangle+\gamma_{\text{out}}|1\rangle+\mathcal{O}(\gamma_{\text{out}}^2)\ ,
\ee 
which means that with high probability the state is restricted to the 2D subspace spanned by $|0\rangle$ and $|1\rangle$. Hence, we consider a  channel $\mathcal{K}$ such that its output is  restricted to this subspace, with Kraus decomposition 
\be\label{weak optimal channel}
\mathcal{K}(\cdot)=\sum_{l=0}^\infty K_l(\cdot) K_l^\dag\ ,
 \ee
where the Kraus operator for $l\geq0$ are 
\begin{align}\label{kraus operators 1}
    K_l = \frac{1}{\sqrt{1+|c_l|^2 \gamma_\text{out}^2 }} \ketbra{0}{l} + \frac{ c_{l+1} \gamma_\text{out}}{\sqrt{1+|c_{l+1}|^2 \gamma_\text{out}^2 }} \ketbra{1}{l+1}\ ,
\end{align}
with $c_0=0$ and $c_l \in\mathbb{C}: l>0$ is arbitrary. Then, for any choice of $c_l : l>0$, the Kraus operators satisfy the completeness relation $\sum_{l=0}^\infty K^\dag_lK_l=\mathbb{I}$, and the symmetry transformation $e^{-\mathrm{i}\phi a^\dag a} K_l e^{\mathrm{i}\phi a^\dag a}=K_l e^{\mathrm{i} l \phi}$ for all $\phi\in [0,2\pi)$, which implies that $\mathcal{K}$ is phase-insensitive. The output of this channel for arbitrary input $\rho$, has the following density matrix in the Fock basis $\{|0\rangle, |1\rangle\}$:
\begin{align}\label{eqn: tau out defn}
\tau=\mathcal{K}(\rho)  =\begin{pmatrix}
   1-\Tr(a^\dagger a\ \tau) & \Tr(a\ \tau) \nonumber\\
 \Tr(a\ \tau) & \Tr(a^\dagger a\ \tau)\ 
\end{pmatrix}\ .
\end{align}
For the optimal protocol $\mathcal{K}^{\text{opt}}$, we choose  
\be\label{weak optimal channel end}
c^{\text{opt}}_l{(\gamma_\text{in})}=\frac{\braket{l-1|\rho(\beta, \gamma_\text{in})|l-1} }{\braket{l-1|\rho(\beta, \gamma_\text{in})|l}}\ \ \ \ \  \ \ \ : \ l\geq 1 \ ,
\ee
which is the ratio of a diagonal matrix element of $\rho(\beta, \gamma_\text{in})$ in the Fock basis to the first off-diagonal element below it (it can be easily seen that unless $\gamma_\text{in}=0$ the matrix element {$\braket{l-1|\rho(\beta, \gamma_\text{in})|l}$} is  non-zero, and therefore the denominator does not vanish).

Then, as we show in  Appendix~\ref{appendix: performance of strong-input weak output}, the first and second moments and the infidelity with the desired coherent state $|\gamma_\text{out}\rangle$  respectively satisfy
\bes\label{eqn: moment asymptotics}
\begin{align}
     \gamma_\text{out}-\Tr(\tau a)&=   \gamma_\text{out}^3 \times  [1+ \mathcal{O}(\gamma_\text{in}^{-2})] + \mathcal{O}(\gamma_\text{out}^5)\ , \label{eqn: m1 main}\\   
        {\Tr(\tau  a^\dagger a)}-\gamma_\text{out}^2&=  \gamma_\text{out}^2 \times E(\gamma_\text{in}) +  \mathcal{O}(\gamma_\text{out}^4)\ ,\label{eqn: m2 main}
    \\
   1-\langle\gamma_\text{out}|\tau|\gamma_\text{out}\rangle
 &= \gamma_\text{out}^2 \times E(\gamma_\text{in})
+\mathcal{O}(\gamma_\text{out}^4)\ \label{summ 4}\ ,
\end{align}
\ees
where here the $\mathcal{O}$ notation suppresses terms that are independent of $\gamma_\text{in}$ and  $\gamma_\text{out}$ upto leading order, and function $E(\gamma_\text{in})$ is defined in the following.

\begin{lemma}\label{lem4}
Consider the coherent thermal state $\rho(\beta,\gamma)=D(\gamma) e^{-\beta a^\dag a} D(\gamma)^\dag/\Tr(e^{-\beta a^\dag a} )$ {with $\gamma>0$}. Define the function
\be\label{Def:E}
E(\gamma)=\sum_{l=0}^\infty \rho_{l,l} \big[|c^{\text{opt}}_l(\gamma)|^2-1\big]\ ,
\ee
where $\rho_{l,m}=\braket{l|\rho(\beta,\gamma)|m}$ are the Fock basis matrix elements,  $c^\text{opt}_0=0$ and $c^\text{opt}_l(\gamma)={\rho_{l-1,l-1}}/{\rho_{l-1,l}} : l>0$. Then, in the {$\gamma \gg \max\{1, n_\text{th}(\beta)^2\}$} regime,
\begin{align}\label{E with error}
E(\gamma) =  \frac{\omega^2 }{P_H(\rho(\beta, \gamma))} + \mathcal{O}\bigg(\bigg[\frac{\sqrt{1+2 n_\text{th}(\beta)}}{\gamma}\bigg]^3\bigg) \ ,
\end{align}
Therefore, 
\begin{align}\label{limit E}
{\lim_{\gamma\rightarrow \infty}} \gamma^2 E(\gamma)&=\delta^{\text{opt}}(\beta)=\frac{1}{4}\frac{F_H(|\alpha\rangle\langle\alpha|)}{P_H(\rho(\beta,\alpha))} \ ,
\end{align}
where  $H=\omega a^\dag a$ is the Hamiltonian, and the second equation holds for any $\alpha\in\mathbb{C}$ \footnote{{We note that $\mathcal{O}((1+2n_\beta)^{3/2})$ is short for $\mathcal{O}((1+2n_\beta)^{3})/(1+2n_\beta)^{3/2}$. And here, $\mathcal{O}((1+2n_\beta)^{3})$ indicates a $\gamma_\text{in}$-independent rational function that is a ratio of polynomials in $1+2n_\beta$ where the difference between the degree of the numerator and denominator polynomials is $3$.}}.
 \end{lemma}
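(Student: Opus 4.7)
The plan is to prove Lemma~\ref{lem4} by obtaining a closed-form expression for $E(\gamma)$ via a KMS-type identity, and then performing a careful asymptotic expansion.

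First, I would exploit the fact that $\rho(\beta,\gamma)$ is the thermal state of the displaced mode $\tilde a := a - \gamma$, so $\tilde a \rho = e^{-\beta\omega}\rho\tilde a$. Rearranging gives $(n_\text{th}+1)a\rho - n_\text{th}\rho a = \gamma\rho$, and the $\langle l-1|\cdot|l\rangle$ matrix element yields
\[
\gamma\rho_{l-1,l} = \sqrt{l}\bigl[(n_\text{th}+1)\rho_{l,l} - n_\text{th}\rho_{l-1,l-1}\bigr]\ .
\]
Substituting into the definition of $c_l^\text{opt}$ produces $c_l^\text{opt}(\gamma) = (\gamma/\sqrt{l})/[(n_\text{th}+1)r_l - n_\text{th}]$ with $r_l := \rho_{l,l}/\rho_{l-1,l-1}$, and hence
\[
E(\gamma) = \gamma^2\sum_{l\ge 1}\frac{\rho_{l,l}/l}{[(n_\text{th}+1)r_l - n_\text{th}]^2} - 1\ .
\]

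Second, iterating the KMS identity gives the three-term recurrence $(n_\text{th}+1)^2(l+1)\rho_{l+1,l+1} - [n_\text{th}(n_\text{th}+1)(2l+1) + \gamma^2]\rho_{l,l} + n_\text{th}^2 l\rho_{l-1,l-1} = 0$. I would convert this to a first-order ODE for the generating function $G(x) := \sum_l \rho_{l,l}x^l$, yielding the closed form
\[
G(x) = \frac{e^{-\gamma^2/n_\text{th}}}{(n_\text{th}+1) - n_\text{th} x}\exp\!\left(\frac{\gamma^2}{n_\text{th}[(n_\text{th}+1) - n_\text{th} x]}\right)\ .
\]
Differentiating at $x=1$ gives mean $l_0 = \gamma^2+n_\text{th}$, variance $\sigma^2 = \gamma^2(2n_\text{th}+1) + n_\text{th}(n_\text{th}+1)$, and all higher cumulants of the photon-number distribution. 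Linearizing the recurrence near its fixed point $l^* = \gamma^2 - (n_\text{th}+1)$ (where $r_{l^*}=r_{l^*+1}=1$) yields $r_l \approx 1 - (l - l^*)/[l^*(2n_\text{th}+1)]$ in a neighborhood of $l^*$.

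Third, in the regime $\gamma \gg \max\{1, n_\text{th}(\beta)^2\}$, $\rho_{l,l}$ is sharply peaked at $l_0$ with $\sigma/l_0 \ll 1$ and $l_0 - l^* = 2n_\text{th}+1$. I would expand both $\gamma^2/l$ around $l = l_0$ and $[(n_\text{th}+1)r_l - n_\text{th}]^{-2}$ around $r_l = 1$ to second order, keeping all contributions of order $1/\gamma^2$. The leading linear-in-$(l-l_0)$ terms cancel because $l^*(2n_\text{th}+1) = \sigma^2[1 + O(1/\gamma^2)]$. Evaluating the surviving quadratic pieces against $\langle(l-l_0)^2\rangle = \sigma^2$ establishes Eq.~\eqref{E with error}. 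The second equality in \eqref{E with error} follows by comparison with the explicit formula $P_H(\rho(\beta,\gamma)) = \omega^2\gamma^2(2n_\text{th}+1)/[n_\text{th}(n_\text{th}+1)]$, obtained by direct computation of $-\Tr([H,\rho]^2\rho^{-1})$ in the displaced number basis $\{D(\gamma)|k\rangle\}$ where $\rho$ is diagonal with thermal eigenvalues $p_k = n_\text{th}^k/(n_\text{th}+1)^{k+1}$. The limit \eqref{limit E} follows by multiplying by $\gamma^2$ and using $F_H(|\alpha\rangle\langle\alpha|) = 4\omega^2|\alpha|^2$.

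The main obstacle will be the bookkeeping in this last step: three distinct sources contribute at order $1/\gamma^2$---the subleading Taylor terms of $\gamma^2/l$, the offset $l_0 - l^* = 2n_\text{th}+1$ between the distribution's mean and the fixed point of $r_l$, and the quadratic term in expanding $[(n_\text{th}+1)r_l - n_\text{th}]^{-2}$---and these must combine into precisely $n_\text{th}(n_\text{th}+1)/[\gamma^2(2n_\text{th}+1)]$. Controlling the remainder at the stated order $(1 + 2n_\text{th})^{3/2}/\gamma^3$ further requires tracking the third cumulant of $\rho_{l,l}$ and the cubic corrections to $r_l$ obtained from the recurrence.
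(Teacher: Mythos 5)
Your route is genuinely different from the paper's, and the algebraic skeleton is attractive. The paper never uses the ladder identity $(n_\beta+1)a\rho-n_\beta\rho a=\gamma\rho$: instead it represents $\rho(\beta,\gamma)$ via its $P$-function, derives a beyond-Gaussian expansion of the Poisson distribution $|\langle\gamma+\xi|l\rangle|^2$ (Lemma~\ref{poisson gaussian approx}), integrates it to get \emph{separate} polynomial-times-Gaussian expansions for $\rho_{l,l}$ and $\rho_{l,l+1}$ (Lemma~\ref{eqn: matrix elements final}), and only then forms the ratio $|c_l^{\text{opt}}|^2$ (Lemma~\ref{c_l lemma2}). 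Your KMS relation eliminates the off-diagonal elements exactly, reduces everything to the diagonal distribution, and your three-term recurrence and generating function are correct (I checked: the fixed point $l^*=\gamma^2-(n_\beta+1)$, the closed form of $G$, and the mean and variance all agree with the paper's moment-generating function in Appendix~\ref{appendix: moment generating function}, which the paper derives but uses only for tail bounds). At leading order your $r_l$ reproduces the paper's $|c_l^{\text{opt}}|^2=1+r/\sigma+\cdots$, so the structure is sound.

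That said, as written there are two genuine gaps. First, the proposal stops exactly where the answer lives. The limit $\gamma^2E(\gamma)\to n_\beta(n_\beta+1)/(2n_\beta+1)$ is determined entirely by the $O(1/\gamma^2)$ pieces of $|c_l^{\text{opt}}|^2-1$ --- in the paper's normalization, the constant $2n_\beta(n_\beta+1)^2/[(2n_\beta+1)\sigma^2]$ and the coefficient $-n_\beta(3n_\beta+2)/[(2n_\beta+1)\sigma^2]$ of $r^2$ --- and these you have not computed; the linear term alone only yields the $n_\beta/(2n_\beta+1)$ part of the answer. Extracting them from your recurrence requires solving for $\epsilon_l=r_l-1$ to second order, where three effects enter at the same order: the drift $(n_\beta+1)^2(l+1)(\epsilon_{l+1}-\epsilon_l)=O(1)$, the nonlinearity $n_\beta^2l\,\epsilon_l^2=O(1)$, and the $O(1)$ offset between $l^*$, $l_0$ and the exact variance. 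You flag this as "the main obstacle," but it is not bookkeeping around a finished argument; it \emph{is} the argument. Relatedly, you need to justify that the true $r_l$ tracks the slowly-varying particular solution of the recurrence uniformly over the window $|l-l_0|\lesssim R\sigma$ with a controlled $O(\gamma^{-3})$ remainder; the contraction of the homogeneous mode (which decays like $[n_\beta/(n_\beta+1)]^{2l}$ going forward in $l$) makes this plausible but it must be argued.

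Second, the tail is not addressed, and it cannot be waved away: in the thermal tail $l\gg\gamma^2$ one has $r_l\to n_\beta/(n_\beta+1)$, so your denominator $(n_\beta+1)r_l-n_\beta$ tends to zero and $|c_l^{\text{opt}}|^2$ is \emph{unbounded} in $l$. One must show that the exponential decay of $\rho_{l,l}$ beats the growth of $|c_l^{\text{opt}}|^2$ outside the typical window; the paper does this with the a priori bound $|c_l^{\text{opt}}|^2\le(n_\beta+1)^2l/\gamma^2$ plus Chernoff bounds on the first two moments from the very generating function you derived (Appendix~\ref{chernoff bound}). Your $G(x)$ hands you this step essentially for free, but the step has to appear.
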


\begin{figure}
    \centering
    \includegraphics[width=0.45\textwidth]{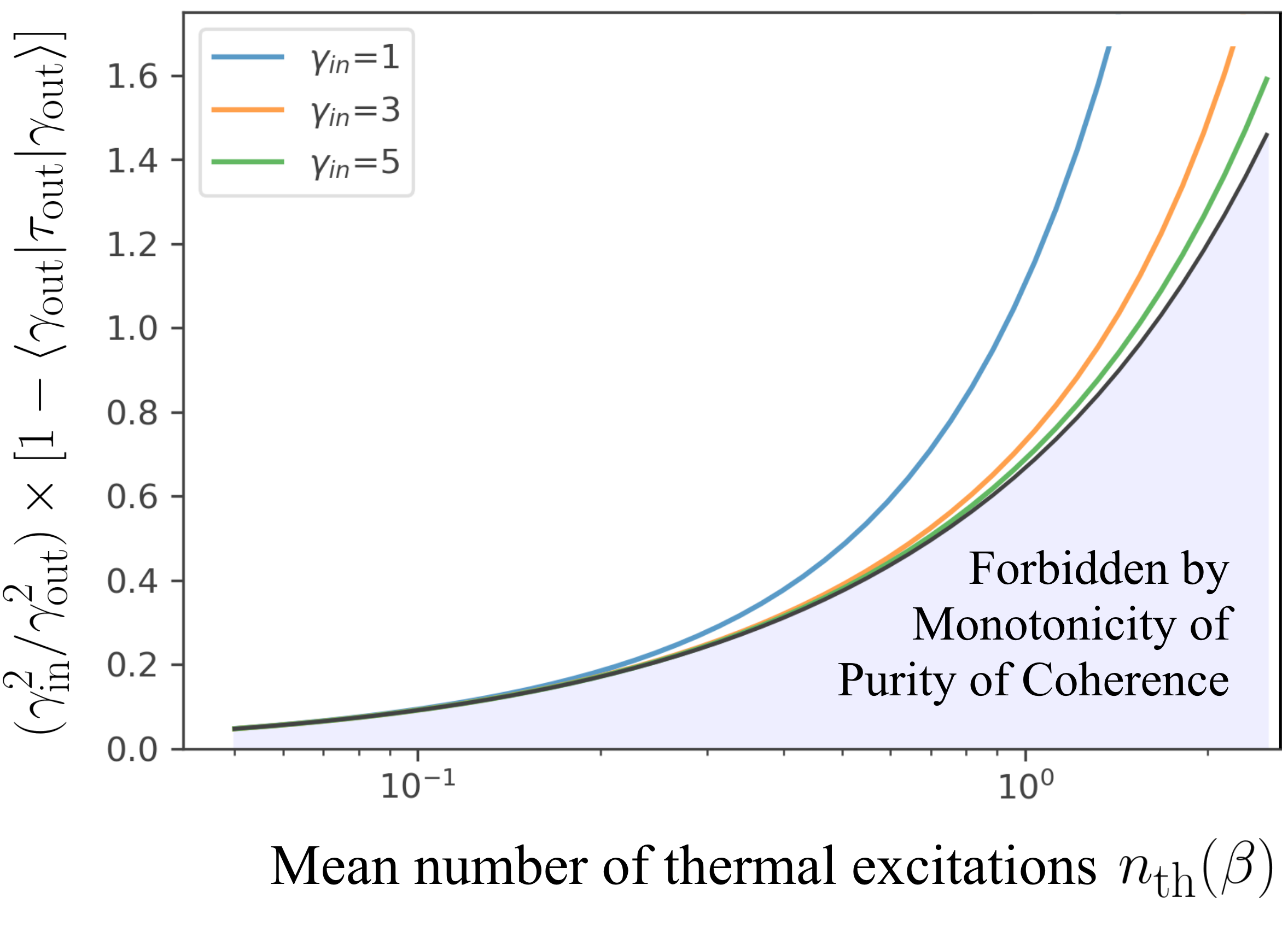}
    \caption{{\textbf{The infidelity in the strong-input weak-output regime -- }} 
    Here, we numerically study the performance of the channel $\mathcal{K}^{\text{opt}} $ introduced in Eq.(\ref{weak optimal channel}-\ref{weak optimal channel end}), for converting a single-copy of the input $\rho(\beta,\gamma_\text{in})$ to a pure coherent state  $|\gamma_\text{out}\rangle$, with $\gamma_\text{out}\ll 1$.  Specifically, we choose $\gamma_\text{out}=10^{-4}$ with values of $\gamma_\text{in}$ specified in the plot. The vertical axis is   $({\gamma_\text{in}^2}/{\gamma_\text{out}^2})\times [1-\braket{\gamma_\text{out}|\tau_\text{out}|\gamma_\text{out}}]$, i.e., the squared of the ratio of the input to output amplitudes times  the infidelity.  The shaded area is forbidden by the monotonicity of the purity of coherence. As predicted by the limit in Eq.(\ref{kjj}), we observe that as $\gamma_\text{in}$ grows, for a wide range of temperatures,  this quantity converges to the optimal infidelity factor $\delta^{\text{opt}}(\beta)$ in Eq.(\ref{opt bound}), which is dictated by the conservation of the purity of coherence. To numerically calculate fidelity $\braket{\gamma_\text{out}|\tau_\text{out}|\gamma_\text{out}}$, we include $400$ Kraus operators in Eq.(\ref{weak optimal channel}). As detailed in Appendix \ref{final infid}, this truncation of the summation introduces an error of at most {$\mathcal{O}(10^{-25})$} in the numerical estimate of $({\gamma_\text{in}^2}/{\gamma_\text{out}^2})\times [1-\braket{\gamma_\text{out}|\tau_\text{out}|\gamma_\text{out}}]$, which is clearly negligible for purposes of our plot.}
    \label{fig: num verify E}
\end{figure}
To establish this result, we find a new approximation to energy distribution for coherent thermal 
 states, which goes beyond the standard Gaussian approximation and is  of independent 
interest. See Sec.~\ref{subsection: gaussian approx} for a short overview of this approximation and Appendix \ref{appendix: matrix elements large alpha} for the detailed proof of this lemma.
In summary, $E(\gamma_\text{in})$, which determines the infidelity of the output state $\tau_\text{out}$ with the desired coherent state $|\gamma_\text{out}\rangle$, is itself determined by $P_H(\rho(\beta, \gamma_\text{in}))$,  the purity of coherence of the input.   Furthermore, in the limit of large $\gamma_\text{in}$, Eq.(\ref{summ 4}) implies that
\be\label{kjj}
{\lim_{\gamma_\text{in}\rightarrow \infty} }\frac{\gamma_\text{in}^2}{\gamma_\text{out}^2}\times [1-\braket{\gamma_\text{out}|\tau_\text{out}|\gamma_\text{out}}]=\delta^{\text{opt}}(\beta) ,
\ee
where again 
we have assumed that $\gamma_\text{in}= A \gamma_\text{out}^{-\delta}$, where $\delta $ and $A$ are fixed and satisfy $0<\delta<1$ and $A>0$.
{In Fig.~\ref{fig: num verify E} we consider the left-hand side of Eq.(\ref{kjj})
for finite values of $\gamma_\text{in}$ and $\gamma_\text{out}$ and compare it with the right-hand side. As we see in the figure, by fixing a small $\gamma_\text{out}$ and increasing $\gamma_\text{in}$, the quantity  $({\gamma_\text{in}^2}/{\gamma_\text{out}^2})\times [1-\braket{\gamma_\text{out}|\tau_\text{out}|\gamma_\text{out}}]$ approaches $\delta^\text{opt}(\beta)$.}

We conclude that this protocol saturates the bound on infidelity {set by the monotonicity of the purity of coherence}. In particular, the purity of coherence of the output is 
\begin{align}\label{qubit}
P_H(\tau_\text{out})=\frac{F_H(\tau_\text{out})}{2(1-\Tr(\tau_\text{out}^2))} =\frac{\omega^2}{E(\gamma_\text{in})} + \mathcal{O}(\gamma_\text{in}^2 \times \gamma_\text{out}^2) \ ,
\end{align}
where the first equality holds for general 2-level systems, as established in  \cite{marvian2020coherence}\footnote{We mention the useful formula $$P_H(\tau_\text{out})= \frac{(1-2p)^2}{p(1-p)}\times V_H(\ketbra{\psi}{\psi})$$ where
$\tau_{\text{out}}=p\ketbra{\psi}{\psi}+(1-p)\ketbra{\psi^\perp}{\psi^\perp}$ is its spectral decomposition.}.  Therefore, comparing with Eq.(\ref{E with error}) we find that 
{$$  {P_H(\rho(\beta,\gamma_\text{in}))}-{P_H(\tau_\text{out})}=\mathcal{O}(\gamma_\text{in}^2 \times \gamma_\text{out}^2)\ ,$$ 
which vanishes in  the  strong-input weak-output  regime, i.e.,  
$\gamma_\text{in}\rightarrow \infty$ while $\gamma_\text{in} \times \gamma_\text{out} \rightarrow 0$.  Therefore,  in the limit under consideration, the 
state transition $\rho(\beta, \gamma_\text{in})\rightarrow \tau_\text{out}$ fully conserves the purity of coherence. This  is crucial, especially because in the divide and distill strategy, discussed next, we split the input state and apply the above protocol to many copies of a coherent thermal state in parallel. If for each copy we waste non-negligible amount of purity of coherence, then the overall protocol may not be optimal. 
}

\subsection{The optimal protocol for general input }\label{subsection: optimal general}

Next, we use this protocol as a building block within the `divide and distill' scheme of Sec.~\ref{subsection: weak output regime} to construct a sequence of channels $\mathcal{G}_n$ 
that achieves the optimal infidelity factor for general values of $\alpha$ and $\beta$.

Suppose we are given $n$ copies of state $\rho(\beta,\alpha)$. {To go to the appropriate strong-input regime, we first convert them to $B_n$ modes such that
\be\nonumber
\rho(\beta,\alpha)^{\otimes n}\longrightarrow   \rho\big(\beta, \frac{\sqrt{n}}{\sqrt{B_n}} \alpha\big) ^ {\otimes B_n}\ ,
\ee
where the value of $B_n\ll n$ is fixed in the following (For concreteness, one can choose, e.g., $B_n=\lfloor n^{3/4}\rfloor$).  Then, we apply the optimal channel $\mathcal{K}^{\text{opt}}$ defined in Eq.(\ref{weak optimal channel}-\ref{weak optimal channel end}), for the value of parameters
\begin{align}\label{gamma in out}
    \gamma_\text{in}=\frac{\sqrt{n}\alpha}{\sqrt{B_n}} \  \ \ \ \text{and}\ \  \  \    \gamma_\text{out}=\frac{\alpha}{\sqrt{B_n}} \ .
\end{align}
We label this channel as $\mathcal{K}_n$, and the corresponding output state as       $\sigma_n := \mathcal{K}_n\big(\rho\big(\beta,\gamma_\text{in}\big)\big)$. 
In the relevant regime of parameters, this state will be close to the coherent state $|\gamma_\text{out}\rangle$. 
Finally, combining the obtained $B_n$ copies of  $\sigma_n$  via  the concentration channel $\mathcal{C}_n$, we generate the output state  
\begin{align}
    \sigma'_n:= \mathcal{C}_{B_n}(\sigma_n^{\otimes B_n})\ .
\end{align}

Applying Lemma \ref{lemma: divide and distill}, {$n$ times the infidelity} of this state with the desired output coherent state $|\alpha\rangle$ is upper bounded by
\begin{align}\label{eq: infid mid}
    n[1-\braket{\alpha|\sigma'_n|\alpha}]
     &\le n\big[\Tr(\sigma_n a^\dagger a) - \gamma_\text{out}^2\big]\\
     &+ n \big[\gamma_\text{out}^2-\Tr(\sigma_n a)^2\big]\nonumber\\ 
    &+ n {B_n}[
    \gamma_\text{out}-\Tr(\sigma_n a)]^2\nonumber.
\end{align}
{Using Eq.(\ref{eqn: moment asymptotics}), in Appendix \ref{Sec:overview} we show that {by} assuming $\gamma_{\text{in}}\times \gamma_{\text{out}}=\alpha^2\sqrt{n}/B_n$ vanishes in the limit of large $n$, 
the second and third lines in the right-hand side of Eq.(\ref{eq: infid mid}) vanish (this is achieved, e.g., for $B_n=\lfloor n^{3/4}\rfloor$). Then, in this limit Eq.(\ref{eqn: m2 main}) implies}
\bes
\begin{align}
   n[1-\braket{\alpha|\sigma'_n|\alpha}]
     &\le   \gamma_{\text{in}}^2\times E(\gamma_\text{in})+\mathcal{O}(\gamma_\text{in}^2\times \gamma_\text{out}^2)\ \\ &= \delta_{\text{opt}}(\beta)+\mathcal{O}\Big(\frac{n}{B_n^2}\Big)+ \mathcal{O}\Big(\frac{\sqrt{B_n}}{\sqrt{n}}\Big)
     \ ,
\end{align}
\ees
{where,} to get the second line we used {Lemma \ref{lem4} and assumed}  $\gamma_\text{in}=\alpha {\sqrt{n}}/{\sqrt{B_n}} \gg 1$, which again holds for the above choice $B_n=\lfloor n^{3/4}\rfloor$. We conclude that 
  
\begin{theorem}\label{optimal theorem}
The sequence of phase-insensitive channels
\begin{align}\label{eqn: optimal channel}
    \mathcal{G}_n \equiv \mathcal{C}_{B_n}\circ \mathcal{K}_n^{\otimes B_n}\circ\mathcal{D}_{B_n}\circ \mathcal{C}_n
\end{align}
with {$B_n=\lfloor n^{3/4} \rfloor$},  achieves the infidelity factor
\begin{align}
   \lim_{n\rightarrow \infty} n \times [1-\langle\alpha|\mathcal{G}_n(\rho(\beta,\alpha)^{\otimes n})|\alpha\rangle]=\delta^{\text{opt}}(\beta)\ .
\end{align}
Therefore, it achieves the lower bound set by the purity of coherence.
\end{theorem}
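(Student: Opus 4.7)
The proof is essentially an assembly of the tools already established earlier in the excerpt: Lemma~\ref{lemma: divide and distill} (the variance-plus-bias fidelity bound underlying the divide-and-distill construction), the moment asymptotics Eq.(\ref{eqn: moment asymptotics}) for the single-copy channel $\mathcal{K}^{\text{opt}}$ in the strong-input weak-output regime, and Lemma~\ref{lem4} for the limit of $\gamma^2 E(\gamma)$. I would proceed as follows. First, the initial (reversible) concentration $\mathcal{C}_n$ sends $\rho(\beta,\alpha)^{\otimes n}$ to $\rho(\beta,\sqrt{n}\alpha)$, so the remaining subcircuit $\mathcal{C}_{B_n}\circ\mathcal{K}_n^{\otimes B_n}\circ\mathcal{D}_{B_n}^{(\beta)}$ fits the divide-and-distill framework with $\mathcal{E}=\mathcal{K}_n$, $m=B_n$, single-copy input $\rho(\beta,\gamma_\text{in})$ for $\gamma_\text{in}=\sqrt{n/B_n}\,\alpha$, and single-copy state $\sigma_n:=\mathcal{K}_n(\rho(\beta,\gamma_\text{in}))$. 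Applying Lemma~\ref{lemma: divide and distill} with target coherent state $|\alpha\rangle=|\sqrt{B_n}\gamma_\text{out}\rangle$, where $\gamma_\text{out}=\alpha/\sqrt{B_n}$, yields exactly the three-term infidelity bound Eq.(\ref{eq: infid mid}).

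Second, I would verify that $B_n=\lfloor n^{3/4}\rfloor$ places $(\gamma_\text{in},\gamma_\text{out})$ in the strong-input weak-output regime Eq.(\ref{asym conditions}): $\gamma_\text{in}=\alpha n^{1/8}\to\infty$ while $\gamma_\text{in}\gamma_\text{out}=\alpha^2 n^{-1/4}\to 0$, so the moment expansions Eq.(\ref{eqn: moment asymptotics}) apply. Multiplying Eq.(\ref{eq: infid mid}) by $n$, the third term $nB_n(\gamma_\text{out}-\Tr(\sigma_n a))^2$ is $\mathcal{O}(nB_n\gamma_\text{out}^6)=\mathcal{O}(n/B_n^2)=\mathcal{O}(n^{-1/2})$ by Eq.(\ref{eqn: m1 main}); the second term, rewritten as $n(\gamma_\text{out}-\Tr(\sigma_n a))(\gamma_\text{out}+\Tr(\sigma_n a))$, is $\mathcal{O}(n\gamma_\text{out}^4)=\mathcal{O}(n^{-1/2})$. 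Both vanish in the limit.

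Third, for the dominant first term $n[\Tr(\sigma_n a^\dagger a)-\gamma_\text{out}^2]$, Eq.(\ref{eqn: m2 main}) gives $n\gamma_\text{out}^2 E(\gamma_\text{in})+\mathcal{O}(n\gamma_\text{out}^4)$. Using the elementary identity $n\gamma_\text{out}^2=\gamma_\text{in}^2$, which is immediate from Eq.(\ref{gamma in out}), this becomes $\gamma_\text{in}^2 E(\gamma_\text{in})+o(1)$, and by Lemma~\ref{lem4} the limit as $\gamma_\text{in}\to\infty$ is exactly $\delta^{\text{opt}}(\beta)$. Collecting contributions, $\limsup_n n[1-\langle\alpha|\mathcal{G}_n(\rho(\beta,\alpha)^{\otimes n})|\alpha\rangle]\le\delta^{\text{opt}}(\beta)$. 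The matching lower bound $\liminf_n n\epsilon_n\ge \delta^{\text{opt}}(\beta)$ for any phase-insensitive protocol, established from the monotonicity of the purity of coherence in Sec.~\ref{section: universal bounds proof}, upgrades the inequality to the claimed equality.

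The main obstacle is the simultaneous bookkeeping of competing scales in $B_n$. We need $\gamma_\text{in}\to\infty$ to invoke Lemma~\ref{lem4}, $n\gamma_\text{out}^4\to 0$ and $nB_n\gamma_\text{out}^6\to 0$ to suppress the subleading terms in Eq.(\ref{eq: infid mid}), and $\gamma_\text{in}\gamma_\text{out}\to 0$ to remain inside the regime where Eq.(\ref{eqn: moment asymptotics}) is valid. These constraints translate to $\sqrt{n}\ll B_n\ll n$, so the choice $B_n=\lfloor n^{3/4}\rfloor$ lies comfortably in this window (although anything of the form $n^{1/2+\varepsilon}$ with $\varepsilon\in(0,1/2)$ would do). A related subtlety, to be confirmed from the Appendix proofs of Lemma~\ref{lem4} and Eq.(\ref{eqn: moment asymptotics}), is that the implicit constants in the error terms are uniform along the chosen trajectory $(\gamma_\text{in},\gamma_\text{out})=(\alpha n^{1/8},\alpha n^{-3/8})$, so that the $\mathcal{O}$ symbols may be freely substituted when computing the final limit.
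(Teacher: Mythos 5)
Your proposal is correct and follows essentially the same route as the paper's proof in Appendix~\ref{Sec:overview}: apply Lemma~\ref{lemma: divide and distill} to get the three-term bound Eq.(\ref{eq: infid mid}), kill the bias and cross terms using Eq.(\ref{eqn: m1 main}) and the scaling $\gamma_\text{in}\gamma_\text{out}=\mathcal{O}(n^{-1/4})$, identify $n\gamma_\text{out}^2=\gamma_\text{in}^2$ in the variance term and invoke Lemma~\ref{lem4}, then close the equality with the purity-of-coherence lower bound from Sec.~\ref{section: universal bounds proof}. Your explicit identification of the admissible window $\sqrt{n}\ll B_n\ll n$ matches the paper's requirement that $B_n$ be $\omega(\sqrt{n})$ and $o(n)$.
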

{For the above choice of $B_n$, the finite $n$ regime correction to $n$ times infidelity is $\mathcal{O}(1/\sqrt{n})$.} See Appendix \ref{Sec:overview} for a detailed analysis and the proof of this theorem.

\subsection{Distillation combined with amplification/attenuation}\label{optimal amplify/attenuate subsection}
Next, we show how this protocol can be extended to optimally distill coherent state $|\alpha'\rangle$ {for arbitrary $\alpha'$,  from copies of state $\rho(\beta,\alpha)$}, by using the freedom to reversibly amplify and attenuate coherent thermal states (reversibility of these operations is further discussed in Sec. \ref{subsec: concentration dilution reversibility}).
\begin{corollary}\label{cor5}
There exists a sequence of phase-insensitive channels $\mathcal{A}^{\text{opt}}_n$,  transforming $n$ copies of a coherent thermal state $\rho(\beta,\alpha)$ to state $\sigma_n=\mathcal{A}^{\text{opt}}_n(\rho(\beta,\alpha)^{\otimes n})$, 
such that 
\begin{align}\label{corr eqn}
 \lim_{n \rightarrow \infty} n \times  (1-\langle \alpha'|\sigma_n|\alpha'\rangle)= \frac{|\alpha'|^2}{|\alpha|^2}  \delta^\text{opt}(\beta)= \frac{1}{4}\frac{F_H(\ketbra{\alpha'}{\alpha'})}{P_H(\rho(\beta,\alpha))}
  ,\
\end{align}
where $1-\langle \alpha'|\sigma_n|\alpha'\rangle$ is the infidelity of the output state with the coherent state $|\alpha'\rangle$. Furthermore, this is the lowest achievable infidelity. That is, for any phase-insensitive sequence of channels $\mathcal{A}_n$, 
\begin{align}\label{corr eqn lowerbound}
    \liminf_{n \rightarrow \infty} n \times  (1-\langle \alpha'|\sigma_n|\alpha'\rangle) & \geq \frac{|\alpha'|^2}{|\alpha|^2} \times \delta^\text{opt}(\beta) \ .
\end{align}
\end{corollary}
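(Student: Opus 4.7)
The strategy is to reduce the general case, where the target amplitude $\alpha'$ may differ from the input amplitude $\alpha$, back to the equal-amplitude setting of Theorem \ref{optimal theorem}. The key ingredient is the reversible interconversion between $\rho(\beta,\alpha)^{\otimes n}$ and any rescaled $\rho(\beta,\alpha'')^{\otimes n''}$ with $n''|\alpha''|^2 = n|\alpha|^2$, achieved via the concentration and dilution maps at fixed temperature $\beta$ discussed in Sec.~\ref{subsec: concentration dilution reversibility}. Explicitly, I would define $\mathcal{A}_n^{\text{opt}}$ as the composition: (i) concentrate $n$ copies of $\rho(\beta,\alpha)$ using $\mathcal{C}_n$; (ii) dilute into $n' := \lfloor n|\alpha|^2/|\alpha'|^2 \rfloor$ copies of $\rho(\beta,\alpha'_n)$ using $\mathcal{D}_{n'}^{(\beta)}$, where $\alpha'_n := \sqrt{n/n'}\,\alpha$ satisfies $|\alpha'_n - \alpha'| = \mathcal{O}(1/n)$; (iii) apply the optimal protocol $\mathcal{G}_{n'}$ from Theorem \ref{optimal theorem}, tuned to target amplitude $\alpha'_n$. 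By Theorem \ref{optimal theorem}, the resulting output $\sigma_n$ satisfies $1 - \langle\alpha'_n|\sigma_n|\alpha'_n\rangle = \delta^{\text{opt}}(\beta)/n' + o(1/n')$, and since $n/n' \to |\alpha'|^2/|\alpha|^2$, this already exhibits the desired $1/n$ scaling with the right prefactor.

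The next step is to transfer this bound from the auxiliary target $|\alpha'_n\rangle$ to the true target $|\alpha'\rangle$. For this I would invoke the triangle inequality for the purified distance $P(\rho,\sigma):=\sqrt{1-F(\rho,\sigma)^2}$,
\begin{equation*}
\sqrt{1-\langle\alpha'|\sigma_n|\alpha'\rangle} \le \sqrt{1-|\langle\alpha'|\alpha'_n\rangle|^2} + \sqrt{1-\langle\alpha'_n|\sigma_n|\alpha'_n\rangle}\ .
\end{equation*}
Since $|\langle\alpha'|\alpha'_n\rangle|^2 = e^{-|\alpha'-\alpha'_n|^2} = 1 - \mathcal{O}(1/n^2)$, the first summand is $\mathcal{O}(1/n)$, which is parametrically smaller than the $\mathcal{O}(1/\sqrt{n})$ scaling of the distillation error in the second summand. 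Squaring the inequality and multiplying by $n$ therefore yields $\limsup_{n\to\infty} n\,(1-\langle\alpha'|\sigma_n|\alpha'\rangle) \le (|\alpha'|^2/|\alpha|^2)\,\delta^{\text{opt}}(\beta)$, establishing the achievability bound in Eq.(\ref{corr eqn}).

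For the matching lower bound in Eq.(\ref{corr eqn lowerbound}), I would apply the general argument of Sec.~\ref{section: universal bounds proof} based on the monotonicity of the purity of coherence under phase-insensitive channels together with its tensor-product additivity: for any phase-insensitive $\mathcal{A}_n$, $P_H(\sigma_n) \le n\,P_H(\rho(\beta,\alpha))$. Combined with the asymptotic estimate $P_H(\sigma_n)\,\epsilon_n \ge \tfrac{1}{4}\,F_H(|\alpha'\rangle\langle\alpha'|)\,(1+o(1))$ that holds as $\sigma_n$ approaches the pure target $|\alpha'\rangle$ (the same local computation in the $\{|\alpha'\rangle,|\alpha'_\perp\rangle\}$ subspace underlying Eq.(\ref{main})), this yields $\liminf_{n\to\infty} n\,\epsilon_n \ge F_H(|\alpha'\rangle\langle\alpha'|)/(4\,P_H(\rho(\beta,\alpha))) = (|\alpha'|^2/|\alpha|^2)\,\delta^{\text{opt}}(\beta)$, matching the upper bound and thereby confirming both equations simultaneously.

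The main technical subtlety is ensuring that the $\mathcal{O}(1/n)$ amplitude mismatch $|\alpha'_n - \alpha'|$ does not contaminate the leading $1/n$ coefficient of the infidelity. The purified-distance triangle inequality handles this cleanly precisely because the mismatch contributes only $\mathcal{O}(1/n)$ to the purified distance, while the distillation error contributes $\mathcal{O}(1/\sqrt{n})$; the cross-term between them is $\mathcal{O}(n^{-3/2})$, which is subleading relative to the targeted $1/n$ coefficient. The remaining work is bookkeeping of the various $o(1)$ terms arising from integer rounding in the definition of $n'$ and from the $o(1/n')$ corrections inherited from Theorem \ref{optimal theorem}.
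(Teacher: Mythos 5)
Your proposal is correct and follows essentially the same route as the paper: reversibly rescale the $n$ inputs into $m(n)=\lfloor n|\alpha|^2/|\alpha'|^2\rfloor$ copies at amplitude $\approx\alpha'$ via the concentration/dilution maps, apply the optimal protocol of Theorem~\ref{optimal theorem}, and obtain the matching lower bound from Eq.(\ref{thm first part}) of Theorem~\ref{theorem: universal bounds} with $|\phi\rangle=|\alpha'\rangle$ together with $F_H(|\alpha'\rangle\langle\alpha'|)=(|\alpha'|^2/|\alpha|^2)\,F_H(|\alpha\rangle\langle\alpha|)$. The only difference is that you handle the integer-rounding mismatch $|\alpha'_n-\alpha'|=\mathcal{O}(1/n)$ explicitly through the purified-distance triangle inequality, whereas the paper simply asserts the exact passive conversion to $\rho(\beta,\alpha')^{\otimes m(n)}$; your extra step is harmless and, if anything, slightly more careful.
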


\begin{proof}
The bound in Eq.(\ref{corr eqn lowerbound}) follows from 
Eq.(\ref{thm first part}) in Theorem \ref{theorem: universal bounds} for the special case of $\rho=\rho(\beta, \alpha)$ and $\ket{\phi}=\ket{\alpha'}$, where we have used the fact that $F_H(\ketbra{\alpha'}{\alpha'})= F_H(\ketbra{\alpha}{\alpha}) \times  |\alpha'|^2/|\alpha|^2$, along with the definition of $\delta^\text{opt}(\beta)$ in Eq.(\ref{limit E}). 
To see that this bound is attainable, first, we note that the state conversion  
\be\nonumber
\rho(\beta,\alpha)^{\otimes n} \longrightarrow \rho(\beta,\alpha')^{\otimes 
m(n)}\
\ee
with $m(n)= \lfloor n\times |\alpha|^2/|\alpha'|^2 \rfloor$,  can be realized with passive transformations, e.g., by composing the concentration and dilution maps $\mathcal{C}$ and $\mathcal{D}$ {discussed in Sec. \ref{subsection: divide distill}}. Then, applying the map {$\mathcal{G}_{m(n)}$} in Eq.(\ref{eqn: optimal channel}), we can convert  $\rho(\beta,\alpha')^{\otimes 
m}$ to state $\sigma_{m(n)}$ whose infidelity with  $|\alpha'\rangle$ 
 satisfies
\begin{align}
\lim_{n\rightarrow \infty} m(n) \times (1-\langle \alpha'|\sigma_{m(n)}|\alpha'\rangle) = \delta^{\text{opt}}(\beta)\ .
\end{align}
This, in turn, implies 
\begin{align}
\lim_{n\rightarrow \infty} n \times (1-\langle \alpha'|\sigma_{m(n)}|\alpha'\rangle) = \frac{|\alpha'|^2}{|\alpha|^2} \times  \delta^{\text{opt}}(\beta)\ ,
\end{align}
which completes the proof.

\end{proof}

\newpage

\subsection{Coherent thermal states in the energy basis: \\ Deviations from Gaussianity}\label{subsection: gaussian approx}

\begin{figure}
    \centering
    \includegraphics[width=0.45\textwidth]{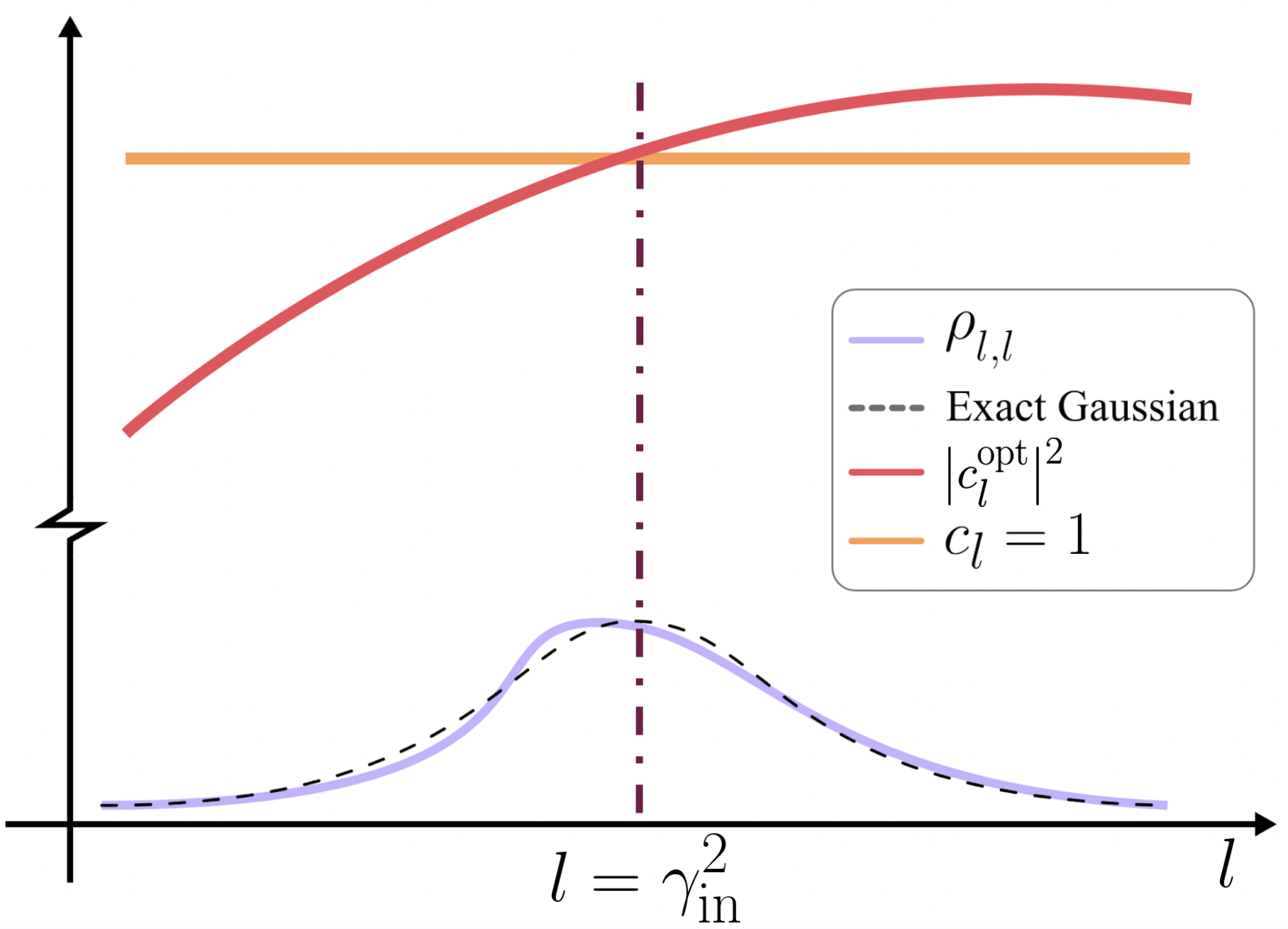}
    \caption{\textbf{Deviations from Gaussianity -- } 
    Consider the energy distribution associated to the coherent thermal state $\rho(\beta, \gamma_\text{in})$, or, equivalently, the distribution associated to the diagonal elements $\rho_{l,l}\equiv \braket{l|\rho(\beta, \gamma_\text{in})|l}$.  In the large $\gamma_\text{in}$ regime, this distribution can be approximated by a Gaussian distribution (the dashed curve). However, to understand the performance of the proposed optimal protocol we need to go beyond this approximation  (see Lemma \ref{eqn: matrix elements final}). In particular, the deviations from Gaussianity are relevant {to study the behaviour of Kraus operators} of the optimal distillation protocol, or, more precisely coefficients $c^{\text{opt}}_l$ defined in Eq.(\ref{weak optimal channel end}) as the ratio of two matrix elements{, {around $l \approx \gamma_\text{in}^2$}}. As we show in Appendix \ref{c_l subsection}, for $l\approx \gamma_\text{in}^2$, and in the large $\gamma_\text{in}$ regime,   $|c^{\text{opt}}_l|^2$ can be approximated as a quadratic polynomial in $(l-\gamma_\text{in}^2)$ (see Lemma \ref{c_l lemma2}).}
    \label{fig:deviations}
\end{figure}

{To analyze the error in 
the strong-input weak-output regime and 
establish the limit in Lemma \ref{lem4}, we need to analyze the behavior of function   $E(\gamma)=\sum_{l=0}^\infty \rho_{l,l} \big[|c^{\text{opt}}_l(\gamma)|^2-1\big]$ in the large $\gamma$ regime. This requires us to find an expansion of  $\rho_{l,l} $ the matrix elements of coherent thermal state as well as the ratio  $c^\text{opt}_l(\gamma)={\rho_{l-1,l-1}}/{\rho_{l-1,l}} : l>0$, which can be of independent interest (see Appendix~\ref{appendix: matrix elements large alpha}).} Recall that {a Harmonic Oscillator with Hamiltonian $\omega a^\dag a=\omega \sum_{l=0}^\infty l|l\rangle\langle l|$ in state  $\rho(\beta,\gamma_\text{in})$ in the regime  $\gamma_\text{in}\gg 1$, 
with high probability } is found in energy $l\omega$ with $l\approx \gamma_\text{in}^2+n_\text{th}(\beta)$. More precisely, in this regime, the diagonal matrix elements $\rho_{l,l}$ of this density operator can be approximated as a Gaussian distribution  
$$\rho_{l,l}\approx  \frac{1}{\sqrt{2 \pi}\sigma} \exp\Big[-\frac{1}{2} \Big(\frac{l-\gamma_\text{in}^2}{\sigma}\Big)^2\Big] \ ,\ $$
{with the variance}
\begin{align}
    \sigma^2 &:= (1+2 n_\text{th}(\beta)) \gamma_\text{in}^2 \ .
\end{align}
However, to determine the limit in Eq.(\ref{limit E}), one needs to go beyond this Gaussian approximation.\footnote{We note that there are standard {generic methods for approximating the deviations of a distribution from a Gaussian distribution based on its cumulants, 
such as the Edgeworth Series and the Gram-Charlier A Series \cite{nearlyGaussian}. In Appendix~\ref{appendix: matrix elements large alpha} we compare our formulae in Lemma \ref{eqn: matrix elements final}, with these generic approximations for $\rho_{ll} $and 
show that it yields a significantly better approximation.} }

{We study this in Appendix~\ref{appendix: matrix elements large alpha} and establish the following approximation for $l$ that is sufficiently close to $\gamma_\text{in}^2$, such that  $|r|\ll \gamma_\text{in}^{1/3}$, where  $r=\frac{l-\gamma_\text{in}^2}{\sigma}$ quantifies the distance of $l$ from $\gamma_\text{in}^2$ in terms of the number of {standard deviations $\sigma$} (note that this range is growing with $\gamma_\text{in}$, and in the limit $\gamma_\text{in}\rightarrow\infty$ it contains all the probability).  In particular, we find that in this range, the diagonal matrix elements  $\rho_{l,l}$ can be approximated as a Gaussian multiplied by a degree-6 polynomial of $r$, as   }
\begin{align}
    \rho_{l,l} = \frac{e^{-\frac{r^2}{2}}}{\sqrt{2 \pi}\sigma} \Big[1 + \frac{f_1(n_\beta,r)}{\sigma} &+ \frac{1}{2}\frac{f_2(n_\beta,r)}{\sigma^2}\nonumber\\ 
    &\ + \mathcal{O}\Big(\frac{(1+2n_\beta)^3}{\sigma^3}r^9\Big)\Big]\ , 
\end{align}
where $f_1$ is a degree 3, and $f_2$ is a degree 6 polynomial in $r$ with rational coefficients in terms of $n_\beta$, and $\mathcal{O}(1+2n_\beta)$ indicates a $\gamma_\text{in}$-independent rational function that is a ratio of polynomials in $1+2n_\beta$ (see Lemma \ref{eqn: matrix elements final}). It is worth noting that in the regime under consideration, namely $|r|\ll \gamma_\text{in}^{1/3}$, the contribution of ${f_1(n_\beta,r)}/{\sigma}$ and ${f_2(n_\beta,r)}/{\sigma^2}$ remains small. Nevertheless, to find the limit in Eq.(\ref{limit E}), which yields the purity of coherence, their contributions are non-negligible.   
We also obtain a similar expansion for the relevant off-diagonal terms $\rho_{l,l+1}$. Using these approximations, we prove the following quadratic approximation for $|c_l^\text{opt}|^2=|\rho_{l-1,l-1}/\rho_{l-1,l}|^2$, which in turn implies Lemma \ref{lem4}.
\begin{lemma}\label{c_l lemma2}
For $\gamma_\text{in} \gg \max\{1, n_\beta^2\}$ and  $|r|\ll \gamma_\text{in}^{1/3}$ where $r=(l-\gamma_\text{in}^2)/\sigma$, {the matrix elements of the coherent thermal state $\rho(\beta,\gamma_\text{in})$ satisfy}
\begin{align}\label{c_l eqn_copy}
    \Big|\frac{\rho_{l-1,l-1}}{\rho_{l-1,l}}\Big|^2 = 1 + \frac{r}{\sigma} + \frac{{2 n_\beta (n_\beta+1)^2}-{n_\beta (3 n_\beta+2)}r^2}{(1+2n_\beta)\ \sigma^2} \ ,
\end{align}
where $\sigma=\sqrt{1+2n_\beta} \gamma_\text{in}$ and we have ignored terms of  $\mathcal{O}({(1+2n_\beta)^{3}}/{\sigma^3})$.
\end{lemma}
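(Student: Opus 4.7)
The plan is to compute $|\rho_{l-1,l-1}/\rho_{l-1,l}|^2$ directly by substituting the asymptotic expansions of the Fock-basis matrix elements from Lemma~\ref{eqn: matrix elements final} (and its companion expansion for the relevant off-diagonal $\rho_{l-1,l}$) into the ratio, then expanding the resulting quotient systematically to order $1/\sigma^2$ under the hypothesis $|r|\ll \gamma_\text{in}^{1/3}$.

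First, I rewrite $\rho_{l-1,l-1}$ using the diagonal expansion at the shifted variable $r'=(l-1-\gamma_\text{in}^2)/\sigma=r-1/\sigma$. The Gaussian factor picks up a multiplicative term $e^{r/\sigma-1/(2\sigma^2)}$ relative to $e^{-r^2/2}$, while the polynomial corrections $f_1(n_\beta,r'),\,f_2(n_\beta,r')$ are Taylor-expanded around $r$ (their derivatives contribute only at order $1/\sigma$ and beyond). In parallel, I substitute the companion off-diagonal expansion for $\rho_{l-1,l}$, which takes the same Gaussian-times-polynomial form but with a slightly shifted center, naturally near $r-1/(2\sigma)$.

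Second, because $\gamma_\text{in}$ is real and positive both matrix elements are real, so I form the real ratio, divide the Gaussian parts (whose quotient is an explicit exponential in $1/\sigma$), and divide the polynomial correction series using $(1+x)/(1+y)=1+(x-y)-y(x-y)+O(\text{higher})$. Squaring the result and collecting the coefficients of $1$, $r/\sigma$, $r^2/\sigma^2$, and the constant$/\sigma^2$ term should yield $1+r/\sigma+[2n_\beta(n_\beta+1)^2-n_\beta(3n_\beta+2)r^2]/[(1+2n_\beta)\sigma^2]$. A useful sanity check: setting $n_\beta=0$ must reduce everything to the coherent-state identity $l/\gamma_\text{in}^2=1+r/\gamma_\text{in}$, since $\sigma=\gamma_\text{in}$ in that limit; the claimed formula indeed has vanishing $1/\sigma^2$ coefficient at $n_\beta=0$.

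The main obstacle is the bookkeeping at order $1/\sigma^2$: several independent contributions must be tracked and reassembled, namely the exponential expansion of $e^{r/\sigma-1/(2\sigma^2)}$, the Taylor expansion of $f_1$ and $f_2$ about $r$, the center shift of the off-diagonal expansion, and the cross terms from squaring. In particular, all cubic-in-$r$ contributions must cancel for the final correction to be a quadratic polynomial in $r$ as claimed, and this cancellation hinges on specific algebraic identities between the coefficients of $f_1$ and $f_2$ produced in Lemma~\ref{eqn: matrix elements final} (ultimately traceable to the cumulants of the thermal photon-number distribution). The remaining content of the proof is algebraic verification of these cancellations; the error term $\mathcal{O}((1+2n_\beta)^3/\sigma^3)$ follows from combining the explicit error bound of Lemma~\ref{eqn: matrix elements final} with the restriction $|r|\ll \gamma_\text{in}^{1/3}$ to control the Taylor remainders.
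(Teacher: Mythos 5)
Your proposal follows essentially the same route as the paper: both substitute the Gaussian-times-polynomial expansions of $\rho_{l,l}$ and $\rho_{l,l+1}$ from Lemma~\ref{eqn: matrix elements final}, Taylor-expand the quotient of the correction series, square, collect terms through order $1/\sigma^2$, and handle the shift to index $l-1$ (i.e., $r\to r-1/\sigma$) as a final substitution. The only bookkeeping difference is that the paper keeps both matrix elements on the \emph{same} Gaussian $e^{-r^2/2}$, so the prefactors cancel identically and the entire $r/\sigma$ term emerges from $f_1-g_1=r/2$, whereas you propose to absorb part of that difference into a shifted Gaussian center for the off-diagonal element --- equivalent provided the shift is derived rather than assumed, and your $n_\beta=0$ sanity check would catch any error there.
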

Note that Eq.(\ref{c_l eqn_copy}) captures the deviation from
\be
\Big|\frac{\rho_{l-1,l-1}}{\rho_{l-1,l}}\Big|^2=\frac{l}{\gamma_\text{in}^2}=1+\frac{r}{\gamma_\text{in}}\ ,
\ee
which holds exactly in the special case of pure coherent state $\rho=|\gamma_\text{in}\rangle\langle \gamma_\text{in}|$. In Appendix \ref{app num verify all 2}, we numerically test the quadratic formula in Eq.(\ref{c_l eqn_copy}) and show that it holds for a wide range of temperatures. In Appendix \ref{limit comp app} we show that this formula implies the limit in Eq.(\ref{limit E}), or, equivalently, Eq.(\ref{kjj}).

The quadratic behavior of $|c_l^\text{opt}|^2$ and its relation to the Gaussian approximation for $\rho_{l,l}$ is shown in the schematic in Figure \ref{fig:deviations}. See Appendix \ref{appendix: matrix elements large alpha} for further details.

\section{Universal lower and upper bounds on distillation error}\label{section: universal bounds proof}
Next, we review the universal lower and upper bounds on {the error for optimal coherence distillation} previously found in \cite{marvian2020coherence}, which, in particular, {implies the lower bound  $\delta^{\text{opt}}(\beta, \alpha)\ge F_H(|\alpha\rangle\langle\alpha|)/(4 P_H(\rho(\beta,\alpha)))$ on the infidelity factor.} Note that in contrast to the rest of this paper, the results discussed in this section are applicable to general quantum systems, including finite-dimensional Hilbert spaces and infinite-dimensional Hilbert spaces with non-Gaussian states.

The  bounds found in \cite{marvian2020coherence} are in terms of two central quantities,  namely the Quantum Fisher Information and the purity of coherence: for a system with Hamiltonian $H$ and density operator $\rho$, these quantities can be defined as the second derivative of fidelity and Petz-Rényi entropy for $\alpha=2$, i.e.,
\bes
\begin{align}
F_H(\rho)&=-\frac{d^2}{dt^2} \text{Fid}(\rho,e^{-\mathrm{i} H t} \rho e^{\mathrm{i} H t})|_{t=0}\\
P_H(\rho)&=-\frac{d^2}{dt^2} D_{\alpha=2}(\rho,e^{-\mathrm{i} H t} \rho e^{\mathrm{i} H t})|_{t=0}\\ &= -\Tr(\rho^{-1}[H,\rho]^2)\ ,
\end{align}
\ees
{where $\text{Fid}(\rho_1, \rho_2)=\|\sqrt{\rho_1}\sqrt{\rho_2}\|^2_1$, and  $D_\alpha(\rho_1\|\rho_2)=(\alpha-1)^{-1} \log(\rho_1^{\alpha}\rho_2^{1-\alpha})$ when $\text{supp}(\rho_1)\subseteq \text{supp}(\rho_2)$, and $D_\alpha(\rho_1\|\rho_2)=\infty$ otherwise.}

As we further explain in Sec.~\ref{section: qfi},   these functions are indeed special cases of Fisher information metrics \cite{PETZ_199681, Bengtsson_Zyczkowski_2006, QFI_Metrics_2016, Katariya_2021, katariya2021geometric}.  {Based on this connection we find general formulae for $F_H(\rho)$ and $P_H(\rho)$ for the general class of displaced incoherent states, which include coherent thermal states as a special case.} {We also note that if  $\rho=\sum_i p_i \ketbra{\psi_i}{\psi_i}$
is the spectral decomposition
of $\rho$, then these functions can be rewritten as
\bes\label{eq: gen F_H P_H}
\begin{align}
F_H(\rho)&=\sum_{k,l}\frac{2(p_k-p_l)^2}{p_k+p_l}  \times |\langle\psi_k|H|\psi_l\rangle|^2\ ,\\  P_H(\rho)&=\sum_{k,l} \frac{p_k^2-p_l^2}{p_l} \times |\langle\psi_k|H|\psi_l\rangle|^2\ .
\end{align}
\ees
  
Functions $F_H$ and $P_H$ are examples of measures of asymmetry and (unspeakable) coherence in the resource theory of asymmetry. They both are additive {for uncorrelated systems} and monotone under covariant CPTP maps \cite{marvian2020coherence}. This, in particular, means that   a single mode state $\sigma$ can be generated from an $n$-mode state $\rho^{\otimes n}$ only if  
\begin{align}\label{bound4}
n \ge \max\big\{\frac{F_H(\sigma)}{F_H(\rho)}\ , \frac{P_H(\sigma)}{P_H(\rho)}  \big\}\ .
\end{align}

{Using the formulae in Eq.(\ref{formulaP}) for $P_H$ and $F_H$, we find that } for the desired state transformation $\rho(\beta,\alpha)^{\otimes n}\rightarrow |\alpha\rangle\langle\alpha|$ the bound in terms of QFI implies that $n\ge 2n_\beta+1$, 
whereas the bound in terms of the purity of coherence implies that $n=\infty$ (unless $\beta=\infty$ or $\alpha=0$). In other words, the purity of coherence captures the unreachability of pure coherent states via phase-insensitive distillation \cite{marvian2020coherence}.  
 It is also worth noting that for measure-and-prepare phase-insensitive channels, one obtains the bound $n\ge{F_H(\sigma)}/{P_H(\rho)}$ \cite{marvian2020coherence} which is stronger than the bounds in Eq.(\ref{bound4}), because $P_H(\rho)\ge F_H(\rho)$ .

The general results of \cite{marvian2020coherence}
are applicable to arbitrary systems provided that the systems under consideration satisfy certain regularity conditions: let $H$ be the system Hamiltonian and $\sigma$ be its density operator. Then, we assume
\begin{itemize}
\item {Finite first and second moments of energy,  i.e., $\Tr(H^k\sigma)<\infty : k=1,2$ , }
\item The dynamics of system is periodic with finite period  $\tau_H(\sigma)<\infty$, where  
\be
\tau_H(\sigma)=\inf\{ t>0: e^{-\mathrm{i} H t} \sigma e^{\mathrm{i} H t}=\sigma\}\ .
\ee
\end{itemize}
Note that the covariance condition in Eq.(\ref{cov3}) implies that the period of the input system is an integer multiple of the output system (assuming the output period is non-zero). Using the above conditions we can now state the main result of \cite{marvian2020coherence}.
In the following $V_H(\sigma)=\Tr(H^2\sigma)-\Tr(H\sigma)^2$ is the energy variance of state $\sigma$. 

\begin{theorem}\label{theorem: universal bounds} (Based on \cite{marvian2020coherence})
Let $\mathcal{E}_n$  be a sequence of CPTP maps satisfying the covariance condition in Eq.(\ref{cov3}), which converts $n$ copies of 
a system with state $\rho$ and Hamiltonian $H$ to a single copy of a system with Hamiltonian $H$ and {state $\mathcal{E}_n(\rho^{\otimes n})$ whose infidelity with the pure state $|\phi\rangle$} is $\epsilon_n=1-\langle\phi|\mathcal{E}_n(\rho^{\otimes n})|\phi\rangle$. Suppose  {$|\phi\rangle$ and $\rho$ both have} finite {first and} second moments of energy. Then, 
\be\label{thm first part}
\liminf_{n\rightarrow \infty } n\times \epsilon_n\ge \frac{V_H(\phi)}{P_H(\rho)} \ . 
\ee
Furthermore, if $\tau_H(\rho)/\tau_H(\phi)$ is an integer (i.e., the period of state $\rho$ under Hamiltonian $H$ is an integer multiple of the period of state $\phi$ under Hamiltonian $H$)  then there exists a covariant measure-and-prepare protocol such that
\be\label{thm second part}
{\lim_{n\rightarrow \infty }  n\times \epsilon_n}= \frac{V_H(\phi)}{F_H(\rho)} \ . 
\ee
\end{theorem}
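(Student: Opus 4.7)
The plan is to treat the two halves separately, both relying on the structural properties of the purity of coherence $P_H$ and the QFI $F_H$ as measures of asymmetry (additive and monotone under covariant CPTP maps) that are emphasized throughout the paper. The core technical input for the lower bound is a near-pure asymptotic expansion of $P_H$, and for the upper bound it is standard covariant estimation theory.

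For the lower bound in Eq.\,(\ref{thm first part}), I would first establish the following quantitative continuity-like statement: if $\sigma$ has infidelity $\epsilon$ with a pure state $|\phi\rangle$ of finite energy variance, then
\begin{align}
P_H(\sigma) \;\ge\; \frac{V_H(\phi)}{\epsilon} \;-\; o\!\left(\frac{1}{\epsilon}\right).
\end{align}
The natural route is to use the spectral formula (\ref{eq: gen F_H P_H}) for $P_H$, write the spectral decomposition of $\sigma$ as a perturbation of $|\phi\rangle\langle\phi|$, and identify the leading divergent contribution as coming from the cross term between the dominant eigenvector (which is $O(\sqrt{\epsilon})$-close to $|\phi\rangle$) and the orthogonal component $(H-\langle H\rangle_\phi)|\phi\rangle$ of $H|\phi\rangle$; that coupling produces precisely a factor $V_H(\phi)/\epsilon$ in $P_H$, while all other terms are bounded uniformly in $\epsilon$ using the finite second moment hypothesis. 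Combined with additivity $P_H(\rho^{\otimes n}) = n\,P_H(\rho)$ and monotonicity under the covariant channel $\mathcal{E}_n$, the chain $V_H(\phi)/\epsilon_n - o(1/\epsilon_n) \le P_H(\mathcal{E}_n(\rho^{\otimes n})) \le n P_H(\rho)$ rearranges to $\liminf_n n\epsilon_n \ge V_H(\phi)/P_H(\rho)$.

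For the achievability in Eq.\,(\ref{thm second part}), the plan is to construct a covariant measure-and-prepare protocol built from an optimal covariant phase estimator. Let $G$ denote the compact abelian group generated by $e^{-iHt}$ on the input (finite period $\tau_H(\rho)$ by hypothesis); use a covariant POVM on the orbit of $\rho^{\otimes n}$ (the canonical phase POVM adapted to $H_n$), which asymptotically saturates the quantum Cramér–Rao bound in i.i.d.\ settings: $\mathbb{E}[(\hat\phi-\phi)^2]=1/(nF_H(\rho)) + o(1/n)$. Conditioned on the estimate $\hat\phi$, prepare $e^{-iH\hat\phi}|\phi\rangle$. The integer-ratio condition $\tau_H(\rho)/\tau_H(|\phi\rangle\langle\phi|)\in\mathbb{Z}$ is precisely what allows the covariant POVM and the covariant preparation to be composed into a single channel satisfying Eq.\,(\ref{cov3}) — otherwise the preparation and measurement live on groups of incommensurate periods and cannot be glued into a $G$-covariant CPTP map. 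The fidelity analysis is then a Taylor expansion:
\begin{align}
1-\epsilon_n &= \mathbb{E}\bigl[|\langle\phi|e^{-iH(\hat\phi-\phi)}|\phi\rangle|^2\bigr] \nonumber\\
&= 1 - V_H(\phi)\,\mathbb{E}[(\hat\phi-\phi)^2] + o(1/n),
\end{align}
since the second-order Taylor coefficient of $|\langle\phi|e^{-iHx}|\phi\rangle|^2$ at $x=0$ is exactly $-V_H(\phi)$. Multiplying by $n$ and passing to the limit yields $\lim_n n\epsilon_n = V_H(\phi)/F_H(\rho)$.

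The hard part will be the technical control of the remainder terms, particularly in the infinite-dimensional/bosonic setting to which the theorem is intended to apply. On the lower-bound side, one must rigorously show that the perturbation analysis of $P_H$ does not gain spurious contributions from the infinite tail of the spectrum of $\sigma_n$, which is where the finite-second-moment assumption on $|\phi\rangle$ and $\rho$ enters essentially. On the upper-bound side, asymptotic efficiency of the covariant estimator together with uniform integrability of $(\hat\phi-\phi)^2$ (needed to exchange the Taylor expansion with the expectation and preserve the $o(1/n)$ rate) is classical for finite-dimensional systems but requires extra care here; the finite-second-moment hypotheses again provide the necessary domination. Both of these technical ingredients are precisely what are developed in \cite{marvian2020coherence}, so the proof reduces to unpacking and specializing those general results to the current setting.
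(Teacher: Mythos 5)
Your proposal follows essentially the same route as the paper: the lower bound comes from the near-pure divergence $P_H(\sigma)\gtrsim V_H/\epsilon$ (the paper's Lemma~\ref{variance lowerbound lemma}, stated in terms of the top eigenvector $|\psi_n\rangle$ of $\sigma_n$) combined with additivity and monotonicity of $P_H$, and the achievability comes from a covariant Cram\'er--Rao-saturating estimate-then-prepare protocol with the same second-order Taylor expansion of the fidelity. The one technical point you flag — controlling the spectral tail so that the variance of the near-pure output does not undershoot $V_H(\phi)$ — is exactly what the paper supplies via its truncation functional and dominated-convergence argument to prove $\liminf_n V_H(\psi_n)\ge V_H(\phi)$, so your sketch matches the paper's proof in both structure and in where the real work lies.
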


In summary, we conclude that under the assumptions of this theorem, {the infidelity factor of the optimal {phase-insensitive} distillation process {that converts $n$ copies of $\rho$ to a single copy of pure state $\phi$,}
denoted by $\delta^{\text{opt}} $}, satisfies 
\be
\frac{V_H(\phi)}{P_H(\rho)}\le  \delta^{\text{opt}}
  \le \frac{V_H(\phi)}{F_H(\rho)}\ ,
\ee
where ${\delta^{\text{opt}}} := \inf_{\mathcal{E}} \lim_{n\rightarrow \infty}  n\times \epsilon_n$,   and the minimization is over all {phase-insensitive} distillation protocols $\mathcal{E}$. {Eq.(\ref{formulaP}) in the next section {gives} $P_H(\rho)$ and $F_H(\rho)$} for {the} coherent thermal state $\rho=\rho(\beta, \alpha)$ and the harmonic oscillator Hamiltonian $H=\omega a^\dag a$. {Putting these values in the above bound,} we find that in the state conversion $(\rho(\beta, \alpha))^{\otimes n} \rightarrow |\alpha\rangle\langle \alpha|$, the infidelity factor is 
\be
 \frac{n_\text{th}(\beta)}{2} + \frac{n_\text{th}(\beta)}{4n_\text{th}(\beta) + 2} \le \delta^{\text{opt}} \leq   \frac{n_\text{th}(\beta)}{2} + \frac{1}{4} \ ,
\ee
and the upper bound is achievable by a measure-prepare channel. As mentioned before, in Appendix \ref{appendix: optimal MP canonical}, we show that this optimal measure-and-prepare channel can be chosen to be one that uses the well-known canonical phase measurement \cite{holevo2011probabilistic}.\\

We provide a detailed derivation of the first part of the theorem in Appendix~\ref{appendix: universal bounds}. The second part of the theorem follows straightforwardly from the following result regarding general $m$-copy distillation.
\begin{lemma}\label{period lemma}
\cite{marvian2020coherence} Suppose the period of  system 1 with state $\rho$ and Hamiltonian $H$ is an integer multiple of the period of system 2 with state $\rho'$ and Hamiltonian $H'$. {Assume $\rho$ and $\rho'$  both have finite first and second moments of energy}. Then, there exists a time-translationally-invariant measure-and-prepare channel that transforms {$n \gg m$} copies of system 1 to  $m \ge 1$ copies of system 2, in state $\sigma_m$ such that the infidelity of $\sigma_m$ and $\rho'^{\otimes m}$ is  upper bounded by
\be\label{jdhd}
\epsilon_n:=1-\text{Fid}(\sigma_m, {\rho'}^{\otimes m}) \le \frac{1}{4}\times\frac{m}{n}\times  \frac{F_{H'}(\rho')}{ F_{H}(\rho)}+o\Big(\frac{m}{n}\Big)\ .
\ee
\end{lemma}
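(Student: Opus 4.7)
The natural candidate is a Holevo-type covariant measure-and-prepare channel: use a covariant POVM on the $n$ copies of system $1$ to extract an estimate $\hat\phi$ of the $U(1)$ phase generated by $H$, then output $m$ displaced copies of $\rho'$. Concretely, I would fix a covariant POVM $\{M_n(\phi)\,d\phi\}$ on $\rho^{\otimes n}$, parametrized by $\phi\in[0,\tau_H(\rho))$ with $M_n(\phi)=e^{-iH_n\phi}M_n(0)e^{iH_n\phi}$, $\int M_n(\phi)\,d\phi=I$, where $H_n=\sum_{i=1}^n H_i$, and define
\begin{equation*}
\mathcal{M}_n(X)=\int_0^{\tau_H(\rho)} \Tr[M_n(\phi)\,X]\; e^{-iH'_m\phi}\,{\rho'}^{\otimes m}\,e^{iH'_m\phi}\,d\phi,
\end{equation*}
with $H'_m=\sum_{j=1}^m H'_j$. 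The integrand is well-defined because $\tau_H(\rho)$ is by hypothesis an integer multiple of $\tau_{H'}(\rho')$, and covariance of $M_n$ together with a change of variables immediately gives $\mathcal{M}_n(e^{-iH_n s}(\cdot)e^{iH_n s})=e^{-iH'_m s}\mathcal{M}_n(\cdot)e^{iH'_m s}$.

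\textbf{Reducing to estimator variance.} Setting $p_n(\phi)=\Tr[M_n(\phi)\,\rho^{\otimes n}]$, joint concavity of the Uhlmann fidelity gives
\begin{equation*}
1-\text{Fid}(\mathcal{M}_n(\rho^{\otimes n}),{\rho'}^{\otimes m})\ \le\ \int p_n(\phi)\,\bigl[1-\text{Fid}(e^{-iH'_m\phi}{\rho'}^{\otimes m}e^{iH'_m\phi},{\rho'}^{\otimes m})\bigr]\,d\phi.
\end{equation*}
The second-order Taylor expansion of fidelity along a one-parameter unitary orbit, combined with additivity of QFI on products, yields $1-\text{Fid}(e^{-iH'_m\phi}{\rho'}^{\otimes m}e^{iH'_m\phi},{\rho'}^{\otimes m})=\tfrac{1}{4}\,m\,F_{H'}(\rho')\,\phi^2+O(\phi^3)$ for small $\phi$, where the finite-second-moment hypothesis on $\rho'$ controls the remainder. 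Hence
\begin{equation*}
\epsilon_n\ \le\ \tfrac{1}{4}\,m\,F_{H'}(\rho')\,\mathbb{E}_{p_n}[\phi^2]\ +\ (\text{higher-order in }\mathbb{E}_{p_n}[|\phi|^3]).
\end{equation*}

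\textbf{Picking the measurement.} It now suffices to choose $M_n$ so that $\mathbb{E}_{p_n}[\phi^2]\le(1+o(1))/(n\,F_H(\rho))$. I would take $M_n$ to be the canonical phase measurement associated with the collective generator $H_n$ on $\rho^{\otimes n}$; asymptotic saturation of the quantum Cramér--Rao bound for i.i.d.\ families (equivalently, local asymptotic normality on $U(1)$) produces an unbiased estimator whose mean-squared error equals $1/(n\,F_H(\rho))+o(1/n)$. Inserting this into the previous display gives $\epsilon_n\le\tfrac{1}{4}\,(m/n)\,F_{H'}(\rho')/F_H(\rho)+o(m/n)$ in the regime $n\gg m$, as claimed.

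\textbf{Main obstacle.} The crux is producing a covariant POVM whose outcome distribution $p_n$ has true circular second moment $\mathbb{E}_{p_n}[\phi^2]=1/(n\,F_H(\rho))+o(1/n)$, not merely a Holevo-type cost such as $4\sin^2(\phi/2)$. This requires combining the standard covariant-estimation construction with a concentration estimate showing that $p_n$ assigns only $o(1/n)$ mass to $\phi$ bounded away from $0$, so that both the quadratic expansion of fidelity and the identification of the circular with the quadratic cost are valid uniformly on the support that matters. The finite first- and second-moment assumption on the energy of $\rho$ is exactly what ensures Gaussian-type concentration of $p_n$ at scale $1/\sqrt{n F_H(\rho)}$, closing the argument.
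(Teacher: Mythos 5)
Your proposal is correct and follows essentially the same route as the paper, which itself only sketches this lemma by deferring to \cite{marvian2020coherence}: a covariant measure-and-prepare channel whose phase estimate has mean-squared error $[nF_H(\rho)]^{-1}+o(1/n)$, combined with joint concavity of fidelity and the quadratic expansion $1-\text{Fid}\approx \tfrac{1}{4}\,m\,F_{H'}(\rho')\,\phi^2$. The one caveat is your specific choice of the canonical phase measurement: for a general mixed $\rho$ this fixed covariant POVM need not saturate the Cram\'er--Rao bound, which is why the paper instead invokes the (covariantized) maximum-likelihood estimator; your parenthetical appeal to local asymptotic normality is the correct repair, so the argument goes through.
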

Note that Eq.(\ref{thm second part}) corresponds to the special case of $m=1$, where we have used that for the pure target state $\ket{\phi}$, $F_H(\ketbra{\phi}{\phi})= 4 \times V_H(\ketbra{\phi}{\phi})$. 

{The measure-and-prepare protocol that achieves the infidelity in Eq.(\ref{jdhd}) is based on the optimal maximum likelihood estimator that asymptotically saturates the Cramer-Rao bound \cite{Caves94, Barndorff-Nielsen_2000}. The protocol first 
uses $n$ copies of state $\rho(t)=\exp(-\mathrm{i} H t) \rho \exp(\mathrm{i} H t): t\in[0,\tau_H(\rho))$, to find an estimate of parameter $t$, denoted by $t_{\text{est}}$  with mean squared error $\langle \delta t^2\rangle=[n F_H(\rho)]^{-1}+o(1/n)$.   
Then, it prepares $m$ copies of state $\exp(-\mathrm{i} H' t_{\text{est}})\rho' \exp(\mathrm{i} H' t_{\text{est}})$. As shown in \cite{marvian2020coherence}, this protocol respects the time translation symmetry in Eq.(\ref{cov3}), and the infidelity of its output with the desired state $\rho'^{\otimes m}$ satisfies  Eq.(\ref{jdhd}).} {Note that because $t_{\text{est}}$ is probabilistic, in the resulting state $\sigma_m$ the $m$ output systems will be correlated. This explains why the above upper bound on infidelity of the total output state $\sigma_m$ with ${\rho'}^{\otimes m}$ grows linearly with $m$. On the other hand, if instead of the overall output state $\sigma_m$, we only focus on the reduced state of a single output system, then the infidelity 
of this copy with the desired output state $\rho'$ is independent of $m$, and is bounded by the right-hand side of Eq.(\ref{jdhd}) for $m=1$.}

\subsection{QFI Metrics for displaced incoherent states}\label{section: qfi}

The QFI {$F_H$ and the purity of coherence $P_H$ belong} to a general family of Quantum Fisher Information metrics \cite{BraunsteinGeometry, morozova1991markov, Barndorff-Nielsen_2000, matsumoto2005reverseestimationtheorycomplementality, paris2009quantum}. By definition, any Quantum Fisher Information metric $\textbf{g}$ is a contractive metric on the space of density operators. Such contractive metrics are fully classified by the work of Morozova, Cencov and Petz (here, we follow the presentation of \cite{PETZ_199681, Bengtsson_Zyczkowski_2006, QFI_Metrics_2016}).  Namely, any such metric is characterized by a function $f: \mathbb{R}_+\rightarrow \mathbb{R}_+$, such that for a density operator $\rho$ with spectral decomposition $\rho=\sum_i p_i |i\rangle\langle i|$, it holds that
\be\label{eqn: qfi metric}
\textbf{g}^{f}_\rho(\dot{\rho}, \dot{\rho})=\sum_{i} \frac{\dot{\rho}^2_{ii}}{p_i} +2 \sum_{i<j} c_f(p_i,p_j) |\dot{\rho}_{ij}|^2\ ,
\ee
where $\rho_{ij}=\langle i|\rho|j\rangle$, and  $c_f(x,y)=\frac{1}{yf(x/y)}$
where $f: \mathbb{R}_+\rightarrow \mathbb{R}_+$ is (i) an operator monotone (i.e., $f(A)\le f(B)$ for $A\le B$), (ii) selfinverse $f(t)=t f(1/t)$, and (iii) normalized $f(1)=1$.\footnote{Note that our definition of $\textbf{g}^f_\rho$ differs from the same in \cite{PETZ_199681, Bengtsson_Zyczkowski_2006, QFI_Metrics_2016} by a factor of $4$ to ensure that $F_H(\psi)=4 \times V_H(\psi)$ for all pure states $\psi$.} Interestingly,  any function $f$ satisfying the above condition is bounded from below and above by the following functions, which themselves satisfy these conditions:  
\begin{align}\label{SLD}
f^{\text{SLD}}(x)=\frac{1+x}{2}\ \ ,\  \ 
f^{\text{RLD}}(x)=\frac{2x}{1+x}\ .
\end{align}
The corresponding 
QFI metrics are called the Symmetric-Logarithmic-Derivative (SLD) and the Right-Logarithmic-Derivative (RLD) QFI, which are, respectively, the minimum and maximum QFI metrics. In particular, in the case of family of states $\exp(-\mathrm{i} t H)\rho \exp(\mathrm{i} t H): t\in \mathbb{R}$, in Appendix \ref{F_H and P_H deriv appendix} we verify that the corresponding QFI metrics are indeed the functions $F_H
$ and $P_H$ mentioned in Eq.(\ref{eq: gen F_H P_H}).

In Appendix~\ref{appendix: incoherent qfi computation} we calculate the QFI metrics for {the family of \textit{displaced incoherent states} defined as $D(\alpha) \rho_\text{incoh} D^\dagger(\alpha)$, where $\rho_\text{incoh}$ is an arbitrary incoherent state satisfying $[\rho_\text{incoh},H]=0$ with $H=\omega a^\dagger a$}, and show that
\begin{lemma}\label{lemma: qfi computation}
    Let $\rho=D(\alpha)\rho_\text{incoh}D(\alpha)^\dag$, where  $\rho_\text{incoh}=\sum_i p_i |i\rangle\langle i|$ is an incoherent state of a Harmonic oscillator, and $\{|i\rangle\}$ is the  eigenbasis of  Hamiltonian $H=\omega a^\dag a$.  Then, for the family of states {$\rho(t)=e^{-\mathrm{i} H t} \rho e^{\mathrm{i} H t}$} the Fisher information metric in Eq.(\ref{eqn: qfi metric})
     evaluates to
\begin{align}\label{eqn: incoherent qfi}
    \textbf{g}^{f}_\rho(\dot{\rho}, \dot{\rho})=2{\omega^2 |\alpha|^2}\sum_i (i+1) p_{i+1} \times  \frac{({p_i}/{p_{i+1}}-1)^2}{f(p_i/p_{i+1})}\ .
\end{align}
\end{lemma}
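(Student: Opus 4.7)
The plan is a direct spectral computation. Since $D(\alpha)$ is unitary, the spectral decomposition of $\rho = D(\alpha)\rho_{\text{incoh}}D^\dagger(\alpha)$ is $\rho = \sum_i p_i |\tilde i\rangle\langle \tilde i|$ with displaced Fock eigenvectors $|\tilde i\rangle := D(\alpha)|i\rangle$, sharing eigenvalues with $\rho_{\text{incoh}}$. From $\rho(t) = e^{-\mathrm{i} H t}\rho e^{\mathrm{i} H t}$ I get $\dot\rho = -\mathrm{i}\omega[a^\dag a,\rho]$. Then the goal is simply to evaluate
\begin{equation}
\textbf{g}^f_\rho(\dot\rho,\dot\rho)=\sum_i \frac{\dot\rho_{ii}^2}{p_i}+2\sum_{i<j} c_f(p_i,p_j)|\dot\rho_{ij}|^2
\end{equation}
in the basis $\{|\tilde i\rangle\}$.

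Next I would compute the matrix elements. For any $i,j$,
\begin{equation}
\dot\rho_{ij}=\langle \tilde i|\dot\rho|\tilde j\rangle = -\mathrm{i}\omega(p_j-p_i)\langle i|D^\dag(\alpha)\, a^\dag a\, D(\alpha)|j\rangle.
\end{equation}
The only substantive input is the displacement identity $D^\dag(\alpha)aD(\alpha)=a+\alpha$, which yields
\begin{equation}
D^\dag(\alpha)a^\dag a D(\alpha)=a^\dag a+\alpha a^\dag+\alpha^* a+|\alpha|^2 .
\end{equation}
Taking Fock-basis matrix elements gives $\langle i|\cdot|j\rangle=(j+|\alpha|^2)\delta_{ij}+\alpha\sqrt{j+1}\,\delta_{i,j+1}+\alpha^*\sqrt{j}\,\delta_{i,j-1}$. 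The diagonal part is annihilated by the $(p_j-p_i)$ factor (consistent with $\operatorname{tr}\dot\rho=0$), and the only surviving off-diagonal contributions live on $|i-j|=1$:
\begin{equation}
|\dot\rho_{i,i+1}|^2=\omega^2|\alpha|^2(i+1)(p_{i+1}-p_i)^2 .
\end{equation}

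Plugging back into the QFI-metric formula, and using $c_f(p_i,p_{i+1})=1/[p_{i+1}f(p_i/p_{i+1})]$ together with $(p_{i+1}-p_i)^2=p_{i+1}^2(p_i/p_{i+1}-1)^2$, collapses the double sum to a single index and yields
\begin{equation}
\textbf{g}^f_\rho(\dot\rho,\dot\rho)=2\omega^2|\alpha|^2\sum_i(i+1)p_{i+1}\,\frac{(p_i/p_{i+1}-1)^2}{f(p_i/p_{i+1})},
\end{equation}
as claimed. There is no real obstacle: the whole content is the observation that in the displaced Fock basis, $\dot\rho$ becomes tridiagonal with weights controlled by the ladder action of $a,a^\dag$, and that the off-diagonal weight $(i+1)|\alpha|^2$ is precisely what one needs to recombine with the $p_{i+1}$ factor from $c_f$. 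The only care required is the convergence of the tail sum, which is guaranteed whenever $\rho$ has finite energy variance (an assumption imposed in Theorem~\ref{theorem: universal bounds} and satisfied by coherent thermal states, the application of interest).
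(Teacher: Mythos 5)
Your proof is correct and follows essentially the same route as the paper's: both work in the displaced Fock eigenbasis $|\tilde i\rangle = D(\alpha)|i\rangle$, use $D^\dag(\alpha)aD(\alpha)=a+\alpha$ to reduce $\dot\rho_{ij}$ to ladder-operator matrix elements, observe that only $|i-j|=1$ survives, and substitute into the Morozova--Cencov--Petz formula. The only cosmetic difference is that you evaluate everything at a fixed time while the paper conjugates by the time-dependent displacement $D(\alpha e^{-\mathrm{i}\omega t})$; the resulting phases drop out in $|\dot\rho_{ij}|^2$ either way.
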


For the special case of thermal states $\rho_\text{incoh}=e^{-\beta H}/\Tr(e^{-\beta H})$, the ratio ${p_i}/{p_{i+1}}$ is independent of the energy level and is equal to the Boltzmann factor
\be
\frac{p_i}{p_{i+1}}=e^{-\beta \omega}=\frac{n_{\text{th}}(\beta)}{1+n_{\text{th}}(\beta)}\ .
\ee
This implies
\begin{align}\label{qfi eqn cts}
    \textbf{g}^{f}_\rho(\dot{\rho}, \dot{\rho})=  \frac{{2}\omega^2|\alpha|^2}{n_\text{th}(\beta)\times f(1+n_\text{th}(\beta)^{-1})}\ ,
\end{align}
where we noted that $\sum_i p_{i+1} (i+1) = n_{\text{th}}(\beta)$ by definition for thermal states.  Then, using Eq.(\ref{SLD}) we obtain
\bes\label{formulaP}
\begin{align}
F_H(\rho(\beta, \alpha))=\textbf{g}^{\text{SLD}}_\rho(\dot{\rho}, \dot{\rho})&=\omega^2|\alpha|^2\times \frac{4}{2n_\text{th}(\beta)+1  }   \label{eq: sld fisher info} \\ 
P_H(\rho(\beta, \alpha))=\textbf{g}^{\text{RLD}}_\rho(\dot{\rho}, \dot{\rho})&=\omega^2|\alpha|^2 \times \frac{2n_\text{th}(\beta)+1}{n_\text{th}(\beta) (n_\text{th}(\beta)+1)} 
\end{align}
\ees
which are the QFI and the purity of coherence, respectively. It is worth noting  that in the infinite temperature limit $n_\text{th}(\beta)\gg 1$,  we have
\be
P_H(\rho)\approx F_H(\rho)\approx  \frac{2\omega^2|\alpha|^2}{n_{\text{th}}(\beta)}\ .
\ee
Furthermore, in the opposite limit where temperature goes to zero, $n_\text{th}(\beta)\ll 1$,  we have
\be
P_H(\rho)\approx F_H(\rho)
\times \frac{1}{4 n_\text{th}(\beta)}\ .
\ee

\subsection{{Reversibility and irreversibility of the concentration map }}\label{subsec: concentration dilution reversibility}

Finally, we explain another interesting application of QFI metrics in the context of coherence distillation. Recall the concentration map 
discussed in Sec.~\ref{subsection: divide distill}, which can be realized by the top circuit in Fig.~\ref{fig: divide distill}. Suppose the input modes in the circuit are in  state $\otimes_{j=1}^n \rho(\beta_j,\alpha)$. The beam splitters can be chosen such that full constructive interference occurs in only one of the output modes, which means that the state of this output mode is $\rho(\beta',\sqrt{n}\alpha)$. On the other hand, in the other $n-1$ output modes, destructive interference occurs, and as a result, their reduced state is a thermal state, i.e., they do not contain any coherence. Hence, the concentration map $\mathcal{C}_n$ discards these $n-1$ modes. In Sec.~\ref{subsection: divide distill} we mentioned that when all the input modes have the same temperature, i.e., when 
\be\label{equal temp condition}
\beta_1=\cdots=\beta_n:=\beta\ , 
\ee
the transformation from $n$ input modes to the single output mode is reversible (in this case, the output mode with constructive interference is in states $\rho(\beta,\sqrt{n}\alpha)$, i.e., have the same temperature). 
That is, by applying the phase-insensitive channel $\mathcal{D}_n^{(\beta)}$, we can recover the initial state of $n$ modes from this single output mode (see Appendix \ref{appendix: gaussian prop} for further discussion). {Recall that we used this reversibility to argue that using phase-insensitive operations state $\rho(\beta,\alpha)^{\otimes n}$ can be converted to state $\rho(\beta',\alpha')^{\otimes m}$ with infidelity $\epsilon$ if, and only if,  state $\rho(\beta,\sqrt{n}\alpha)$ can be converted to state $\rho(\beta',\sqrt{m}\alpha')$ with the same infidelity.}

{Now suppose that the input modes of the concentration map have different temperatures.  
Then, again it is not hard to show (see Appendix \ref{appendix: gaussian prop}) that the same constructive interference happens and creates a coherent thermal state in the form $\rho(\beta',\sqrt{n}\alpha)$, where the output temperature is determined by
\be
n_{\text{th}}(\beta')=\frac{1}{n}\sum_{j=1}^n n_{\text{th}}(\beta_j)\ .
\ee
Furthermore, similar to the equal temperature case in Eq.(\ref{equal temp condition}), all the remaining $n-1$ modes are in thermal states, and do not contain any coherence.  Therefore, one may expect that the concentration map is still reversible, i.e., there exists a phase-insensitive process recovering the initial state $\otimes_{j=1}^n \rho(\beta_j,\alpha)$ from this single output mode in state $\rho(\beta',\sqrt{n}\alpha)$. }

However, despite the lack of coherence in the discarded $n-1$ modes, it turns out that, unless  all the input modes have the same temperature, i.e.,  Eq.(\ref{equal temp condition}) holds, the process {$\otimes_{j=1}^n \rho(\beta_j,\alpha) \rightarrow \rho(\beta',\sqrt{n}\alpha)$ is not reversible via phase-insensitive operations.} Here, we show how this claim, and its converse, i.e., the reversibility of the concentration map when the equal-temperature condition holds, can both 
 be understood using only the properties of QFI metrics.

Consider QFI metrics in Eq.(\ref{qfi eqn cts}), for the family of input states 
$$\bigotimes_{j=1}^n \exp(-\mathrm{i} H t)\rho(\beta_j,\alpha)  \exp(\mathrm{i} H t)\ \ \ \ \  : t\in \mathbb{R}\ ,$$
and for the corresponding output states 
$$\exp(-\mathrm{i} H t)\rho(\beta',\sqrt{n}\alpha)  \exp(\mathrm{i} H t)\ \ \ \ : t\in \mathbb{R}\ .$$
Then, the monotonicity of QFI metrics requires that the input QFI should be larger than or equal to the output QFI, which implies
\begin{align}\label{input qfi}
     \omega^2|\alpha|^2 \times \sum_{j=1}^n g^f(n_\text{th}(\beta_j)) \ge \omega^2|\sqrt{n}\alpha|^2 \times g^f\big[\frac{1}{n}\sum_{j=1}^n n_\text{th}(\beta_j)\big]\ ,
\end{align}
where
\be\label{hg}
g^f(s)=  \frac{{2}}{s\times f(1+s^{-1})}\ ,
\ee
{and $f$ is the monotone function defining the  QFI metric under consideration through
the Morozova-Cencov-Petz characterization in 
Eq.(\ref{eqn: qfi metric}).} Here, to calculate the QFI metric for the input state in the left-hand side of Eq.(\ref{input qfi}), we have used the additivity of Fisher information metrics for tensor product states. If Eq.(\ref{input qfi}) is a strict inequality for some QFI metrics, then the process is irreversible.  
As an example, consider SLD Fisher information (i.e., $F_H$)  which corresponds to the function 
$f(x)=f^{\text{SLD}}(x)={(1+x)}/{2}$ in Eq.(\ref{hg}). Then, since the function {$(2s+1)^{-1}$ is strictly convex for all $s>-0.5$}, it follows that the equality holds only if the equal-temperature condition in Eq.(\ref{equal temp condition}) is satisfied. Thus, if this condition is not satisfied, the value of this metric decreases during the concentration process. 

Therefore, interestingly, we conclude that when the equal-temperature condition in Eq.(\ref{input qfi}) is not satisfied, even though the discarded $n-1$ modes do not contain any coherence,   {discarding them makes the process irreversible}. This happens because the discarded modes contain correlations with the output mode, and therefore the actual coherence of the system is reduced by discarding them. 

It is also worth noting that the converse statement, i.e., the fact that the process is reversible when the equal-temperature condition holds, can also be seen using 
a recent result of  \cite{gao2023sufficient}. According to this result,  
if any of a subclass of Fisher information metrics that satisfy certain regularity conditions remain conserved, then the process is reversible (this subclass includes, for instance,  skew information). But, comparing the two sides of Eq.(\ref{input qfi}), we clearly see that when the equal-temperature condition holds, all QFI metrics remain conserved in the process, which by the result of \cite{gao2023sufficient}, implies the process is reversible.

\section{Conclusion}\label{Sec:discussion}
Inspired by the notion of entanglement distillation \cite{Bennett:96}, in recent years many researchers have explored various notions of resource distillation, and more specifically coherence distillation with respect to various sets of free operations in these resource theories (See, e.g., \cite{winter2015operational, streltsov2017colloquium, regula2018one, zhao2018one, lami2018generic, ProbDistillation, CoherenceManipulation, GenericBoundCoherence, ChitambarResourceTheories, LamiGrandTour}). While from a more abstract perspective, the notion of resource distillation is a well-defined and interesting problem to study, in general resource theories (including various versions of the resource theory of coherence) it may not always have a clear physical interpretation. This is because, in contrast to entanglement theory, in general resource theories the choice of free operations is not always justifiable based on physical or practical considerations (see \cite{marvian2016quantify} for further discussion).

{On the other hand}, the notion of distillation studied in \cite{marvian2020coherence} and {in the current} paper has clear physical interpretations. First, as schematically presented in Fig.~\ref{fig:distill},  coherence distillation in this context is a process for recovering a pure signal from its noisy version while preserving its phase information. Second, as further discussed in \cite{marvian2020coherence}, 
this notion of coherence distillation arises in quantum thermodynamics, assuming one can consume an arbitrary amount of work (or, equivalently, low-entropy ancilla systems), but one does not have access to a source of coherence. 

Besides these practical motivations (and probably related to them), there is an  information-theoretic motivation to further investigate coherence distillation and, more generally, manipulations of coherence and asymmetry, from the perspective of the resource theory of asymmetry. Namely, this perspective reveals novel operational interpretations of the SLD and RLD Fisher information metrics, which as shown by Petz \cite{PETZ_199681}, are respectively the minimal and maximal monotone metrics. More precisely, the SLD Fisher information determines the coherence cost, i.e., the rate of transformation from pure states containing coherence to general states in asymptotic (i.i.d.) regimes \cite{marvian2022operational}. On the other hand, as we demonstrated in this paper, the RLD Fisher information determines the minimum achievable error in the distillation of a single copy pure state, at least, in the case of coherent thermal states. 
This remarkable fact about the distinguished roles of the SLD and RLD Fisher information in the resource  theory of asymmetry, motivates further investigation of this resource theory, and its approach to (unspeakable) coherence (See, e.g., \cite{tajima2022universal, yamaguchi2023beyond, tajima2024gibbs, takagi2022correlation} for recent works in this direction).\\

\section*{Acknowledgements}
This work is supported by a collaboration between the US DOE and other Agencies. This material is based upon work supported by the U.S. Department of Energy, Office of Science, National Quantum Information Science Research Centers, Quantum Systems Accelerator. Additional support is
acknowledged from  
NSF Phy-2046195, NSF FET-2106448, and 
NSF QLCI grant OMA-2120757. We also acknowledge helpful discussions with Y. Chitgopekar, N. Koukoulekidis, S. Kazi, D. Jakab, P. Deliyannis, S. Brahmachari, A. Hulse and G. L. Sidhardh. SAY sincerely thanks Saathwik Yadavalli for all the time and effort he volunteered for making figures for this paper despite his busy schedule.

\bibliography{main}

\onecolumngrid

\newpage

\maketitle
\vspace{-5in}
\begin{center}

\Large{Appendix}
\end{center}
\appendix

\onecolumngrid


\section{Coherence distillation using Gaussian channels}\label{appendix: gaussian prop}
{In this section, we prove Proposition \ref{prop: gaussian}. We prove the result of the proposition by computing the optimal distillation error for general phase-insensitive Gaussian channels in Appendix~\ref{appendix: phase insenstive gaussian channels}. We explore the general state transformations possible for coherent thermal states using passive unitaries in Appendix~\ref{appendix: general transformations passive}. Using this, in Appendix~\ref{appendix: vacuum distillation} we show that beam splitters and a single vacuum ancilla suffices to achieve the optimal distillation error possible for phase-insensitive Gaussian channels.}

\subsection{Preliminaries}
In all the above mentioned cases, we will be considering $N$ bosonic modes with the Hamiltonian
\be
{H} = \sum_{i=1}^N \omega \ {a}^\dag_i{a}_i \ .
\ee
As always,
\be
 {\rho}_\text{th}(\beta):=\frac{\exp({-\omega \beta  {a}^\dag  {a}})}{\Tr(\exp({- \omega \beta  {a}^\dag  {a}}))}\ 
\ee
is the thermal state of a Harmonic Oscillator,
\be
 {D}(\alpha)=\exp(\alpha  {a}^\dag-\alpha^\ast  {a})\ 
\ee
is the Weyl displacement operator, and
\begin{align}
   \rho(\beta,\alpha):=D(\alpha)\rho_{\text{th}}(\beta)D^\dagger(\alpha)
\end{align}
is the coherent thermal state. Finally, the fidelity of a coherent thermal state $\rho(\beta', \alpha')$ with the desired coherent state $|\alpha\rangle$ is a quantity we will be interested in. We compute it here:
\bes\label{fidelity}
\begin{align}
\braket{\alpha|\rho(\beta', \alpha')|\alpha}&=\langle\alpha|  {D}(\alpha')  {\rho}_\text{th}(\beta') {D}^\dag(\alpha')|\alpha\rangle\\
&=\langle\alpha-\alpha'|   {\rho}_\text{th}(\beta')|\alpha-\alpha'\rangle\\ &= (1-e^{-\beta' })e^{-|\alpha-\alpha'|^2} \sum_{s=0}^\infty \frac{}{} e^{-\beta'  s} \frac{|\alpha-\alpha'|^{2s}}{s!} \\ &= (1-e^{-\beta' })e^{-|\alpha-\alpha'|^2} \sum_{s=0}^\infty   \frac{\big(e^{-\beta'  } |\alpha-\alpha'|^{2}\big)^s}{s!}\\ &= (1-e^{-\beta' })  \exp[-{|\alpha-\alpha'|^2} \times (1-e^{-\beta'  })]\\ &=  \frac{1}{Z'} \times \exp[-\frac{{|\alpha-\alpha'|^2}}{Z'}]\ ,
\end{align}
\ees
where $Z'={(1-e^{-\omega\beta' })^{-1}}=1+ n_{\text{th}}(\beta')$ is the partition function of a Harmonic oscillator in the inverse temperature  $\beta'$, and $n_{\text{th}}(\beta')$ is the mean number of thermal excitations in the Harmonic oscillator.

\newpage

\subsection{Distillation with phase-insensitive Gaussian channels (Proof of Proposition \ref{prop: gaussian})}\label{appendix: phase insenstive gaussian channels}

In this section, we prove Proposition \ref{prop: gaussian}. In Sec.~\ref{subsec: concentration dilution reversibility}, we note that the passive \textit{concentration channel} that processes $n$ copies of a coherent thermal state $\rho(\beta,\alpha)^{\otimes n}$ to $\rho(\beta,\sqrt{n}\alpha)$ is reversible. Thus, instead of considering a channel from $n$ modes, we can study phase-insensitive Gaussian channels from one mode to one mode where the input is $\rho(\beta,\sqrt{n}\alpha)$ (doing so does not affect optimality).

Any phase-insensitive Gaussian channel affects a Gaussian state with covariance matrix $\sigma$ and phase vector $\mathbf{r}$ as
\begin{align}
    &\sigma \mapsto x^2\sigma + y I\\
    & \mathbf{r} \mapsto x\mathbf{r}
\end{align}
where $y\geq|x^2-1|$ \cite{serafini2017quantum}. The effect of such a channel on coherent thermal states is easy to analyze. The covariance matrix of $\rho(\beta, \alpha)$ is $(1+2 n_{\text{th}}(\beta))I$. Thus, applying the phase-insensitive Gaussian channel characterized by $x,y$ yields the covariance matrix, 
\begin{align}
 (1+2 n_{\text{th}}(\beta))I \mapsto x^2 (1+2 n_{\text{th}}(\beta))I + y I = (x^2(1+2 n_{\text{th}}(\beta))+y)I.
\end{align}
Thus, the new temperature satisfies,
\begin{align}
    1+2n_{\text{th}}(\beta') = x^2(1+2 n_{\text{th}}(\beta))+y \implies n_{\text{th}}(\beta') = x^2 n_{\text{th}}(\beta) +  \frac{(x^2+y-1)}{2}
\end{align}
In other words, a general phase-insensitive Gaussian channel maps
\begin{align}
    \rho(\beta, \alpha) \mapsto \rho(\beta', x \alpha)
\end{align}
where $n_{\text{th}}(\beta') = x^2 n_{\text{th}}(\beta) +  \frac{(x^2+y-1)}{2}$ and $y\geq|x^2-1|$. 
So given the input $\rho(\beta, \sqrt{n} \alpha)$, the output is $\rho(\beta', x \sqrt{n} \alpha)$. From Eq.(\ref{fidelity}), the fidelity of this output state with $\ket{\alpha}$ is 
\begin{align}
    \braket{\alpha|\rho(\beta', x_n \sqrt{n} \alpha)|\alpha} = \frac{1}{Z_n'} \times \exp\Big[-\frac{|x_n \sqrt{n} \alpha-\alpha|^2}{Z_n'}\Big]
\end{align}
where $Z'_n=1+n_\text{th}(\beta'_n) = 1+ x_n^2 n_{\text{th}}(\beta) +  \frac{(x_n^2+y_n-1)}{2}$, and we have added the subscript $n$ to the parameters $x$, $y$ and $Z'$ for clarity. As $n\rightarrow \infty$, the distillation fidelity approaches $1$. Thus, $x_n \sqrt{n} \alpha$ approaches $\alpha$. To this end we shall define the phase-space \textit{bias} as
\begin{align}\label{CV 1}
    \delta_n := x_n\sqrt{n} \alpha - \alpha.
\end{align}

Then the fidelity can be written as
\begin{align}\label{eq infid 2}
    \braket{\alpha|\rho(\beta', x_n \sqrt{n} \alpha)|\alpha} = \frac{e^{-\frac{|\delta_n|^2}{Z'_n}}}{Z'_n}
\end{align}
where $Z'_n$ becomes $$Z'_n= \frac{1}{2}+ \frac{y_n}{2} + \Big(n_\text{th}(\beta) + \frac{1}{2}\Big)\Big(1 + \frac{\delta_n}{\alpha}\Big)^2 \times \frac{1}{n}.$$
Now, we shall upperbound the fidelity. Clearly,
\begin{align}
    \braket{\alpha|\rho(\beta', x_n \sqrt{n} \alpha)|\alpha} \leq \frac{1}{Z'_n} \ .
\end{align}
From this expression, it is apparent that minimizing $y_n$ will further upperbound the fidelity. The minimum value that $y_n$ can take is 
\begin{align}
    y_n \geq |x_n^2-1| = 1 - \Big(1 + \frac{\delta_n}{\alpha}\Big)^2 \times \frac{1}{n} \equiv y^\text{min}_n \ ,
\end{align}
where we note that $\big(1 + \frac{\delta_n}{\alpha}\big)^2 \times \frac{1}{n} \ll 1$ when $n \gg 1$. Substituting $y_n=y_n^\text{min}$ in $Z'_n$ yields
\begin{align}
    {Z_n'}^\text{min} = 1 + \Big( 1 + \frac{\delta_n}{\alpha}\Big)^2 \times \frac{n_\text{th}(\beta)}{n} \ .
\end{align}
Thus,
\begin{align}
    \braket{\alpha|\rho(\beta', x_n \sqrt{n} \alpha)|\alpha} \leq \frac{1}{Z'_n} \leq \frac{1}{{Z_n'}^\text{min}} = \frac{1}{1 + \big(1 + \frac{\delta_n}{\alpha}\big)^2 \times \frac{n_\text{th}(\beta)}{n}}\ .
\end{align}
Because $\delta_n \rightarrow 0$ as $n \rightarrow \infty$, the the leading order behavior of ${\big(1 + \frac{\delta_n}{\alpha}\big)^2 \times \frac{n_\text{th}(\beta)}{n}}$ in $n$ will be $n_\text{th}(\beta)/n$, in the large $n\gg 1$ regime. As a consequence,
\begin{align}
     \limsup_{n \rightarrow \infty}\  n \times (1-\braket{\alpha|\rho(\beta', x_n \sqrt{n} \alpha)|\alpha}) \geq n_\text{th}(\beta)\ .
\end{align}

This lowerbound is saturated by a simple linear optical protocol comprised of passive unitaries, illustrated in Figure \ref{first real}. Thus, the optimal infidelity factor for Gaussian phase-insensitive channels is $\delta^{\text{opt-Gaussian}}=n_\text{th}(\beta)$. We detail the state transformations possible with passive unitaries in the following. 

\subsection{General State Transitions with passive unitaries}\label{appendix: general transformations passive}
In this subsection, we prove the general transitions possible for coherent thermal states using passive unitaries.  A notable two-mode passive unitary is a beam splitter, which is characterized by the unitary matrix
\be\label{beam splitter bogo}
\left(
\begin{array}{c}
b_1   \\
b_2
\end{array}
\right) =\left(
\begin{array}{cc}
t  &  r  \\
-r^\ast  & t^\ast
\end{array}
\right)\left(
\begin{array}{c}
a_1   \\
a_2
\end{array}
\right)\ ,
\ee
where $t$ and $r$ are respectively the transmission and reflection coefficients of the beam splitter satisfying $|t|^2+|r|^2=1$. Recall that any such transformation can be realized by a sequence of beam splitters and phase shifters.

\begin{proposition}\label{prop1}
Suppose we apply a passive unitary on the initial state {$ {\rho}=\bigotimes_{i=1}^N  {D}(\alpha_i) \rho_\text{th}(\beta_i)  {D}^\dag(\alpha_i)$}.  Then, the output state of each output mode is in the form $\mathcal{D}(\alpha') {\rho}_\text{th}(\beta')\mathcal{D}(\alpha')^\dag$ for $\alpha'\in\mathbb{C}$, with 
\begin{align}
\alpha'=\sum_{j=1}^N c_j \alpha_j\ ,
\end{align}
and 
\begin{align}
n_\text{th}(\beta')=\sum_{j=1}^N |c_j|^2\times n_\text{th}(\beta_j)\ ,
\end{align}
where $c_j\in \mathbb{C}$ are arbitrary coefficients satisfying the normalization condition $\sum_{j=1}^N|c_j|^2=1$, and $n_{\text{th}}(\beta)=e^{-\beta \omega}(1-e^{-\beta \omega})^{-1}$ is the mean number of thermal excitations for a Harmonic Oscillator in the inverse temperature $\beta$.  Hence, its fidelity with a desired state $|\alpha\rangle$ is $\exp(-|\alpha-\alpha'|^2/Z')/Z'$, where $Z'=(1-e^{-\beta' \omega })^{-1}$.
\end{proposition}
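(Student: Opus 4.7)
My plan is to reduce the proposition to a moment computation by exploiting the fact that the class of coherent thermal states is the set of single-mode Gaussian states whose covariance matrix is isotropic (proportional to identity). Since the initial state $\rho=\bigotimes_i D(\alpha_i)\rho_\text{th}(\beta_i)D^\dagger(\alpha_i)$ is Gaussian and passive unitaries are Gaussian unitaries, the output state and all of its reduced states are Gaussian. A single-mode Gaussian state is uniquely determined by $\langle a\rangle$, $\langle a^\dagger a\rangle-|\langle a\rangle|^2$, and $\langle a^2\rangle-\langle a\rangle^2$; so I only need to verify that the $k$-th output mode has moments matching those of $D(\alpha')\rho_\text{th}(\beta')D^\dagger(\alpha')$ for the specified $\alpha'$ and $\beta'$.

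The key input is that a passive unitary $U$ acts on annihilation operators as $U^\dagger a_k U=\sum_j M_{kj}\, a_j$ for some unitary matrix $M$ (this is just the multi-mode generalization of the Bogoliubov-type rule in Eq.~(\ref{beam splitter bogo})). Using this, together with the product structure of $\rho$ and the single-mode identities $\mathrm{Tr}(a_j\rho_j)=\alpha_j$, $\mathrm{Tr}(a_j^\dagger a_j \rho_j)=|\alpha_j|^2+n_\text{th}(\beta_j)$, and $\mathrm{Tr}(a_j^2\rho_j)=\alpha_j^2$, I would compute:
\begin{align}
\mathrm{Tr}(a_k\, U\rho U^\dagger)&=\sum_j M_{kj}\alpha_j\equiv\alpha',\\
\mathrm{Tr}(a_k^\dagger a_k\, U\rho U^\dagger)-|\alpha'|^2&=\sum_j|M_{kj}|^2 n_\text{th}(\beta_j),\\
\mathrm{Tr}(a_k^2\, U\rho U^\dagger)-(\alpha')^2&=0,
\end{align}
where the cross terms $\mathrm{Tr}(a_j^\dagger a_l\rho)$ and $\mathrm{Tr}(a_j a_l\rho)$ for $j\neq l$ factor into products of single-mode expectations and the diagonal terms produce the thermal variance.

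Setting $c_j:=M_{kj}$, the normalization $\sum_j|c_j|^2=1$ follows from unitarity of $M$ (the rows of a unitary matrix are unit vectors). These three moments coincide with those of the coherent thermal state $D(\alpha')\rho_\text{th}(\beta')D^\dagger(\alpha')$ with $n_\text{th}(\beta')=\sum_j|c_j|^2 n_\text{th}(\beta_j)$, proving the claim. The fidelity formula with $|\alpha\rangle$ is then immediate from the single-mode identity already established in Eq.~(\ref{fidelity}). The only subtlety worth highlighting is the preservation of Gaussianity under the partial trace; this is standard but should be stated explicitly, since it is what allows us to conclude that matching three scalar moments suffices to identify the state.
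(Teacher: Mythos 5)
Your proof is correct and follows essentially the same route as the paper's: both arguments push the mode operators through the Bogoliubov transformation $U^\dagger a_k U=\sum_j M_{kj}a_j$, compute the first and second moments using the product structure of the input, obtain the normalization $\sum_j|c_j|^2=1$ from unitarity of $M$, and identify the output via the fact that a Gaussian marginal is determined by those moments. The only cosmetic difference is that the paper first conjugates the displacement operators through the passive unitary to reduce to the purely thermal input, whereas you compute the three moments of the displaced state directly; both are valid and equivalent.
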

This proposition means that $\alpha'\in\mathbb{C}$ can be any complex number satisfying
\be
|\alpha'|\le \sqrt{\sum_{j=1}^N |\alpha_j|^2}\ , 
\ee
which follows from the Cauchy–Schwarz inequality.
Similarly $\beta'$ can take any value in the range
\be
\beta_{\text{min}} \le \beta'\le \beta_{\text{max}}\ ,
\ee
where $\beta_{\text{min}}=\text{min}\{\beta_j\}$ and $\beta_{\text{max}}=\text{max}\{\beta_j\}$.\\

Note that Eq.(\ref{eq: bs alpha}-\ref{eq: bs n}) follows as a direct application of Proposition \ref{prop1} to the unitary matrix of the beam splitter in Eq.(\ref{beam splitter bogo}).

\subsection*{Proof of Proposition \ref{prop1}}\label{proof passive}
In the following, we prove the above proposition. The output modes are related to the input modes via 
\be
 {b}^\dag_j= {U}^\dag_{\text{BS}}  {a}^\dag_j  {U}_{\text{BS}}=  \sum_{i=1}^N {u}_{ji}\   {a}^\dag_i\ ,
\ee
where ${u}_{ji}: 1\le i,j\le N$ is a unitary matrix. This implies 
\be
 {U}_{\text{BS}} {a}^\dag_i  {U}^\dag_{\text{BS}}=\sum_{r=1}^N u^\ast_{ri}\   {a}^\dag_r \ ,
\ee
and therefore
\be
 {U}_{\text{BS}} \big[\bigotimes_{i=1}^N  {D}(\alpha_i) \big]  {U}^\dag_{\text{BS}}= \bigotimes_{j=1}^N  {D}(\sum_{i} {u}^\ast_{ji} \alpha_i)\ .
\ee
Therefore,
\be
{U}_{\text{BS}}\big[\bigotimes_{i=1}^N  {D}(\alpha_i) \rho_\text{th}(\beta_i)  {D}^\dag(\alpha_i)\big] {U}_{\text{BS}}^\dag =
\big[\bigotimes_{j=1}^N  {D}(\sum_{i=1}^N u^\ast_{ji} \alpha_i)\big]  {\tau}\big[\bigotimes_{j=1}^N  {D}(\sum_{i=1}^N u^\ast_{ji} \alpha_i)\big]^\dag\ ,
\ee
where 
\be
 {\tau}= {U}_{\text{BS}} \big[\bigotimes_{i=1}^N  \rho_\text{th}(\beta_i) \big] {U}^\dag_{\text{BS}}=  {U}_{\text{BS}}   {\rho}  {U}_{\text{BS}}^\dagger\ . 
\ee
By doing so, we have effectively separated the effect of the passive unitary on the displacement from its effect on the thermal state. To determine $ {\tau}$ we first note that it is a Gaussian state and that all its first moments vanish, i.e., 
\be
\Tr( {\tau}  {a}_j)= \Tr( {\rho}  {b}_j) =  \sum_{i} {u}_{ji} \Tr( {\rho}  {a}_i)=0\ .
\ee
because $\rho$ is an $N$-fold tensor of thermal states. Furthermore, 
\be
\Tr( {a}^\dag_{j'} {a}^\dag_j {\tau})=0\ .
\ee
Note that both of the above equations also follow from  the fact that the unitary $ {U}_\text{BS}$ is a passive unitary and the initial state $ {\rho}$ commutes with the total Hamiltonian  $\omega \sum_{j=1}^N  {a}^\dag_{j} {a}_j$ (indeed this initial state commutes with each individual Hamiltonian $\omega {a}^\dag_{j} {a}_j$).  Thus, $\tau$ will be an incoherent Gaussian state on $N$ modes, and more importantly the reduced state on each mode $k$ will be a thermal state. To determine the temperature of each mode we note that
\be
 {b}^\dag_{j'} {b}_j=\sum_{i,i'} {u}_{j'i'} {u}^\ast_{ji}\    {a}^\dag_{i'}  {a}_i\ ,
\ee
which implies
\be
\Tr( {a}^\dag_{j'} {a}_j {\tau})=\Tr( {b}^\dag_{j'} {b}_j {\rho})=\sum_{i,i'} {u}_{j'i'} {u}^\ast_{ji} \times \Tr( {a}^\dag_{i'}  {a}_i  {\rho})=\sum_{i} {u}_{j'i} {u}^\ast_{ji} \times \Tr( {a}^\dag_{i}  {a}_i  {\rho}) \   .
\ee
where we used the unitarity of $U_{\text{BS}}$ in the last equality. This, in particular, implies that the reduced state of each mode $k\in\{1,\cdots, n\}$  is a thermal state $ {\rho}_\text{th}(\beta'_k)$ with 

\be
n_{\text{th}}(\beta'_k)=\Tr( {a}^\dag_{k} {a}_k {\tau})=\sum_{j=1}^N |{u}_{kj}|^2 \times \Tr( {a}^\dag_{j} {a}_j {\rho})=\sum_{j=1}^N |{u}_{kj}|^2 \times n_{\text{th}}(\beta_j) \   ,
\ee
where 
\be
n_{\text{th}}(\beta_j)=\Tr( {a}^\dag_{j} {a}_j  {\rho}_\text{th}(\beta_j) )=\frac{e^{-\beta_j \omega}}{1- e^{-\beta_j \omega}}\ .
\ee

Fidelity of the output state in proposition \ref{prop1} follows from Eq.(\ref{fidelity}).

\subsection{Coherence distillation with beam splitters}\label{appendix: vacuum distillation}
Now we look at how this optimal performance can be achieved using only beam splitters. Suppose we are given $n$ copies of noisy version of this state, 
\be
 {\rho}= \bigotimes_{i=1}^n {D}(\alpha)  {\rho}_\text{th}(\beta)  {D}^\dag(\alpha)\ .
\ee
Then, from Proposition \ref{prop1} we choose  $N=n+1$, i.e., we consider an additional mode in the vacuum state, which corresponds to $\beta=\infty$, or, equivalently $n_\text{th}(\beta)=0$. We label this mode as mode $j=N$. Then, choosing 
\begin{align}
c_{i}&=\frac{\cos \theta}{\sqrt{n}}\ \ \  : i=1,\cdots, n\\ 
c_N&=\sin\theta\ .
\end{align}
Then, we can obtain
any state with 
\begin{align}
\alpha'&=\cos \theta \sqrt{n}\times \alpha \ \ ,\ \ 
n_\text{th}(\beta')=\cos^2 \theta \times n_\text{th}(\beta)\ ,
\end{align}
where
\be
n_\text{th}(\beta)=\frac{e^{-\beta \omega}}{1- e^{-\beta \omega}}\ , 
\ee
is the expected number of excitations. Therefore, choosing $\cos\theta=1/\sqrt{n}$ we find $n_\text{th}(\beta')=n_\text{th}(\beta)/n$, i.e., we can suppress the expected number of excitations by a factor of $n$, and achieve  $\alpha'=\alpha$.

It follows that the fidelity of this state with the desired state $|\alpha\rangle$ from Eq.(\ref{fidelity}) is
  
\be
\frac{1}{1+n_\text{th}(\beta)/n}= 1-\frac{n_\text{th}(\beta)}{n}+\mathcal{O}\Big(\frac{1}{n^2}\Big)\ .
\ee
Thus, the infidelity factor is the optimal $\delta(\beta)=n_\text{th}(\beta)$ among Gaussian phase-insensitive channels.

\subsection*{Details regarding beam splitter circuits in Figure \ref{first real}}

Fig.~\ref{first real} in the main body illustrates two explicit passive unitary circuits of beam splitters and a single vacuum ancilla that achieve the same optimal infidelity factor. The aforementioned protocol corresponds to the \textit{First Realization} in the figure. 

The performance of the \textit{Second Realization} is a simple consequence of Eq.(\ref{eq: bs alpha}-\ref{eq: bs n}): In the $j^\text{th}$ use of the beam splitter in the circuit ($j=1, \cdots, n$), we realize the transformation
$$\rho(\beta_{j-1}, \sqrt{\frac{j-1}{n}}\ \alpha) \otimes \rho(\beta, \alpha)\longrightarrow \rho(\beta_j, \sqrt{\frac{j}{n}}\ \alpha),$$ where $$n_\text{th}(\beta_j)=(1-\frac{1}{n^2})\times n_\text{th}(\beta_{j-1})+\frac{1}{n^2} \times n_\text{th}(\beta)\ =[1-(1-\frac{1}{n^2})^{j+1}]\times n_\text{th}(\beta)$$ 
if we assume that the initial ancilla is the vacuum ($\beta_0=\infty$). The overall effect of each weak interaction on this mode can be approximated as a unitary transformation that realizes the Weyl displacement by $\alpha/\sqrt{n}$ up to $\mathcal{O}(1/n^2)$ corrections, along with a temperature increase of $n_\text{th}(\beta)/n^2$ up to $\mathcal{O}(1/n^{3})$ corrections. Then, for $n \gg 1$ the error becomes $\epsilon_n={n_\text{th}(\beta)}/{n}+\mathcal{O}(n^{-2})$.

\newpage

\section{Divide and Distill Strategy (Proof of Lemma \ref{lemma: divide and distill})}\label{appendix: lemma divide distill proof}

Consider the energy distribution of an arbitrary single-mode state $\tau$ with respect to the Hamiltonian $H=a^\dagger a$, i.e., $\braket{l|\tau|l}$, where $\ket{l}$ are the Fock states. Then, applying the  Markov's inequality to this  energy distribution,  we find that
\begin{align}
    1- \braket{0|\tau|0} \leq \Tr(\tau a^\dagger a)\ .
\end{align}
Now we apply this inequality to $\tau= D^\dagger(\alpha') \sigma' D(\alpha')$, which implies
\be
1-\langle 0|\tau| 0\rangle=  1-\langle \alpha'|\sigma'|\alpha'\rangle \le  \Tr(\sigma' (a-\alpha')^\dagger (a-\alpha')) \ ,
\ee
where we have used  $D(\alpha')\ket{0}=\ket{\alpha'}$, and  $D(\alpha')a D^\dagger(\alpha')= a - \alpha'$. We conclude that
\be\label{infid ineq}
1-\langle \alpha'|\sigma'|\alpha'\rangle \le \Tr(\sigma' a^\dagger a) - |\Tr(\sigma' a)|^2 + |\Tr(\sigma' a)-\alpha'|^2\ ,
\ee
as stated in Lemma \ref{lemma: divide and distill}.  We can interpret $\Tr(\sigma' a^\dagger a) - |\Tr(\sigma' a)|^2$ as the variance, and $|\Tr(\sigma' a)-\alpha'|^2$ as the squared bias. These transform simply upon the use of concentration channels. 

Now suppose $\sigma'$ is obtained 
by applying the concentration channel on $m$ copies of $\sigma$,
as $\mathcal{C}_m(\sigma^{\otimes m})=\sigma'$.  
 Recall that the concentration channel is realized by a passive linear optical transformation. For such a transformation, 
 the output annihilation operator $b$ is related to the input modes  $a_j: j=1,\cdots, m$, via a linear transformation $b=\sum_j r_j a_j$, for a normalized complex vector $(r_1,\cdots, r_m)^T\in \mathbb{C}^m$. 
 Assuming the $m$ input modes are initially uncorrelated 
 in the joint state $\rho_\text{tot}=\bigotimes_{j=1}^m \rho_j$, this implies that for the output we have
\begin{align}
&\Tr(\rho_\text{tot} b)=\sum_{j=1}^m  r_j\Tr(\rho_j a_j) \label{m1 eqn lemma}\\
&\Tr(\rho_{\text{tot}} b^\dagger b)= \sum_{i,j}^m r_i^{*} r_j \Tr(\rho_{\text{tot}} a_i^\dagger a_j)
   = \sum_{i}^m |r_i|^2 \Tr(\rho_i a_i^\dagger a_i) + \sum_{i\neq j}^m r_i^{*} r_j \Tr(\rho_i a_i)^* \Tr(\rho_j a_j)\\ 
   &|\Tr(\rho_{\text{tot}} b)|^2 = \Big| \sum_{j=1}^m r_j \Tr(\rho_j a_j)\Big|^2= \sum_{j}^m |r_j|^2 |\Tr(\rho_j a_j)|^2 + \sum_{i\neq j}^m r_i^{*} r_j \Tr(\rho_i a_i)^* \Tr(\rho_j a_j)\ , 
\end{align}
which imply 
\be \label{m2 eqn lemma}
\Tr(\rho_\text{tot} b^\dag b)-|\Tr(\rho_\text{tot} b)|^2=\sum_{j=1}^m |r_j|^2 \big[\Tr(\rho_j a_j^\dag a_j)-|\Tr(\rho a_j)|^2\big]\ .
\ee
For the concentration map $\mathcal{C}_m$ we have $r_j=1/\sqrt{m}$ for all $j$. Then the RHS of Eq.(\ref{infid ineq}) can be further simplified to
\begin{align}
1-\langle \alpha'|\sigma'|\alpha'\rangle &\le \Tr(\sigma' a^\dagger a) - |\Tr(\sigma' a)|^2 + |\Tr(\sigma' a)-\alpha'|^2\ \\ & = \Tr(\sigma a^\dagger a) - |\Tr(\sigma a)|^2 + m|\Tr(\sigma a)-\alpha|^2\ ,
\end{align}
where we used {Eq.(\ref{m1 eqn lemma}), and} that $\alpha'=\sqrt{m} \alpha$ in the equality.
\newpage

\section{Matrix elements of coherent thermal states in the large $\alpha$ regime {(Proof of Eq.(\ref{limit E}) and Lemma \ref{lem4})}}\label{appendix: matrix elements large alpha}
In this section, we show that
\begin{lemma}[Restating of Lemma \ref{lem4}]
Consider coherent thermal state $\rho(\beta,\alpha)=D(\alpha) e^{-\beta a^\dag a} D(\alpha)^\dag/\Tr(e^{-\beta a^\dag a} )$ where $\alpha$ is real. Then, for $\alpha\gg \max\{1,  n_\beta^2\}$, ignoring terms of order $\mathcal{O}(e^{-\sqrt{\alpha}/n_\beta})$ we have
\begin{align}\label{E appendix}
   E(\alpha)\equiv \sum_{l=0}^\infty \rho_{l,l} [|c_l^\text{opt}|^2-1] = \sum_{l=0}^\infty \rho_{l,l}\ \times \Big(\Big[\frac{\rho_{l-1,l-1}}{\rho_{l-1,l}}\Big]^2 - 1\Big) = \frac{1}{\alpha^2}\frac{n_\beta(n_\beta+1)}{1+2n_\beta} +\mathcal{O}\Big(\frac{(1+2n_\beta)^{3/2}}{\alpha^3}\Big) \ ,
\end{align}
where {$c_l^{\text{opt}}=\rho_{l-1,l-1}/\rho_{l-1,l}$}, and 
  $\rho_{m,n}\equiv\braket{m|\rho(\beta, \alpha)|n}$. Furthermore, $n_\beta\equiv n_\text{th}(\beta)$ is the mean number of thermal excitations. This, in particular, implies that
\begin{align}\label{E lim appendix}
    \lim_{\alpha \rightarrow \infty}  |\alpha|^2 E(\alpha) = \frac{n_\beta}{2} + \frac{n_\beta}{2+ 4 n_\beta} = \frac{n_\beta(n_\beta+1)}{1+2 n_\beta}\ .
\end{align}
\end{lemma}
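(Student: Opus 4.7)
\textbf{Proof proposal for Lemma \ref{lem4}.}

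The plan is to split $E(\alpha)$ into a bulk sum over $l$ close to $\alpha^{2}$ and a tail, evaluate the bulk via the refined Gaussian-plus-polynomial asymptotic for the Fock matrix elements already advertised in Sec.~\ref{subsection: gaussian approx}, and show the tail is negligible. The key input will be Lemma~\ref{c_l lemma2}: for $|r|\ll\alpha^{1/3}$,
\begin{equation*}
|c_{l}^{\text{opt}}|^{2}-1=\frac{r}{\sigma}+\frac{2n_{\beta}(n_{\beta}+1)^{2}-n_{\beta}(3n_{\beta}+2)\,r^{2}}{(1+2n_{\beta})\,\sigma^{2}}+\mathcal{O}\!\bigl((1+2n_{\beta})^{3}/\sigma^{3}\bigr),
\end{equation*}
where $r=(l-\alpha^{2})/\sigma$ and $\sigma=\sqrt{1+2n_{\beta}}\,\alpha$. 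Granting this, the whole problem collapses to computing the first two moments of the random variable $r$ under the energy distribution $\rho_{l,l}$ of the coherent thermal state.

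First I would derive Lemma~\ref{c_l lemma2}. Using the well-known closed form
\begin{equation*}
\rho_{n,m}(\beta,\alpha)=\frac{1}{1+n_{\beta}}\Bigl(\frac{n_{\beta}}{1+n_{\beta}}\Bigr)^{n}\!\sqrt{\frac{n!}{m!}}\Bigl(\frac{-\alpha}{1+n_{\beta}}\Bigr)^{m-n}\!L_{n}^{m-n}\!\Bigl(-\frac{\alpha^{2}}{n_{\beta}(n_{\beta}+1)}\Bigr)e^{-\alpha^{2}/(1+n_{\beta})}\qquad(m\geq n),
\end{equation*}
the ratio $\rho_{l-1,l-1}/\rho_{l-1,l}$ is a simple expression in a Laguerre polynomial and its associated $L^{1}$-counterpart evaluated at the large argument $-\alpha^{2}/[n_{\beta}(n_{\beta}+1)]$. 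Applying Perron's (or the Plancherel--Rotach) uniform asymptotic expansion for $L_{n}^{k}(x)$ in the oscillatory regime $x<0$, one obtains the expansion of $\rho_{l,l}$ and $\rho_{l-1,l}$ as a Gaussian $e^{-r^{2}/2}/(\sqrt{2\pi}\sigma)$ multiplied by a polynomial in $r$ with rational coefficients in $n_{\beta}$ (the $f_{1},f_{2}$ of the excerpt). Squaring the ratio and truncating at order $\sigma^{-2}$ yields the displayed quadratic in $r$; book-keeping the residual sum into the $\mathcal{O}((1+2n_\beta)^3/\sigma^3)$ remainder is the main technical burden.

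Second, I would compute the two moments of $r$ under $\rho_{l,l}$ from first principles (not from the asymptotic formula, so no regime restriction is needed). Since $l=a^{\dagger}a$ and $\rho(\beta,\alpha)=D(\alpha)\rho_{\text{th}}(\beta)D^{\dagger}(\alpha)$, a direct computation with $D(\alpha)^{\dagger}aD(\alpha)=a+\alpha$ gives
\begin{equation*}
\langle l\rangle=\alpha^{2}+n_{\beta},\qquad \mathrm{Var}(l)=\alpha^{2}(1+2n_{\beta})+n_{\beta}(n_{\beta}+1)=\sigma^{2}+n_{\beta}(n_{\beta}+1),
\end{equation*}
so $\langle r\rangle=n_{\beta}/\sigma$ and $\langle r^{2}\rangle=1+n_{\beta}(2n_{\beta}+1)/\sigma^{2}$. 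Plugging into the bulk expectation of Lemma~\ref{c_l lemma2} and simplifying,
\begin{align*}
E(\alpha)&=\frac{\langle r\rangle}{\sigma}+\frac{2n_{\beta}(n_{\beta}+1)^{2}-n_{\beta}(3n_{\beta}+2)\langle r^{2}\rangle}{(1+2n_{\beta})\sigma^{2}}+\mathcal{O}\!\bigl((1+2n_{\beta})^{3/2}/\alpha^{3}\bigr)\\
&=\frac{n_{\beta}[(1+2n_{\beta})+2(n_{\beta}+1)^{2}-(3n_{\beta}+2)]}{(1+2n_{\beta})\sigma^{2}}+\mathcal{O}\!\bigl((1+2n_{\beta})^{3/2}/\alpha^{3}\bigr)=\frac{n_{\beta}(n_{\beta}+1)}{(1+2n_{\beta})\alpha^{2}}+\mathcal{O}\!\bigl((1+2n_{\beta})^{3/2}/\alpha^{3}\bigr),
\end{align*}
which is \eqref{E appendix}; the limit \eqref{E lim appendix} follows by multiplying through by $\alpha^{2}$ and sending $\alpha\to\infty$ at fixed $\beta$.

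The hard part will be justifying that the bulk-only calculation captures $E(\alpha)$ up to the stated error, i.e.\ controlling the tail $|r|\gtrsim\alpha^{1/3}$. There are two concerns: (i) the quadratic approximation of $|c_{l}^{\text{opt}}|^{2}-1$ fails there, and (ii) even the Gaussian approximation of $\rho_{l,l}$ degrades. I would dispose of both simultaneously by noting that on the tail $\rho_{l,l}$ itself is exponentially small (a Chernoff/large-deviation bound for the excitation count, with moment-generating function $\mathbb{E}[e^{tl}]$ readily computed from the characteristic function of a displaced thermal state, gives $\sum_{|r|>\alpha^{1/3}}\rho_{l,l}=\mathcal{O}(e^{-c\alpha^{2/3}})$ for some $c=c(n_{\beta})>0$), while $|c_{l}^{\text{opt}}|^{2}$ admits only a polynomial upper bound in $l$ from the explicit Laguerre-ratio formula. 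Combined these make the tail contribution super-polynomially small in $\alpha$, subsumed by the claimed $\mathcal{O}(e^{-\sqrt{\alpha}/n_{\beta}})$ error in the statement and certainly by $\mathcal{O}((1+2n_{\beta})^{3/2}/\alpha^{3})$. The only remaining subtlety is checking that the expansion error in Lemma~\ref{c_l lemma2} is uniform enough in the bulk $|r|\ll \alpha^{1/3}$ that it contributes only at the stated order after averaging against $\rho_{l,l}$; this requires the hypothesis $\alpha\gg\max\{1,n_{\beta}^{2}\}$ to keep the dimensionless small parameter $(1+2n_{\beta})/\sigma$ genuinely small, which is exactly the regime assumed.
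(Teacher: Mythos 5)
Your proposal is correct and its architecture is essentially the paper's: the same typical/atypical split of the sum, the same reliance on the quadratic expansion of $|c_l^{\text{opt}}|^2-1$ in $r=(l-\alpha^2)/\sigma$ (Lemma~\ref{c_l lemma2}), the same exact first and second moments of the photon-number distribution ($\langle r\rangle=n_\beta/\sigma$, $\langle r^2\rangle=1+n_\beta/\alpha^2$), and the same Chernoff-type bound combined with the polynomial bound $|c_l^{\text{opt}}|^2\le(1+n_\beta)^2\,l/\alpha^2$ to kill the tail; your final algebra reproduces $n_\beta(n_\beta+1)/[(1+2n_\beta)\alpha^2]$ correctly. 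The one genuine divergence is how you propose to establish Lemma~\ref{c_l lemma2}: the paper derives the matrix-element expansions by writing $\rho(\beta,\alpha)$ via its $P$-function as a Gaussian smearing of coherent states and pushing the Poisson-to-Gaussian expansion of $|\langle\alpha+\gamma|l\rangle|^2$ to two subleading orders before integrating over $\gamma$, whereas you would start from the closed Laguerre-polynomial form of $\rho_{l-1,l-1}$ and $\rho_{l-1,l}$ (which the paper also records, but uses only for the crude tail bound) and invoke Perron/Plancherel--Rotach asymptotics. Two caveats on that route: the argument $-\alpha^2/[n_\beta(n_\beta+1)]$ is negative, so you are in the exponential/monotone regime of $L_n^{(k)}$, not the oscillatory one as you state; and since the degree $l\sim\alpha^2$ and the argument are both large with fixed ratio, you need \emph{uniform} asymptotics to second subleading order over a window of width $R\sigma$ in $l$ --- this is exactly where the paper's technical effort sits, and your proposal defers rather than discharges it. The $P$-function route buys a more self-contained, elementary computation (Gaussian integrals of polynomials); the Laguerre route buys contact with classical special-function theory but at the cost of importing a delicate uniform expansion. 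Everything else in your write-up, including the tail control and the observation that the bulk moments can be replaced by the exact global moments at exponentially small cost, matches the paper.
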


We explain the strategy to prove the expression for $E(\alpha)$. As noted, $\rho_{l,l}$ is approximately Gaussian in the large $\alpha$ regime, with mean $\alpha^2$ and variance $\sigma^2 = (1+2n_\beta) \alpha^2$. This allows us to partition our range $l$ into a typical and atypical range, i.e., 
\begin{align}
    E(\alpha)\equiv \sum_{l=0}^\infty \rho_{l,l} \times [|c_l^\text{opt}|^2-1] = \sum_{l \in \text{typical}}  \rho_{l,l}\times [|c_l^\text{opt}|^2 - 1] + \sum_{l \not\in \text{typical}}  \rho_{l,l}\times [|c_l^\text{opt}|^2 - 1]  \ 
\end{align}
where we have defined the typical regime as 
\be
\alpha^2-R \sigma \le l \le \alpha^2 + R \sigma
\ee
for large $\alpha$, or, equivalently, in terms of the rescaled parameter $r=(l-\alpha^2)/\sigma$ 
\be
-R \le r\le R \ .
\ee
Here, $R$ is properly chosen to satisfy the following two properties: 
\begin{enumerate}
    \item Inside the typical regime  $\rho_{l,l}$ can be approximated by a Gaussian distribution with up to quadratic corrections, in the form seen in Lemma \ref{eqn: matrix elements final}. {This, in turn, implies that $|c_l^\text{opt}|^2$ can be approximated by the quadratic form in Eq.(\ref{app45}).} To derive this approximation, we require that within the typical regime {$|r|\leq R \ll \alpha^\frac13$}. Or, more precisely, $\lim_{\alpha\rightarrow \infty} \frac{R^3(\alpha)}{\alpha}=0$. {Here, the power $1/3$ ensures that the corrections in approximating a Poisson distribution with a Gaussian distribution remain small (see Lemma \ref{poisson gaussian approx}).}

\item The contribution of the terms outside the typical regime is
\begin{align}\label{atypical bound}
    \sum_{l \not\in \text{typical}}  \rho_{l,l}\times [|c_l^\text{opt}|^2 - 1] = \mathcal{O}\Big(e^{n_\beta + \frac12 - \sqrt{1+2n_\beta}R}\Big)
\end{align}
which goes to zero 
{assuming $\max\{n_\beta ,1 \}\ll R^2$. 
Recall that we must also choose $R\ll \alpha^{1/3}$. These two conditions can be simultaneously satisfied because we consider $\alpha\gg \max\{1, n_\beta^2\}$.} Here, $\mathcal{O}$(.) suppresses terms of order of powers of $1/\alpha$ in the exponent.
\end{enumerate}

To compute $|c^\text{opt}_l|^2$ in the typical regime, we  first evaluate $\rho_{l,l}$,  and $\rho_{l,l+1}$. {In particular, as summarized in Lemma \ref{eqn: matrix elements final}, we find a useful expansion of $\rho_{l,l}$, and  $\rho_{l,l+1}$ in the large $\alpha$ regime, and carefully determine deviations of  $\rho_{l,l}$ from Gaussianity, which could be of independent interest.} Then, as Lemma \ref{c_l lemma} shows, inside the typical regime $|c^\text{opt}_l|^2$ is close to 1. More precisely, it can be approximated by the quadratic form
\begin{align}\label{app45}
    |c_l^\text{opt}|^2 = 1 + \frac{r}{\sigma} + \Big[\frac{2 n_\beta (n_\beta+1)^2}{2 n_\beta+1}-\frac{n_\beta (3 n_\beta+2) r^2}{2 n_\beta+1}\Big]\frac{1}{\sigma^2} +\mathcal{O}\Big(\frac{(1+2n_\beta)^{3}}{\sigma^3}\Big)\ ,
\end{align}
where we recall that $r=(l-\alpha^2)/\sigma$. Then to compute $\alpha^2 E(\alpha)$, we show that the contribution of terms within the typical range is at most (upto leading order in $1/\alpha$)
\be
\sum_{l \in \text{typical}}  \rho_{l,l}\times [|c_l^\text{opt}|^2 - 1] = \frac{n_\beta(n_\beta+1)}{1+2n_\beta}\frac{1}{\alpha^2} + \mathcal{O}\Big(\frac{(1+2n_\beta)^{3/2}}{\alpha^3}\Big) \ .
\ee
On the other hand, in the atypical regime, $|c^\text{opt}_l|^2$ can be significantly larger than 1, although it remains bounded as
\begin{align}
  1\le   |c_l^\text{opt}|^2 \leq (1+n_\beta)^2 \times \frac{l}{\alpha^2} \ .
\end{align}
Therefore, the contribution of the terms outside typical regime is
\be
0\le \sum_{l \not\in \text{typical}}  \rho_{l,l}\times [|c_l^\text{opt}|^2 - 1] \le  \frac{(1+n_\beta)^2}{\alpha^2} \times \sum_{l \not\in \text{typical}} \rho_{l,l} \times l \ .
\ee

{In Appendix \ref{chernoff bound}, we use Chernoff-type concentration bounds for $\rho_{l,l}$ to conclude that atypical contributions to probability weight, the first and second moment are exponentially small. Specifically,
\begin{align}
    \sum_{l \not\in \text{typical}} \rho_{l,l} \times l^k= \mathcal{O}\Big(e^{n_\beta + \frac12 - \sqrt{1+2n_\beta}R}\Big) \ :\ k=0,1,2.
\end{align}
This yields the scaling seen in Eq.(\ref{atypical bound}).}\\

\subsection{Computation of diagonal elements  $\rho_{l,l}$}

Recall that the thermal state $\rho_\text{th}(\beta)$ can be expanded in the overcomplete basis of coherent states using its $P$ distribution,
\begin{align}
    \rho_\text{th}(\beta)= \int_{\mathbb{C}} \frac{d^2 \gamma}{\pi} \frac{e^{-\frac{|\gamma|^2}{n_\beta}}}{n_\beta} \ketbra{\gamma}{\gamma}\ ,
\end{align}
where the integral is an area integral over the entire complex plane \cite{PDistribution}. Then, the coherent thermal state $\rho(\beta, \alpha)\equiv D(\alpha) \rho_\text{th}(\beta) D^\dagger(\alpha)$ is,
\begin{align}
    \rho(\beta, \alpha) = \int_{\mathbb{C}} \frac{d^2 \gamma}{\pi} \frac{e^{-\frac{|\gamma|^2}{n_\beta}}}{n_\beta}\  D(\alpha)\ketbra{\gamma}{\gamma}D^\dagger(\alpha) = \int_{\mathbb{C}} \frac{d^2 \gamma}{\pi} \frac{e^{-\frac{|\gamma|^2}{n_\beta}}}{n_\beta} \ketbra{\alpha+\gamma}{\alpha+\gamma} \ .
\end{align}

The diagonal matrix elements in the Fock basis are
\begin{align}
    \rho_{l,l}\equiv\braket{l|\rho(\beta, \alpha)|l}=\int_{\mathbb{C}} \frac{d^2\gamma}{ \pi} \frac{e^{-\frac{|\gamma|^2}{n_\beta}}}{n_\beta} |\braket{\alpha+ \gamma| l}|^2.
\end{align}
We are interested in evaluating this expression in the large $\alpha$ regime.\\ 

Noticing the Gaussian-weighted integral over $\gamma$, we partition the integral as
\begin{align}\label{eq: rho matrix element 1}
    \rho_{l,l}&=\int_{|\gamma|< t(\alpha)} \frac{d^2\gamma}{\pi} \frac{e^{-\frac{|\gamma|^2}{n_\beta}}}{n_\beta} |\braket{\alpha+ \gamma| l}|^2 + \int_{|\gamma|> t(\alpha)} \frac{d^2\gamma}{\pi} \frac{e^{-\frac{|\gamma|^2}{n_\beta}}}{n_\beta} |\braket{\alpha+ \gamma| l}|^2\nonumber\\
    &=\int_{|\gamma|< t(\alpha)} \frac{d^2\gamma}{\pi} \frac{e^{-\frac{|\gamma|^2}{n_\beta}}}{n_\beta} |\braket{\alpha+ \gamma| l}|^2\ + \ \mathcal{O}(e^{-{t(\alpha)^2}/{n_\beta}})
\end{align}
where $t(\alpha)$ is an $\alpha$-dependent cut-off. For concreteness, let us pick $t(\alpha)=\alpha^{1/4}$. Then, for large $\alpha$,  the second term is $\mathcal{O}(e^{-\sqrt{\alpha}/n_\beta})$. {This term is small in the regime $\alpha \gg n_\beta^2$ that we are considering.} Next,  we focus on the first term.\\

\subsubsection*{Evaluation of $|\braket{\alpha+ \gamma| l}|^2$}

We wish to evaluate the energy distribution of $|\braket{\alpha+ \gamma| l}|^2$ when $\alpha \gg 1$ and $\alpha \gg n_\beta^2$, given the cutoff $|\gamma|<\alpha^{1/4}$, where $\ket{l}$ is the Fock basis state. To this end, let us first consider the energy distribution of an arbitrary coherent state $\ket{\alpha}$, i.e., $|\braket{\alpha|l}|^2$. It is well-known that for coherent states the energy distribution has the Poisson distribution, \cite{ECG_coherent1, Glauber63}, that is,
\begin{align}
    |\braket{\alpha|l}|^2 = e^{-|\alpha|^2} \frac{|\alpha|^{2l}}{l!}:\ \ \ \ \ l \in \{0,1,2,...\} .
\end{align}
Then, for large $|\alpha| \gg 1$, this Poisson distribution can be approximated as a Gaussian. For simplicity, let us assume $\alpha$ is real and positive. We prove in Appendix \ref{proof: poisson gaussian approx} that,
\begin{lemma}[Deviations of a Poisson Distribution from a Gaussian]
    Consider a Poisson distribution with positive real parameter $\alpha^2$, i.e., with probabilities 
    $$p_l=e^{-\alpha^2} \times \frac{\alpha^{2l}}{l!}:\ \ \ \ \ l=0,1,2,\cdots\ .$$ 
    {Assuming $\alpha\gg1$ and for $l$ satisfying $|l-\alpha^2|< R \alpha$, where $R\ll \alpha^\frac13$,}
    the Poisson distribution can be approximated as     
    \begin{align}\label{poisson 1}
        p_l = \frac{e^{-\frac{s^2}{2}}}{\sqrt{2 \pi} \alpha}\Big[1 + \frac{a_1}{\alpha} + \frac{a_2}{\alpha^2} + \mathcal{O}\Big(\frac{s^9}{\alpha^{3}}\Big)\Big]:\ \ \ \ \ l \in [\alpha^2 - R \alpha, \alpha^2 + R \alpha]\ ,
    \end{align}
    where $s=(l-\alpha^2)/\alpha$, and 
    \begin{align}
        &a_1=\frac{s^3-3s}{6}\ ,\ \ a_2=\frac{s^6-12s^4+27s^2-6}{72}\ .
    \end{align}
\label{poisson gaussian approx}\end{lemma}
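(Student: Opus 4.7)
The plan is to work with $\log p_l$ rather than $p_l$ directly, since this reduces the asymptotic analysis to combining two standard expansions: Stirling's formula for $\log l!$ and the Taylor series of $\log(1+x)$. Throughout, I use the substitution $l = \alpha^2 + s\alpha$ (so $s$ is the ``standardized'' variable, since the Poisson has mean and variance $\alpha^2$), and exploit the restriction $|s|\leq R \ll \alpha^{1/3}$ to control remainders.

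First, I would apply Stirling's formula with explicit correction:
\begin{equation*}
\log l! = l\log l - l + \tfrac{1}{2}\log(2\pi l) + \tfrac{1}{12 l} + O(l^{-3}),
\end{equation*}
so that $\log p_l = -\alpha^2 + 2l\log\alpha - l\log l + l - \tfrac{1}{2}\log(2\pi l) - \tfrac{1}{12l} + O(l^{-3})$. Using $\log l = 2\log\alpha + \log(1+s/\alpha)$ and expanding the series $\log(1+s/\alpha) = \sum_{k\geq 1}(-1)^{k+1} (s/\alpha)^k/k$, the key combination $2l\log\alpha - l\log l = -(\alpha^2 + s\alpha)\log(1+s/\alpha)$ reorganizes, after cancellation with the $-\alpha^2 + l = s\alpha$ contribution, into
\begin{equation*}
-\tfrac{s^2}{2} + \tfrac{s^3-3s}{6\alpha} + \tfrac{-s^4+3s^2-1}{12\alpha^2} - \log(\sqrt{2\pi}\,\alpha) + O\!\left(\tfrac{s^5}{\alpha^3}\right),
\end{equation*}
where the $1/\alpha^2$ coefficient already incorporates the Stirling $-1/(12l)$ term and the log of $l$ expanded to second order.

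Next, I exponentiate and expand $\exp(X)=1+X+X^2/2+\cdots$ keeping terms up to order $1/\alpha^2$. The linear contribution produces $a_1/\alpha$ with $a_1=(s^3-3s)/6$. At order $1/\alpha^2$, one must add the square of the $1/\alpha$ term in the exponent to the direct $1/\alpha^2$ term, giving
\begin{equation*}
\frac{1}{\alpha^2}\!\left[\frac{(s^3-3s)^2}{72} + \frac{-s^4+3s^2-1}{12}\right] = \frac{s^6-12s^4+27s^2-6}{72\alpha^2},
\end{equation*}
which matches $a_2/\alpha^2$ exactly. This is the computational heart of the lemma.

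Finally, I would bound the remainder. The worst-power-of-$s$ contribution at order $1/\alpha^3$ comes from cubing the leading exponent term $(s^3-3s)/(6\alpha)$, yielding $O(s^9/\alpha^3)$, dominating the other contributions (which scale as $s^7/\alpha^3$ or lower from cross terms and the third-order Taylor coefficient of $\log(1+s/\alpha)$, together with Stirling remainders of order $1/l^3 = O(1/\alpha^6)$). Under the hypothesis $|s|\leq R \ll \alpha^{1/3}$, even this dominant $s^9/\alpha^3$ is small, which both justifies discarding all higher-order Taylor terms of $\log(1+s/\alpha)$ (whose convergence also requires $|s|/\alpha<1$, trivially satisfied) and produces the stated remainder. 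The main obstacle is not conceptual but organizational: it is essential to track \emph{all} sources contributing at a given order in $1/\alpha$ (the $\log(1+s/\alpha)$ expansion in $l\log l$, the $\log l$ inside $\tfrac{1}{2}\log(2\pi l)$, the Stirling correction $1/(12l)$, and, when exponentiating, the cross terms between lower-order exponent pieces), and to recognize the cancellations that leave no $s^0/\alpha$ or odd-order surviving terms. Once this bookkeeping is organized into a table by powers of $1/\alpha$ and $s$, the coefficients $a_1, a_2$ drop out of the algebra.
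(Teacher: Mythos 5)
Your proposal is correct and follows essentially the same route as the paper's proof in the appendix: Stirling's formula with the $1/(12l)$ correction, reparameterization via $s=(l-\alpha^2)/\alpha$, expansion of the logarithm of $p_l$ in powers of $1/\alpha$ to obtain the exponent $-\tfrac{s^2}{2}-\ln(\sqrt{2\pi}\,\alpha)+\tfrac{s^3-3s}{6\alpha}-\tfrac{s^4-3s^2+1}{12\alpha^2}+\mathcal{O}(s^5/\alpha^3)$, and then exponentiation with the cross term $(s^3-3s)^2/(72\alpha^2)$ producing $a_2$ and the cube of the $1/\alpha$ term producing the $\mathcal{O}(s^9/\alpha^3)$ remainder. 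The explicit check that $\tfrac{(s^3-3s)^2}{72}+\tfrac{-s^4+3s^2-1}{12}=\tfrac{s^6-12s^4+27s^2-6}{72}$ is exactly the bookkeeping the paper performs, so no further comparison is needed.
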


Next, we apply this Lemma to find an approximation of the energy distribution for 
coherent state $|\alpha+\gamma\rangle$, i.e., the distribution $p_l=|\braket{\alpha+\gamma|l}|^2$.  
 {Recall that, without loss of generality, we have assumed $\alpha$ is real and positive.} Assuming $\alpha \gg 1$, thanks to the cutoff $|\gamma|\leq \alpha^{1/4}$, we have $|\alpha+\gamma|\gg 1$, for all $\gamma$ within the disk. That is, when $\alpha\gg 1$, $\gamma$ can be viewed as a small perturbation to $\alpha$. 
Then, Eq.(\ref{poisson 1}) gives an expansion of $p_l=|\braket{\alpha+\gamma|l}|^2$ in terms of the powers of $|\alpha+\gamma|^{-1}$. But, since $\gamma$ is small we can replace this by an expansion in terms of powers of $|\alpha|^{-1}$.  
Therefore, considering $\gamma = x + i y$ and assuming $\alpha$ is real, we get the following expansions 
\bes\label{eq: expansions}
\begin{align}
\frac{1}{|\alpha+\gamma|}&=\frac{1}{\alpha}\Big[1 - \frac{x}{\alpha} + \frac{2x^2-y^2}{\alpha^2}+ \mathcal{O}\Big(\frac{|\gamma|^3}{\alpha^3}\Big)\Big]\\
\frac{l-|\alpha+\gamma|^2}{|\alpha+\gamma|}&=\frac{(\alpha^2+ s\alpha) -|\alpha+\gamma|^2}{|\alpha+\gamma|} \nonumber\\
&=s- 2x + \frac{x^2-y^2-sx}{\alpha} + \frac{2 s x^2-s y^2-2 x^3+4 x y^2}{2\alpha^2} + \mathcal{O}\Big(\frac{s|\gamma|^4}{\alpha^3}\Big)\label{eq: s expansion}
\end{align}
\ees
{which, in turn, implies}
\begin{align}
\exp\Big[-\frac{1}{2}\Big(\frac{l-|\alpha+\gamma|^2}{|\alpha+\gamma|}\Big)^2\Big]&= \exp\Big[-\frac{1}{2}(s-2x)^2\Big] \times \Bigg[ 1 + \frac{c_1}{\alpha} + \frac{c_2}{\alpha^2}
+\mathcal{O}\Big(s^6 \frac{|\gamma|^9}{\alpha^3}\Big)
\Bigg] \label{eq: exp(-s^2) expansion}
\end{align}

where
\begin{align*}
    &c_1=(s-2x)(sx-x^2+y^2)\\
    &c_2=s^4 x^2+s^3 \left(2 x y^2-6 x^3\right)+s^2 \left(13 x^4-x^2 \left(10 y^2+3\right)+y^4+y^2\right)\nonumber\\
    &\ \ \ \ \ \ \ \ \ -4 s \left(3 x^5-2 x^3 \left(2 y^2+1\right)+x y^2 \left(y^2+2\right)\right)+4 x^6-x^4 \left(8 y^2+5\right)+2 x^2 y^2 \left(2 y^2+5\right)-y^4
\end{align*}
and we have retained the same definition of $s=(l-\alpha^2)/\alpha$. These expansions are evaluated using Mathematica (this Mathematica notebook can be found in \cite{CoherentThermalState2025}). Note that Eq.(\ref{eq: exp(-s^2) expansion}) is the exponential of $-1/2$ times the square of Eq.(\ref{eq: s expansion}). The terms suppressed within $\mathcal{O}((.)/\alpha^3)$ are included and explicitly in the Mathematica computation, but only their dominant behavior is mentioned here in the interest of clarity.

Substituting the expansions in Eq.(\ref{eq: expansions}) {and Eq.(\ref{eq: exp(-s^2) expansion})} into Eq.(\ref{poisson 1}) and simplifying yields
\begin{align}\label{braket term 2}
    |\braket{\alpha+\gamma|l}|^2 = \frac{\exp[{-\frac{(s-2x)^2}{2}}]}{\sqrt{2 \pi} \alpha} \Big[1+ \frac{b_1}{\alpha} + \frac{b_2}{\alpha^2} + \mathcal{O}\Big(\frac{s^9 |\gamma|^9}{\alpha^3}\Big)\Big]  
\end{align}
where
\begin{align}
    b_1=&\frac{s^3}{6}+s \left(-x^2+y^2-\frac{1}{2}\right)+\frac{2}{3} \left(x^3-3 x y^2\right)\ ,\\
    b_2=&\frac{s^6}{72}+\frac{1}{6} s^4 \left(-x^2+y^2-1\right)\nonumber\\
    &+\frac{1}{9} s^3 \left(x^3-3 x y^2\right)+\frac{1}{8} s^2 \left(4 x^4+x^2 \left(4-8 y^2\right)+4 y^4-4 y^2+3\right) \nonumber\\
    &+\frac{1}{3} s \left(-2 x^5+x^3 \left(8 y^2+1\right)-3 x \left(2 y^4+y^2\right)\right)\nonumber\\
    &+\frac{1}{36} \left(8 x^6-6 x^4 \left(8 y^2+3\right)+36 x^2 y^2 \left(2 y^2+3\right)-3 \left(6 y^4+1\right)\right)\ ,
\end{align}
where, again, the $\mathcal{O}((.)/\alpha^3)$ term displays only the leading order behavior for clarity.
{Observe that $b_i$ is a degree $3i$ polynomial in $s$, for $i=1,2$. Also recall from the Lemma statement that $|s|\leq R \ll \alpha^{1/3}$. As a consequence, the correction terms $|b_i/\alpha^i|\ll 1: i=1,2$ for large $\alpha$. The same holds true from the correction term too.}

\subsubsection*{{Diagonal matrix elements $\rho_{l,l}$}}
Now we substitute this expression in Eq.(\ref{braket term 2}) into the integral in Eq.(\ref{eq: rho matrix element 1}). Recall that the integral is over the disk $|\gamma|\leq {\alpha}^{1/4}$ weighted by a Gaussian {probability distribution} centered at the origin with variance $n_\beta$. We know that, {for any complex function $f$ bounded by $|f(\gamma)|\le 1: \gamma\in\mathbb{C}$}, 
$$\int_{|\gamma|^2<\alpha^{1/4}} \frac{d^2\gamma}{\pi} \frac{e^{-\frac{|\gamma|^2}{n_\beta}}}{n_\beta} f(\gamma) = \int_\mathbb{C} \frac{d^2\gamma}{\pi} \frac{e^{-\frac{|\gamma|^2}{n_\beta}}}{n_\beta} \ f(\gamma) -\  \mathcal{O}(e^{-\sqrt{\alpha}/n_\beta}).$$
So,
\begin{align}\label{eqn: integral 1}
\rho_{l,l}=\int_\mathbb{C}   \frac{d^2\gamma}{\pi} \frac{e^{-\frac{|\gamma|^2}{n_\beta}}}{n_\beta}\ \ \Bigg[ \frac{\exp[{-\frac{(s-2x)^2}{2}}]}{\sqrt{2 \pi} \alpha} \Big[1+ \frac{b_1}{\alpha} + \frac{b_2}{\alpha^2} + \mathcal{O}\Big(\frac{s^9|\gamma|^9}{\alpha^3}\Big)\Big]\Bigg] + \mathcal{O}(e^{-\sqrt{\alpha}/n_\beta})\  
\end{align}
{where we used the expression in Eq.(\ref{braket term 2}) for $|\braket{\alpha+ \gamma| l}|^2$.}   This integral can be computed easily by noting that 
\begin{align}
    \int_\mathbb{C} \frac{d^2\gamma}{ \pi} \frac{e^{-\frac{|\gamma|^2}{n_\beta}}}{n_\beta} = \int_{-\infty}^{\infty} \frac{dx}{\sqrt{ \pi}} \frac{e^{-\frac{x^2}{n_\beta}}}{\sqrt{n_\beta}} \int_{-\infty}^{\infty} \frac{dy}{\sqrt{ \pi}} \frac{e^{-\frac{y^2}{n_\beta}}}{\sqrt{n_\beta}}\ .
\end{align}
Then Eq.(\ref{eqn: integral 1}) can be evaluated term-by-term by simply considering the moments of said Gaussians. Note that the term $e^{-(s-2x)^2}$ in Eq.(\ref{braket term 2}) changes the mean and variance of the Gaussian that weights the $x$ integral. We evaluate this integral using Mathematica too. We thus get,
\begin{align}\label{final matrix elem 1}
    \rho_{l,l} = \frac{e^{-\frac{r^2}{2}}}{\sqrt{2 \pi}\sigma} \Bigg[1 + \frac{f_1(n_\beta,r)}{\sigma} + \frac{1}{2}\frac{f_2(n_\beta,r)}{\sigma^2} + \mathcal{O}\Big(\frac{(1+2n_\beta)^3}{\sigma^3}r^9\Big)\Bigg] + \mathcal{O}(e^{-\sqrt{\alpha}/n_\beta})
\end{align}
where we define $\sigma=\sqrt{1+2n_\beta} \alpha$ along with $r=(l-\alpha^2)/\sigma= s/\sqrt{1+2n_\beta}$, and
\begin{align}
    &f_1(n, r)=\frac{1}{2n+1}\Bigg[\frac{\left(6 n^2+6 n+1\right)}{6}r^3+\frac{\left(-2 n^2-4 n-1\right)}{2}r\Bigg]\\
    &f_2(n, r)=\frac{1}{(2n+1)^2}\Bigg[\frac{\left(6 n^2+6 n+1\right)^2}{36}r^6 -\frac{\left(21 n^4+48 n^3+36 n^2+10 n+1\right) }{3}r^4\nonumber\\
    &\ \ \ \ \ \ \ \ \ \ \ \ \ \ \ \ \ \ \ \ \ \ \ \ \ \ \ \ \ \ \ \ \ \ \ \ \ \ \ \ \ +\frac{\left(20 n^4+72 n^3+72 n^2+24 n+3\right)}{4}r^2 -\frac{-6 n^4+12 n^2+6 n+1}{6}\Bigg] \ .
\end{align}
Observe that $f_1$ is degree 3, and $f_2$ is degree 6 polynomial in $r$ with rational co-efficients in terms of $n_\beta$. We note that $\alpha$ and $\sigma=\sqrt{1+2n_\beta}$ are the actual mean and variance of $\rho_{l,l}$ in the large $\alpha$ regime. We compute these moments using the moment-generating function of $\rho_{l,l}$ in Appendix \ref{appendix: moment generating function}.

{Lastly, we draw attention to the the error term in the above expression, i.e., $\mathcal{O}((1+2n_\beta)^3/\sigma^3)$. We obtain this by taking the exact form of the $\mathcal{O}(s^9 |\gamma|^9/\alpha^3)$ term in Eq.(\ref{braket term 2}) and substitute it in the integral in Eq.(\ref{eqn: integral 1}). Using that $r=\sqrt{1+2n_\beta} s$ and $\sigma=\sqrt{1+2n_\beta} \alpha$, we get
$$\int_{-\infty}^\infty dx \frac{e^{-\frac{x^2}{n_\beta}}}{n_\beta}  \frac{e^{-\frac{(s-2x)^2}{2}}}{\sqrt{2 \pi} \alpha} \times \mathcal{O}(\frac{s^9 |\gamma|^9}{\alpha^3}) = \frac{e^{-\frac{r^2}{2}}}{\sqrt{2 \pi} \sigma} \times \mathcal{O}\Big(\frac{(1+2n_\beta)^3}{\sigma^3} r^9 \Big)$$  
where $\mathcal{O}(1+2n_\beta)$ indicates a rational function that is a ratio of polynomials in $1+2n_\beta$, where the difference between the degree of the numerator and denominator is $3$. Furthermore, this error term is a polynomial in $r$ with $n_\beta$-dependent coefficients of degree at most $9$. 
}

\subsection{Computation of the {off-diagonal elements} $\rho_{l,l+1}$}
Computation of $\rho_{l,l+1}$ follows in the same manner as above. Specifically, in Eq.(\ref{eq: rho matrix element 1}), we replace
$$|\braket{\alpha+\gamma|l}|^2 \rightarrow \braket{l|\alpha+\gamma}\braket{\alpha+\gamma|l+1} = |\braket{\alpha+\gamma|l}|^2 \times \frac{(\alpha+\gamma)^*}{\sqrt{l+1}}.$$
Hence, to determine $\rho_{l,l+1}$ we simply add an extra ${(\alpha+\gamma)^*}/{\sqrt{l+1}}$ term to the integral in Eq.(\ref{eqn: integral 1}), where $l=|\alpha+\gamma|^2 + s |\alpha+ \gamma|$, and retain the approximate expression for $|\braket{\alpha+\gamma|l}|^2$ considered in Eq.(\ref{braket term 2}) as before. The integral is evaluated term-by-term, exactly as above, after substituting $\gamma=x+iy$. We use Mathematica for this computation too. This yields
\begin{align}\label{final matrix elem 2}
        \rho_{l,l+1}= \frac{e^{-\frac{r^2}{2}}}{\sqrt{2 \pi}\sigma} \Bigg[1 + \frac{g_1(n_\beta, r)}{\sigma} + \frac12 \frac{g_2(n_\beta, r)}{\sigma^2} + \mathcal{O}\Big(\frac{(1+2n_\beta)^3}{\sigma^3}r^9\Big)\Bigg] + \mathcal{O}(e^{-\sqrt{\alpha}/n_\beta})
    \end{align}
where again we define $\sigma=\sqrt{1+2n_\beta} \alpha$ along with $r=(l-\alpha^2)/\sigma= \sqrt{1+2n_\beta} s$, and
\begin{align}
    &g_1(n, r)=\frac{1}{(2n+1)}\Bigg[\frac{\left(6 n^2+6 n+1\right)}{6}r^3+{\left(-n^2-3 n-1\right)} r \Bigg]\\
    &g_2(n, r)=\frac{1}{(2n+1)^2}\Bigg[\frac{\left(6 n^2+6 n+1\right)^2}{36}r^6-\frac{\left(42 n^4+108 n^3+90 n^2+28 n+3\right)}{6}r^4\nonumber\\
    &\ \ \ \ \ \ \ \ \ \ \ \ \ \ \ \ \ \ \ \ \ \ \ \ \ \ \ \ \ \ \ \ \ \ \ \ \ \ \ \ \ +{\left(5 n^4+26 n^3+33 n^2+14 n+2\right)}r^2-\frac{18 n^4+60 n^3+84 n^2+42 n+7}{6}\Bigg]\ .
\end{align}\\

These two matrix elements are summarized in the following lemma:
\begin{lemma}\label{eqn: matrix elements final}
Consider a coherent thermal state $\rho(\beta, \alpha)$ with {$\alpha\gg \max\{1, n_\beta^2\}$}. Denote its matrix elements in the Fock basis as $\rho_{m,n}:=\braket{m|\rho(\beta, \alpha)|n}$. Let $n_\text{th}(\beta)=(e^{\beta \omega}-1)^{-1}$ be the expected number of thermal excitations. Furthermore, define
\begin{align}
    &\sigma^2:=(1+2n_\beta)\alpha^2\\
    &r:=(l-\alpha^2)/\sigma
\end{align}
{Consider integer} $l \in [\alpha^2 - R \sigma, \alpha^2 + R \sigma ]$, where $R \ll \alpha^{\frac13}$. 
Then,
    \begin{align}
        \rho_{l,l}= \frac{e^{-\frac{r^2}{2}}}{\sqrt{2 \pi}\sigma} \Bigg[1 + \frac{f_1(n_\beta, r)}{\sigma} + \frac12 \frac{f_2(n_\beta, r)}{\sigma^2} + \mathcal{O}\Big(\frac{(1+2n_\beta)^3}{\sigma^3}r^9\Big)\Bigg] + \mathcal{O}(e^{-\sqrt{\alpha}/n_\beta})
    \end{align}
where we recall the implicit $l$ dependence in $r$, and
\begin{align}
    &f_1(n, r)=\frac{1}{2n+1}\Bigg[\frac{\left(6 n^2+6 n+1\right)}{6}r^3+\frac{\left(-2 n^2-4 n-1\right)}{2}r\Bigg]\\
    &f_2(n, r)=\frac{1}{(2n+1)^2}\Bigg[\frac{\left(6 n^2+6 n+1\right)^2}{36}r^6 -\frac{\left(21 n^4+48 n^3+36 n^2+10 n+1\right) }{3}r^4\\
    &\ \ \ \ \ \ \ \ \ \ \ \ \ \ \ \ \ \ \ \ \ \ \ \ \ \ \ \ \ \ \ \ \ \ \ \ \ \ \ \ \ +\frac{\left(20 n^4+72 n^3+72 n^2+24 n+3\right)}{4}r^2 -\frac{-6 n^4+12 n^2+6 n+1}{6}\Bigg]
\end{align}
which are degree $3$ and degree $6$ polynomials in $r$ respectively. Similarly,
\begin{align}
        \rho_{l,l+1}= \frac{e^{-\frac{r^2}{2}}}{\sqrt{2 \pi}\sigma} \Bigg[1 + \frac{g_1(n_\beta, r)}{\sigma} + \frac12 \frac{g_2(n_\beta, r)}{\sigma^2} + \mathcal{O}\Big(\frac{(1+2n_\beta)^3}{\sigma^3}r^9\Big)\Bigg] + \mathcal{O}(e^{-\sqrt{\alpha}/n_\beta})
    \end{align}
where
\begin{align}
    &g_1(n, r)=\frac{1}{(2n+1)}\Bigg[\frac{\left(6 n^2+6 n+1\right)}{6}r^3+{\left(-n^2-3 n-1\right)} r \Bigg]\\
    &g_2(n, r)=\frac{1}{(2n+1)^2}\Bigg[\frac{\left(6 n^2+6 n+1\right)^2}{36}r^6-\frac{\left(42 n^4+108 n^3+90 n^2+28 n+3\right)}{6}r^4\\
    &\ \ \ \ \ \ \ \ \ \ \ \ \ \ \ \ \ \ \ \ \ \ \ \ \ \ \ \ \ \ \ \ \ \ \ \ \ \ \ \ \ +{\left(5 n^4+26 n^3+33 n^2+14 n+2\right)}r^2-\frac{18 n^4+60 n^3+84 n^2+42 n+7}{6}\Bigg]
\end{align}
which also are degree $3$ and degree $6$ polynomials in $r$ respectively.
\end{lemma}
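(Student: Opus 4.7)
The plan is to exploit the Glauber--Sudarshan $P$-representation of the thermal state,
\begin{align}
\rho_{\text{th}}(\beta) = \int_{\mathbb{C}} \frac{d^2\gamma}{\pi n_\beta}\, e^{-|\gamma|^2/n_\beta}\, \ketbra{\gamma}{\gamma},
\end{align}
so that conjugating by $D(\alpha)$ gives $\rho(\beta,\alpha)$ as a Gaussian-weighted integral of $\ketbra{\alpha+\gamma}{\alpha+\gamma}$. This reduces the matrix element $\rho_{l,l}$ (resp.\ $\rho_{l,l+1}$) to an integral of $|\braket{\alpha+\gamma|l}|^2$ (resp.\ $\braket{l|\alpha+\gamma}\braket{\alpha+\gamma|l+1}$) against the Gaussian weight over $\gamma\in\mathbb{C}$.

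Next, I would truncate the integration domain to the disk $|\gamma|\le \alpha^{1/4}$. The discarded tail is $\mathcal{O}(e^{-\sqrt{\alpha}/n_\beta})$ by the Gaussian decay of the weight, which is absorbed into the claimed error term provided $\alpha \gg n_\beta^2$. Inside the disk, $|\gamma|$ is much smaller than $\alpha$, so $\gamma$ can be treated as a perturbation and all $\gamma$-dependent quantities expanded around $\alpha$.

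The analytic heart of the computation is then to apply Lemma~\ref{poisson gaussian approx} to $|\braket{\alpha+\gamma|l}|^2$, which is a Poisson distribution in $l$ with parameter $|\alpha+\gamma|^2$. This produces a Gaussian in the standardized variable $s=(l-|\alpha+\gamma|^2)/|\alpha+\gamma|$ together with explicit $\mathcal{O}(1/\alpha)$ and $\mathcal{O}(1/\alpha^2)$ polynomial corrections. I would then Taylor-expand $|\alpha+\gamma|^{-1}$, $s$, and $e^{-s^2/2}$ around $\alpha$ in powers of $x=\operatorname{Re}\gamma$, $y=\operatorname{Im}\gamma$ and $1/\alpha$, collecting everything up to order $1/\alpha^2$. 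The integrand becomes an explicit polynomial in $x,y$ (with coefficients depending on $r=(l-\alpha^2)/\sigma$) multiplying a Gaussian factor $e^{-(s-2x)^2/2}$, so the remaining $\gamma$-integral is evaluated term by term using Gaussian moments, yielding $f_1,f_2$ and, via the same procedure with the extra factor $(\alpha+\gamma)^{*}/\sqrt{l+1}$, the polynomials $g_1,g_2$.

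The main obstacle is bookkeeping: three nested expansions (the Poisson-to-Gaussian correction, the Taylor expansion of $|\alpha+\gamma|^{-1}$ and $s$ around $\alpha$, and finally the Gaussian moment integration in $\gamma$) must be truncated at compatible orders so that no contribution to $f_2,g_2$ is lost and no spurious higher-order term is retained. The resulting symbolic manipulation produces many terms that must be re-expressed in the single variable $r$; this is what motivates the use of computer algebra, as in the accompanying Mathematica notebook. The restriction $|r|\ll \alpha^{1/3}$ is forced by the error term $\mathcal{O}(s^9/\alpha^3)$ in the Poisson-to-Gaussian approximation: only within this range does every higher-order correction propagate to a controlled $\mathcal{O}((1+2n_\beta)^3 r^9/\sigma^3)$ remainder.
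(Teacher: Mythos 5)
Your proposal follows essentially the same route as the paper's own derivation in Appendix~\ref{appendix: matrix elements large alpha}: the $P$-representation reduction to a Gaussian-weighted integral of $|\braket{\alpha+\gamma|l}|^2$, truncation to the disk $|\gamma|\le\alpha^{1/4}$ with an $\mathcal{O}(e^{-\sqrt{\alpha}/n_\beta})$ tail, application of Lemma~\ref{poisson gaussian approx}, the nested Taylor expansions in $x,y$ and $1/\alpha$, term-by-term Gaussian moment integration (carried out symbolically), and the insertion of the extra factor $(\alpha+\gamma)^*/\sqrt{l+1}$ for the off-diagonal element. The plan is correct and matches the paper's proof in both structure and the handling of error terms.
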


{Here too, observe that $f_i$ is a degree $3i$ polynomial in $r$, for $i=1,2$. So the condition that $|r|\leq R \ll \alpha^{1/3}$ ensures that $|f_i/\sigma^i| \ll 1$ for large $\alpha$, justifying this expansion. The same holds true for the $1/\sigma^3$ correction term too because it would be a degree $9$ polynomial in $r$. All these properties extend exactly to $g_i$ too.}

\subsection{Evaluating $|c^\text{opt}_{l+1}|^2$}\label{c_l subsection}
We are interested in the ratio
\begin{align}
    |c^\text{opt}_{l+1}|^2 = \Big|\frac{\rho_{l,l}}{\rho_{l,l+1}}\Big|^2\ .
\end{align}

For $l \in [\alpha^2 - R \sigma, \alpha^2 + R \sigma ]$,  where the coefficient $R \ll \alpha^\frac13$, we can expand this in powers of $1/\sigma$ when $\alpha \gg 1$. To this end, recall that $|f_i/\sigma^i| \ll 1$ and $|g_i/\sigma^i| \ll 1$ for large $\alpha$, for $i=1,2$. We shall introduce $f_3$ and $g_3$ to indicate the degree $9$ polynomials in $r$ present in the correction term. Because $|f_i/\sigma^i|\ll 1$ and $|g_i/\sigma^i|\ll 1$ for $i=1,2,3$, we can Taylor expand this ratio as
\begin{align}
    \frac{\rho_{l,l}}{\rho_{l,l+1}} &= \frac{1 + f_1/\sigma + f_2/\sigma^2 + \mathcal{O}(f_3/\sigma^3)}{1 + g_1/\sigma + g_2/\sigma^2 + \mathcal{O}(g_3/\sigma^3)}\\
    &= 1+ \frac{f_1-g_1}{\sigma} + \frac{f_2-f_1 f_2 +g_1^2 - g_2}{\sigma^2} + \mathcal{O}\Big(\frac{f_3-f_2 g_1 - g_1^3 + f_1 (g_1^2 - g_2) + 2 g_1 g_2 -g_3}{\sigma^3}\Big)
\end{align}
By squaring this expression, and collecting terms, we get
\begin{align}
    \Big|\frac{\rho_{l,l}}{\rho_{l,l+1}}\Big|^2 = 1 + \frac{2(f_1-g_1)}{\sigma} &+ \frac{f_1^2+2f_2 -4f_1g_1+3g_1^2-2g_2}{\sigma^2}\\ &+\mathcal{O}\Big(\frac{f_1f_2+f_3-f_1^2g_1- 2f_2 g_1 + 3 f_1g_1^2 -2g_1^3 - 2 f_1 g_2 + 3g_1 g_2 - g_3}{\sigma^3}\Big)
\end{align}
These expansions are computed using Mathematica. Now we substitute the exact expressions for $f_1$, $f_2$, $g_1$, $g_2$ mentioned in Lemma \ref{eqn: matrix elements final}, and simplify to get
\begin{align}\label{eqn: c_l+1}
    |c^\text{opt}_{l+1}|^2&=1 + \frac{r}{\sigma} + \Big[\frac{2 n_\beta^3+4 n_\beta^2+4 n_\beta+1}{2 n_\beta+1}-\frac{n_\beta (3 n_\beta+2) }{2 n_\beta+1}r^2\Big]\frac{1}{\sigma^2} + \mathcal{O}\Big(\frac{(1+2n_\beta)^{3}}{\sigma^3}\Big)
\end{align}
where we recall that $r=(l-\alpha^2)/\sigma$ and $\sigma=\sqrt{1+2n_\beta} \alpha$. Notice that the $1/\sigma^3$ correction term remains $\mathcal{O}((1+2n_\beta)^3/\sigma^3)$. This follows by noting that $f_i$ and $g_i$ are degree $i$ polynomials in $(1+2n_\beta)$ for $i=1,2$, and, $f_3$ and $g_3$ are degree $1$ polynomials in $(1+2n_\beta)$. Thus, the largest degree of $(1+2n_\beta)$ possible from the terms noted in the correction term is $3$. \footnote{The word `degree' for rational functions in $(1+2n_\beta)$ here means the \textit{difference} between the degree of the numerator and denominator polynomials.} Furthermore, $f_i$ and $g_i$ are degree $3i$ polynomials in $r$ with $n_\beta$-dependent coefficients. Thus, the correction term would be atmost a degree $9$ polynomial in $r$. Finally, recall that $|r| \leq R \ll \alpha^\frac13$. This ensures that all the terms in above expression are much smaller than $1$.\\

To compute the limit in Eq.(\ref{limit E}), we must consider $|c^\text{opt}_l|^2$ instead. We can easily evaluate this by replacing $r \rightarrow r - 1/\sigma$ in the above equation and simplifying. This is because $r=(l-\alpha^2)/\sigma$. Doing so yields,
\begin{lemma}[Restating of Lemma \ref{c_l lemma2}]\label{c_l lemma}
{For $\alpha \gg \max\{1, n_\beta^2\}$ and $l$ in the typical interval, or equivalently $l \in [\alpha^2 - R \sigma, \alpha^2 + R \sigma]$ with coefficient $R \ll \alpha^\frac13$ and $\sigma=\sqrt{1+2n_\beta} \alpha$}, the optimal $|c_l^\text{opt}|^2 = |\rho_{l-1,l-1}/\rho_{l-1,l}|^2$ takes the form
\begin{align}\label{c_l eqn}
    |c_l^\text{opt}|^2 = 1 + \frac{r}{\sigma} + \Big[\frac{2 n_\beta (n_\beta+1)^2}{2 n_\beta+1}-\frac{n_\beta (3 n_\beta+2) }{2 n_\beta+1}r^2\Big]\frac{1}{\sigma^2} +\mathcal{O}\Big(\frac{(1+2n_\beta)^{3}}{\sigma^3}\Big)\ ,
\end{align}
where $|r|\leq R \ll \alpha^\frac13$.
\end{lemma}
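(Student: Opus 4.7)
The plan is to derive the expression for $|c_l^{\text{opt}}|^2$ directly from the asymptotic expansions of $\rho_{l,l}$ and $\rho_{l,l+1}$ established in Lemma \ref{eqn: matrix elements final}, by forming the ratio $\rho_{l-1,l-1}/\rho_{l-1,l}$, squaring, and shifting the index. The virtue of this approach is that all the hard analytic work (the perturbative expansion of the coherent-thermal matrix elements) has already been packaged into Lemma \ref{eqn: matrix elements final}, so what remains is a careful but essentially algebraic manipulation.

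First, I would form the ratio $\rho_{l,l}/\rho_{l,l+1}$ using Lemma \ref{eqn: matrix elements final}. The common Gaussian prefactor $e^{-r^2/2}/(\sqrt{2\pi}\sigma)$ cancels, leaving the ratio of two series $1 + f_1/\sigma + f_2/(2\sigma^2) + \mathcal{O}((1+2n_\beta)^3 r^9/\sigma^3)$ and $1 + g_1/\sigma + g_2/(2\sigma^2) + \mathcal{O}((1+2n_\beta)^3 r^9/\sigma^3)$. Under the hypotheses $\alpha \gg \max\{1,n_\beta^2\}$ and $|r|\le R \ll \alpha^{1/3}$ (equivalently $(1+2n_\beta)^{i/2}|r|^{3i}/\sigma^i \ll 1$ for $i=1,2,3$), each correction term is small, so I can Taylor-expand the denominator as a geometric series and multiply out to obtain an expansion of $\rho_{l,l}/\rho_{l,l+1}$ of the form $1 + (f_1-g_1)/\sigma + (f_2 - f_1 g_1 + g_1^2 - g_2)/\sigma^2 + \mathcal{O}(\cdot/\sigma^3)$. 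Squaring this yields $|c_{l+1}^{\text{opt}}|^2 = 1 + 2(f_1-g_1)/\sigma + (f_1^2+2f_2 -4f_1 g_1 + 3g_1^2-2g_2)/\sigma^2 + \mathcal{O}(\cdot/\sigma^3)$, which is precisely Eq.(\ref{eqn: c_l+1}).

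Next, I would substitute the explicit polynomials $f_1, f_2, g_1, g_2$ (degrees $3, 6, 3, 6$ in $r$ with rational coefficients in $n_\beta$) from Lemma \ref{eqn: matrix elements final} and simplify. Most of the cubic and sextic terms in $r$ cancel between $f$'s and $g$'s, producing the compact quadratic-in-$r$ expression at order $1/\sigma^2$ shown in Eq.(\ref{eqn: c_l+1}). Since this algebra is lengthy but mechanical, I would delegate the verification to a symbolic computation (consistent with the Mathematica notebook referenced in \cite{CoherentThermalState2025}). Finally, to pass from $|c_{l+1}^{\text{opt}}|^2$ to $|c_l^{\text{opt}}|^2$, I use the fact that $r = (l-\alpha^2)/\sigma$ depends linearly on $l$, so shifting the index $l+1 \mapsto l$ is equivalent to $r \mapsto r - 1/\sigma$. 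Re-expanding to order $1/\sigma^2$ and collecting terms converts the constant $(2n_\beta^3+4n_\beta^2+4n_\beta+1)/(2n_\beta+1)$ at order $1/\sigma^2$ in Eq.(\ref{eqn: c_l+1}) into $2n_\beta(n_\beta+1)^2/(2n_\beta+1)$, matching Eq.(\ref{c_l eqn}).

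The main obstacle, as I see it, is careful bookkeeping of the $\mathcal{O}((1+2n_\beta)^3/\sigma^3)$ error bound. I must check two things: (i) the degree in $(1+2n_\beta)$ never exceeds $3$ after forming the geometric-series expansion and squaring, even though $f_i, g_i$ themselves carry factors up to $(2n_\beta+1)^{-i}$ and polynomial-in-$n_\beta$ numerators of growing degree; and (ii) the substitution $r \mapsto r - 1/\sigma$ does not generate spurious lower-order contributions, since the shift itself is only $\mathcal{O}(1/\sigma)$ and so an $r^k$ term at order $1/\sigma^j$ produces shifts of order $1/\sigma^{j+1}$ at worst. Both are readily verified by tracking degrees at each step, but this is the one place where a sign error or missed cross-term could change the coefficients in Eq.(\ref{c_l eqn}), so it deserves the most care.
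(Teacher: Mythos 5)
Your proposal is correct and follows essentially the same route as the paper's proof in Appendix \ref{c_l subsection}: form the ratio $\rho_{l,l}/\rho_{l,l+1}$ from Lemma \ref{eqn: matrix elements final}, Taylor-expand and square to get Eq.(\ref{eqn: c_l+1}), substitute the explicit polynomials via symbolic computation, and shift $r \mapsto r - 1/\sigma$ to pass from $|c_{l+1}^{\text{opt}}|^2$ to $|c_l^{\text{opt}}|^2$. Your order-$1/\sigma^2$ cross term $-f_1 g_1$ is in fact the correct one (the paper's displayed intermediate expression contains a typo, $-f_1 f_2$), and your degree-tracking concerns for the $\mathcal{O}((1+2n_\beta)^3/\sigma^3)$ error are exactly the checks the paper performs.
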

Note that we have ignored the exponentially-suppressed term $\mathcal{O}(e^{-\frac{\sqrt{\alpha}}{n_\beta}})$ because this is dominated by the included $\mathcal{O}(\sigma^{-3})$ term in the $\alpha \rightarrow \infty$ limit. Furthermore, $\mathcal{O}((1+2n_\beta)^3)$ in the error term is a polynomial in $r$ with $n_\beta$-dependent coefficients, with degree at most $9$.

\subsection{Limit Computation (Proof of Eq.(\ref{E appendix}) and Eq.(\ref{E lim appendix}))}\label{limit comp app}
Next, we calculate the limit 
\begin{align}
 \lim_{\alpha \rightarrow \infty} \alpha^2 E(\alpha)  =\lim_{\alpha \rightarrow \infty} \sum_{l=0}^\infty \rho_{l,l} \times \alpha^2 [|c^\text{opt}_l|^2 - 1]= \frac{n_\beta(n_\beta+1)}{1+2n_\beta} \ .
\end{align}
We first evaluate $E(\alpha)$. Recall that the typical regime of $l$ is defined as $l \in [\alpha^2 - R \sigma, \alpha^2 + R \sigma]$, where $\sigma=\sqrt{1+2n_\beta} \alpha$ and $R \ll \alpha^{1/3}$; its complement is defined as the atypical range of $l$. Note $R$ can be made very large, as long as $\lim_{\alpha \rightarrow \infty} R^3/\alpha = 0$. We shall consider $R=\alpha^{1/4}$ when $\alpha\gg \max\{1, n_\beta^2\}$. Now we partition the summation over all $l$ in $E(\alpha)$ into the typical and atypical ranges,
\begin{align}\label{typ atyp}
    E(\alpha)=&\sum_{l=0}^\infty \rho_{l,l} [|c_l^\text{opt}|^2-1]\nonumber\\
    =& \underbrace{\sum_{l \in \text{typical}} \rho_{l,l} \Big[\frac{r}{\sigma} + \Big[\frac{2 n_\beta (n_\beta+1)^2}{2 n_\beta+1}-\frac{n_\beta (3 n_\beta+2) r^2}{2 n_\beta+1}\Big] \frac{1}{\sigma^2}\Big]}_\text{typical term}\nonumber\\
    &+ \underbrace{\sum_{l \in \text{typical}} \rho_{l,l} \times \mathcal{O}\Big(\frac{(1+2n_\beta)^{3}}{\sigma^3}\Big)}_\text{approximation error} 
    + \underbrace{\sum_{l \not\in \text{typical}} \rho_{l,l} [|c_l^\text{opt}|^2-1]}_\text{atypical error} \ .
\end{align}

For large $\alpha\gg \max\{1, n_\beta^2\}$, the typical term turns out to be the only meaningful contribution to the limit because it does not decay faster than $1/\alpha^2$. Computing this term is rather simple: in Appendix \ref{chernoff bound}, we show that the contribution of the \textit{atypical} values of $l$ to the moments of $r$ is exponentially small. That is, $\sum_{l \not \in \text{typical}} \rho_{l,l} \times r^k = \mathcal{O}(e^{-R}) : k=0,1,2$. So, the summation over the typical range can be replaced by a summation over \textit{all} $l$, at a cost that is exponentially small in $\alpha$ because we picked $R=\alpha^{1/4}$. This simplifies the typical term to depend only on the first and second moments of $r$; these are computed in Appendix \ref{mu1 mu2 appendix}.

On the other hand, the two error terms simplify to $\mathcal{O}\big(\frac{(1+2n_\beta)^3}{\sigma^3}\big)$ and $\mathcal{O}(e^{-R})$. The first approximation error is a rational function in $(1+2n_\beta)$ with degree at most $3$, whole divided by $\sigma^3$. Whereas the second atypical error (where we suppressed $n_\beta$-dependence) can be made exponentially small in $\alpha$ by choosing $R=\alpha^{1/4}$ as above. Then, $\alpha^2$ times these errors vanishes in the limit $\alpha\rightarrow \infty$.\\

We shall first evaluate the error terms in the following, and then evaluate the typical term. As mentioned, we shall implicitly assume $R=\alpha^{1/4}$ to simplify arguments.

\subsubsection*{\textbf{Atypical Error}}
For all $l$, we show in Eq.(\ref{c_l upperbound}) that
\begin{align}
     |c_l^\text{opt}|^2 - 1 \leq |c_l^\text{opt}|^2 \leq (1+n_\beta)^2 \times \frac{l}{\alpha^2} \ .
\end{align}
Then,
\begin{align}\label{atypical c_l limit}
    \sum_{l \not\in \text{typical}}  \rho_{l,l}\times \alpha^2[|c_l^\text{opt}|^2 - 1] &\leq (n_\beta + 1)^2 \times \sum_{l \not\in \text{typical}}  \rho_{l,l}\times l = \mathcal{O}(e^{-R}) \ .
\end{align}
As mentioned, these atypical contributions, $\sum_{l \not \in \text{typical}} \rho_{l,l} \times l^k = \mathcal{O}(e^{-R}) : k=0,1,2$, are computed in Appendix \ref{chernoff bound} (see Eq.(\ref{chernoff eqns})). Here, $\mathcal{O}(.)$ suppressed the $n_\beta$-dependence.

\subsubsection*{\textbf{Approximation Error}}
Recall that the term being summed over in the approximation term is the error term in Eq.(\ref{c_l eqn}). As mentioned there, this is a polynomial in $r$ with $n_\beta$-dependent coefficients, with degree at most $9$, whole divided by $\sigma^3$. We shall write it out explicitly as
\begin{align}
    \mathcal{O}\Big(\frac{(1+2n_\beta)^3}{\sigma^3}\Big) = \frac{1}{\sigma^3} \sum_{i=0}^9 g_i(n_\beta) \  r^i \ ,
\end{align}
where $g_i(n_\beta)$ are $n_\beta$-dependent coefficients {that are independent of $\sigma$. Furthermore, $g_i(n_\beta)$ are rational functions in $n_\beta$ such that the difference between the degree of the $n_\beta$ polynomial in the numerator and denominator is at most $3$.} Substituting this into the summation in the approximation term yields
\begin{align}
    \frac{1}{\sigma^3} \sum_{i=0}^9 g_i(n_\beta) \times \Big(\sum_{l \in \text{typical}} \rho_{l,l} \times r^i\Big) \ .
\end{align}
{In Appendix \ref{r exp appendix 2} we show that because $\rho_{l,l}$ can be approximated by a Gaussian (as given in Lemma \ref{eqn: matrix elements final}), for any fixed non-negative integer $k$}
\begin{align}\label{k gen mid}
    \sum_{l \in \text{typical}} \rho_{l,l} \times r^k = c_k + o(1)\ ,
\end{align}
where $c_k$ is {a standard Gaussian moment that} is a constant independent of $\alpha$ and $n_\beta$, and the $o(1)$ is vanishing in the limit $\alpha \rightarrow \infty$. {We also noted our choice of $R=\alpha^{1/4}$ to drop the exponentially-suppressed term}. Using this, the approximation term simplifies to 
\begin{align}
    \frac{1}{\sigma^3} \sum_{i=0}^9 g_i(n_\beta) \times (c_i + o(1)) \ .
\end{align}
Thus, even after the summation over typical $l$, this term continues to be $\mathcal{O}({(1+2n_\beta)^3}/{\sigma^3})$, albeit with no $r$ dependence any more. Now we focus on the typical term in Eq.(\ref{typ atyp}).
\subsubsection*{\textbf{Typical Term}}
The function whose expectation is being computed in the typical term is a quadratic polynomial in $r$. In Appendix \ref{chernoff bound}, we show that $\sum_{l \not\in \text{typical}} \rho_{l,l} \times l^k = \mathcal{O}(e^{-R}): k=0,1,2$. Because $r=(l-\alpha^2)/\sigma$, it follows that 
\begin{align}
    \sum_{l \not\in \text{typical}} \rho_{l,l} \times r^k = \sum_{l \not\in \text{typical}} \rho_{l,l} \times \big(\frac{l-\alpha^2}{\sigma}\big)^k= \mathcal{O}(e^{-R})\ : \ k=0,1,2 \ .
\end{align}
Thus, we can replace the summation over the typical range with a summation over all $l$, at the cost of an $\mathcal{O}(e^{-R})$ term (we ignore this in the following computation because we already considered it in the error terms above). So, the typical term can be written as
\begin{align}\label{typ 5}
    \sum_{l=0}^\infty \rho_{l,l}\times \Bigg[\frac{r}{\sigma} + \Big[\frac{2 n_\beta (n_\beta+1)^2}{2 n_\beta+1}-\frac{n_\beta (3 n_\beta+2) r^2}{2 n_\beta+1}\Big] \frac{1}{\sigma^2}\Bigg] \ .
\end{align}
In Appendix \ref{mu1 mu2 appendix}, we directly compute the first and second moment of $r$ with respect to the $\rho_{l,l}$ distribution,
\bes\label{k12}
\begin{align}
    &\sum_{l =0}^\infty \rho_{l,l} \times r = \frac{n_\beta}{\sqrt{1+2n_\beta}}\frac{1}{\alpha} \ ,\\
    &\sum_{l =0}^\infty \rho_{l,l} \times r^2 = 1 + \frac{n_\beta}{\alpha^2}  \ .
\end{align}
\ees
Using these, we can evaluate the summation by in Eq.(\ref{typ 5}) exactly. Given $\sigma=\sqrt{1+2n_\beta} \alpha$, the typical term simplifies to
\begin{align}
    \frac{n_\beta(1+n_\beta)}{1+2n_\beta} \frac{1}{\alpha^2} - \frac{n_\beta^2 (2+3n_\beta)}{(1+2n_\beta)^2} \frac{1}{\alpha^4} \ .
\end{align}

\subsubsection*{\textbf{Combining the three terms}}
We get $E(\alpha)$ by combining this typical term with the two error terms. That is,
\begin{align}
    E(\alpha) = \frac{n_\beta(1+n_\beta)}{1+2n_\beta} \frac{1}{\alpha^2} - \frac{n_\beta^2 (2+3n_\beta)}{(1+2n_\beta)^2} \frac{1}{\alpha^4} + \mathcal{O}\Big(\frac{(1+2n_\beta)^{3/2}}{\alpha^3}\Big) + \mathcal{O}(e^{-\alpha^{1/4}}) \ .
\end{align}
Absorbing the $1/\alpha^4$ term into the $\mathcal{O}(\alpha^{-3})$ error term yields Eq.(\ref{E appendix}). Then taking the limit $\alpha \rightarrow \infty$ of $\alpha^2 E(\alpha)$ yields Eq.(\ref{E lim appendix}).

\subsection{Evaluation of first and second moments (Proof of Eq.(\ref{k12}))}\label{mu1 mu2 appendix}
The quantities $\sum_{l=0}^\infty \rho_{l,l} \times r^k: k=1,2$ can be computed directly, given $r=(l-\alpha^2)/\sigma$. {Observe that $$\sum_{l=0}^\infty \rho_{l,l}\times r^k = \frac{1}{\sigma^k}\sum_{l=0}^\infty \rho_{l,l}\times (l-\alpha^2)^k= \frac{1}{\sigma^k}\Tr(\rho(\beta, \alpha)[a^\dagger a - |\alpha|^2]^k).$$ }
To this end, we first note that
\begin{align}
    &\Tr(\rho(\beta, \alpha)\ a^\dagger a)=\Tr(\rho_\text{th}(\beta) D_\alpha^\dagger a^\dagger a D_\alpha)=n_\beta + |\alpha|^2\label{Tr(a)}\\
    &\Tr(\rho(\beta, \alpha) a^\dagger a\  a^\dagger a) = \Tr(\rho_\text{th}(\beta) D_\alpha^\dagger a^\dagger a\  a^\dagger a D_\alpha) = n_\beta + 2 n_\beta^2 + 4|\alpha|^2 n_\beta + |\alpha|^2 + |\alpha|^4\label{Tr(aa)}. 
\end{align}
These follow from $D_\alpha^\dagger a D_\alpha = a+\alpha$, $\Tr(\rho_\text{th}(\beta) a^\dagger a)= n_\beta$, and 
\bes
\begin{align}
\Tr(\rho_\text{th}(\beta) a^\dagger a\ a^\dagger a)&= (1-\lambda)\sum_{l=0} l^2 \times \lambda^l\\ 
&= (1-\lambda) \times \lambda \frac{d}{d \lambda} \Big( \lambda \frac{d}{d \lambda} \Big( \sum_{l=0} \lambda^l \Big) \Big)\\ 
&= (1-\lambda) \times \lambda \frac{d}{d \lambda} \Big( \lambda \frac{d}{d \lambda} \Big( \frac{1}{1-\lambda}\Big) \Big)\\
&= \frac{\lambda(1+\lambda)}{(1-\lambda)^2}\\
&= n_\beta + 2 n_\beta^2 \ .
\end{align}
\ees
We denoted $\lambda=n_\beta/(1+n_\beta)$ for brevity in the above derivation. From Eq.(\ref{Tr(a)}), clearly 
\begin{align}
    \Tr(\rho(\beta, \alpha)a^\dagger a)-|\alpha|^2 = n_\beta.
\end{align}
Similarly, from Eq.(\ref{Tr(a)}) and Eq.(\ref{Tr(aa)}), we get that 
\bes
\begin{align}
    \Tr(\rho(\beta, \alpha)[a^\dagger a - |\alpha|^2]^2)
    &= \Tr(\rho(\beta, \alpha) \ a^\dagger a \ a^\dagger a) - 2 |\alpha|^2  \Tr(\rho(\beta, \alpha) a^\dagger a) + |\alpha|^4 \\
    &=(1+2n_\beta)(n_\beta + |\alpha|^2) \ .
\end{align}
\ees
{Then,
\bes
\begin{align}
    &\sum_{l =0}^\infty \rho_{l,l} \times r = \frac{n_\beta}{\sqrt{1+2n_\beta}}\frac{1}{\alpha} \ ,\\
    &\sum_{l =0}^\infty \rho_{l,l} \times r^2 = 1 + \frac{n_\beta}{\alpha^2}  \ .
\end{align}
\ees}

\subsection{Expectation value of $r^k$ (Proof of Eq.(\ref{k gen mid}))}\label{r exp appendix 2}
Recall from Lemma \ref{eqn: matrix elements final} that within the typical regime, the diagonal matrix elements of $\rho(\beta, \alpha)$ are
\begin{align}\label{rho_ll r}
    \rho_{l,l}= \frac{e^{-\frac{r^2}{2}}}{\sqrt{2 \pi}\sigma} \Bigg[1 + \frac{f_1(n_\beta, r)}{\sigma} + \frac12 \frac{f_2(n_\beta, r)}{\sigma^2} + \mathcal{O}\Big(\frac{(1+2n_\beta)^3}{\sigma^3}r^9\Big)\Bigg] + \mathcal{O}(e^{-\sqrt{\alpha}/n_\beta})\ ,    
\end{align}
where $r=(l-\alpha^2)/\sigma$. Now recall that $f_1$ and $f_2$ are $3$ and $6$ degree polynomials in $r$ respectively, with $n_\beta$-dependent coefficients. And similarly, $\mathcal{O}({(1+2n_\beta)^3 r^9}/{\sigma^3})$ is a polynomial in $r$ with $n_\beta$-dependent coefficients, with degree at most $9$, whole divided by $\sigma^3$. To isolate the Gaussian approximation from its corrections,
we shall rewrite the above expression as  
\begin{align}
    \rho_{l,l} = \frac{e^{-\frac{r^2}{2}}}{\sqrt{2 \pi}\sigma} + \frac{e^{-\frac{r^2}{2}}}{\sqrt{2 \pi}\sigma} \times \frac{\Delta(r)}{\sigma} 
\end{align}
where $\Delta(r)$ is a polynomial in $r$ of degree at most $9$, whose coefficients are $\mathcal{O}(1)$ in $\alpha$, or consequently in $\sigma$. Thus, 
\begin{align}\label{Delta 1}
    \sum_{l \in \text{typical}} \rho_{l,l} \times {r^k}=\sum_{l \in \text{typical}} \frac{e^{-\frac{r^2}{2}}}{\sqrt{2 \pi}\sigma} \times {r^k} + \frac{1}{\sigma}\sum_{l \in \text{typical}} \frac{e^{-\frac{r^2}{2}}}{\sqrt{2 \pi}\sigma} \times ({\Delta(r) \times r^k}) \ .
\end{align}
So, to simplify this expression, we shall focus on evaluating the term
\begin{align}
    \sum_{l \in \text{typical}} \frac{e^{-\frac{r^2}{2}}}{\sqrt{2 \pi}\sigma} \times {r^k} \ ,
\end{align}
where $k$ is a non-negative integer. In the large $\alpha$ limit, we shall replace this summation with an integral (note that the limit of large $\alpha$ is equivalent to the limit of large $\sigma$ because $\sigma=\sqrt{1+2n_\beta} \alpha$). That is, 
\begin{align}
    \lim_{\alpha \rightarrow \infty} \sum_{l \in \text{typical}} \frac{e^{-\frac{r^2}{2}}}{\sqrt{2 \pi}\sigma} {r^k}  = \frac{1}{\sqrt{2 \pi}}  \lim_{\sigma \rightarrow \infty} \sum_{l \in \text{typical}} \frac{1}{\sigma} \times {e^{-\frac{r^2}{2}}}  {r^k} =  \int_{-R}^R dx\  \frac{e^{-\frac{x^2}{2}}}{\sqrt{2 \pi}} x^k =: c_{k,R} 
\end{align}
where we make the replacement $r=(l-\alpha^2)/\sigma \rightarrow x$, and $1/\sigma \rightarrow dx$, and recall that the summation is over the typical range $l \in [\alpha^2 - R \sigma, \alpha^2 + R \sigma]$. Clearly, $c_{k,R}$ is a constant independent of $\sigma$ and $n_\beta$. Therefore, 
\begin{align}\label{g1}
    \sum_{l \in \text{typical}} \frac{e^{-\frac{r^2}{2}}}{\sqrt{2 \pi}\sigma} {r^k} = c_{k,R} + o(1) \ ,
\end{align}
where $o(1)$ is vanishing in the limit $\sigma \rightarrow \infty$. Now observe that $\Delta(r) \times r^k$ is atmost a $k+9$ degree polynomial in $r$ whose coefficients are $\mathcal{O}(1)$ in $\alpha$, or consequently in $\sigma$. So using Eq.(\ref{g1}), 
\begin{align}
    \sum_{l \in \text{typical}} \frac{e^{-\frac{r^2}{2}}}{\sqrt{2 \pi}\sigma} \times ({\Delta(r) \times r^k}) =  \mathcal{O}(1)
\end{align}
where $\mathcal{O}(1)$ pertains to variable $\sigma$; we have, in effect, suppressed the $n_\beta$-dependence. Substituting this in Eq.(\ref{Delta 1}) yields
\begin{align}\label{g7}
    \sum_{l \in \text{typical}} \rho_{l,l} \times r^k  &= \Big(c_{k,R} + o(1)\Big) + \frac{\mathcal{O}(1)}{\sigma}\\
    &= c_{k,R} + o(1)\  ,
\end{align}
where we noted in the second equality that $\mathcal{O}(1)/\sigma$ is $o(1)$ in the variable $\sigma$. Finally, recall that $$c_{k,R}:=\int_{-R}^R dx\  \frac{e^{-\frac{x^2}{2}}}{\sqrt{2 \pi}} x^k.$$ Let $c_k$ denote the actual $k^\text{th}$ moment. {From standard Gaussian tail bounds, we know that $c_{k,R} = c_k + \mathcal{O}(e^{-R^2})$. Now we note the choice of $R=\alpha^{1/4}$ made in Sec. \ref{limit comp app}. Thus, the term $\mathcal{O}(e^{-R^2})=\mathcal{O}(e^{-\alpha^{1/2}})$ is also $o(1)$. Substituting this into Eq.(\ref{g7}) finally yields
\begin{align}
    \sum_{l \in \text{typical}} \rho_{l,l} \times r^k = c_k + o(1) \ .
\end{align}}

\subsection{{Beyond} Gaussian Approximation of Poisson Distribution: Proof of Lemma \ref{poisson gaussian approx}}\label{proof: poisson gaussian approx}
\begin{proof}
    We are interested in approximating the quantity 
    $$p_l=e^{-\alpha^2} \times \frac{\alpha^{2l}}{l!}$$
    in the typical range, i.e., $l \in [\alpha^2 - R \alpha, \alpha^2 + R \alpha]$ where we shall start with assuming $R \ll \alpha$. It then implies that all the typical $l$ satisfy $l \gg 1$ when $\alpha \gg 1$. Then, we can simplify $l!$ using the Stirling's approximation \cite{NIST:DLMF} as follows,
    \begin{align}
    \ln(l!) &= l \ln l -l + \ln \sqrt{2 \pi l}  + \ln\Big(1+ \frac{1}{12l} + \mathcal{O}\Big(\frac{1}{l^2}\Big)\Big)\\
    &=l \ln l - l + \ln \sqrt{2 \pi l} + \frac{1}{12 l} + \mathcal{O}\Big(\frac{1}{l^2}\Big)
\end{align}
where we used that $\ln(1+x)=x +\mathcal{O}(x^2)$ for $x\ll 1$ in the second line. Thus,
\begin{align}
    \ln\Big[e^{-\alpha^2} \frac{\alpha^{2l}}{l!}\Big] =-\ln \sqrt{2 \pi l} - l \ln\Big(\frac{l}{\alpha^2}\Big) + (l-\alpha^2) - \frac{1}{12l} + \mathcal{O}\Big(\frac{1}{l^2}\Big).
\end{align}
Now we reparameterize the index $l$ with $s=(l-\alpha^2)/\alpha$. 
This yields
\begin{align}
    \ln\Big[e^{-\alpha^2} \frac{\alpha^{2l}}{l!}\Big] = - \ln \sqrt{2 \pi (\alpha^2 + s \alpha)} - (\alpha^2 + s \alpha) \ln \Big(1 + \frac{s}{\alpha}\Big) + s \alpha  - \frac{1}{12(\alpha^2 + s \alpha)} + \mathcal{O}\Big(\frac{1}{(\alpha^2 + s \alpha)^2}\Big)
\end{align}

Because $|s|\leq R \ll \alpha$, we can expand the above expression in powers of $1/\alpha$. This yields

\begin{align}
    \ln\Big[e^{-\alpha^2} \frac{\alpha^{2l}}{l!}\Big] = -\frac{s^2}{2} - \ln \sqrt{2 \pi \alpha^2}+ \frac{1}{ \alpha}\Big(\frac{s^3-3s}{6}\Big) -\frac{1}{\alpha ^2}\Big(\frac{s^4-3s^2+1}{12}\Big) +\mathcal{O}\Big(\frac{s^5}{\alpha ^3}\Big)\ .
\end{align}

Taking the exponential on both sides yields
\begin{align}
    e^{-\alpha^2} \frac{\alpha^{2l}}{l!} = \frac{e^{-\frac{s^2}{2}}}{\sqrt{2 \pi} \alpha} \times \exp\Big[\frac{1}{ \alpha}\Big(\frac{s^3-3s}{6}\Big) -\frac{1}{\alpha ^2}\Big(\frac{s^4-3s^2+1}{12}\Big) +\mathcal{O}\Big(\frac{s^5}{\alpha ^3}\Big)\Big].
\end{align}
To Taylor expand the quantity in the exponential, we require that $s^3/\alpha$ is small for large $\alpha \gg 1$. Thus, we update our requirement to the tighter condition $R \ll \alpha^{1/3}$, {which ensures $|s^3/\alpha|\ll 1$, $|s^4/\alpha^2|\ll 1$ and $|s^5/\alpha^3|\ll 1$}. Thus, the quantity in the exponential can be Taylor expanded as
\begin{align}
    e^{-\alpha^2} \frac{\alpha^{2l}}{l!} = \frac{e^{-\frac{s^2}{2}}}{\sqrt{2 \pi} \alpha} \times \Big[1+\frac{1}{ \alpha}\Big(\frac{s^3-3s}{6}\Big) + \frac{1}{\alpha ^2}\Big(\frac{s^6-12s^4+27s^2-6}{72}\Big) +\mathcal{O}\Big(\frac{s^9}{\alpha^3}\Big)\Big].
\end{align}
\end{proof}

\newpage

\section{Numerical Verification of Approximations derived in Appendix \ref{appendix: matrix elements large alpha}}\label{app num verify all 2}

\subsection{Moments of coherent thermal states (Numerical Verification of Lemma \ref{eqn: matrix elements final})}\label{app num verify}
{To numerically verify Lemma \ref{eqn: matrix elements final}, here we consider the following 3 moments for coherent thermal state}:
\begin{align}
    &\nu_1(\beta, \alpha)\equiv \Tr(a\  \rho(\beta, \alpha)) = \sum_{l=0}^\infty \sqrt{l+1}\  \rho_{l,l+1} = \alpha \label{moment 1 test}\\
    &\nu_2(\beta, \alpha)\equiv \Tr(a^\dagger a\  \rho(\beta, \alpha)) = \sum_{l=0}^\infty {l}\  \rho_{l,l} = |\alpha|^2 + n_\beta \label{moment 2 test}\\
    &\nu_3(\beta, \alpha)\equiv \Tr(a^\dagger a^\dagger a\  \rho(\beta, \alpha)) = \sum_{l=0}^\infty {l}\sqrt{l+1}\  \rho_{l,l+1} =   2 \alpha^* n_\beta + \alpha^* |\alpha|^2\label{moment 3 test}
\end{align}
We can test our approximate formulae for $\rho_{l,l}$ and $\rho_{l,l+1}$ in Lemma \ref{eqn: matrix elements final} by seeing whether they numerically recover the results of Eq.(\ref{moment 1 test}-\ref{moment 3 test}). We test this and find agreement for a wide range of $\beta$ and $\alpha$ in Fig.~\ref{numverifyrho}.

{To compute the summation in Eq.(\ref{moment 1 test}-\ref{moment 3 test}), we sum over $l$ in the \textit{typical} range $[\alpha^2-50 \sigma, \alpha^2+50 \sigma]$. Then, the error in this estimate for these three estimates (indexed here by 
$i=1,2,3$) becomes
\begin{align}\label{error moment verify}
    E_i \equiv \sum_{l \in [\alpha^2-50 \sigma, \alpha^2+50 \sigma]} x_i(l) \times \Bigg(\frac{e^{-\frac{r^2}{2}}}{\sqrt{2 \pi}\sigma} \times \mathcal{O}\Big(\frac{(1+2n_\beta)^3}{\sigma^3} r^9\Big) + \mathcal{O}(e^{-\sqrt{\alpha}/n_\beta})\Bigg)\ ,
\end{align}
where $r=(l-\alpha^2)/\sigma$, and $x_1(l)=\sqrt{l+1}$, $x_2(l)=\sqrt{l}$ and $x_3(l)=l\sqrt{l+1}$. The numerical estimate yields an error substantially lesser than the expected error $E_i$.}

\subsection{Comparison with Edgeworth Series}
There are standard ways of computing the deviations of a distribution from a Gaussian in terms of its higher cumulants beyond the mean and variance. For instance, we can consider the \textit{Edgeworth} series or the \textit{Gram Charlier A} series for a single sample $n=1$, upto the fourth cumulant \cite{nearlyGaussian}. Specifically, a distribution $\Phi_{l}$ can be approximated as
\begin{align}\label{edgeworth}
    \Phi^\text{approx}_l \approx \frac{e^{\frac{-r^2}{2}}}{\sqrt{2 \pi} \sigma} \Big[1+ \frac{\kappa_3 H_3(r)}{6 \sigma^3} + \frac{\kappa_4 H_4(r)}{24 \sigma^4}\Big] \ , 
\end{align}
where $r=(l-\mu)/\sigma$, $\kappa_i$ is the $i^\text{th}$ cumulant, and $H_i$ is the $i^\text{th}$ Legendre polynomial. 

We shall apply this approximation to the diagonal elements of a coherent thermal state $\rho_{l,l}$ by substituting its culumants in Eq.(\ref{edgeworth}). Specifically, its third and fourth culumants are
\begin{align}
    &\kappa_3=\alpha ^2+2 n_\beta^3+6 \alpha ^2 n_\beta^2+3 n_\beta^2+6 \alpha ^2 n_\beta+n_\beta\ , \\
    &\kappa_4=\alpha ^2+6 n_\beta^4+24 \alpha ^2 n_\beta^3+12 n_\beta^3+36 \alpha ^2 n_\beta^2+7 n_\beta^2+14 \alpha ^2 n_\beta+n_\beta\ .
\end{align}
In Figure \ref{edgew}, we plot the relative error between the diagonal elements from Edgeworth approximation and the exact formula as seen in Eq.(\ref{eq: rho_ml final}). Similarly, we consider the same for the approximation in Lemma \ref{eqn: matrix elements final}. We observe that our approximation is significantly  better than the standard Edgeworth approximation.

\begin{figure}
    \centering
    \includegraphics[width=0.9\textwidth]{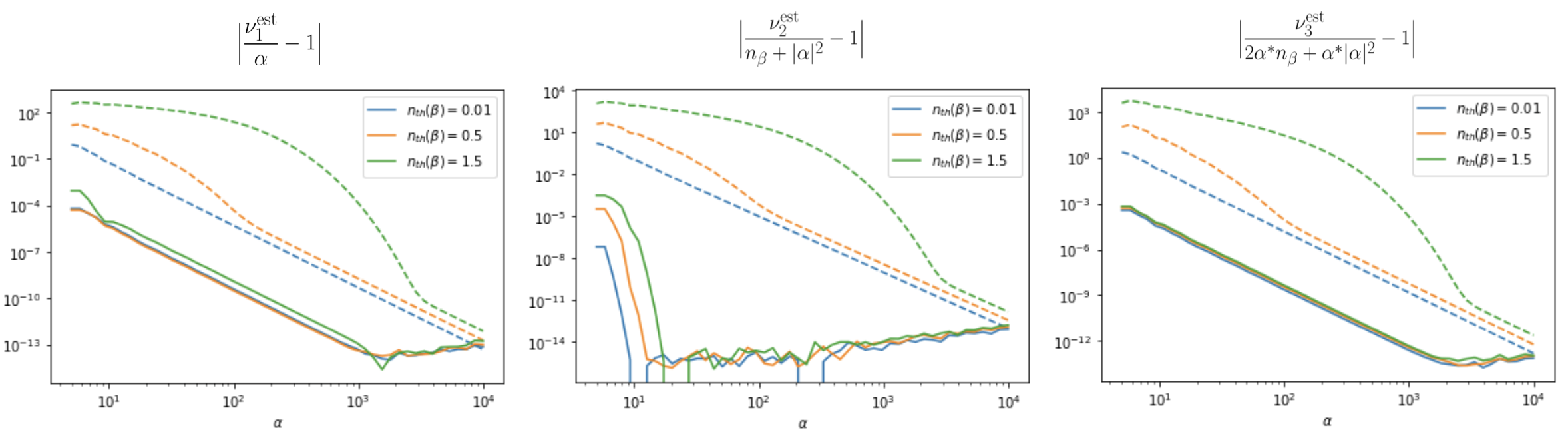}
\caption{\textbf{Error in estimating moments -- }{In this plot  $\nu_i^\text{est}(\beta, \alpha): i=1,2,3$
denote the numerical estimates of the moments  $\nu_i(\beta, \alpha): i=1,2,3$ calculated in Eq.(\ref{moment 1 test}-\ref{moment 3 test}). These numerical estimates are calculated by applying the approximation formulae in Lemma \ref{eqn: matrix elements final}. We plot the relative error of estimation $|\nu_i^\text{est}(\beta, \alpha)/\nu_i(\beta, \alpha)-1|$ in bold lines as a function of $\alpha$ for various temperatures. To compute the summation in Eq.(\ref{moment 1 test}-\ref{moment 3 test}), we sum over $l$ in the \textit{typical} range $[\alpha^2-50 \sigma, \alpha^2+50 \sigma]$, yielding a truncation error of $\mathcal{O}(e^{-50})\approx \mathcal{O}(10^{-22})$, which is clearly ignorable for purposes of this plot. We also plot the expected error $E_i/\nu_i^\text{est}: i=1,2,3$ for each of the estimates based on Eq.(\ref{error moment verify}) as a dashed line with the corresponding color. As expected, the relative error reduces with increasing $\alpha$, all the while remaining less than the expected error (dashed line). But for the first and third-moment estimates, between $\alpha\sim10^3-10^4$, we hit the limits of machine precision, i.e., our computation encounters quantities smaller than $10^{-16}$ (E.g., $r$ would take values $\mathcal{O}(1/\alpha)$ near the mean, and the formulae in Lemma \ref{eqn: matrix elements final} have $r^6$ terms, which would clearly be less that $10^{-16}$ for $\alpha>\  \sim 450$). Similarly, in the second moment estimate, the relative errors themselves would be approaching the limits of machine precision. This is clear from the fact that all the estimates become erratic once they reach $\sim 10^{-14}$.}}
    \label{numverifyrho}
\end{figure}

\begin{figure}
    \centering
    \includegraphics[width=0.33\textwidth]{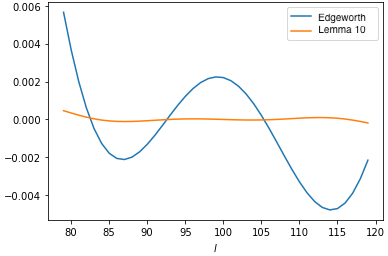}
    \caption{\textbf{Comparison between approximation for $\rho_{l,l}$ from the Edgeworth approximation, and  Lemma \ref{eqn: matrix elements final} approximation --} We compute the relative error of $\rho_{l,l}$ from these approximations with the exact expression from Eq.(\ref{eq: rho_ml final}). Here, we plot the relative error for the case where $\gamma_\text{in}=10$ across various values of $l$. Clearly, the Lemma \ref{eqn: matrix elements final} approximation is significantly better than the Edgeworth approximation.}
    \label{edgew}
\end{figure}

\begin{figure}
    \centering
    \includegraphics[width=0.33\textwidth]{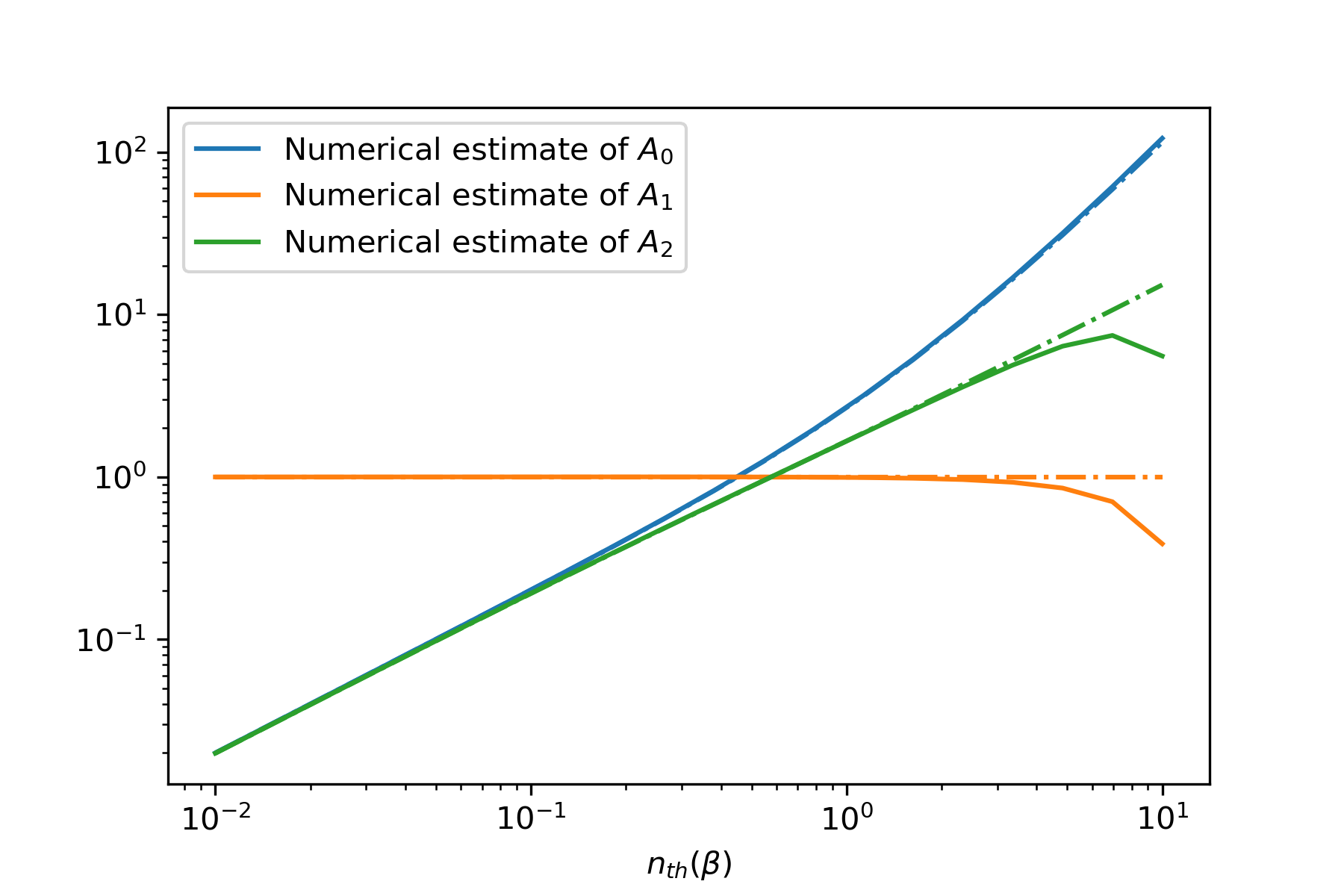}
    \includegraphics[width=0.28\textwidth]{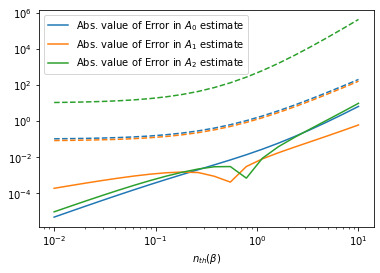}
    \caption{{\textbf{Numerical verification of quadratic approximation for $|c_l^\text{opt}|^2$ in Eq.(\ref{eq: c_l}) -- }}
    We can numerically estimate the co-efficients of the quadratic polynomial in Eq.(\ref{eq: c_l}) by computing the `zero-th', first and second discrete derivatives of $|\rho_{l-1,l-1}/\rho_{l-1,l}|^2$ at $l=\alpha^2$ by evaluating the exact values at $l=\alpha^2, \alpha^2 \pm 1$ (see Eq.(\ref{eq: rho_ml final}) in the Appendix for the exact formula for $\rho_{l-1,l-1}$ and $\rho_{l-1,l}$). In the left figure, we fix $\alpha=10$ and observe that these numerical estimates (plotted as bold lines) closely match the exact expressions that are listed in Eq.(\ref{eqn: A coeffs}) (plotted as dotted lines) across a wide range of temperatures $n_\text{th}(\beta)$. Our approximation formulae assume {$\alpha \gg \max\{1,n_\beta^2\}$}, thus we only consider up till $n_\beta \sim \sqrt{10}$. Furthermore, in the right plot we compute the absolute value of the error between the numerical estimates and the exact expressions, and observe that they are within the order of magnitude of the approximation error as expected from Eq.(\ref{eq: c_l}) (plotted as dashed lines with the corresponding color). {Note that we have ignored the exponentially-suppressed term $\mathcal{O}(\exp[{-{\sqrt{\alpha}}/{n_\beta}}])$ in the error. Nonetheless, because $\mathcal{O}(\exp[{-{\sqrt{\alpha}}/{n_\beta}}])$ is positive, including it would have only increased the expected error in the plot, thus, further confirming the fitness of our approximation.}}
    \label{fig:c_l verify}
\end{figure}

\subsection{Numerical Verification of quadratic approximation of $|c_l^\text{opt}|^2$ in Eq.(\ref{c_l eqn})}
In this subsection, we numerically verify Eq.(\ref{c_l eqn}). We start by rewriting the expression as
\begin{align}\label{eq: c_l}
    |c_l^\text{opt}|^2 =1+\frac{A_0}{\sigma^2}+  {A_1} \frac{r}{\sigma}  -A_2 \frac{r^2}{\sigma^2}  +\mathcal{O}\big(\frac{(1+2 n_\text{th}(\beta))^3}{\sigma^3} 
    \big) \ ,
\end{align}
  
where $r=(l-\alpha^2)/\sigma$ and
\begin{align}\label{eqn: A coeffs}
A_0=\frac{2 n_\text{th}(\beta) \times (n_\text{th}(\beta)+1)^2}{2n_\text{th}(\beta)+1} , 
\ A_1= 1 ,\ 
A_2=\frac{n_\text{th}(\beta) \times (3 n_\text{th}(\beta)+2)}{2n_\text{th}(\beta)+1}\ ,
\end{align}
and we ignore terms of order $\mathcal{O}(e^{-\sqrt{\alpha}/n_\beta})$. We also note that this approximation assumes $\alpha\gg \max\{n_\beta^2,1\}$. While we formally prove this approximation in Appendix~\ref{c_l subsection}, we numerically verify it in Fig.~\ref{fig:c_l verify}. Note that while $A_0$ and $A_2$ vanish at zero temperature, the coefficient of the linear term is independent of temperature, namely $A_1=1$. In particular, at zero temperature this formula simplifies to  $|c_l^\text{opt}|^2= l/ \alpha^2$, which indeed holds for any pure coherent state $|\alpha\rangle$, regardless of the value of  $\alpha$.

\newpage

\section{Performance of the strong-input weak-output optimal protocol (Proof of Eq.(\ref{eqn: moment asymptotics}))}\label{appendix: performance of strong-input weak output}
In this section, we study the performance of the strong-input weak-output optimal protocol, and its output state. The channel is defined as  
\be\label{KO channel}
\mathcal{K}(\cdot)=\sum_{l=0}^\infty K_l(\cdot) K_l^\dag\ ,
 \ee
with Kraus operators
\begin{align}\label{KO 1}
    K_l = \frac{1}{\sqrt{1+|c_l|^2 |\gamma_\text{out}|^2 }} \ketbra{0}{l} + \frac{ c_{l+1} \gamma_\text{out}}{\sqrt{1+|c_{l+1}|^2 |\gamma_\text{out}|^2 }} \ketbra{1}{l+1}\ ,
\end{align}
 where $c_0=0$, and arbitrary $c_l \in \mathbb{C}$ for $l\geq 0$. We pick an optimal choice of $c_l$ to be
 \begin{align}\label{opt c_l}
     c_l^\text{opt} = \frac{\rho_{l-1,l-1}}{\rho_{l-1,l}}\ 
 \end{align}
where $\rho_{m,n}\equiv \braket{m|\rho|n}$. For simplicity, we assume $\gamma_\text{in}$ and $\gamma_\text{out}$ are real and positive from here on. Clearly, for any input state $\rho$,  the output state is restricted to the 2D subspace spanned by $\{|0\rangle,|1\rangle\}$. Then, in this basis, the output state is given by
\begin{align}\label{tau app 1}
\tau=\mathcal{K}(\rho)=\begin{pmatrix}
    1-\Tr(a^\dagger a\ \tau) & \Tr(a\ \tau)\\
 \Tr(a\ \tau) & \Tr(a^\dagger a\ \tau)
\end{pmatrix} \ ,
\end{align}
{where  we have used the fact that $\Tr(a\ \tau)$ is real.}

\subsection{First Moment Computation}
From the definition of the channel $\mathcal{K}$ and its Kraus operators mentioned above, we can directly compute that 
\begin{align}\label{first moment eq1}
    \Tr(a\ \tau) &= \braket{1|\tau|0} \\
    &= \gamma_\text{out} \sum_{l=0}^\infty  c_{l+1} \braket{l+1|\rho|l} \times q_l\\
    &= \gamma_\text{out}\sum_{l=0}^\infty  \rho_{l+1,l} \times c_{l+1} \times q_l\\
    &= \gamma_\text{out}\sum_{l=0}^\infty \rho_{l,l} \times q_l,
\end{align}
where we noted in the first equality that $\tau$ is a state restricted to the 2D Hilbert space spanned by $\ket{0}$ and $\ket{1}$, in the last equality we substituted the optimal $c^{\text{opt}}_l$ along with the input $\rho=\rho(\beta, \gamma_\text{in})$, and defined
\begin{align}\label{q_l defn}
    q_l \equiv \frac{1}{\sqrt{1+|c_l^\text{opt}|^2 \gamma_\text{out}^2}} \times \frac{1}{\sqrt{1+|c_{l+1}^\text{opt}|^2 \gamma_\text{out}^2}}.
\end{align}\\

\textit{\textbf{Restricting summation to typical $l$ --}} In Appendix \ref{appendix: matrix elements large alpha}, for $\gamma_\text{in} \gg 1$, we considered the typical range of $l$, that is $l \in [\gamma_\text{in}^2 - R \sigma, \gamma_\text{in}^2 + R \sigma]$, where $\sigma=\sqrt{1+ 2n_\beta} \gamma_\text{in}$ and $R \ll \gamma_\text{in}^{1/3}$ (e.g., see Lemma \ref{eqn: matrix elements final}). {We can, for instance, implicitly assume that $R=\gamma_\text{in}^{1/4}$, given $\gamma_\text{in} \gg \max\{1, n_\beta^2\}$.} We shall split the summation into this typical and atypical regime
\begin{align}
    \Tr(\tau a)&=\gamma_\text{out} \sum_{l \in \text{typical}} \rho_{l,l} \times q_l + \gamma_\text{out} \sum_{l \not\in \text{typical}} \rho_{l,l} \times q_l \\ 
    &=\gamma_\text{out} \sum_{l \in \text{typical}} \rho_{l,l} \times q_l + \mathcal{O}(e^{-R})\label{eq1}
\end{align}
where the atypical summation is exponentially suppressed. This follows because $q_l\leq 1$ for any $c_l$ for all $l$. Thus, the probability weight in the atypical regime is {$\mathcal{O}(e^{-R})$ (see Sec. \ref{chernoff bound} for details regarding atypical contributions).} \\

\textit{\textbf{Simplifying $q_l$ for typical $l$ --}} As mentioned in Lemma \ref{c_l lemma}, within the typical range of $l$, {i.e., $l \in [\gamma_\text{in}^2 - R \sigma, \gamma_\text{in}^2 + R \sigma]$}
\begin{align}\label{m1 eq 1}
    |c_l^\text{opt}|^2 = 1 + \frac{r}{\sigma} + \mathcal{O}\Big(\frac{r^2}{\sigma^2}\Big)
\end{align}
for $\gamma_\text{in}\gg 1$, where $\sigma=\sqrt{1+2n_\beta} \gamma_\text{in}$ and $r=(l-\gamma_\text{in}^2)/\sigma$. Here, and here on, the $\mathcal{O}(.)$ suppresses a temperature-dependent constant that is independent of $\gamma_\text{in}$ and $\gamma_\text{out}$. Consequently, \begin{align}\label{m1 eq 2}
    |c_{l+1}^\text{opt}|^2&= 1+ \frac{r+1/\sigma}{\sigma} + \mathcal{O}\Big(\frac{r^2}{\sigma^2}\Big)\nonumber\\ &=|c_l^\text{opt}|^2+\mathcal{O}\big(\frac{1+r^2}{\gamma_\text{in}^{2}}\big) \ .
\end{align} 
Recall that $|r| \leq R \ll \gamma_\text{in}^{1/3}$. {Now consider the strong-input weak-output regime, i.e., satisfying the conditions $\gamma_\text{in} \gg 1$ while $\gamma_\text{in} \times  \gamma_\text{out}\ll 1$. Then $|r|/\gamma_\text{in} \ll 1$ and $|r^2|/\gamma_\text{in}^2 \ll 1$. Consequently Eq.(\ref{m1 eq 1}) simplifies to}
\begin{align}\label{c_l small}
|c^\text{opt}_l|^2\times \gamma_\text{out}^2   &= \gamma_\text{out}^2 + \gamma_\text{out}^2 \times \frac{r}{\sqrt{1+2n_\beta} \gamma_\text{in}} + \gamma_\text{out}^2 \times \mathcal{O}\Big(\frac{r^2}{\sigma^2}\Big)\nonumber\\ 
&= \gamma_\text{out}^2 + \mathcal{O}\Big(r \times \frac{\gamma_\text{out}^2}{\gamma_\text{in}}\Big) \ . 
\end{align}
  
Thus, clearly $$|c^\text{opt}_l|^2 \times \gamma_\text{out}^2 \ll 1$$ in the strong-input weak-output regime. Similarly, $$|c^\text{opt}_{l+1}|^2 \times \gamma_\text{out}^2 \ll 1$$  in the strong-input weak-output regime. We can see this by substituting Eq.(\ref{c_l small}) in Eq.(\ref{m1 eq 2})
\begin{align}
    |c^\text{opt}_{l+1}|^2\times \gamma_\text{out}^2   = |c^\text{opt}_{l}|^2 \times \gamma_\text{out}^2 + \mathcal{O}\big(\frac{1+r^2}{\gamma_\text{in}^{2}}\big) \times \gamma_\text{out}^2 = \gamma_\text{out}^2 + \mathcal{O}\Big(r \times \frac{\gamma_\text{out}^2}{\gamma_\text{in}}\Big) + \mathcal{O}\big(\frac{\gamma_\text{out}^2(1+r^2)}{\gamma_\text{in}^{2}}\big) \ .
\end{align}
Using this, we shall expand $q_l$ defined in Eq.(\ref{q_l defn}) in powers of $|c^\text{opt}_l|^2 \times \gamma_\text{out}^2$ and $|c^\text{opt}_{l+1}|^2 \times \gamma_\text{out}^2$. In the strong-input weak-output regime, $q_l$ in the typical regime expands to
\begin{align}\label{eq6}
    q_l & = \frac{1}{\sqrt{1+|c_l^\text{opt}|^2 \gamma_\text{out}^2}} \times \frac{1}{\sqrt{1+|c_{l+1}^\text{opt}|^2 \gamma_\text{out}^2}}\\ 
    &= 1 - \gamma_\text{out}^2 \times \frac{1}{2}\Big[|c_l^\text{opt}|^2 + |c_{l+1}^\text{opt}|^2\Big] + \mathcal{O}((|c_l^\text{opt}|^2+|c_{l+1}^\text{opt}|^2)^2 \gamma_\text{out}^4) \\
    &= 1 - \gamma_\text{out}^2|c_l^\text{opt}|^2 + \mathcal{O}\Big(\frac{1+r^2}{\gamma_\text{in}^{2}}\gamma_\text{out}^2\Big) + \mathcal{O}((1+\frac{r}{\gamma_\text{in}})^2 \gamma_\text{out}^4) \ , \label{eq2}
\end{align}
where we used Eq.(\ref{m1 eq 1}-\ref{m1 eq 2}) in the last equality. Substituting Eq.(\ref{eq2}) in Eq.(\ref{eq1}), we get
\begin{align}\label{tau 1 mid}
    \Tr(\tau a)= \gamma_\text{out} &-  \sum_{l \in \text{typical}} \rho_{l,l} \times \Bigg[ \gamma_\text{out}^3 |c_l^\text{opt}|^2\Bigg]\nonumber\\ 
    &+ \sum_{l \in \text{typical}} \rho_{l,l} \times \Bigg[\mathcal{O}\Big(\frac{1+r^2}{\gamma_\text{in}^{2}}\gamma_\text{out}^3\Big) + \mathcal{O}((1+\frac{r}{\gamma_\text{in}})^2 \gamma_\text{out}^5) \Bigg] + \mathcal{O}(e^{-R})
\end{align}
when $\gamma_\text{in} \gg 1$ while $\gamma_\text{in} \times  \gamma_\text{out}\ll 1$ .

\textit{\textbf{Evaluating exact error terms -- }} In Eq.(\ref{Def:E}), we defined
\begin{align}\label{eq5}
    E(\gamma_\text{in}) \equiv \sum_{l=0}^\infty \rho_{l,l} [|c_l^\text{opt}|^2 -1] \ .
\end{align}
Furthermore in Eq.(\ref{atypical c_l limit}), we noted that the atypical contribution to this summation is
\begin{align}
    \sum_{l \not \in \text{typical}} \rho_{l,l} [|c_l^\text{opt}|^2 -1] = \mathcal{O}(e^{-R}).
\end{align}
Thus, 
\begin{align}\label{E typical}
    \sum_{l \in \text{typical}} \rho_{l,l} [|c_l^\text{opt}|^2 -1] = E(\gamma_\text{in})\ ,
\end{align}
{where we ignored the $\mathcal{O}(e^{-R})$ term because it was already considered in Eq.(\ref{tau 1 mid}).} Substituting this into Eq.(\ref{tau 1 mid}) yields
\begin{align}\label{tau 1 mid 2}
    \Tr(\tau a)= \gamma_\text{out} &- \gamma_\text{out}^3 [1+E(\gamma_\text{in})]\nonumber\\
    &+ \sum_{l \in \text{typical}} \rho_{l,l} \times \Bigg[ \mathcal{O}\Big(\frac{1+r^2}{\gamma_\text{in}^{2}}\gamma_\text{out}^3\Big) + \mathcal{O}((1+\frac{r}{\gamma_\text{in}})^2 \gamma_\text{out}^5) \Bigg] + \mathcal{O}(e^{-R}) \ .
\end{align}
In Eq.(\ref{k12}) in Appendix \ref{limit comp app}, we showed that
\bes\label{eq4}
\begin{align}
    &\sum_{l=0}^\infty\rho_{l,l} \times r = \frac{n_\beta}{\sqrt{1+2n_\beta}}\frac{1}{\gamma_\text{in}}\ ,\\
    &\sum_{l =0}^\infty \rho_{l,l} \times r^2 = 1 + \frac{n_\beta}{\gamma_\text{in}^2}\ ,
\end{align}
\ees
where $r=(l-\gamma_\text{in}^2)/\sigma$. 
Using Eq.(\ref{eq4}) we can evaluate the error term in Eq.(\ref{tau 1 mid 2}) {because they are polynomials in $r$ with the leading power in $r$ indicated in $\mathcal{O}(.)$, and $r$ is the only $l$-dependent quantity. Note that because the atypical contributions are exponentially small, we shall replace the summation to be over \textit{all} $l$ at a cost that is exponentially small in $R$, i.e., $\mathcal{O}(e^{-R})$. Upon evaluating this summation using Eq.(\ref{eq4}), the error terms simplify to}
\begin{align}
    &\sum_{l \in \text{typical}} \rho_{l,l} \times \mathcal{O}\Big(\frac{1+r^2}{\gamma_\text{in}^{2}}\gamma_\text{out}^3\Big) = \mathcal{O}(\gamma_\text{in}^{-2} \gamma_\text{out}^3) , \\
    &\sum_{l \in \text{typical}} \rho_{l,l} \times \mathcal{O}((1+\frac{r}{\gamma_\text{in}})^2 \gamma_\text{out}^5) = \mathcal{O}((1+\gamma_\text{in}^{-2})\gamma_\text{out}^5) \ .
\end{align}
{where we again ignored the $\mathcal{O}(e^{-R})$ terms. Noting that $E(\gamma_\text{in})=\mathcal{O}(\gamma_\text{in}^{-2})$ as seen in Eq.(\ref{E appendix})}, the first moment finally becomes
\begin{align}\label{final m1}
    \Tr(\tau a)= \gamma_\text{out} - {\gamma_\text{out}^3} + \mathcal{O}(\gamma_\text{in}^{-2} \gamma_\text{out}^3) + \mathcal{O}((1+\gamma_\text{in}^{-2})\gamma_\text{out}^5) + \mathcal{O}(e^{-R}),
\end{align}
when $\gamma_\text{in} \gg 1$ while $\gamma_\text{in} \times  \gamma_\text{out}\ll 1$, and where the $\mathcal{O}(.)$ indicates a temperature-dependent constant that is independent of $\gamma_\text{in}$ and $\gamma_\text{out}$, and $R \ll \gamma_\text{in}^{1/3}$.

\subsection{Second Moment Computation}
Just as above, from the definition of $\mathcal{K}$ and its Kraus operators with the optimal $c_l^\text{opt}$, the second moment becomes
\begin{align}
    \Tr(a^\dagger a\ \tau)=\langle 1|\tau|1\rangle&=\sum_{l=0}^\infty \rho_{l+1,l+1} \times g_{l+1}\\
    &= \sum_{l=0}^\infty \rho_{l,l} \times g_{l} \ 
\end{align}
where we used that $c_0^\text{opt}=0$ implies $g_0=0$ in the last equality, and defined
$$g_{l} \equiv \frac{|c_{l}^\text{opt}|^2 \gamma_\text{out}^2}{1+ |c_{l}^\text{opt}|^2 \gamma_\text{out}^2}\ .$$

\textit{\textbf{Restricting summation to Typical $l$ --}} We proceed in lines similar to the first moment computation. We first split the summation into the typical and atypical ranges of $l$. And then, because $g_l\leq 1$ for all $l$, the atypical summation is $\mathcal{O}(e^{-R})$ where $R \ll \gamma_\text{in}^{1/3}$. This yields
\begin{align}
    \Tr(a^\dagger a\ \tau) &= \sum_{l \in \text{typical}} \rho_{l,l} \times g_{l} +  \sum_{l \not\in \text{typical}} \rho_{l,l} \times g_{l}\\
    & = \sum_{l \in \text{typical}} \rho_{l,l} \times g_{l} +  \mathcal{O}(e^{-R})\label{eq10} \ .
\end{align}

\textit{\textbf{Simplifying $g_l$ in typical $l$ --}} 
We saw earlier that Eq.(\ref{c_l small}) in the strong-input weak-output regime for typical $l$, i.e., $l \in [\gamma_\text{in}^2 - R \sigma, \gamma_\text{in}^2 + R \sigma]$, implies that
\begin{align}
    |c_l^\text{opt}|^2 \times \gamma_\text{out}^2 \ll 1\ .
\end{align}
Therefore, we can expand $g_l$ in powers of $|c_l^\text{opt}|^2 \times \gamma_\text{out}^2$. That is,
\begin{align}\label{eq8}
    g_l &= \frac{|c_{l}^\text{opt}|^2 \gamma_\text{out}^2}{1+ |c_{l}^\text{opt}|^2 \gamma_\text{out}^2}\\
    &= \gamma_\text{out}^2 |c_{l}^\text{opt}|^2 + \mathcal{O}(|c_{l}^\text{opt}|^4 \gamma_\text{out}^4) \\
    &= \gamma_\text{out}^2 |c_{l}^\text{opt}|^2 + \mathcal{O}((1+ 2\frac{r}{\gamma_\text{in}}) \gamma_\text{out}^4) \label{eq11}
\end{align}
where, in the last equality we used that
\begin{align}\label{m2 eq 2}
    |c_l^\text{opt}|^4 = 1 + 2 \frac{r}{\sigma} + \mathcal{O}\Big(\frac{r^2}{\sigma^2}\Big) 
\end{align}
in the typical range of $l$. This follows from simply squaring Eq.(\ref{m1 eq 1}). {We also noted that $|r|/\gamma_\text{in} \ll 1$ and $|r^2|/\gamma_\text{in}^2 \ll 1$. Recall that we implicitly assumed $R=\gamma_\text{in}^{1/4}$.}\\

Substituting Eq.(\ref{eq11}) in Eq.(\ref{eq10}), we get
\begin{align}\label{tau 2 mid}
    \Tr(a^\dagger a \tau)= \sum_{l \in \text{typical}} \rho_{l,l} \times \Big[\gamma_\text{out}^2 |c_{l}^\text{opt}|^2\Big] + \sum_{l \in \text{typical}} \rho_{l,l} \times \Big[\mathcal{O}((1+2\frac{r}{\sigma})\gamma_\text{out}^4)\Big] + \mathcal{O}(e^{-R})\ .
\end{align}\\

\textit{\textbf{Evaluating exact error terms -- }} 
{We can evaluate the above summation using Eq.(\ref{E typical}) and Eq.(\ref{eq4}). Then, the second moment expression} in Eq.(\ref{tau 2 mid}) finally becomes
\begin{align}\label{final m2}
    \Tr(a^\dagger a\ \tau)=  \gamma_\text{out}^2 [1+ E(\gamma_\text{in})] + \mathcal{O}((1+\gamma_\text{in}^{-2})\gamma_\text{out}^4) + \mathcal{O}(e^{-R})\ 
\end{align}
when $\gamma_\text{in} \gg 1$ while $\gamma_\text{in} \times  \gamma_\text{out}\ll 1$, and where the $\mathcal{O}(.)$ indicates a temperature-dependent constant that is independent of $\gamma_\text{in}$ and $\gamma_\text{out}$, and $R \ll \gamma_\text{in}^{1/3}$.

\subsection{Fidelity Computation}\label{fid comp app}
Recall the density matrix of $\tau$ defined in Eq.(\ref{tau app 1}),
\begin{align}
\tau=\mathcal{K}(\rho)=\begin{pmatrix}
    1-\Tr(a^\dagger a\ \tau) & \Tr(a\ \tau)\\
 \Tr(a\ \tau) & \Tr(a^\dagger a\ \tau)
\end{pmatrix} \ .
\end{align}
Its fidelity with the coherent state $|\gamma_\text{out}\rangle$
is 
\begin{align}
\langle\gamma_\text{out}|\tau|\gamma_\text{out}\rangle&=e^{-|\gamma_\text{out}|^2} (\tau_{00}+|\gamma_\text{out}|^2 \tau_{11}+2\gamma_\text{out}\tau_{01})\\ &=e^{-|\gamma_\text{out}|^2} (1-(1-|\gamma_{\text{out}}|^2)\Tr(\tau a^\dag a) +2\gamma_\text{out}\Tr(\tau a))  \ ,  
\end{align}
where we have assumed $\gamma_\text{out}$ is real. Using the first and second moment expressions in Eq.(\ref{final m1}) and Eq.(\ref{final m2}), for $\gamma_\text{in} \gg 1$ while $\gamma_\text{in} \times  \gamma_\text{out}\ll 1$, we simplify the expression to
\begin{align}
    \langle\gamma_\text{out}|\tau|\gamma_\text{out}\rangle & = (1-\gamma_\text{out}^2) \Big[1- (1-\gamma_\text{out}^2)\gamma_\text{out}[1+E(\gamma_\text{in})] + 2 \gamma_\text{out} (\gamma_\text{out}-{\gamma_\text{out}^3[1 + \mathcal{O}(\gamma_\text{in}^{-2})]})\Big] + \mathcal{O}(\gamma_\text{out}^5)\\
    & = (1-\gamma_\text{out}^2)\nonumber\\ 
    &\ \ \ \ \  - \ (1-\gamma_\text{out}^2)^2 \gamma_\text{out}[1+E(\gamma_\text{in})]\nonumber\\ 
    &\ \ \ \ \  + \ 2 \gamma_\text{out} (1- \gamma_\text{out}^2)(\gamma_\text{out}-{\gamma_\text{out}^3[1+ \mathcal{O}(\gamma_\text{in}^{-2})]}) + \mathcal{O}(\gamma_\text{out}^5) \label{fid0}
\end{align}
Here, we have focused on terms that only contribute up to $\gamma_\text{out}^4$. Now we further expand the second and third term in the above expression in powers of $\gamma_\text{out}$, which yields
\begin{align}
    (1-\gamma_\text{out}^2)^2 \gamma_\text{out}[1+E(\gamma_\text{in})] = [1+E(\gamma_\text{in})]\gamma_\text{out}^2 + 2[1+E(\gamma_\text{in})]\gamma_\text{out}^4 + \mathcal{O}(\gamma_\text{out}^6) \ , \label{fid1} 
\end{align}
and
\begin{align}
     {2 \gamma_\text{out} (1- \gamma_\text{out}^2)(\gamma_\text{out}-\gamma_\text{out}^3[1+ \mathcal{O}(\gamma_\text{in}^{-2})]) = 2 \gamma_\text{out}^2 - 2[2+\mathcal{O}(\gamma_\text{in}^{-2})] \gamma_\text{out}^4 + \mathcal{O}(\gamma_\text{out}^6)} \ . \label{fid2}
\end{align}
Finally, substituting Eq.(\ref{fid1}) and Eq.(\ref{fid2}) into Eq.(\ref{fid0}), {noting that $E(\gamma_\text{in})=\mathcal{O}(\gamma_\text{in}^{-2})$} and simplifying yields
\begin{align}\label{final infid 1}
1-\langle\gamma_\text{out}|\tau|\gamma_\text{out}\rangle= \gamma_\text{out}^2 E(\gamma_\text{in}) + \mathcal{O}((1-\gamma_\text{in}^{-2}) \gamma_\text{out}^4) \ ,
\end{align}
where the $\mathcal{O}(.)$ indicates a temperature-dependent constant that is independent of $\gamma_\text{in}$ and $\gamma_\text{out}$.\\

\subsubsection*{Explicit form of fidelity for numerics}
For the sake of completeness, we present the explicit form of the summation to compute the fidelity. For simplicity, we consider the desired output state in the form $\ket{\gamma_\text{out}}=\sqrt{1- \gamma_\text{out}^2} \ket{0} + \gamma_\text{out} \ket{1}$ because $\gamma_\text{out} \ll 1$. Then, from Eq.(\ref{KO channel}-\ref{opt c_l}) we get
\begin{align} \label{explicit fid formula} \braket{\gamma_\text{out}|\tau|\gamma_\text{out}} = \sum_{l=0}^\infty \rho_{l,l} \times  (f_1+2f_2) +  \rho_{l+1,l+1} \times f_3
\end{align}
where 
\begin{align}
    f_1 &= \frac{1-\gamma_\text{out}^2}{1+|c_l^{\text{opt}}|^2 \gamma_\text{out}^2} \ ,\\
    f_2 &= \frac{\gamma_\text{out}^2 \sqrt{1-\gamma_\text{out}^2}}{\sqrt{1+|c_l^{\text{opt}}|^2 \gamma_\text{out}^2}\sqrt{1+|c_{l+1}^{\text{opt}}|^2 \gamma_\text{out}^2}}\ ,\\
    f_3 &= \frac{|c_{l+1}^{\text{opt}}|^2 \gamma_\text{out}^4}{1+|c_{l+1}^{\text{opt}}|^2 \gamma_\text{out}^2} \ .
\end{align}

\subsubsection{Truncation error in numerical verification of Eq.(\ref{kjj})}\label{final infid}

In Fig.(\ref{fig: num verify E}) we numerically compute this fidelity by considering 400 terms in the summation in Eq.(\ref{explicit fid formula}). The contribution to the fidelity beyond 400 terms is
\begin{align}
    \sum_{l=400}^\infty \rho_{l,l} \times (f_1+2f_2) +  \rho_{l+1,l+1} \times f_3  \leq 4 \sum_{l=400}^\infty \rho_{l,l}
\end{align}
because $f_i \leq 1: i=1,2,3$, when $\gamma_\text{out}<1$. {In our plot, $\gamma_\text{in}$ ranges from $1$ to $5$. From the Chernoff bound stated in Eq.(\ref{m0 r}), it is clear that considering $400$ terms for the largest $\gamma_\text{in}=5$ corresponds choosing $R\approx (400-5^2)/5$ at least. This ensures a truncation error at most of order $\mathcal{O}(e^{-75}) \approx \mathcal{O}(10^{-33})$. Consequently, the contribution to the RHS of Eq.(\ref{kjj}) would be at most of order $\mathcal{O}(10^{-25})$ if we consider $\gamma_\text{out}=10^{-4}$ used in the plot.}

\subsection{Summary}
{Recall from Eq.(\ref{E appendix}) that $E(\gamma_\text{in})=\mathcal{O}(\gamma_\text{in}^{-2})$. Then, Eq.(\ref{final m1}), Eq.(\ref{final m2}) and Eq.(\ref{final infid 1}), up to leading order in powers of $\gamma_\text{out}$ and $\gamma_\text{in}^{-1}$ simplify to}
\bes
\begin{align}
     \gamma_\text{out}-\Tr(\tau a)&=   \gamma_\text{out}^3 \times  [1+ \mathcal{O}(\gamma_\text{in}^{-2})] + \mathcal{O}(\gamma_\text{out}^5)\ , \\   
        \Tr(\tau a^\dagger a)-\gamma_\text{out}^2&=  \gamma_\text{out}^2 \times E(\gamma_\text{in}) +  \mathcal{O}(\gamma_\text{out}^4)\ ,
    \\
   1-\langle\gamma_\text{out}|\tau|\gamma_\text{out}\rangle
 &= \gamma_\text{out}^2 \times E(\gamma_\text{in})
+\mathcal{O}(\gamma_\text{out}^4)\ ,
\end{align}
\ees
As mentioned, $\mathcal{O}(.)$ here indicates a temperature-dependent constant that is independent of $\gamma_\text{in}$ and $\gamma_\text{out}$. We have also ignored the exponentially-supressed term $\mathcal{O}(e^{-R})$, where $R \ll \gamma_\text{in}^{1/3}$.

\newpage
\section{Error analysis for the optimal protocol (Proof of Theorem \ref{optimal theorem})}\label{Sec:overview}

In this section, we prove Theorem \ref{optimal theorem}. To this end, we use Lemma \ref{lemma: divide and distill} to upperbound the infidelity in terms of the first and second moments, and then use Eq.(\ref{eqn: moment asymptotics}) to analyze their asymptotics when $n \rightarrow \infty$.\\

{In Theorem \ref{optimal theorem}, for concreteness we set $B_n = \lfloor n^{3/4}\rfloor$. To simplify our analysis slightly, we shall consider $B_n = \lceil n^{3/4}\rceil$ in the following.} This implies
\begin{align}\label{eqn: gamma asymptotics}
    \gamma_\text{in}=\frac{\sqrt{n}\alpha}{\sqrt{B_n}}\leq\alpha n^{1/8}  \ \  \text{and}\ \  \   \gamma_\text{out}=\frac{\alpha}{\sqrt{B_n}}\leq\alpha n^{-3/8} \ ,
\end{align}
with $n={|\gamma_\text{in}|^2}/{|\gamma_\text{out}|^2}$. Then, observe that $\mathcal{C}_{B_n}\circ \mathcal{K}_n^{\otimes B_n}\circ\mathcal{D}_{B_n}$ is an instance of the scheme in Eq.(\ref{divide distill scheme}) in Lemma \ref{lemma: divide and distill} where $m=B_n$. 

As shown in Fig. \ref{fig: divide distill}, let $\sigma_n$ be the output of the channel $\mathcal{K}_n$ whose desired target is $\ket{\alpha/\sqrt{B_n}}$. Similarly, let $\sigma'_n$ be the output of the channel $\mathcal{C}_{B_n}\circ \mathcal{K}_n^{\otimes B_n}\circ\mathcal{D}_{B_n}$ whose desired target would consequently be $\ket{\alpha}$.

Then from Eq.(\ref{eqn: infid divide distill}), it follows that
\begin{align}
    1-\braket{\alpha|\sigma'_n|\alpha}
     &\leq \Tr(\sigma'_n a^\dagger a) - |\Tr(\sigma'_n a)|^2 +
     |\Tr(\sigma'_n a)-\alpha|^2\\
     &=\Tr(\sigma_n a^\dagger a) - |\Tr(\sigma_n a)|^2 + B_n \Big|\Tr(\sigma_n a)-\frac{\alpha}{\sqrt{B_n}}\Big|^2\ \label{infid 7}
\end{align}
for all $n$. Then, $n$ times the infidelity further satisfies
\begin{align}\label{tr1}
    n[1-\braket{\alpha|\sigma'_n|\alpha}]
     &\le n\big[\Tr(\sigma_n a^\dagger a) - \gamma_\text{out}^2\big]
     \\ &+ n\big[\gamma_\text{out}^2-\Tr(\sigma_n a)^2\big]\nonumber\\ 
    &+ n {B_n}[
    \gamma_\text{out}-\Tr(\sigma_n a)]^2\nonumber.
\end{align}
The asymptotic behavior of the quantities in each line in Eq.(\ref{tr1}) can be inferred using the first and second moments expressions in Eq.(\ref{eqn: moment asymptotics}) (these are computed in Appendix \ref{appendix: performance of strong-input weak output}). Specifically, from the first moment in Eq.(\ref{eqn: m1 main}) its follows that $0\le \gamma_\text{out}-\Tr(\sigma_n a)$. This implies that the second line in Eq.(\ref{tr1}) is upperbounded by
\begin{align}
n\big[\gamma_\text{out}^2-\Tr(\sigma_n a)^2\big] &\le 2n \gamma_\text{out}[\gamma_\text{out}-\Tr(\sigma_n a)]\\ 
&= \frac{2\gamma_\text{in}^2}{\gamma_\text{out}}[\gamma_\text{out}-\Tr(\sigma_n a)] \ \nonumber.
\end{align}
By substituting the same Eq.(\ref{eqn: m1 main}) into the above equation, we get that
\begin{align}
    &\frac{2\gamma_\text{in}^2}{\gamma_\text{out}}[\gamma_\text{out}-\Tr(\sigma_n a)] \\
    &= \frac{2\gamma_\text{in}^2}{\gamma_\text{out}} \Big[\gamma_\text{out}^3 \big[ 1 + \mathcal{O}(\gamma_\text{in}^{-2})\big] + \mathcal{O}(\gamma_\text{out}^5)\Big]\nonumber\\
    &= 2 (\gamma_\text{in} \times \gamma_\text{out})^2 \times \Big[\big[ 1 +\mathcal{O}(\gamma_\text{in}^{-2})\big] + \mathcal{O}(\gamma_\text{out}^2)\Big] \ . \nonumber
\end{align}
{Since $\gamma_{\text{in}}\times \gamma_{\text{out}}\leq \alpha^2 n^{-1/4}$ from Eq.(\ref{eqn: gamma asymptotics})}, in the limit  $n\rightarrow \infty$, $\gamma_{\text{in}}\times \gamma_{\text{out}}\rightarrow 0$, and clearly, $\gamma_\text{in}^{-1}, \gamma_\text{out} \rightarrow 0$. Thus, the second line of Eq.(\ref{tr1}) vanishes in the limit $n\rightarrow \infty$. Similarly, the square root of the third line of Eq.(\ref{tr1}) equals
\begin{align}
    &\sqrt{n B_n} [\gamma_\text{out} - \Tr(\sigma_n a)]\\
    &= \frac{\gamma_\text{in}}{\gamma_\text{out}} \times \frac{\alpha}{\gamma_\text{out}} \times \Big[\gamma_\text{out}^3 \big[ 1  + \mathcal{O}(\gamma_\text{in}^{-2})\big] + \mathcal{O}(\gamma_\text{out}^5)\Big]\nonumber\\
    &= \alpha \times (\gamma_\text{in} \times \gamma_\text{out}) \times \Big[\big[ 1 +\mathcal{O}(\gamma_\text{in}^{-2})\big] + \mathcal{O}(\gamma_\text{out}^2)\Big] \ ,\nonumber
\end{align}
where we note that $\sqrt{B_n} = \alpha/\gamma_\text{out}$ in the first equality. For the same reason as above, the term in the third line of Eq.(\ref{tr1}) also vanishes in the limit $n\rightarrow \infty$. Finally, considering the second moment expression in Eq.(\ref{eqn: m2 main}), the first line of Eq.(\ref{tr1}) becomes 
\begin{align}
&\lim_{n\rightarrow \infty} n\big[\Tr(\sigma_n a^\dagger a) - \gamma^2_{\text{out}}\big]\\
&= \lim_{n\rightarrow \infty} \gamma^2_{\text{in}}\times E(\gamma_{\text{in}}) + \mathcal{O}(\gamma_\text{in}^2 \times \gamma_\text{out}^2)\nonumber\\ 
&= \delta^\text{opt}(\beta)\nonumber \ ,
\end{align}
where we again note that $\gamma_{\text{in}}\times \gamma_{\text{out}}\rightarrow 0$ as $n \rightarrow \infty$, along with Eq.(\ref{limit E}) in the last line. We note that Eq.(\ref{limit E}) is proved in Appendix \ref{appendix: matrix elements large alpha}.

\subsection*{Infidelity in terms of $n$ and $B_n$}
We shall reconsider Eq.(\ref{tr1}) in terms of $n$ and $B_n$, using the relations \begin{align}\label{gamma n Bn}
    \gamma_\text{in}=\frac{\sqrt{n}\alpha}{\sqrt{B_n}}  \ \  \text{and}\ \  \   \gamma_\text{out}=\frac{\alpha}{\sqrt{B_n}} \ ,
\end{align}
with $n={|\gamma_\text{in}|^2}/{|\gamma_\text{out}|^2}$.

For large $\gamma_\text{in}$, or equivalently for $n\gg1$ when $B_n=o(n)$\footnote{Here, we use the asymptotic notations: 
$B_n=o(n)$ means $\lim_{n \rightarrow \infty} {B_n}/{n} = 0$ and $B_n=\omega(\sqrt{n})$ means $\lim_{n \rightarrow \infty} {B_n}/{\sqrt{n}} = \infty$.}, Eq.(\ref{E with error}) implies that these quantities scale (up to leading order in powers of $n$ and $B_n$) as 
\bes\label{moment scalings 2}
\begin{align}
    \gamma_\text{out} - \Tr(\sigma_n a) &= \frac{\alpha^3}{B_n^{3/2}} + \frac{\alpha\ (1+\delta^\text{opt}(\beta))}{n\sqrt{B_n}} + \mathcal{O}\Big(\frac{1}{B^{5/2}}\Big) + \mathcal{O}\Big(\frac{1}{n^{3/2}}\Big)\\
    \Tr(\sigma_n a^\dagger a) - \gamma_\text{out}^2 &= \frac{\delta^\text{opt}(\beta)}{n} + \mathcal{O}\Big(\frac{1}{B_n^2}\Big)+ \mathcal{O}\Big(\frac{\sqrt{B_n}}{n^{3/2}}\Big)
\end{align}
\ees
where $\mathcal{O}(.)$ suppresses a temperature and $\alpha$ dependent constant that is independent of $n$ and $B_n$. By considering only the leading order terms in Eq.(\ref{moment scalings 2}), the infidelity factor estimate in Eq.(\ref{tr1}) simplifies, up to leading order in powers of $n$ and $B_n$, to
\begin{align}
    n[1-\braket{\alpha|\sigma'_n|\alpha}]
     &\le \delta^{\text{opt}}(\beta) + \mathcal{O}\Big(\frac{1}{{B_n}}+\frac{1}{{n}}\Big) + \mathcal{O}\Big(\frac{n}{B_n^2}\Big)\ .
\end{align}
To ensure that this quantity approaches $\delta^\text{opt}(\beta)$ as $n \rightarrow \infty$, we require that $B_n$ satisfies the asymptotic scalings $o(n)$ \textit{and} $\omega(\sqrt{n})$. 
And clearly, any $B_n \propto n^{1-s}$, for an arbitrarily small $s>0$, would converge to $\delta^\text{opt}(\beta)$ the fastest with increasing $n$.

\subsubsection{General Requirements}
In the above discussion, we focused on finding the optimal distillation protocol, using the divide and distill protocol explained in Sec. \ref{subsection: divide distill}. Suppose, instead of the optimal channel $\mathcal{K}$ introduced in Sec. \ref{subsection: weak output regime}, we consider another channel $\mathcal{K}'$ within the divide and distill strategy to obtain a protocol for converting $n$ copies of state $\rho(\beta,\alpha)$ to a state close to $\alpha$. Specifically, consider the phase-insensitive channel $\mathcal{K}'$ that transforms the coherent thermal input $\rho(\beta, \gamma_\text{in})$ to {a state close to} coherent state $\ket{\gamma_\text{out}}$ when $\gamma_\text{in} \gg 1$ and $\gamma_\text{out} \ll 1$. In the following, we show that if the output state  $\sigma:=\mathcal{K}'(\rho(\beta, \gamma_\text{in}))$ satisfies
\bes\label{req 7}
\begin{align}
    &{\rm Var}(\rho):=\Tr(\rho a^\dagger a)-|\Tr(\rho a)|^2 = \mathcal{O}\Big(\frac{\gamma_\text{out}^2}{\gamma_\text{in}^2}\Big)\\
    & |\Tr(\sigma a)- \gamma_\text{out}| = \mathcal{O}\Big(\frac{\gamma_\text{out}^2}{\gamma_\text{in}}\Big) \ 
\end{align}
\ees
in the $\gamma_\text{in} \gg 1$ and $\gamma_\text{out} \ll 1$ regime, then the overall channel obtained from applying $\mathcal{K}'$ in the divide and distill strategy will have finite infidelity factor.



To derive these conditions in Eq.(\ref{req 7}), we briefly recall the divide and distill protocol: we first apply passive operations to convert $n$ copies of $\rho(\beta, \alpha)$ to $B_n$ copies of the coherent thermal input $\rho(\beta, \gamma_\text{in})$ with $\gamma_\text{in}= \sqrt{n}\alpha/\sqrt{B_n}$. Then, we apply the phase-insensitive protocol $\mathcal{K}'$ that transforms each copy of $\rho(\beta, \gamma_\text{in})$ to a state $\sigma:=\mathcal{K}'(\rho(\beta, \gamma_\text{in}))$ that is close to $\ket{\gamma_\text{out}}$ with $\gamma_\text{out}=\alpha/\sqrt{B_n}$. This channel is applied $B_n$ times in parallel. Then, finally, the $B_n$ copies of $\sigma$ are combined into a single mode $\sigma'_n := \mathcal{C}_{B_n}(\sigma^{\otimes B_n})$. Then, as seen in Eq.(\ref{infid 7}), the infidelity of the output of this protocol $\sigma'_n$ with the coherent state $\ket{\alpha}$ satisfies
\begin{align}
    n\times [1- \braket{\alpha|\sigma'_n|\alpha}] \leq n \times {\rm Var}(\sigma) +  n B_n \times |\Tr(\sigma a)- \gamma_\text{out}|^2 \ , 
\end{align}
where ${\rm Var}(\rho):=\Tr(\rho a^\dagger a)-|\Tr(\rho a)|^2$. Now, recall that for a given $\alpha$, the parameters $n$ and $B_n$ uniquely specify $\gamma_\text{in}$ and $\gamma_\text{out}$. That is,
\begin{align*}
    \gamma_\text{in}=\frac{\sqrt{n}\alpha}{\sqrt{B_n}}  \ \  \text{and}\ \  \   \gamma_\text{out}=\frac{\alpha}{\sqrt{B_n}} \ .
\end{align*}
Using this, we can rewrite the right-hand side of the above inequality in terms of $\gamma_\text{in}$ and $\gamma_\text{out}$ as
\begin{align}
    n\times [1- \braket{\alpha'|\sigma'_n|\alpha'}] \leq \frac{\gamma_\text{in}^2}{\gamma_\text{out}^2} \times {\rm Var}(\sigma) +  \frac{\alpha^2 \gamma_\text{in}^2}{\gamma_\text{out}^4} \times |\Tr(\sigma a)- \gamma_\text{out}|^2 \ .
\end{align}
Furthermore, for the channel $\mathcal{K}'$ that we are considering, the limit $n \rightarrow \infty$ would be equivalent to the limit of $\gamma_\text{in} \rightarrow \infty$ and $\gamma_\text{out} \rightarrow 0$. Then, for the left- hand side quantity to remain bounded in the $\gamma_\text{in} \rightarrow \infty$ and $\gamma_\text{out} \rightarrow 0$ limits, we require Eq.(\ref{req 7}).\\

For our choice of optimal protocol detailed in Eq.(\ref{subsection: weak output regime}), from Eq.(\ref{eqn: moment asymptotics}) it follows that
\bes
\begin{align}
    &{\rm Var}(\sigma) = \gamma_\text{out}^2 E(\gamma_\text{in}) + \mathcal{O}(\gamma_\text{out}^4)
    \label{vv1}\\
    & |\Tr(\sigma a)- \gamma_\text{out}| = \gamma_\text{out}^3[1+ E(\gamma_\text{in})] +  \mathcal{O}(\gamma_\text{out}^5)
    \label{vv2}
\end{align}
\ees
where we only considered the leading order terms. These terms satisfy Eq.(\ref{req 7}). Specifically, Eq.(\ref{vv1}) is $\mathcal{O}\big(\frac{\gamma_\text{out}^2}{\gamma_\text{in}^2}\big)$ because in the limit $\gamma_\text{in} \rightarrow \infty$, the quantity $\gamma_\text{in}^2 E(\gamma_\text{in})$ is finite. Similarly, Eq.(\ref{vv2}) is $\mathcal{O}\big(\frac{\gamma_\text{out}^2}{\gamma_\text{in}}\big)$ because in the limit $\gamma_\text{in} \rightarrow \infty$ and $\gamma_\text{out} \rightarrow 0$, $\gamma_\text{out}^3/(\gamma_\text{out}^2/\gamma_\text{in}) =\gamma_\text{in} \times \gamma_\text{out} \rightarrow 0$ as is assumed in the strong-input weak-output regime. So, for that matter, Eq.(\ref{vv2}) is $o\big(\frac{\gamma_\text{out}^2}{\gamma_\text{in}}\big)$ which is stronger than the requirement for finite infidelity, and this ensures the contribution of the bias term in the infidelity factor vanishes asymptotically.

\newpage

\section{Quantum Fisher Information Metrics for Displaced Incoherent States (Proof of Lemma \ref{lemma: qfi computation})} \label{appendix: incoherent qfi computation}

Consider a closed system 
evolving under a (possibly time-dependent) Hamiltonian $H(t)$, with density operator $\rho(t)$, at time $t$.  Then, for any arbitrary unitary $U(t)$, we obtain 

\begin{align}
\frac{d}{dt}\rho(t)&=-\mathrm{i}[H, \rho]=-\mathrm{i}\ U [ \tilde{H}\ ,\   \tilde{\rho}] U^\dag\ ,
\end{align}
or equivalently 
\begin{align}
U^\dag[\frac{d}{dt}\rho(t)] U&=-\mathrm{i}  [ \tilde{H}\ ,\   \tilde{\rho}] \ ,
\end{align}
  
where we have suppressed the $t$ dependence in $U$ and $H$, and defined $\tilde{H}=U^\dag H U$ and 
$\tilde{\rho}= U^\dag\rho U$. 
As we see below this is particularly useful for finding the matrix elements of $d\rho(t)/dt$ in the eigenbasis of $\rho(t)$.

 We apply this observation to the case of a Harmonic Oscillator in the initial state $D(\alpha) \rho_\text{incoh} D(\alpha)^\dag$, evolving under Hamiltonian $H=\omega a^\dag a$.  This time evolution results in  the 1-parameter family of states
\begin{align}
    \rho(t)=e^{-\mathrm{i} \omega a^\dagger a  t}D(\alpha) \rho_\text{incoh} D^\dagger(\alpha) e^{\mathrm{i} \omega a^\dagger a t} = D(\alpha e^{-\mathrm{i}\omega t}) \rho_\text{incoh} D^\dagger(\alpha e^{-\mathrm{i}\omega t})\ \ \ \ : \ t \in \mathbb{R}
\end{align}
where $[\rho_\text{incoh}, H]=0$.  Our goal is to find the matrix elements of ${d}\rho(t)/dt$ in the eigenbasis of $\rho(t)$. 
Therefore,  we choose $U(t)=D(\alpha e^{-\mathrm{i} \omega t})$, which means 
\be
\tilde{\rho}(t)=U^\dag(t)\rho(t) U(t)=\rho_\text{incoh}\ .
\ee
Furthermore, because $D(\alpha e^{-\mathrm{i}\omega t})^\dag \ a\  D(\alpha e^{-\mathrm{i}\omega t})= a+\alpha e^{-\mathrm{i}\omega t}$, 
\be
\tilde{H}=U^\dag(t) H U(t)
=\tilde{H}(t)=\omega \Big[ a^\dag a +a^\dag \alpha e^{-\mathrm{i}\omega t}+a \alpha^\ast e^{\mathrm{i}\omega t}+|\alpha|^2\Big]\ .
\ee
  
Therefore, since  $[\rho_\text{incoh}, H]=0$  we  obtain 
\begin{align}
U^\dag(t) \frac{d}{dt}\rho(t) U(t)=D^\dag(\alpha e^{-\mathrm{i} \omega t}) \frac{d}{dt}\rho(t) D(\alpha e^{-\mathrm{i} \omega t})&=-i\omega  [ \alpha e^{-\mathrm{i}\omega t} a^\dag+\alpha^\ast e^{\mathrm{i}\omega t} a\ ,\   \rho_\text{incoh}]\ .
\end{align}
Finally, we note that  $\rho_{\text{incoh}}=\sum_n p_n |n\rangle\langle n|$ is diagonal in the Fock basis.
The matrix elements of the two sides of the above equation in this basis are given 
by 
\begin{align}
\langle k|U^\dag(t)[\dot{\rho}(t)]U(t)|j\rangle  \equiv\langle\widetilde{k}| \frac{d}{dt}\rho(t) |\widetilde{j}\rangle&= -\mathrm{i}\omega e^{\mathrm{i}(k-j)\omega t} \langle{k}| [ \alpha  a^\dag+\alpha^\ast  a \ ,\   \rho_\text{incoh}] | j\rangle\\
&=\mathrm{i}\omega\ e^{\mathrm{i}(k-j)\omega t} \times (p_k -p_j) \times (\alpha^* \braket{k|a|j} + \alpha \braket{k|a^\dagger|j})\ ,
\end{align}
where  $|\widetilde{n}\rangle=D(\alpha e^{-\mathrm{i} \omega t} )|n\rangle$ 
is the eigenbasis of ${\rho}(t)$. 
Then, using the fact that $\langle j|a^\dag|k\rangle=\sqrt{k+1}\delta_{j,k+1}$ we find that
\begin{align}
    |\langle\tilde{j}|\dot{\rho}(t)|\tilde{k}\rangle |^2 = \delta_{j, k+1} \ \omega^2(p_k - p_{k+1})^2 \times |\alpha|^2 \times (k+1)\ . 
\end{align}
Substituting this into Eq.(\ref{eqn: qfi metric}) yields
\begin{align}
    \textbf{g}^{f}_\rho(\dot{\rho}, \dot{\rho})={2\omega^2|\alpha|^2}\sum_k (k+1) p_{k+1} \times  \frac{({p_k}/{p_{k+1}}-1)^2}{f(p_k/p_{k+1})}\ .
\end{align}

\subsection{Derivation of formulae for QFI and purity of coherence (Proof of Eq.(\ref{eq: gen F_H P_H}))}\label{F_H and P_H deriv appendix}
Next, we derive the formulae for $F_H$ and $P_H$ defined in Eq.(\ref{eq: gen F_H P_H}) by noting that they are the minimal and maximal QFI metrics respectively. These correspond to the formulae previously noted in \cite{marvian2020coherence}. \\

Recall that a general QFI metric is characterized by a function $f: \mathbb{R}_+\rightarrow \mathbb{R}_+$, such that for a density operator $\rho$ with spectral decomposition 
\begin{align}\label{eq: rho spec}
    \rho=\sum_i p_i |i\rangle\langle i|\ ,
\end{align}
within a single-parameter family of states $\{\rho(t)\}_{t \in \mathbb{R}}$, it holds that
\begin{align}\label{eq: metric}
\textbf{g}^{f}_\rho(\dot{\rho}, \dot{\rho})=\sum_{i} \frac{\dot{\rho}^2_{ii}}{p_i} +2 \sum_{i<j} c_f(p_i,p_j) |\dot{\rho}_{ij}|^2\ ,
\end{align}
where $\rho_{ij}=\langle i|\rho|j\rangle$. Here, we  evaluate this metric for state $\rho$ in the single-parameter family defined by a Hamiltonian evolution by $H$,
\begin{align}\label{single param fam}
    \rho(t)=e^{-\mathrm{i}Ht} \rho  e^{\mathrm{i}Ht}
\end{align}
for initial state $\rho$. 

We simplify the right-hand side of Eq.(\ref{eq: metric}) by substituting the spectral decomposition of $\rho$ seen in Eq.(\ref{eq: rho spec}) into it. Using  $\dot{\rho} = - \mathrm{i} [H, \rho]$, we find that 
\bes
\begin{align} 
    \dot{\rho}_{ij}(t) &= \braket{i|\dot{\rho}(t)|j}\\
    &{= -\mathrm{i}\braket{i|[H,\rho(t)]|j}}\\
    &{={-\mathrm{i} e^{\mathrm{i} (E_i-E_j)t }}\braket{i|[H,\rho]|j}} \\
    &= -\mathrm{i} {e^{\mathrm{i} (E_i-E_j)t }} \bra{i}[H , \sum_k p_k \ketbra{k}{k}]\ket{j}\\
    &= - \mathrm{i} {e^{\mathrm{i} (E_i-E_j)t }}  \sum_k \bra{i}(p_k H\ketbra{k}{k} - p_k \ketbra{k}{k}H)\ket{j}\\
    & = -\mathrm{i} {e^{\mathrm{i} (E_i-E_j)t }} (p_i-p_j) \times \braket{i|H|j} \ .
\end{align}
\ees
Thus, the first term in Eq.(\ref{eq: metric}) is zero. Because $f$ is self-inverse, it follows that $yf(x/y)=xf(y/x)$. Thus, $c_f(x,y)=c_f(y,x)$. Using this, and that $|\braket{i|H|j}|=|\braket{j|H|i}|$, we can simplify the second term to get
\begin{align}\label{metric mid}
    \textbf{g}^{f}_\rho(\dot{\rho}, \dot{\rho})&= 2 \sum_{i<j} \frac{1}{p_j f(p_i/p_j)} \times (p_i-p_j)^2 \times |\braket{i|H|j}|^2\\
    &= \sum_{i,j} \frac{1}{p_j f(p_i/p_j)} \times (p_i-p_j)^2 \times |\braket{i|H|j}|^2 \ .
\end{align}
Now we recall the minimal and maximal normalized self-inverse functions, that respectively correspond to $F_H$ and $P_H$ functions
\begin{align}
f^{\text{SLD}}(x)=\frac{1+x}{2}\  ,\  \ 
f^{\text{RLD}}(x)=\frac{2x}{1+x}\ .
\end{align}
Then, 
\begin{align}\label{f simplify}
    &f^{\text{SLD}}(p_i/p_j) = \frac{p_i + p_j}{2 p_j}\  ,\  \ 
f^{\text{RLD}}(p_i/p_j) = \frac{2p_i}{p_i+p_j} \ .
\end{align}
Substituting these in Eq.(\ref{metric mid}) yields the formulae in Eq.(\ref{eq: gen F_H P_H}). We derive this in the following,
\bes
\begin{align}
    \textbf{g}^{\text{SLD}}_\rho(\dot{\rho}, \dot{\rho})&= \sum_{i,j} \frac{1}{p_j f^{\text{SLD}}(p_i/p_j)} \times (p_i-p_j)^2 \times |\braket{i|H|j}|^2\\
    & = \sum_{i,j} \frac{2(p_i-p_j)^2}{p_i+p_j} \times |\braket{i|H|j}|^2\ .
\end{align}
\ees
Similarly,
\bes
\begin{align}
    \textbf{g}^{\text{RLD}}_\rho(\dot{\rho}, \dot{\rho})&= \sum_{i,j} \frac{1}{p_j f^{\text{RLD}}(p_i/p_j)} \times (p_i-p_j)^2 \times |\braket{i|H|j}|^2\\
    &= \sum_{i,j} \frac{p_i+p_j}{2 p_i p_j} \times (p_i-p_j)^2 \times |\braket{i|H|j}|^2\\
    &=\frac12\sum_{i,j} \Big(\frac{1}{p_j} - \frac{1}{p_i}\Big) \times (p_i^2-p_j^2)  \times |\braket{i|H|j}|^2\\
    &=\frac12\sum_{i,j} \Big(\frac{1}{p_j}\Big) \times (p_i^2-p_j^2)  \times |\braket{i|H|j}|^2 - \frac12\sum_{i,j} \Big(\frac{1}{p_i}\Big) \times (p_i^2-p_j^2)  \times |\braket{i|H|j}|^2\\
    & = \sum_{i,j} \frac{p_i^2-p_j^2}{p_j}   \times |\braket{i|H|j}|^2 \ ,
\end{align}
\ees
where, in the last line, we observed that the second summation is the negative of the first summation. To see this, exchange the index labels $i,j$ in the second summation, and recall that $|\braket{i|H|j}|=|\braket{j|H|i}|$.
\newpage

\section{Ultimate Limits of Distillation Error (Proof of Eq.(\ref{thm first part}) in Theorem \ref{theorem: universal bounds})}\label{appendix: universal bounds}

Ref. \cite{marvian2020coherence} shows that if using TI operations $\mathcal{E}_n$,  $n$ copies of a system with state $\rho$ and Hamiltonian $H_1$ can be converted to a copy of system with Hamiltonian $H_2$ in state $\rho'$, such that $\rho'=\mathcal{E}_n(\rho^{\otimes n})$, and if  $\langle\phi|\rho'|\phi\rangle=1-\epsilon_n$, then 
\be
 \liminf_{n \rightarrow \infty} n \times \epsilon_n \geq \frac{V_{H_2}(\ketbra{\phi}{\phi})}{P_{H_1}(\rho)}\ .
\ee
Here, we give an overview of this argument. Furthermore, we verify that this claim remains valid even when Hamiltonians $H_1$ and $H_2$ are unbounded, provided that state $|\phi\rangle$ has finite first and second moments of energy (in  Ref.\cite{marvian2020coherence} this result is presented for the case of systems with bounded Hamiltonians. However, as we verify in the following, the result does not rely on this assumption). 

To simplify the notion we assume $H_1=H_2$. However, it is worth noting that this assumption can be easily relaxed. First, recall the following result of \cite{marvian2020coherence}:

\begin{lemma}\label{variance lowerbound lemma} \cite{marvian2020coherence}
Let  $\delta\equiv 1-\langle\phi|\sigma|\phi\rangle $ be the infidelity of pure state $\ket{\phi}$ and state $\sigma$. Let $\ket{\psi}$ be the eigenvector of $\sigma$ with the largest eigenvalue. Then, the fidelity of $\ket{\psi}$ and $\ket{\phi}$ is lower bounded by $|\langle\phi|\psi\rangle|^2\ge 1-2\delta$.  Furthermore, 
\begin{align}
P_H(\sigma)&\ge V_H(\ketbra{\psi}{\psi})\times (\frac{(1-\delta)^2}{\delta}-1) \label{1ododo} , \\ 
F_H(\sigma)&\ge V_H(\ketbra{\psi}{\psi})\times  4(1-2\delta)^2\ .\label{2ododo}
\end{align}
\end{lemma}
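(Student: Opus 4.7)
The plan is to lower-bound the Quantum Fisher Information metrics $F_H(\sigma)$ and $P_H(\sigma)$ by first symmetrizing the sums in Eq.(\ref{eq: gen F_H P_H}) so that each pair of off-diagonal terms becomes manifestly non-negative, then retaining only the contributions coming from the top eigenvector $|\psi\rangle$ of $\sigma$. These retained contributions assemble into $V_H(|\psi\rangle\langle\psi|)=\sum_{l>1}|\langle\psi|H|\psi_l\rangle|^2$ multiplied by a spectral coefficient that I can bound uniformly from below using only $p_1\ge 1-\delta$ and $p_l\le 1-p_1\le\delta$ for $l>1$.

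For the fidelity claim, expand $\sigma=\sum_i p_i|\psi_i\rangle\langle\psi_i|$ with $p_1\ge p_2\ge\cdots$ and $|\psi_1\rangle=|\psi\rangle$, and set $q_i=|\langle\phi|\psi_i\rangle|^2$. The hypothesis $1-\delta=\sum_i p_i q_i$ together with $q_i\le 1$ yields $p_1\ge 1-\delta$ (since $\sum_i p_iq_i\le p_1\sum_i q_i$) as well as $p_1 q_1\ge p_1-\delta$ (since $\sum_{i\ge 2}p_iq_i\le 1-p_1$). Dividing and substituting the bound on $p_1$ gives $q_1\ge(1-2\delta)/(1-\delta)\ge 1-2\delta$, as required.

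For $F_H(\sigma)$, pairing the $(k,l)$ and $(l,k)$ terms in $\sum_{k,l}2(p_k-p_l)^2/(p_k+p_l)\,|\langle\psi_k|H|\psi_l\rangle|^2$ shows each off-diagonal pair is non-negative, so discarding everything except $k=1$ (with its mirror $l=1$) yields
\begin{equation*}
F_H(\sigma)\ \ge\ 4\sum_{l>1}\frac{(p_1-p_l)^2}{p_1+p_l}\,|\langle\psi|H|\psi_l\rangle|^2.
\end{equation*}
Because $p_1+p_l\le 1$ while $p_1-p_l\ge(1-\delta)-\delta=1-2\delta$, each coefficient is at least $(1-2\delta)^2$, and the remaining sum over $l>1$ is precisely $V_H(|\psi\rangle\langle\psi|)$, giving Eq.(\ref{2ododo}).

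For $P_H(\sigma)$, the same symmetrization produces $P_H(\sigma)=\sum_{k<l}(p_k-p_l)^2(p_k+p_l)/(p_k p_l)\,|\langle\psi_k|H|\psi_l\rangle|^2$ with manifestly non-negative summands. Keeping only the $k=1$ contributions lower-bounds $P_H(\sigma)$ by $V_H(|\psi\rangle\langle\psi|)$ times the infimum of the coefficient $(p_1-p_l)^2(p_1+p_l)/(p_1 p_l)$ over the admissible region $p_1\ge 1-\delta$, $0<p_l\le 1-p_1$. A brief monotonicity check shows this coefficient is increasing in $p_1$ and decreasing in $p_l$, so the infimum sits at the boundary $p_1=1-\delta$, $p_l=\delta$ and equals $(1-2\delta)^2/[\delta(1-\delta)]$. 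The only real obstacle is now algebraic: one must verify that this infimum dominates the stated $(1-\delta)^2/\delta-1$. Clearing denominators, the difference reduces to $\delta^2/(1-\delta)\ge 0$, closing the argument.
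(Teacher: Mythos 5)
Your argument is correct and complete, so there is nothing to repair. Note that this paper does not actually prove Lemma~\ref{variance lowerbound lemma} — it imports it verbatim from \cite{marvian2020coherence} — so there is no in-text proof to compare against; your derivation stands as a self-contained justification built directly on the spectral formulas in Eq.(\ref{eq: gen F_H P_H}). The three pillars all check out: (i) the fidelity chain $p_1\ge 1-\delta$, $p_1q_1\ge p_1-\delta$, hence $q_1\ge (1-2\delta)/(1-\delta)\ge 1-2\delta$; (ii) the symmetrization of $P_H$ into $\sum_{k<l}(p_k-p_l)^2(p_k+p_l)/(p_kp_l)\,|H_{kl}|^2$, whose $k=1$ coefficient is indeed monotone in the directions you claim (one can see this from $g(a,b)=a^2/b-a-b+b^2/a$, whose partial derivatives have definite signs for $a\ge b$), with corner value $(1-2\delta)^2/[\delta(1-\delta)]$ exceeding $(1-\delta)^2/\delta-1$ by exactly $\delta^2/(1-\delta)$; and (iii) the $F_H$ bound via $p_1+p_l\le 1$ and $p_1-p_l\ge 1-2\delta$. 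The one caveat worth recording is that several steps — $(p_1-p_l)^2\ge(1-2\delta)^2$, the placement of the infimum at the corner $(1-\delta,\delta)$, and the final inequality $\tfrac{1-2\delta}{1-\delta}\ge 1-2\delta$ — implicitly use $\delta\le 1/2$; for $\delta>1/2$ the fidelity and $P_H$ claims are vacuous but the $F_H$ claim is not, so your proof of Eq.(\ref{2ododo}) only covers the regime $\delta\le 1/2$. Since the lemma is invoked in Appendix~\ref{appendix: universal bounds} only in the limit $\epsilon_n\to 0$, this restriction is harmless, but you should state it.
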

Consider the sequence of distillation channels $\{\mathcal{E}_n\}_n$ that outputs $\sigma_n:=\mathcal{E}_n(\rho^{\otimes n})$ for each $n$. Let 
$$\epsilon_n:= 1- \braket{\phi|\sigma_n|\phi}$$ be
 the infidelity of this state with the desired target state $\ket{\phi}$. 
 For generality, we are \textit{not} assuming $\rho$ are coherent thermal states, \textit{nor} that $\ket{\phi}$ are coherent states at this point. 
 
 Now we define $\ket{\psi_n}$ to be the eigenvector of $\sigma_n$ with the maximum eigenvalue. Then, Lemma \ref{variance lowerbound lemma} implies that
$$1-|\braket{\phi|\psi_n}|^2\leq 2\epsilon_n.$$
Furthermore, Eq.(\ref{1ododo}) implies that for all $n$,
\begin{align}
    P_H(\sigma_n) \geq V_H(\ketbra{\psi_n}{\psi_n})\times \Big[\frac{(1-\epsilon_n)^2}{\epsilon_n} -1\Big].
\end{align}
Because the purity of coherence is a monotone and is additive, we see that $P_H(\sigma_n)\leq P_H(\rho^{\otimes n}) = n \times P_H(\rho)$. Thus,
\begin{align}
    n \geq \frac{V_H(\ketbra{\psi_n}{\psi_n})}{P_H(\rho)}\times \Big[\frac{(1-\epsilon_n)^2}{\epsilon_n} -1\Big].
\end{align}
Multiplying both sides with $\epsilon_n$, we find
\begin{align}\label{eqn: POC middle step}
     n \times \epsilon_n \geq  \frac{V_H(\ketbra{\psi_n}{\psi_n})}{P_H(\rho)} \times \epsilon_n \Big[\frac{(1-\epsilon_n)^2}{\epsilon_n} -1\Big]
\end{align}
In general, when  $\epsilon_n \rightarrow 0$ and consequently $1-|\braket{\phi|\psi_n}|^2 \rightarrow 0$, the limit of variance of $|\psi_n\rangle$, even if it exists, can be different from the variance of $|\phi\rangle$.  That is, two states being close in fidelity does \textit{not} necessarily imply that their variance is close. To ensure such an implication, we require that the first and second moments of $\ket{\phi}$ are finite. In the following subsection, we prove that
\begin{align}\label{varaince bound}
  \liminf_{n \rightarrow \infty} V_H(\psi^{(n)}) \geq {V}_H(\phi) \ .
\end{align}
Substituting this into the LHS of Eq.(\ref{eqn: POC middle step}) after taking the limit of $n \rightarrow \infty$ yields
\begin{align}
    \liminf_{n \rightarrow \infty} n \times \epsilon_n \geq \frac{V_H(\ketbra{\phi}{\phi})}{P_H(\rho)} .
\end{align}
Note that the first and second moments are clearly finite for coherent states because their energy spectrum is a Poisson distribution.

\subsection{Proof of Eq.(\ref{varaince bound})}
For brevity, we denote the probability distributions $\phi_E:=|\braket{\phi|E}|^2$ and $\psi^{(n)}_E:= |\braket{\psi_n|E}|^2$ from here on. We start with our assumption that the mean and variance of $\ket{\phi}$, i.e.,  $\sum_{E=0}^\infty E^k \phi_E$ exist, and are finite for $k=1,2$. Furthermore, we know that generally for all pairs of probability distributions, the total variation distance and the fidelity are related as
\begin{align}
    \sum_{E=0}^\infty |\phi_E - \psi^{(n)}_E| \leq \sqrt{1-|\braket{\phi|\psi_n}|^2}.
\end{align}
Thus, 
\begin{align}\label{TVD convergence}
    \sum_{E=0}^\infty |\phi_E - \psi^{(n)}_E| \leq \sqrt{2\epsilon_n}.
\end{align}
So from Eq.(\ref{TVD convergence}), it is clear that as $n\rightarrow \infty$, $\epsilon_n \rightarrow 0$, and consequently the entries $\psi^{(n)}_E \rightarrow \phi_E$ for all $E$. Thus the distributions converge point-wise, which we indicate by $\psi^{(n)} \xrightarrow{p.w.} \phi$.

{It can be easily seen that if the support of  $\phi$ and $\psi^{(n)}$  are bounded  for all $n$, then Eq.(\ref{varaince bound}) holds. Then, to prove that it also holds for distributions with unbounded support, we truncate the distribution by applying a functional $f_M$ to the distributions, which makes them restricted to $E\le M$.
In particular, consider a functional $f_M$ of a distribution $\psi$ over $E=0,1,2...$, that outputs a `truncated' distribution as described in Appendix \ref{appendix fm}. As we show in an example in the next section, we can choose this functional satisfying the following properties:}
\begin{enumerate}
    \item $f_M(\psi)$ has a bounded support for any $\psi$ and $M$, specifically ${f_M(\psi)}_E=0$ for all $E> M$. 
    \item The variance reduces upon such truncation: $V_H(f_M(\psi))\leq V_H(\psi)$. 
    \item $\lim_{M \rightarrow \infty}V_H(f_M(\psi)) = V_H(\psi)$.
\end{enumerate}
We check these for an example of such a {truncation functional in }Appendix \ref{appendix fm}. Using this truncation functional, we see that for any $M\geq 0$,
\bes\label{V 1}
\begin{align}
    V_H(\psi^{(n)}) - V_H(\phi) = & \ \ \ \ V_H(\psi^{(n)}) - V_H(f_M(\psi^{(n)}))\\
                                  & + V_H(f_M(\psi^{(n)})) - V_H(f_M(\phi))\\
                                  & + V_H(f_M(\phi)) - V_H(\phi) \ .
\end{align}
\ees
The first line in the right-hand side of this expression is lowerbounded by zero because the assumed truncation does not increase the variance. Thus,
\bes\label{V 2}
\begin{align}
    V_H(\psi^{(n)}) - V_H(\phi) \geq & \  V_H(f_M(\psi^{(n)})) - V_H(f_M(\phi))\\
                                  & + V_H(f_M(\phi)) - V_H(\phi) \ .
\end{align}
\ees
Next we take the limit $\lim_{M \rightarrow \infty}\  \liminf_{n \rightarrow \infty}$ on both sides. Observe that the first line in the right-hand side goes to zero in the limit $n \rightarrow \infty$ because $f_M(\psi^{(n)})$ and $f_M(\phi)$ have bounded support, and the former sequence converges point-wise to the latter target. And then the second line goes to zero in the limit $M \rightarrow \infty$ because
$$ \lim_{M \rightarrow \infty} V_H(f_M(\phi))= V_H(\phi)\ .$$ 
Finally, the left-hand side of Eq.(\ref{V 2}) has no $M$ dependence. Therefore, in the $\lim_{M \rightarrow \infty}\  \liminf_{n \rightarrow \infty}$ limit it simplifies to
\begin{align}
   \liminf_{n \rightarrow \infty} V_H(\psi^{(n)}) - V_H(\phi) \geq 0\ ,
\end{align}
which is Eq.(\ref{varaince bound}).

\subsection{The truncation functional $f_M$}\label{appendix fm}
Let us define the truncation functional $f_M$ as 
\begin{align}
    {f_M(\psi)}_E=
    \begin{cases}
    \psi_E &:E<M\\
             \sum_{E\geq M} \psi_E &:E=M\\
            0 &:E>M
    \end{cases}
\end{align}
where ${f_M(\psi)}_E$ indicates the probability of $f_M(\psi)$ at $E$.  Clearly, \textbf{Property 1} is satisfied by construction.

\textbf{Property 3} -- {First, we note that this property can be easily shown using Markov's inequality, provided that one assumes that the $2+\epsilon$ moment is finite, for some $\epsilon>0$ (in general, if $k+\epsilon$ moment is finite, then by Markov's inequality, we have $\lim_{M\rightarrow \infty} M^k \times \sum_{E\ge M}\phi_{E}=0$. This immediately implies property 3). However, as we explain in the following, the same conclusion can be reached by assuming only the second moment, and not necessarily moment $2+\epsilon$, is finite.}

To show {Property 3}, it suffices to show that the first and second moments of $f_M(\phi)$ approach $\phi$ as $M \rightarrow \infty$. For brevity, let us denote the entries of $f_M(\phi)$ as $\phi_E^{(M)}:={f_M(\phi)}_E$, where ${f_M(\phi)}_E$ indicates the probability of $f_M(\phi)$ at $E$. So, we wish to show,
\begin{align}\label{k moment diffs}
    \lim_{M \rightarrow \infty} \Bigg(\sum_{E=0}^\infty \phi_E E^{k} - \sum_{E=0}^\infty \phi_E^{(M)} E^{k}\Bigg) = 0 : k=1,2\ .
\end{align}
The above expression before the limit can be rewritten as
\bes
\begin{align}
    \sum_{E=0}^\infty \phi_E E^{k} - \sum_{E=0}^\infty \phi_E^{(M)} E^{k} &= \sum_{E=M}^\infty \phi_E (E^k -M^k)\\
    &= \sum_{E=0}^\infty \phi_E \times (E^k - M^k) \ \delta(E \geq M)\\
    & = \sum_{E=0}^\infty \phi_E \times g_M (E) \ ,
\end{align}
\ees
where we defined $$g_M(E)\equiv (E^k - M^k) \ \delta(E \geq M): k = 1,2$$
such that
\begin{align}
    \delta(E \geq M) =\begin{cases}
        &1 : E\geq M\ ,\\
        &0 : E<M\ .
    \end{cases}\ 
\end{align}
So, we shall equivalently prove 
\begin{align}\label{g_M 0}
    \lim_{M \rightarrow \infty} \sum_{E=0}^\infty \phi_E\times g_M(E) =0 \ .
\end{align}
From its definition, it is clear that $g_M(E)$ \textit{point-wise} converges to the zero function $\mathbf{0}(E)=0, \ \forall E$. Furthermore, observe that,
\bes
\begin{align}
     &g_M(E) \leq E^k: k =1,2\ ,\ \ \ \ \forall M, \forall E \ , \label{g_M 1} \\
     &\sum_{E=0}^\infty \phi_E \times E^k < \infty \ : k =1,2. \label{g_M 2}
\end{align}
\ees
In other words, the sequence of functions $g_M$ indexed by $M$ is dominated by $E^k$ (Eq.(\ref{g_M 1})), and $E^k$ is an integrable function because the first and second moments of $\phi_E$ are assumed to be finite (Eq.(\ref{g_M 2})). Thus, we can apply \textit{Dominated Convergence Theorem} \cite{evans2018measure} to switch the order of the limit and summation in Eq.(\ref{g_M 0}) and evaluate,
\bes
\begin{align}
    \lim_{M \rightarrow \infty} \sum_{E=0}^\infty \phi_E\times g_M(E) &= \sum_{E=0}^\infty \phi_E\times \lim_{M \rightarrow \infty} g_M(E) \\
    &= \sum_{E=0}^\infty \phi_E\times \mathbf{0}(E) \\
    & = 0 \ .
\end{align}
\ees
This proves Eq.(\ref{g_M 0}), which shows \textbf{Property 3} is satisfied.

\textbf{Property 2 -- }What remains to be shown is that such truncation reduces the variance. This can be easily shown using the convexity of variance. 

To see this, we first consider a modification of the aforementioned truncation, 
\begin{align}\label{fx psi}
    {f^{(x)}_M(\psi)}_E=
    \begin{cases}
    \psi_E &:E<M\\
             \sum_{E\geq M} \psi_E &:E=x\\
            0 &: \text{otherwise} \ .
    \end{cases}
\end{align}
where ${f^{(x)}_M(\psi)}_E$ indicates the probability of $f^{(x)}_M(\psi)$ at $E$. In other words, we place all the probability weight after the cutoff $M$ at the point $E=x: x\geq M$. Now note that for all $E$, the entry of the distribution $\psi$ at $E$ can be decomposed as
\begin{align}
    \psi_E = \sum_{x=M}^\infty \lambda_x \times {f^{(x)}_M(\psi)}_E 
\end{align}
where $\lambda_x = {\psi_x}/{\sum_{E\geq M} \psi_E}$. In other words, we have decomposed $\psi$ into a convex mixture of $f^{(x)}_M(\psi)$ with weights $\lambda_x$. In the next subsection, using the convexity of the variance, we show that
\begin{align}\label{x exists}
    \exists\  x \geq M \ \ \text{such that } \  {\rm Var}(\psi) \geq {\rm Var}(f_M^{(x)}(\psi))\ .
\end{align}

Now we show that this inequality holds true {in particular for} $x=M$. To see this, we evaluate the variance 
\begin{align}
    {\rm Var}(f_M^{(x)}(\psi)) &= \sum_{E=0}^{M-1} \psi_E E^2 + \delta_M x^2 - (\sum_{E=0}^{M-1} \psi_E E + \delta_M x)^2 \\
    & = \sum_{E=0}^{M-1} \psi_E E^2 - \tilde{\mu}^2 + (\delta_M - \delta_M^2) x^2 - 2 \tilde{\mu} \delta_M x
\end{align}
where we defined $\delta_M=\sum_{E\geq M} \psi_E$, and $\tilde{\mu}=\sum_{E=0}^{M-1} \psi_E E$ for brevity. This quantity is a quadratic in $x$ that finds its minimum at 
$$x_\text{min} = \frac{\tilde{\mu}}{1-\delta_M}\ .$$
Noting that $\tilde{\mu}=\sum_{E=0}^{M-1} \psi_E  E \leq M (1-\delta_M)$, we see that $x_\text{min} \leq M$. Thus, shifting the $x$ in Eq.(\ref{x exists}) down to $x=M$ will only further reduce the variance because the expression is a quadratic in $x$. Thus, 
\begin{align}
    {\rm Var}(\psi) \geq {\rm Var}(f_M^{(x)}(\psi)) \geq {\rm Var}(f_M^{(M)}(\psi)) = {\rm Var}(f_M(\psi))\ ,
\end{align}
{where, the first inequality follows from Eq.(\ref{x exists}), the second inequality follows from argument we just made, and the last equality follows because $f_M^{(M)}(\psi)=f_M(\psi)$ for any $\psi$ by construction.}

\subsubsection{Proof of Eq.(\ref{x exists})}
In general, consider a random variable $Q$ with distribution $q$ that is a convex mixture of random variable $P_i$ with distributions $p_i$, that is,
$$q=\sum_i \lambda_i p_i$$
where $\sum_i \lambda_i=1$. Now note that for any random variable, $\mathbb{E}[(Q-\mu)^2]= \mathbb{E}[Q^2] - \mu^2$, where $\mu:= \mathbb{E}[Q]$ is its mean. Therefore, 
\begin{align}
    \mathbb{E}[(Q-\mu)^2]= \mathbb{E}[Q^2] - \mu^2 = \sum_i \lambda_i \mathbb{E}[P_i^2] - \mu^2 = \sum_i \lambda_i (\sigma_i^2 + \mu_i^2) - \mu^2 \geq \sum_i \lambda_i \sigma_i^2
\end{align}
where $\mu$ and $\sigma^2$ are the mean and variance of $q$, $\mu_i$ and $\sigma_i^2$ are the variance of $p_i$. In the last inequality, we note that 
$$\sum_i \lambda_i \mu_i^2 \geq  (\sum_i \lambda_i \mu_i)^2 = \mu^2 \ ,$$
which follows from the convexity of the function $t^2$. Applying this to Eq.(\ref{fx psi}) yields
\begin{align}
    {\rm Var}(\psi) \geq \sum_{x=M}^\infty \lambda_x {\rm Var}(f_M^{(x)}(\psi)) \ .
\end{align}
Therefore,
\begin{align}
    \exists\ x \geq M \ \ \text{such that } \  {\rm Var}(\psi) \geq {\rm Var}(f_M^{(x)}(\psi))\ .
\end{align}

\newpage

\section{Measure-and-prepare distillation protocols}\label{appendix: measure prepare}
In this Appendix, we consider measure-and-prepare protocols where we assume knowledge of the magnitude of the displacement $\alpha$ (i.e., $|\alpha|$), and estimate its phase using $n$ copies of the coherent thermal state. As mentioned in the main body, we can begin our protocol by first utilizing the concentration channel to combine $n$ copies of $\rho(\beta, \alpha)^{\otimes n}$ to produce $\rho(\beta, \sqrt{n} \alpha)$ without affecting the overall fidelity (see Sec.~\ref{subsec: concentration dilution reversibility}). Then, in the regime of large $n$, i.e., in the high amplitude regime, the protocol proceeds by estimating the phase of $\alpha$ by measuring the state $\rho(\beta, \sqrt{n} \alpha)$, and then preparing $\ket{\alpha}$.

\subsection{Optimal Distillation using Canonical Phase Measurement}\label{appendix: optimal MP canonical}
In this subsection, we derive the fidelity of a measure-and-prepare protocol that uses the well-known canonical phase measurement \cite{holevo2011probabilistic} to estimate the phase. The canonical phase measurement is characterized by the POVM elements $M(\phi)=\sum_{m,n} M_{m,n}(\phi) \ketbra{m}{n}$ have matrix elements
\be\label{canonical phase msmt POVM}
M_{mn}(\phi)=\frac{e^{\mathrm{i}\phi (m-n)}d\phi}{2\pi}\ : \phi\in[0,2\pi)\ .
\ee
Then the corresponding measure-and-prepare channel is
\be\label{K_n canon}
\mathcal{K}_n(\cdot)=\int \frac{d\phi}{2\pi}\  \Tr(M(\phi)\ \cdot\ )\  |e^{\mathrm{i}\phi}|\alpha| \rangle\langle e^{\mathrm{i}\phi}|\alpha||\ .
\ee
To see that this channel is phase-insensitive, we show that the channel remains unchanged upon time-evolving the single-mode input and output. That is, 
\begin{align}
    \mathcal{K}_n (e^{-\mathrm{i}Ht} \rho e^{\mathrm{i}Ht})  = e^{-\mathrm{i}Ht} \mathcal{K}_n (\rho) e^{\mathrm{i}Ht}
\end{align}
for all $t \in \mathbb{R}$. We first evaluate the left hand side,
\begin{align}
    \mathcal{K}_n (e^{-\mathrm{i}Ht} \rho e^{\mathrm{i}Ht})&=\int \frac{d\phi}{2\pi}\  \Tr(M(\phi)\ e^{-\mathrm{i}Ht} \rho e^{\mathrm{i}Ht} )\  |e^{\mathrm{i}\phi}|\alpha| \rangle\langle e^{\mathrm{i}\phi}|\alpha||\ \\
    &= \int \frac{d\phi}{2\pi}\  \Tr( e^{\mathrm{i}Ht} M(\phi)\ e^{-\mathrm{i}Ht} \rho)\  |e^{\mathrm{i}\phi}|\alpha| \rangle\langle e^{\mathrm{i}\phi}|\alpha||\ \\
    &=\int \frac{d\phi}{2\pi}\  \Tr( M(\phi-\omega t)\ \rho)\  |e^{\mathrm{i}\phi}|\alpha| \rangle\langle e^{\mathrm{i}\phi}|\alpha||\ \\
    &=\int \frac{d\phi}{2\pi}\  \Tr( M(\phi)\ \rho)\  |e^{\mathrm{i}(\phi+ \omega t)}|\alpha| \rangle\langle e^{\mathrm{i}(\phi+ \omega t)}|\alpha||\
\end{align} 
where we used in the third line that the POVM defined in Eq.(\ref{canonical phase msmt POVM}) satisfies $e^{\mathrm{i}Ht} M(\phi) e^{-\mathrm{i}Ht} = M(\phi-\omega t)$ when $H=\omega a^\dagger a$. In the last line, we simply changed variables $\phi \mapsto \phi + \omega t$. The right hand side evaluates to the same expression, 
\begin{align}
    e^{-\mathrm{i}Ht} \mathcal{K}_n(\rho) e^{\mathrm{i}Ht} &=  \int \frac{d\phi}{2\pi}\  \Tr(M(\phi)\ \rho )\ \  e^{-\mathrm{i}Ht} |e^{\mathrm{i}\phi}|\alpha| \rangle\langle e^{\mathrm{i}\phi}|\alpha|| e^{\mathrm{i}Ht}\\
    &= \int \frac{d\phi}{2\pi}\  \Tr(M(\phi)\ \rho )\ \  |e^{\mathrm{i}(\phi+\omega t)}|\alpha| \rangle\langle e^{\mathrm{i}(\phi+\omega t)}|\alpha|| 
\end{align}
because $e^{-\mathrm{i}Ht} \ket{|\alpha|e^{\mathrm{i} \phi}} = \ket{|\alpha|e^{\mathrm{i} (\phi+\omega t)}}$ when $H=\omega a^\dagger a$. Thus, the channel $\mathcal{K}_n$ is phase-insensitive.\\

Given the infidelity to the desired output state $\ket{\alpha}$,
\begin{align}
    \epsilon^{\text{can}} = 1 - \braket{\alpha|\mathcal{K}_n(\rho(\beta, \sqrt{n} \alpha))|\alpha} \ ,
\end{align}
we can achieve the optimal distillation error possible among all measure-and-prepare protocols (see discussion in Sec.(\ref{section: universal bounds proof})), i.e.,
\begin{align}\label{optimal statement MP}
    \lim_{n \rightarrow \infty} n \times \epsilon^{\text{can}} = \delta^{\text{opt-MP}}(\beta) = \frac{n_\text{th}(\beta)}{2} + \frac{1}{4}. 
\end{align}
To prove this, we first recall that for all measure-and-prepare protocols (including the channel $\mathcal{K}_n$ defined in Eq.(\ref{K_n canon})),  $$\limsup_{n \rightarrow \infty} n \times \epsilon^{\text{MP}}\geq \frac{n_\text{th}(\beta)}{2} + \frac{1}{4}.$$ 
Then Eq.(\ref{optimal statement MP}) follows from the statement
\begin{align}\label{canonical phase mst statement}
    \lim_{n \rightarrow \infty} n \times \epsilon^{\text{can}}\leq \frac{n_\text{th}(\beta)}{2} + \frac{1}{4}.
\end{align}
We prove the statement in Eq.(\ref{canonical phase mst statement}) in the subsequent section using Lemma \ref{lemma: divide and distill}.

\subsection*{Infidelity Computation}
Recall that we have concentrated all the coherence in $n$ input modes into a single mode via passive operations, i.e., $\rho(\beta, |\alpha|)^{\otimes n} \leftrightarrow \rho(\beta, \sqrt{n} |\alpha|)\equiv \rho$. Furthermore, because the channel is phase-insensitive by construction, we have assumed that phase of the input state is zero without loss of generality. Then, the output state is
\begin{align}\label{output Kn MP}
    \mathcal{K}_n(\rho) = \int \frac{d\phi}{2 \pi}\ p(\phi|0)\ \ketbra{e^{\mathrm{i} \phi}|\alpha|}{e^{\mathrm{i} \phi}|\alpha|}
\end{align}
where $p(\phi|0)$ is the probability that our canonical phase measurement estimates phase $\phi$ given the input phase $0$. Using the POVM elements of this measurement defined in Eq.(\ref{canonical phase msmt POVM}), 
\begin{align}\label{p(phi|0)}
    p(\phi|0)
    =\Tr(M(\phi) \rho)= \sum_{m, n} \rho_{m,n} e^{\mathrm{i}\phi(m-n)}.
\end{align} 
The infidelity of this output state with the desired output state $\ket{|\alpha|}$ is $1-\braket{|\alpha||\mathcal{K}_n(\rho)||\alpha|}$. Using the first line of Lemma \ref{lemma: divide and distill}, we see that
\begin{align}
    1-\braket{|\alpha||\mathcal{K}_n(\rho)||\alpha|} \leq \Tr(\mathcal{K}_n(\rho) a^\dagger a) - |\Tr(\mathcal{K}_n(\rho) a)|^2 + \big||\alpha| - \Tr(\mathcal{K}_n(\rho) a)\big|^2.
\end{align}
At the end of this section, we show that
\begin{align}
    &\Tr(\mathcal{K}_n(\rho) a)= |\alpha| \times \sum_m |\rho_{m,m+1}| \label{MP moment 1}\\
    &\Tr(\mathcal{K}_n(\rho) a^\dagger a)= |\alpha|^2\label{MP moment 2}
\end{align}
If we denote $\mathcal{F}_1=\sum_m |\rho_{m,m+1}|$ as done in \cite{PhaseEstimation12}, we see that
\begin{align}
    1-\braket{|\alpha||\mathcal{K}_n(\rho)||\alpha|} &\leq \Tr(\mathcal{K}_n(\rho) a^\dagger a) - |\Tr(\mathcal{K}_n(\rho) a)|^2 + \big||\alpha| - \Tr(\mathcal{K}_n(\rho) a)\big|^2\\
    &= |\alpha|^2 - |\alpha|^2 \mathcal{F}_1^2 + |\alpha|^2 [1-\mathcal{F}_1]^2\\
    &= 2 |\alpha|^2 [1-\mathcal{F}_1].\label{canon eqn mid}
\end{align}
Now we recall that $\rho\equiv \rho(\beta, \sqrt{n} |\alpha|)$ and that we are interested in the regime where $n\gg1$. Thus, $\rho$ is a coherent thermal state in the large amplitude regime. In Eq.(21) of \cite{PhaseEstimation12} it is shown that
in this large amplitude regime,
\begin{align}\label{eq: F1}
    \mathcal{F}_1 = 1 -\frac{2n_\text{th}(\beta) +1}{8n |\alpha|^2} + \mathcal{O}\Big(\frac{1}{n^2}\Big)\ ,
\end{align}
where we noted that the displacement of the input state is $\sqrt{n}|\alpha|$.\footnote{We can reproduce the result in Eq.(\ref{eq: F1}) from our formulae in Lemma \ref{eqn: matrix elements final} by evaluating the equivalent form $\sum_m |\rho_{m,m+1}| = \sum_m \rho_{m,m} {|c^\text{opt}_{m+1}|^{-1}}$.} Substituting this into Eq.(\ref{canon eqn mid}) yields
\begin{align}
    1-\braket{|\alpha||\mathcal{K}_n(\rho)||\alpha|} \leq \frac{2n_\text{th}(\beta) +1}{4n} + \mathcal{O}\Big(\frac{1}{n^2}\Big).
\end{align}
Eq.(\ref{canonical phase mst statement}) follows straightforwardly.

\subsubsection*{First Moment Computation: Eq.(\ref{MP moment 1})}
From the output state defined in Eq.(\ref{output Kn MP}),
\begin{align}
    \Tr(\mathcal{K}_n(\rho) a) = \Tr\Big[\int \frac{d\phi}{2 \pi} p(\phi|0)\ a \ketbra{e^{\mathrm{i} \phi}|\alpha|}{e^{\mathrm{i} \phi}|\alpha|}\Big] = |\alpha| \int \frac{d\phi}{2 \pi} p(\phi|0)\ e^{\mathrm{i} \phi}
\end{align}
By substituting the explicit expression of $p(\phi|0)$ seen in Eq.(\ref{p(phi|0)}), we get 
\begin{align}
    =|\alpha| \int \frac{d\phi}{2 \pi} \ \sum_{m, n} \rho_{m,n} e^{\mathrm{i}\phi(m-n)}  e^{\mathrm{i} \phi} = |\alpha| \sum_m \rho_{m,m+1} = |\alpha| \times \sum_m |\rho_{m,m+1}|
\end{align}
where we exchanged the order of the integral and summation in the second equality and simplified noting that $m$ and $n$ are integers. In the last equality we noted that all $\rho_{m,m+1}$ are non-negative and real when the displacement is $|\alpha|$, i.e., is also non-negative and real (this is obvious, e.g., from Eq.(\ref{eq: rho_ml final})).

\subsubsection*{Second Moment Computation: Eq.(\ref{MP moment 2})}
Just as above, from the output state defined in Eq.(\ref{output Kn MP}), we see that
\begin{align}
    \Tr(\mathcal{K}_n(\rho) a^\dagger a) = \Tr\Big[\int \frac{d\phi}{2 \pi} p(\phi|0)\ a \ketbra{e^{\mathrm{i} \phi}|\alpha|}{e^{\mathrm{i} \phi}|\alpha|} a^\dagger \Big] = |\alpha|^2 \int \frac{d\phi}{2 \pi} p(\phi|0) = |\alpha|^2.
\end{align}

\subsection{Distillation using Heterodyne detection}\label{appendix: heterodyne measure prepare}

We focus on distillation protocols based on Heterodyne detection with POVM elements $\frac{d^2\alpha'}{\pi}  |\alpha'\rangle\langle\alpha'|: \alpha'\in\mathbb{C}$, which satisfy
\be
\int \frac{d^2\alpha'}{\pi}  |\alpha'\rangle\langle\alpha'|=\mathbb{I}\ .
\ee
Upon performing this measurement on the state $\rho(\beta, \sqrt{n}\alpha)$ in the limit $n\rightarrow\infty$, the outcome of this measurement will be concentrated around $\alpha'\approx \sqrt{n} \alpha$. Then, we prepare the state $|\alpha'/\sqrt{n}\rangle$. This means the overall phase-insensitive channel is
\be
\mathcal{E}_n(\cdot)=\int \frac{d^2\alpha'}{\pi} \ \  |\frac{\alpha'}{\sqrt{n}}\rangle\langle{\alpha'}|(\cdot) |{\alpha'}\rangle\langle\frac{\alpha'}{\sqrt{n}}|.
\ee
The fidelity achieves
\begin{align}\label{protocol 1}
&\langle\alpha|\mathcal{E}_n(\rho(\beta,\sqrt{n}\alpha))|\alpha\rangle \nonumber\\
&=\int \frac{d^2\alpha'}{\pi} \ \  |\langle\alpha|\frac{\alpha'}{\sqrt{n}}\rangle|^2\langle{\alpha'}|\rho(\beta,\sqrt{n}\alpha)|{\alpha'}\rangle\ \\
&=1-\frac{n_{\text{th}}(\beta)+1}{n} + \mathcal{O}\Big(\frac{1}{n^2}\Big)
\end{align}
We compute this expression below.
\subsubsection{Evaluation of Protocol}
We start with writing out the integral explicitly.
\begin{align}
\langle\alpha|\mathcal{E}_n(\rho(\beta,\sqrt{n}\alpha))|\alpha\rangle&=\int \frac{d^2\alpha'}{\pi} \ \  |\langle\alpha|\frac{\alpha'}{\sqrt{n}}\rangle|^2\langle{\alpha'}|\rho(\beta,\sqrt{n}\alpha)|{\alpha'}\rangle\ \\
    &= \int \frac{d^2\alpha'}{\pi} \ \  |\langle\alpha|\frac{\alpha'}{\sqrt{n}}\rangle|^2\langle{\alpha'-\sqrt{n}\alpha}|\rho(\beta)|{\alpha' -\sqrt{n}\alpha}\rangle\
\end{align}
We consider each term. Observe that,
\begin{align}
    &\langle{\alpha'}|\rho(\beta,\sqrt{n}\alpha)|{\alpha'}\rangle\ \\
    &=\langle{\alpha'-\sqrt{n}\alpha}|\rho(\beta)|{\alpha' -\sqrt{n}\alpha}\rangle\ \\
    &= (1-e^{-\beta \omega})\sum_{s=0} e^{-\beta \omega s} |\langle{\alpha'-\sqrt{n}\alpha}|s\rangle|^2\\
    &= (1-e^{-\beta \omega})\times e^{-|\alpha'-\sqrt{n}\alpha|^2}\sum_{s=0} e^{-\beta \omega s} \frac{1}{s!} |\alpha'-\sqrt{n}\alpha|^{2s}\\
    &= (1-e^{-\beta \omega}) \exp[-(1-e^{-\beta \omega})  |\alpha'-\sqrt{n}\alpha|^{2}]
\end{align}
and, 
\begin{align}
&|\langle\alpha|\frac{\alpha'}{\sqrt{n}}\rangle|^2= \exp[-|\alpha-\frac{\alpha'}{\sqrt{n}}|^2]=\exp[-\frac{1}{n}|\sqrt{n}\alpha-{\alpha'}|^2].
\end{align}
So, the fidelity simplifies to,
\begin{align}
    &\int \frac{d^2\alpha'}{\pi} \ \  |\langle\alpha|\frac{\alpha'}{\sqrt{n}}\rangle|^2\langle{\alpha'-\sqrt{n}\alpha}|\rho(\beta)|{\alpha' -\sqrt{n}\alpha}\rangle\ \\
    &= \int \frac{d^2\alpha'}{\pi} \times \exp[-\frac{1}{n}|\sqrt{n}\alpha-{\alpha'}|^2] \times (1-e^{-\beta \omega}) \exp[-(1-e^{-\beta \omega})  |\alpha'-\sqrt{n}\alpha|^{2}]\\
    &= (1-e^{-\beta \omega}) \int \frac{d^2\alpha'}{\pi} \exp[-\Big(1-e^{-\beta \omega} + \frac{1}{n}\Big) \times |\alpha'-\sqrt{n}\alpha|^{2}]
\end{align}
Now we can perform a simple linear change of variables $y=\sqrt{1-e^{-\beta \omega} + \frac{1}{n}}\times [\alpha'-\sqrt{n}\alpha]$. {This yields,}
\begin{align}
    &= (1-e^{-\beta \omega}) \frac{1}{\Big(1-e^{-\beta \omega} + \frac{1}{n}\Big)} \int \frac{d^2 y}{\pi} \exp[-|y|^{2}]\\
    &= \frac{1-e^{-\beta \omega}}{1-e^{-\beta \omega} + \frac{1}{n}}\\ 
    &=\frac{1}{1+(n_{\text{th}}(\beta)+1)/n}=1-\frac{n_{\text{th}}(\beta)+1}{n}+ \mathcal{O}\Big(\frac{1}{n^2}\Big)
\end{align}
\newpage

\section{Matrix elements of  coherent thermal states}

In this section, we use useful formulae for the matrix elements of coherent thermal states to estimate the atypical contribution to the moments of its energy distribution (in the Fock-basis), $\rho_{l,l}$. We do so by computing the moment-generating function for $\rho_{l,l}$, and then use standard Chernoff bounds to bound the weight of the tails of the distribution. 

We first review the formulae for the matrix elements that appeared previously in literature, e.g., see Eq.(6.13-6.14) in \cite{GlauberMatrixElem}, and Eq.(5.17) in \cite{MarianMatrixElem}. We shall restate these formulae in the following Lemma in a form that is useful for our purposes.

\begin{lemma} [\cite{GlauberMatrixElem, MarianMatrixElem}]
    Consider the coherent thermal state $\rho(\beta,\alpha)=D(\alpha) e^{-\beta a^\dag a} D(\alpha)^\dag/\Tr(e^{-\beta a^\dag a} )$. For $0\leq m\leq l$, its matrix elements in the Fock basis $\braket{m|\rho(\beta, \alpha)|l}\equiv\rho_{m,l}$ can be evaluated as
    
\begin{align}
    \rho_{m,l}&=\frac{e^{-\frac{|\alpha|^2}{n_\beta+1}}}{n_\beta+1} \times \sqrt{\frac{{{m}!}}{{l!}}} \frac{n_\beta^m}{(n_\beta+1)^{l}} (\alpha^*)^{l-m} \times  L^{(l-m)}_m\Big(-\frac{|\alpha|^2}{n_\beta(n_\beta+1)}\Big)\ \ \ \ \  &: 0\leq m\leq l \\
    &=e^{-\frac{|\alpha|^2}{n_\beta+1}}  {\frac{1}{(n_\beta+1)^{m+l+1}}} \times \frac{\alpha^{m} \alpha^{*l}}{\sqrt{{m}!}\sqrt{l!}} \times \sum^{m}_{k=0} {k! {m \choose k}{l \choose k}} \Big[\frac{n_\beta(n_\beta+1)}{|\alpha|^2}\Big]^{k}\ \ \ \ \  &: 0\leq m\leq l \label{eq: rho_ml final} 
\end{align}
where $L_n^{(k)}$ are the generalized Laguerre polynomials defined as
\begin{align}
    L^{(k)}_n(x) = \sum_{i=0}^n (-1)^i {n + k \choose n - i} \frac{(x)^i}{i!}\ .
\end{align}
The diagonal elements can be further simplified to 
\begin{align} \label{rho_ll final formula}
    \rho_{l,l}=\frac{e^{-\frac{|\alpha|^2}{n_\beta+1}}}{n_\beta+1} \bigg(\frac{n_\beta}{n_\beta+1}\bigg)^l \times {L}_l \bigg(-\frac{|\alpha|^2}{n_\beta(n_\beta+1)}\bigg)
\end{align}
where ${L}_l$ is the $l^\text{th}$ degree Laguerre polynomial \cite{SpecialFunctionsTextbook} defined as {$L^{(0)}_l$, or equivalently}
\begin{align}
    L_n(|x|^2) = \sum_{i=0}^n (-1)^i {n\choose n - i} \frac{(|x|^2)^i}{i!} =  e^{|x|^2/2} \times \braket{n|D(x)|n}
\end{align}
where $D(x)$ is the Weyl displacement operator.
\end{lemma}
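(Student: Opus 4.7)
The cleanest route to this lemma is via the Glauber–Sudarshan $P$-representation of the thermal state, combined with a direct evaluation of a Gaussian integral with polynomial insertions. My plan is to start from
\[
\rho(\beta,\alpha) = D(\alpha)\rho_\text{th}(\beta)D^\dag(\alpha)
= \int \frac{d^2\gamma}{\pi\, n_\beta}\, e^{-|\gamma|^2/n_\beta}\, |\alpha+\gamma\rangle\langle\alpha+\gamma|\ ,
\]
which reduces the matrix element $\rho_{m,l}=\langle m|\rho(\beta,\alpha)|l\rangle$ to
\[
\rho_{m,l} = \frac{1}{\pi\, n_\beta\,\sqrt{m!\,l!}}\int d^2\gamma\ e^{-|\gamma|^2/n_\beta}\, e^{-|\alpha+\gamma|^2}(\alpha+\gamma)^m (\alpha^*+\gamma^*)^l \ ,
\]
using the standard overlap $\langle m|\beta\rangle = e^{-|\beta|^2/2}\beta^m/\sqrt{m!}$.

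The next step is to complete the square in the Gaussian factor. Writing $\lambda := (n_\beta+1)/n_\beta$, one has $e^{-|\gamma|^2/n_\beta-|\alpha+\gamma|^2} = e^{-|\alpha|^2/(n_\beta+1)}e^{-\lambda|\mu|^2}$ after the shift $\mu=\gamma+\alpha/\lambda$, with the polynomial insertions becoming $(\mu+c)^m(\mu^*+c^*)^l$ for $c=\alpha/(n_\beta+1)$. Expanding by the binomial theorem and applying the standard Gaussian moment identity
\[
\int d^2\mu\ e^{-\lambda|\mu|^2}\mu^j(\mu^*)^k = \pi\,\delta_{jk}\,\frac{j!}{\lambda^{j+1}}
\]
yields, after collecting the powers of $n_\beta$ and $n_\beta+1$,
\[
\rho_{m,l} = \frac{e^{-|\alpha|^2/(n_\beta+1)}}{(n_\beta+1)^{m+l+1}\sqrt{m!\,l!}}\sum_{k=0}^{\min(m,l)} k!\binom{m}{k}\binom{l}{k}\, n_\beta^k(n_\beta+1)^k\,\alpha^{m-k}(\alpha^*)^{l-k}\ ,
\]
which is exactly the second stated form (specialized to $0\le m\le l$ so $\min=m$), after factoring out $\alpha^m(\alpha^*)^l$ and writing the $k$-th term as $k!\binom{m}{k}\binom{l}{k}[n_\beta(n_\beta+1)/|\alpha|^2]^k$.

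The first (Laguerre) form follows from the finite-sum form by the definition $L_m^{(l-m)}(-x)=\sum_{k=0}^m \binom{l}{m-k}\,x^k/k!$: setting $x=|\alpha|^2/[n_\beta(n_\beta+1)]$ and re-indexing $k \leftrightarrow m-k$ in the finite sum rearranges the coefficients into $L_m^{(l-m)}$ times the overall prefactor $\sqrt{m!/l!}\,n_\beta^m (\alpha^*)^{l-m}/(n_\beta+1)^{l+1}$, which is the claimed expression. The diagonal identity is then the special case $m=l$, where $L_m^{(0)}=L_m$ is the ordinary Laguerre polynomial.

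There is no serious obstacle: this is a routine calculation, and the only care required is the bookkeeping of powers of $n_\beta$ and $n_\beta+1$ after completing the square, plus the re-indexing that converts the hypergeometric-type sum into a generalized Laguerre polynomial. An alternative (and equivalent) route would insert $\rho_\text{th}(\beta)=(1-\lambda)\sum_k \lambda^k|k\rangle\langle k|$ with $\lambda=n_\beta/(n_\beta+1)$ between the two displacement operators and use the known formula $\langle m|D(\alpha)|k\rangle = \sqrt{k!/m!}\,\alpha^{m-k}e^{-|\alpha|^2/2}L_k^{(m-k)}(|\alpha|^2)$ together with a Laguerre bilinear summation identity; the $P$-representation approach above is slightly more self-contained and avoids invoking this identity.
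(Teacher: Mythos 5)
Your proof is correct, and all the bookkeeping checks out: completing the square gives the exponent $-\lambda|\mu|^2 - |\alpha|^2/(n_\beta+1)$ with $\lambda=(n_\beta+1)/n_\beta$ and shift $c=\alpha/(n_\beta+1)$; the Gaussian moment identity then produces the finite sum, and the re-indexing $k\mapsto m-k$ together with $(m-k)!\binom{m}{m-k}=m!/k!$ correctly converts it into $L_m^{(l-m)}$ with the stated prefactor. Note that the paper does not actually prove this lemma -- it simply restates Eq.~(6.13--6.14) of \cite{GlauberMatrixElem} and Eq.~(5.17) of \cite{MarianMatrixElem} in a convenient form -- so your argument supplies a self-contained derivation where the paper offers only a citation. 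Your route via the $P$-representation is also the natural one in the context of this paper, since Appendix~\ref{appendix: matrix elements large alpha} already writes $\rho_{l,l}$ as exactly this Gaussian-weighted integral of $|\braket{\alpha+\gamma|l}|^2$ before resorting to asymptotic approximations; your calculation just evaluates that integral exactly instead. The only piece you leave untouched is the final identity $L_n(|x|^2)=e^{|x|^2/2}\braket{n|D(x)|n}$, but that is presented in the lemma as a standard alternative characterization of the Laguerre polynomial rather than something requiring proof, and it plays no role in establishing the matrix-element formulae.
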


\subsection{Upperbound on $|c^{\text{opt}}_l|^2$}
Using Eq.(\ref{eq: rho_ml final}) for the matrix elements of the coherent thermal state for $\alpha$ real, we get that
\begin{align}
    \rho_{l,l}&= e^{-\frac{|\alpha|^2}{n_\beta + 1}} \frac{1}{(1+n_\beta)^{2l +1}} \frac{|\alpha|^{2l}}{l!} \times \sum_{k=0}^l {l \choose k}\frac{l!}{(l-k)!} \Big[\frac{n_\beta(n_\beta+1)}{|\alpha|^2}\Big]^k \\
    &=e^{-\frac{|\alpha|^2}{n_\beta + 1}} \frac{1}{(1+n_\beta)^{2l +1}} {|\alpha|^{2l}} \times l! \times \sum_{k=0}^l \frac{1}{k!(l-k)!}\frac{1}{(l-k)!} \Big[\frac{n_\beta(n_\beta+1)}{|\alpha|^2}\Big]^k\ ,
\end{align}
and
\begin{align}
    \rho_{l,l+1}&= e^{-\frac{|\alpha|^2}{n_\beta + 1}} \frac{1}{(1+n_\beta)^{2l +2}} \frac{|\alpha|^{2l} \alpha}{l! \sqrt{l+1}} \times \sum_{k=0}^l {l \choose k}\frac{(l+1)!}{(l+1-k)!} \Big[\frac{n_\beta(n_\beta+1)}{|\alpha|^2}\Big]^k\\
    &= e^{-\frac{|\alpha|^2}{n_\beta + 1}} \frac{1}{(1+n_\beta)^{2l +2}} {|\alpha|^{2l} \alpha}{\sqrt{l+1}} \times l! \times \sum_{k=0}^l \frac{1}{k!(l-k)!}\frac{1}{(l+1-k)!} \Big[\frac{n_\beta(n_\beta+1)}{|\alpha|^2}\Big]^k \\
    & \geq e^{-\frac{|\alpha|^2}{n_\beta + 1}} \frac{1}{(1+n_\beta)^{2l +2}} \frac{|\alpha|^{2l} \alpha}{\sqrt{l+1}} \times l! \times \sum_{k=0}^l \frac{1}{k!(l-k)!}\frac{1}{(l-k)!} \Big[\frac{n_\beta(n_\beta+1)}{|\alpha|^2}\Big]^k \ ,
\end{align}
where we used in the last inequality that $(l+1-k)!=(l-k)! (l+1-k) \leq (l-k)! (l+1)$ because $k\geq 0$. Thus,
\begin{align}
    c_{l+1}^\text{opt}:=\frac{\rho_{l,l}}{\rho_{l,l+1}}  \leq (n_\beta+1) \times \frac{1}{\alpha} \sqrt{l+1} \ .
\end{align}
 Thus, 
\begin{align}\label{c_l upperbound}
    |c_l^\text{opt}|^2-1 \leq |c_l^\text{opt}|^2 \leq (n_\beta+1)^2 \times  \frac{l}{\alpha^2}.
\end{align}

\subsection{Moment-Generating Function of $\rho_{l,l}$}\label{appendix: moment generating function}
In this subsection, we derive the moment generating function for the distribution $\rho_{l,l}$ for $l\in \{0,1,2,\cdots\}$. 

From Eq.(\ref{rho_ll final formula}), we see that 
\begin{align}
    M(t)=\sum_{l=0}^\infty \rho_{l,l} e^{t l}= \frac{e^{-\frac{|\alpha|^2}{n_\beta+1}}}{n_\beta+1} \sum_{l=0}^\infty p^l \times {L}_l(z)
\end{align}
where $z=-\frac{|\alpha|^2}{n_\beta(n_\beta+1)}$ and $p=\frac{n_\beta e^t}{n_\beta+1}$. We shall assume $\alpha$ is real and positive for convenience.
Next, recall the generating function of Laguerre polynomials \cite{SpecialFunctionsTextbook},
$$\sum_{l=0}^\infty p^l \times {L}_l(z) = \frac{1}{1-p} e^{-\frac{pz}{1-p}}.$$
Thus, 
\begin{align}\label{M(t) for rho_ll}
    M(t)= \frac{\exp\Big[\frac{\alpha^2(e^t-1)}{1-(e^t-1)n_\beta}\Big]}{1-(e^t-1)n_\beta} \ .
\end{align}
{To ensure non-negative values of $M(t)$, we require that $p = e^t n_\beta/(1+n_\beta) < 1$, or equivalently $e^t < 1+ n_\beta^{-1}$.} The first and second moments (assumed $\alpha$ is real and positive for simplicity) are
\begin{align}
    &M'(0)=\alpha^2+ n_\beta\\
    &M''(0)=\alpha^4+4 \alpha^2 n_\beta+\alpha^2+2 n_\beta^2+n_\beta \ .
\end{align}
Then the variance of the distribution is
\begin{align}
    {M''(0)-[M'(0)]^2} = 2 \alpha^2 n_\beta+\alpha^2+n_\beta^2+n_\beta\ .
\end{align}
In the large {$\alpha \gg \max\{1,n_\beta^2\}$} regime, the mean and variance are well approximated by $\alpha^2$ and $(1+2n_\beta)\alpha^2$ respectively.

\subsubsection*{Moment-Generating Function from P-distribution}
An alternate method for computing the moment-generating function is presented in \cite{BarnettMGF}. Consider a single-mode state $\rho$ with P-distribution defined as
\begin{align}
    \rho = \int_\mathbb{C} {d^2\gamma}\ P(\gamma) \ketbra{\alpha}{\alpha} \ .
\end{align}
Then, the moment-generating function for its energy (Fock-basis) distribution is mentioned in Eq.(A8) in Appendix A of \cite{BarnettMGF} as
\begin{align}
    M(\mu) = \int  {d^2\gamma}\  e^{-\mu|\gamma|^2} P(\gamma) \ ,
\end{align}
where $\mu=1-e^t$ with respect to our notation. Recall that the P-distribution of a coherent thermal state $\rho(\beta, \alpha)$ is $$P(\gamma)=\frac{1}{\pi n_\beta}e^{-\frac{|\gamma-\alpha|^2}{n_\beta}}.$$
Substituting this into the integral formula for the $M(\mu)$ and then substituting $\mu=1-e^t$ yields the same result as seen in Eq.(\ref{M(t) for rho_ll}).

\subsection{Bounding atypical contributions using Chernoff Bound}\label{chernoff bound}
In Appendix \ref{appendix: matrix elements large alpha}, we partition the support of $\rho_{l,l}$, i.e., $l \in \{0,1,2,...\}$ into a typical and atypical range. Specifically, when $\alpha \gg 1$, the typical range is
\begin{align}
    \alpha^2 - R \sigma \leq l \leq \alpha^2 + R \sigma \ ,
\end{align}
where $R \ll \alpha^{\frac13}$, and we define $\sigma=\sqrt{1+2n_\beta} \alpha$. The complement of the typical range is the atypical range. 

We have seen that the `zero-th', first and second moment of $\rho_{l,l}$ are
\bes
\begin{align}
    &\sum_{l=0}^\infty \rho_{l,l} =M(0)= 1\\
    &\sum_{l=0}^\infty \rho_{l,l} \times l =M'(0)= \alpha^2 + n_\beta\\
    &\sum_{l=0}^\infty \rho_{l,l} \times l^2 = M''(0)= \alpha^4+4 \alpha^2 n_\beta+\alpha^2+2 n_\beta^2+n_\beta
\end{align}
\ees
In this section, we show that the contribution of the atypical range to these summations are exponentially small. Specifically,
\begin{align} \label{chernoff eqns}
   \sum_{l \not\in \text{typical}} \rho_{l,l} \times l^m= \mathcal{O}\Big(e^{n_\beta + \frac12 - \sqrt{1+2n_\beta}R}\Big) \ : \ m=0,1,2 \ ,
\end{align}
where $l\in [\alpha^2 - R \sigma, \alpha^2 + R \sigma]$ is the typical range of $l$, and we have  suppressed higher order terms in $1/\alpha$ in the exponent. To show this, we use {the standard Chernoff bound} for computing contributions from the right and left tail of $\rho_{l,l}$.

{In Appendix \ref{limit comp app}, we pick $R=\alpha^{1/4}$ to compute the a limit $\alpha \rightarrow \infty$. Because $r=(l-\alpha^2)/\sigma$ where $\sigma=\sqrt{1+2n_\beta}\alpha$, it clearly follows that 
\begin{align} \label{chernoff eqns r}
   \sum_{l \not\in \text{typical}} \rho_{l,l} \times r^m= \mathcal{O}\Big(e^{n_\beta + \frac12 - \sqrt{1+2n_\beta} \alpha^{1/4}}\Big) \ : \ m=0,1,2 \ .
\end{align}
}

\subsubsection{`Zero-th' moment}
In this subsection, we consider the contribution of the right and left tails to the total probability weight.\\

\textbf{Right Tail -- }Assume $t\geq 0$. Noting that $l\geq 0$, 
\begin{align}\label{mn0}
    \sum_{l=T}^\infty \rho_{l,l} = \sum_{l=T}^\infty \rho_{l,l} \times \frac{e^{tl}}{e^{tl}}\leq e^{-tT} \times \sum_{l=0}^\infty \rho_{l,l} \times e^{t l}= e^{-t T} \times M(t) \ 
\end{align}
where $M(t)$ is the moment generating function defined in Eq.(\ref{appendix: moment generating function}). Because we require $e^t n_\beta/(1+n_\beta) < 1$, and we have that $t\geq 0$, we must choose $t$ such that
\begin{align}
    1 \leq e^t < 1 + \frac{1}{n_\beta} \ . 
\end{align}
We shall pick $t'=\log\Big[1+ \frac{1}{\alpha+ n_\beta}\Big]$. Then Eq.(\ref{M(t) for rho_ll}) simplifies to
\begin{align}
    M(t')= \frac{\alpha+n_\beta}{\alpha} \times \exp(\alpha)  \ .
\end{align}
Thus, Eq.(\ref{mn0}) becomes
\begin{align}
    \sum_{l=T}^\infty \rho_{l,l} \leq \frac{\alpha + n_\beta}{\alpha} \times \exp\Big[\alpha - \log\Big(1+ \frac{1}{\alpha+n_\beta}\Big)T\Big] \label{m0 r}\ .
\end{align}
To analyze the right tail, we set $T=\alpha^2+ R \sigma = \alpha^2+ R \sqrt{1+2n_\beta} \alpha$. Then, in the large $\alpha$ regime, the term in the exponent expanded in powers of $1/\alpha$ becomes
\begin{align}\label{exp right}
    \alpha - \log\Big(1+ \frac{1}{\alpha+n_\beta}\Big) (\alpha^2 + R \sqrt{1+2n_\beta} \alpha)=n_\beta + \frac12 - \sqrt{1+2n_\beta}R + \mathcal{O}\Big(\frac{R}{\alpha}\Big) \ .
\end{align}
Thus, ignoring the polynomial prefactor, Eq.(\ref{m0 r}) becomes
\begin{align}
    \sum_{l>\alpha^2+R \sigma}^\infty \rho_{l,l} = \mathcal{O}\Big(e^{n_\beta + \frac12 - \sqrt{1+2n_\beta}R + \mathcal{O}({R}/{\alpha})}\Big) \ . 
\end{align}

\textbf{Left Tail -- }Assume $t\leq 0$. Noting that $l\geq 0$, 
\begin{align}\label{mn0l}
    \sum_{l=0}^T \rho_{l,l} = \sum_{l=0}^T \rho_{l,l} \times \frac{e^{tl}}{e^{tl}}\leq e^{-tT} \times \sum_{l=0}^\infty \rho_{l,l} \times e^{t l}= e^{-t T} \times M(t) \ .
\end{align}
Because {$t\leq 0$}, any choice of $t$ ensures $e^t n_\beta/(1+n_\beta) < 1$. We shall pick $t'=-\log\Big[1+\frac{1}{\alpha-n_\beta-1}\Big]$. Then Eq.(\ref{M(t) for rho_ll}) simplifies to
\begin{align}
    M(t')= \frac{\alpha-n_\beta}{\alpha} \times \exp(-\alpha)  \ .
\end{align}
Thus, Eq.(\ref{mn0l}) becomes
\begin{align} \label{m0 l}
    \sum_{l=0}^T \rho_{l,l} \leq \frac{\alpha - n_\beta}{\alpha} \times \exp\Big[-\alpha + \log\Big(1+ \frac{1}{\alpha-n_\beta-1}\Big)T\Big] \ .
\end{align}
To analyze the left tail, we set $T=\alpha^2- R \sigma = \alpha^2- R \sqrt{1+2n_\beta} \alpha$. Then, in the large $\alpha$ regime, the term in the exponent expanded in powers of $1/\alpha$ becomes
\begin{align}\label{exp left}
    -\alpha + \log\Big(1+ \frac{1}{\alpha-n_\beta-1}\Big)(\alpha^2 - R \sqrt{1+2n_\beta}\alpha) = n_\beta + \frac12 - \sqrt{1+2n_\beta}R + \mathcal{O}\Big(\frac{R}{\alpha}\Big)\ .
\end{align}
Thus, ignoring the polynomial prefactor, Eq.(\ref{m0 l}) becomes
\begin{align}
\sum_{l<\alpha^2+R \sigma} \rho_{l,l} = \mathcal{O}\Big(e^{n_\beta + \frac12 - \sqrt{1+2n_\beta}R + \mathcal{O}({R}/{\alpha})}\Big) \ . 
\end{align}

\textbf{Summary -- } Combining the left and right tail contributions, we see that for $\alpha \gg 1$, 
\begin{align}
    \sum_{l \not\in \text{typical}} \rho_{l,l} = \mathcal{O}\Big(e^{n_\beta + \frac12 - \sqrt{1+2n_\beta}R}\Big)
\end{align}
where $l\in [\alpha^2 - R \sigma, \alpha^2 + R \sigma]$ is the typical range of $l$, and we have  suppressed higher order terms in $1/\alpha$ in the exponent.

\subsubsection{First moment}
In this subsection, we consider the contribution of the right and left tails to the first moment.\\

\textbf{Right Tail -- }Assume $t\geq 0$. Noting that $l\geq 0$,
\begin{align}\label{mn1r}
   \sum_{l=T}^\infty \rho_{l,l}\times l = \sum_{l=T}^\infty \rho_{l,l} \times \frac{l e^{tl}}{e^{tl}}\leq e^{-tT} \times \sum_{l=0}^\infty \rho_{l,l} \times l e^{t l}= e^{-t T} \times \frac{d}{dt}\Big[\sum_{l=0}^\infty \rho_{l,l} e^{tl} \Big]= e^{-t T} \times \frac{dM(t)}{dt} \ 
\end{align}
where $M(t)$ is the moment generating function defined in Eq.(\ref{appendix: moment generating function}). Just as earlier, because we require $e^t n_\beta/(1+n_\beta) < 1$, and we have that $t\geq 0$, we must choose $t$ such that
\begin{align}
    1 \leq e^t < 1 + \frac{1}{n_\beta} \ . 
\end{align}
Just as earlier, we shall pick $t'=\log\Big[1+ \frac{1}{\alpha+ n_\beta}\Big]$. Taking the derivative of $M(t)$ seen in Eq.(\ref{M(t) for rho_ll}) and substituting $t'=\log\Big[1+ \frac{1}{\alpha+ n_\beta}\Big]$ yields,
\begin{align}
    M'(t')=\frac{(\alpha+ n_\beta)(1+ \alpha+ n_\beta)(\alpha^2+n_\beta+\alpha n_\beta)}{\alpha^2} \times \exp(\alpha) \ .
\end{align}
We compute this with Mathematica. Thus, Eq.(\ref{mn1r}) becomes
\begin{align}\label{m1 r}
    \sum_{l=T}^\infty \rho_{l,l}\times l \leq \frac{(\alpha+ n_\beta)(1+ \alpha+ n_\beta)(\alpha^2+n_\beta+\alpha n_\beta)}{\alpha^2} \times \exp\Big[\alpha - \log\Big(1 + \frac{1}{\alpha+n_\beta}\Big)T\Big]  \ .
\end{align}
Notice that the exponential term in this expression is the same as Eq.(\ref{m0 r}). So if we ignore the polynomial prefactors, choosing $T=\alpha^2 + R \sigma$ yields the same leading order behavior as in the `zero-th' moment for large $\alpha$. That is,
\begin{align}
    \sum_{l>\alpha^2+R \sigma}^\infty \rho_{l,l} \times l= \mathcal{O}\Big(e^{n_\beta + \frac12 - \sqrt{1+2n_\beta}R + \mathcal{O}({R}/{\alpha})}\Big) \ . 
\end{align}

\textbf{Left Tail -- } Assume $t \leq 0$. Noting $l\leq 0$,
\begin{align}\label{mn1l}
   \sum_{l=0}^T \rho_{l,l}\times l = \sum_{l=0}^T \rho_{l,l} \times \frac{l e^{tl}}{e^{tl}}\leq e^{-tT} \times \sum_{l=0}^\infty \rho_{l,l} \times l e^{t l}= e^{-t T} \times \frac{d}{dt}\Big[\sum_{l=0}^\infty \rho_{l,l} e^{tl} \Big]= e^{-t T} \times \frac{dM(t)}{dt} \ .
\end{align}
Because {$t\leq 0$}, any choice of $t$ ensures $e^t n_\beta/(1+n_\beta) < 1$. We shall pick $t'=-\log\Big[1+\frac{1}{\alpha-n_\beta-1}\Big]$. Taking the derivative of $M(t)$ seen in Eq.(\ref{M(t) for rho_ll}) and substituting $t'=-\log\Big[1+ \frac{1}{\alpha- n_\beta-1}\Big]$ yields,
\begin{align}
    M'(t')=\frac{(\alpha- n_\beta)(1+ \alpha- n_\beta)(\alpha^2+n_\beta-\alpha n_\beta)}{\alpha^2} \times \exp(-\alpha)\ .
\end{align}
We compute this with Mathematica. Thus, Eq.(\ref{mn1l}) becomes
\begin{align}
    \sum_{l=0}^T \rho_{l,l} \times l \leq \frac{(\alpha- n_\beta)(1+ \alpha- n_\beta)(\alpha^2+n_\beta-\alpha n_\beta)}{\alpha^2} \times \exp\Big[-\alpha + \log\Big(1+ \frac{1}{\alpha-n_\beta-1}\Big)T\Big] \label{m1 l}
\end{align}
Notice that the exponential term in this expression is the same as Eq.(\ref{m0 l}). So if we ignore the polynomial prefactors, choosing $T=\alpha^2 - R \sigma$ yields the same leading order behavior as in the `zero-th' moment for large $\alpha$. That is,
\begin{align}
 \sum_{l<\alpha^2+R \sigma}\rho_{l,l} \times l = \mathcal{O}\Big(e^{n_\beta + \frac12 - \sqrt{1+2n_\beta}R + \mathcal{O}({R}/{\alpha})}\Big) \ . 
\end{align}

\textbf{Summary -- } Combining the left and right tail contributions, we see that for $\alpha \gg 1$, 
\begin{align}
    \sum_{l \not\in \text{typical}} \rho_{l,l} \times l = \mathcal{O}\Big(e^{n_\beta + \frac12 - \sqrt{1+2n_\beta}R}\Big)
\end{align}
where $l\in [\alpha^2 - R \sigma, \alpha^2 + R \sigma]$ is the typical range of $l$, and we have suppressed higher order terms in $1/\alpha$ in the exponent.

\subsubsection{Second Moment}
The contribution of the left and right tails to the second moment can be computed exactly as above. We shall not repeat the entire calculation, but mention the important modification from above. Because $l\geq 0$, observe that
\begin{align}
   \sum_{l=T}^\infty \rho_{l,l}\times l^2 = \sum_{l=T}^\infty \rho_{l,l} \times \frac{l^2 e^{tl}}{e^{tl}}\leq e^{-tT} \times \sum_{l=0}^\infty \rho_{l,l} \times l^2 e^{t l}= e^{-t T} \times \frac{d^2}{dt^2}\Big[\sum_{l=0}^\infty \rho_{l,l} e^{tl} \Big]= e^{-t T} \times \frac{d^2 M(t)}{dt^2} \ 
\end{align}
when $t\geq 0$, and 
\begin{align}
\sum_{l=0}^T \rho_{l,l}\times l^2 = \sum_{l=0}^T \rho_{l,l} \times \frac{l^2 e^{tl}}{e^{tl}}\leq e^{-tT} \times \sum_{l=0}^\infty \rho_{l,l} \times l^2 e^{t l}= e^{-t T} \times \frac{d^2}{dt^2}\Big[\sum_{l=0}^\infty \rho_{l,l} e^{tl} \Big]= e^{-t T} \times \frac{d^2M(t)}{dt^2} \ 
\end{align}
when $t\leq 0$. By making the choices $t'=\log\Big[1+\frac{1}{\alpha + n_\beta}\Big]$, $T=\alpha^2 + R \sigma$  and $t'=-\log\Big[1+\frac{1}{\alpha - n_\beta - 1}\Big]$, $T=\alpha^2 - R \sigma$ for the right and left tails respectively, we again get the same exponential-decay scaling as observed earlier, that is
\begin{align}
    \sum_{l>\alpha^2+R \sigma} \rho_{l,l} \times l^2 = \mathcal{O}\Big(e^{n_\beta + \frac12 - \sqrt{1+2n_\beta}R + \mathcal{O}({R}/{\alpha})}\Big) \ , 
\end{align}
and 
\begin{align}
  \sum_{l<\alpha^2+R \sigma}\rho_{l,l} \times l^2 = \mathcal{O}\Big(e^{n_\beta + \frac12 - \sqrt{1+2n_\beta}R + \mathcal{O}({R}/{\alpha})}\Big) \ . 
\end{align}
This is because taking derivatives changes the polynomial prefactor, but does not change the exponential term in the upperbounds that determines the leading order behavior.\\

\textbf{Summary -- } Combining the left and right tail contributions, we see that for $\alpha \gg 1$, 
\begin{align}
    \sum_{l \not\in \text{typical}} \rho_{l,l} \times l^2 = \mathcal{O}\Big(e^{n_\beta + \frac12 - \sqrt{1+2n_\beta}R}\Big)
\end{align}
where $l\in [\alpha^2 - R \sigma, \alpha^2 + R \sigma]$ is the typical range of $l$, and we have suppressed higher order terms in $1/\alpha$ in the exponent.

\end{document}